\documentclass{amsart}%
\usepackage{amsfonts}
\usepackage{amsmath}
\usepackage{amssymb}
\usepackage{graphicx}%
\setcounter{MaxMatrixCols}{30}
\providecommand{\U}[1]{\protect\rule{.1in}{.1in}}
\newtheorem{theorem}{Theorem}
\theoremstyle{plain}

\newtheorem{conjecture}{Conjecture}
\newtheorem{corollary}{Corollary}

\newtheorem{definition}{Definition}
\newtheorem{example}{Example}

\newtheorem{lemma}{Lemma}

\newtheorem{proposition}{Proposition}
\newtheorem{remark}{Remark}

\numberwithin{equation}{section}
\begin{document}
\title[Quantum Knots and Lattices]{Quantum Knots and Lattices, \linebreak or a Blueprint for Quantum Systems that
Do Rope Tricks}
\author{Samuel J. Lomonaco}
\address{University of Maryland Baltimore County (UMBC)\\
Baltimore, MD \ 21250 \ \ USA}
\email{lomonaco@umbc.edu}
\urladdr{http://www.csee.umbc.edu/\symbol{126}lomonaco}
\author{Louis H. Kauffman}
\address{University of Illinois at Chicago\\
Chicago, IL \ 60607-7045 \ \ USA}
\email{kauffman@uic.edu}
\urladdr{http://www.math.uic.edu/\symbol{126}kauffman}
\date{February 24, 2008}
\subjclass[2000]{Primary 81P68, 57M25, 81P15, 57M27; Secondary 20C35}
\keywords{Quantum Knots, Knots, Knot Theory, Quantum Computation, Quantum Algorithms}

\begin{abstract}
In the paper "Quantum knots and mosaics," \cite{Lomonaco1}, the definition of
quantum knots was based on the planar projections of knots (i.e., on knot
diagrams) and the Reidemeister moves on these projections. \ In this paper, we
take a different tack by creating a definition of quantum knots based on the
cubic honeycomb decomposition of 3-space $\mathbb{R}^{3}$ (i.e., the cubic
tesselation $\mathcal{L}_{\ell}$ of $\mathbb{R}^{3}$ consisting of $2^{-\ell
}\times2^{-\ell}\times2^{-\ell}$ cubes) and a new set of knot moves, called
\textbf{wiggle}, \textbf{wag}, and \textbf{tug}, which unlike the two
dimensional Reidemeister moves are truly three dimensional moves. \ These new
moves have been so named because they mimic how a dog might wag its tail. \ 

\ We believe that these two different approaches to defining quantum knots are
essentially equivalent, but that the above three dimensional moves have a
definite advantage when it comes to the applications of knot theory to
physics. \ More specifically, we contend that the new moves wiggle, wag, and
tug are more "physics-friendly" than the Reidemeister ones. \ For unlike the
Reidemeister moves, the new moves are three dimensional moves that respect the
differential geometry of 3-space, which is indeed an essential component of
physics. \ And moreover, unlike the Reidemeister moves, they can be
transformed into infinitesimal moves and differential forms, which structures
can be seamlessly interwoven with the equations of physics.

\ Our basic building block for constructing a quantum knot is a
\textbf{lattice knot}, which is a knot in 3-space constructed from the edges
of the cubic honeycomb $\mathcal{L}_{\ell}$. \ We then create a Hilbert space
by identifying each edge of a bounded $n\times n\times n$ region of the cubic
honeycomb with a qubit. \ Lattice knots within this region then form the basis
of a sub-Hilbert space $\mathcal{K}^{\left(  \ell,n\right)  }$. \ The states
of $\mathcal{K}^{\left(  \ell,n\right)  }$ are called quantum knots. \ The
knot moves, wiggle, wag, and tug, are then naturally identified with the
generators of a unitary group $\Lambda_{\ell,n}$, called the \textbf{lattice
ambient group}, acting on the Hilbert space $\mathcal{K}^{\left(
\ell,n\right)  }$. \ 

\ This definition of a quantum knot can be viewed as a blueprint for the
construction of an actual physical quantum system that represents the "quantum
embodiment" of a closed knotted physical piece of rope. \ A quantum knot, as a
state of this quantum knot system, represents the state of such a knotted
closed piece of rope, i.e., the particular spacial configuration of the knot
tied in the rope. \ The lattice ambient group $\Lambda_{\ell,n}$ represents
all possible ways of moving the rope around (without cutting the rope, and
without letting the rope pass through itself.) \ Of course, unlike a classical
closed piece of rope, a quantum knot can exhibit non-classical behavior, such
as quantum superposition and quantum entanglement. \ 

\ After defining quantum knot type, we investigate quantum observables which
are invariants of quantum knot type. \ Moreover, we also study the
Hamiltonians associated with the generators of the lattice ambient group.

\end{abstract}
\maketitle
\tableofcontents

\section{Introduction}

\bigskip

Throughout this paper, the term "knot" means either a knot or a link. \ For
those unfamiliar with knot theory, we refer them to a quick overview of the
subject given in appendix A.

\bigskip

This paper is a sequel to the research program on quantum knots begun and
defined in \cite{Lomonaco1}. \ This sequel is motivated by the difficulties
encountered by the first author in applying knot theory to physics while
writing the paper \cite{Lomonaco2} on classical electromagnetic knots. \ 

\bigskip

The key difficulty encountered in writing \cite{Lomonaco2} is that physics
"lives" in geometric space and, on the other hand, knot theory "lives" in
topological space. \ As a consequence, in knot theory the inherent geometric
structure of 3-space is often ignored, or simply discarded. If one's objective
is to solve the central problem of knot theory, i.e., the placement problem,
then it is a sound strategy frequently to ignore the unneeded non-pertinent
geometric structure of 3-space. \ However, if one's objective is to use knot
theory as a tool for investigating problems in physics, then this may not be
the best strategy. \ 

\bigskip

Case in point is the set of the Reidemeister moves. \ These moves have become
one of the major corner stones of knot theory. \ They are two dimensional
moves which ignore much of the geometry that is naturally a part of geometric
3-space. \ They do so by focusing on the planar projections of knots. \ For
example, the Reidemeister moves inherently depend on the concept of a knot
crossing. \ However, knots do not have crossings! \ After all, a knot crossing
is simply a "figment" of one's chosen projection.

\bigskip

What is needed for applications to physics is another set of moves that is
more sensitive to the inherent differential geometry of 3-space. \ For that
reason (among others), we will introduce as a possible alternative to the
three Reidemeister moves, three moves called \textbf{wiggle}, \textbf{wag},
and \textbf{tug}.

\bigskip

\section{Part 0. The quest for a more "physics-friendly" set of knot moves}

\bigskip

However, before we can define the three moves, wiggle, wag, and tug, we first
need to gain a better understanding of how a dog wags it tail.

\bigskip

\section{How does a dog wag its tail?}

\bigskip

The first author's best friend Tazi certainly knew how to wag her tail.%
\begin{center}
\includegraphics[
height=1.9778in,
width=2.476in
]%
{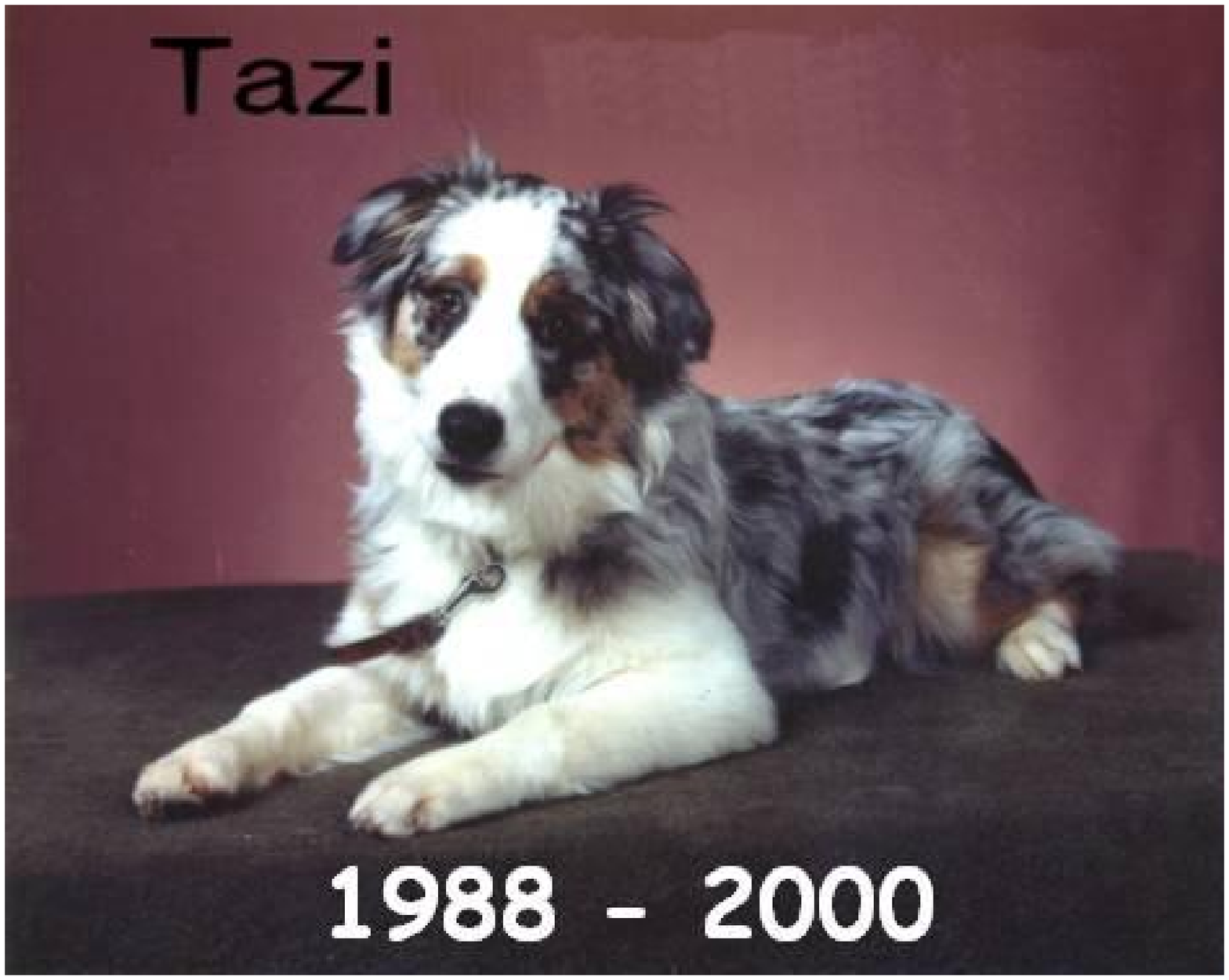}%
\\
\textbf{The first author's best friend Tazi knew the answer.}%
\end{center}

\bigskip

She would \textbf{wiggle} her tail, much as a creature would squirm on a flat
planar surface, such as for example:%
\begin{center}
\includegraphics[
height=0.6287in,
width=0.9487in
]%
{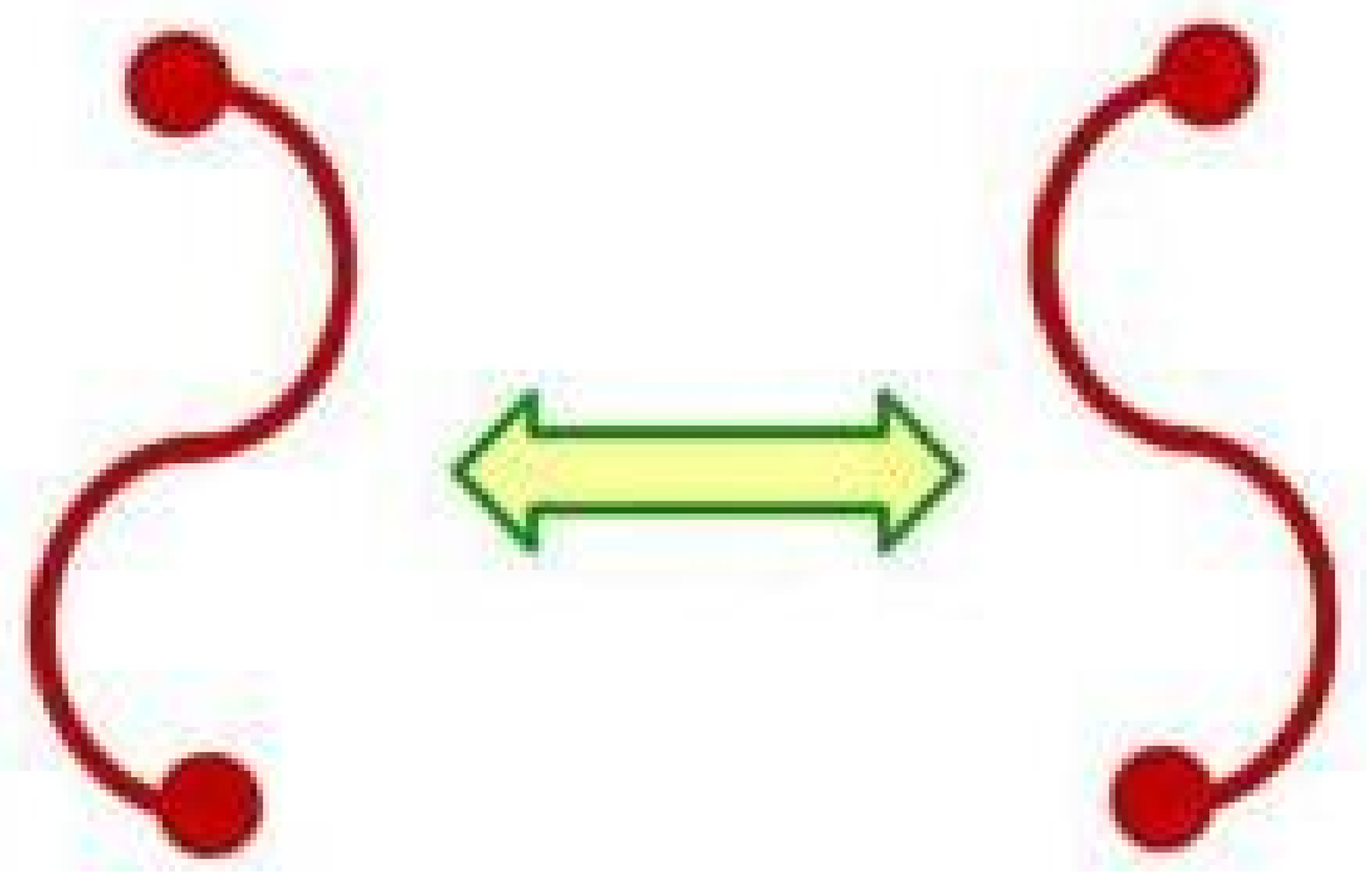}%
\end{center}

She would \textbf{wag} her tail in a twisting corkscrew motion, such as for
example:%
\begin{center}
\includegraphics[
height=0.6287in,
width=0.9063in
]%
{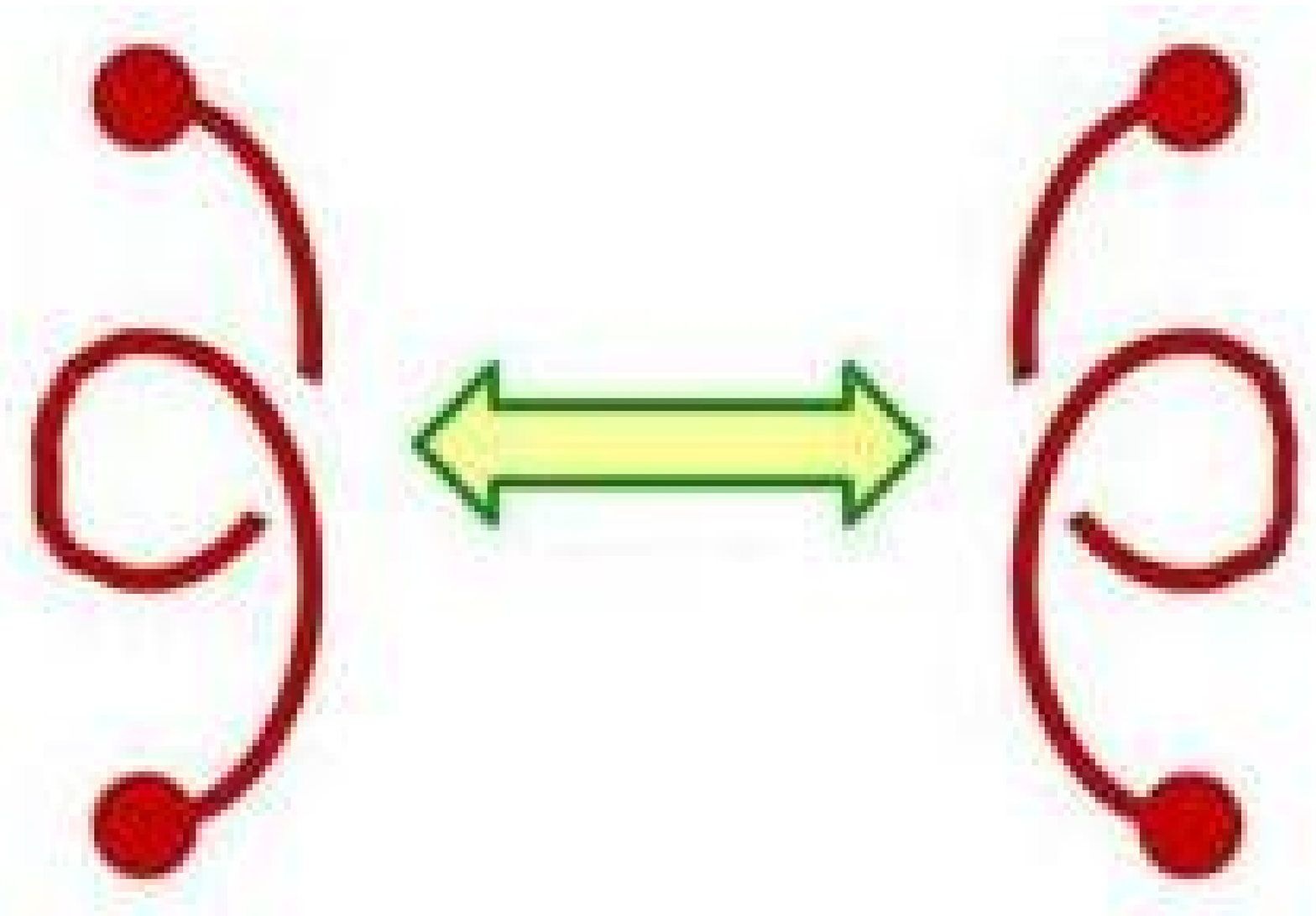}%
\end{center}

Her tail would also stretch or contract when an impolite child would
\textbf{tug} on it, such as for example:%
\begin{center}
\includegraphics[
height=0.5863in,
width=0.8285in
]%
{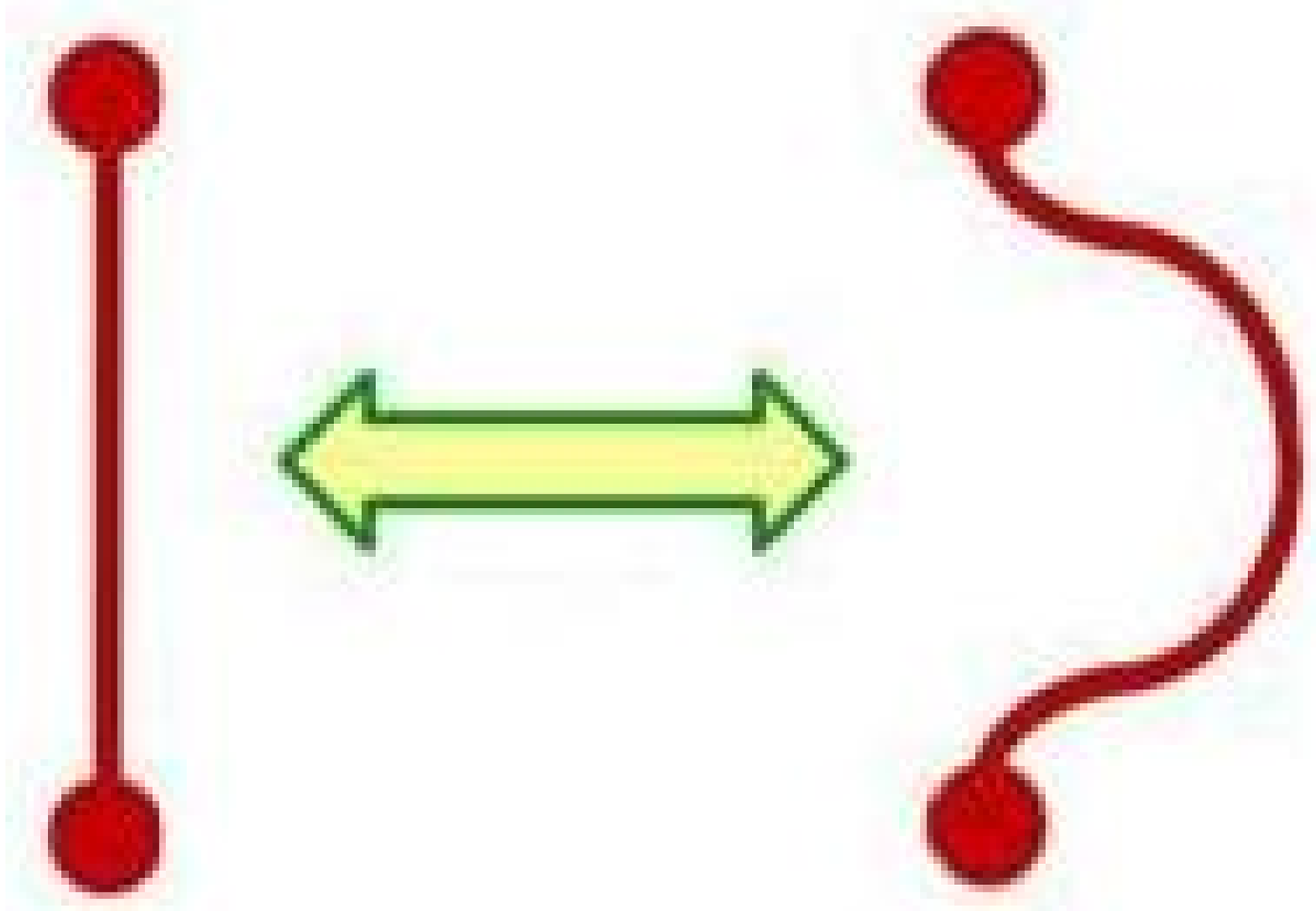}%
\end{center}

Yes, when Tazi moved her tail, she naturally understood how a curve can move
in 3-space. \ She had a keen understanding of the differential geometry of
curves. \ She instinctively understood that each point of a (sufficiently well
behaved) curve in 3-space naturally has associated with it a 3-frame, called
the Frenet frame\footnote{For readers unfamiliar with differential geometry,
please refer to, for example, \cite{DoCarmo1, O'Neill1, Spivak1}.}, consisting
of the unit tangent vector $T$, the unit normal vector $N$, and the unit
binormal vector $B$. \ She instinctively understood that

\begin{itemize}
\item A curve instantaneously bends by rotating about its binormal $B$, as
measured by its curvature $\kappa$,

\item A curve instantaneously twists by rotating about its normal $N$, as
measured by it torsion $\tau$, and

\item A curve instantaneously stretches or contracts along its tangent $T$.
\end{itemize}

%

\begin{center}
\includegraphics[
height=1.1485in,
width=1.6691in
]%
{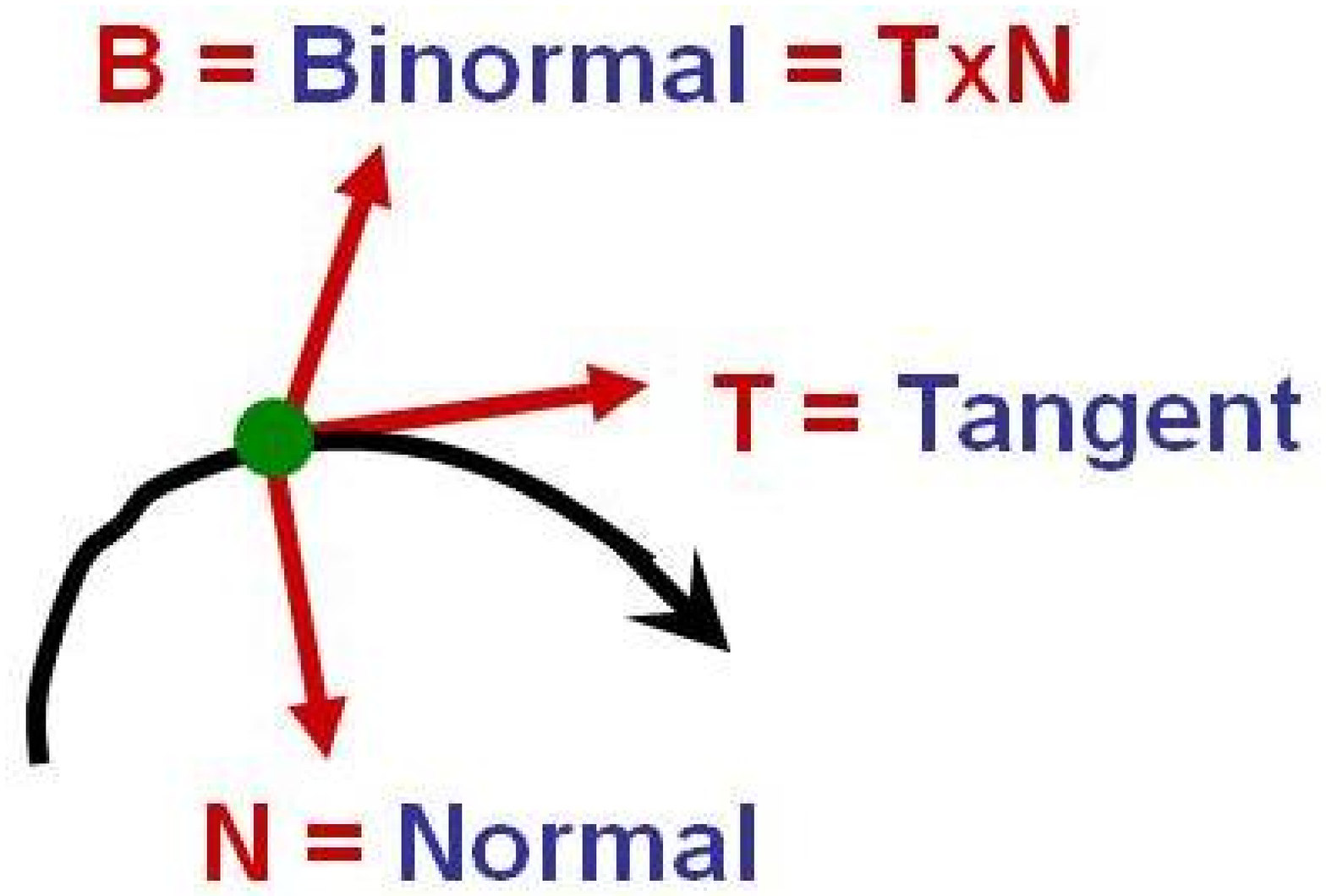}%
\\
\textbf{The Frenet Frame.}%
\end{center}
\bigskip

\bigskip

\noindent\textbf{Key Intuitive Idea:} \textit{A sufficiently well-behaved
curve in 3-space has at each point \textbf{three infinitesimal}
\textbf{degrees of freedom}.}

\bigskip

Tazi understood the key intuitive idea that a curve in 3-space has at each
point\textbf{ three local} (i.e., \textbf{infinitesimal}) \textbf{degrees of
freedom}. \ Can we use this intuition to create a usable well-defined set of
moves which can form a basis for knot theory, much as the Reidemeister moves
have filled that role? \ 

\bigskip

\section{Clues from mechanical engineering}

\bigskip

\noindent\textbf{Question:} Can we transform this intuition into a
mathematically rigorous definition? \ In particular, can we transform the
following intuitive moves into well-defined infinitesimal moves?

\bigskip

\hspace{-0.5in}%
\begin{tabular}
[c]{|c|c|c|}\hline%
{\includegraphics[
height=0.7325in,
width=0.3122in
]%
{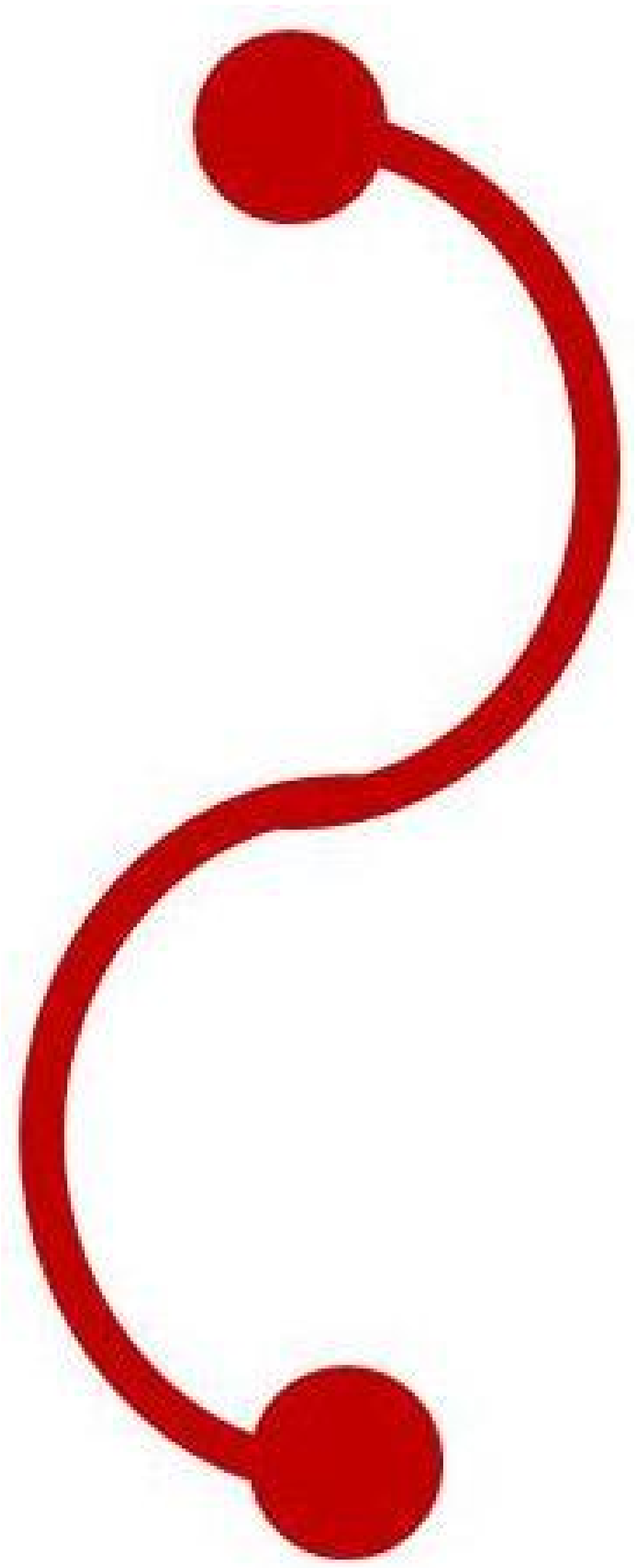}%
}%
& $%
\raisebox{0.2508in}{\includegraphics[
height=0.2361in,
width=0.5967in
]%
{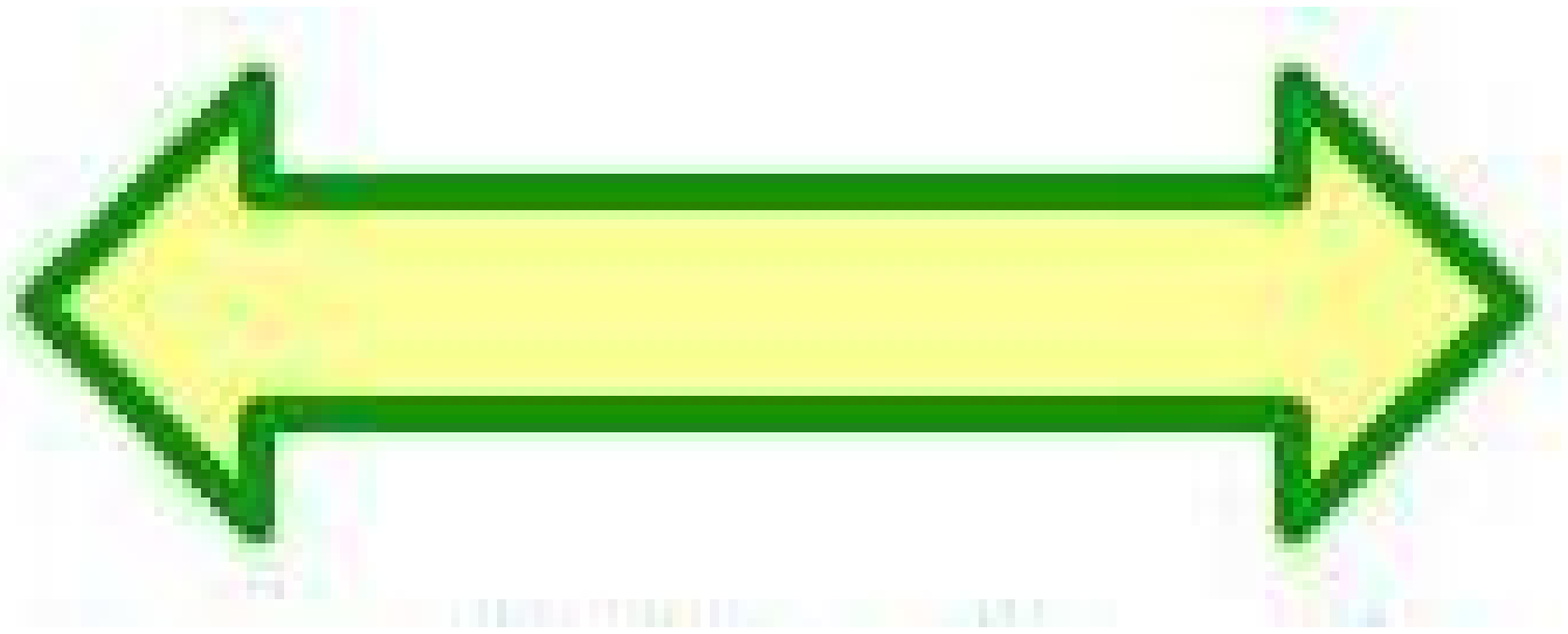}%
}%
$ &
{\includegraphics[
height=0.7247in,
width=0.3061in
]%
{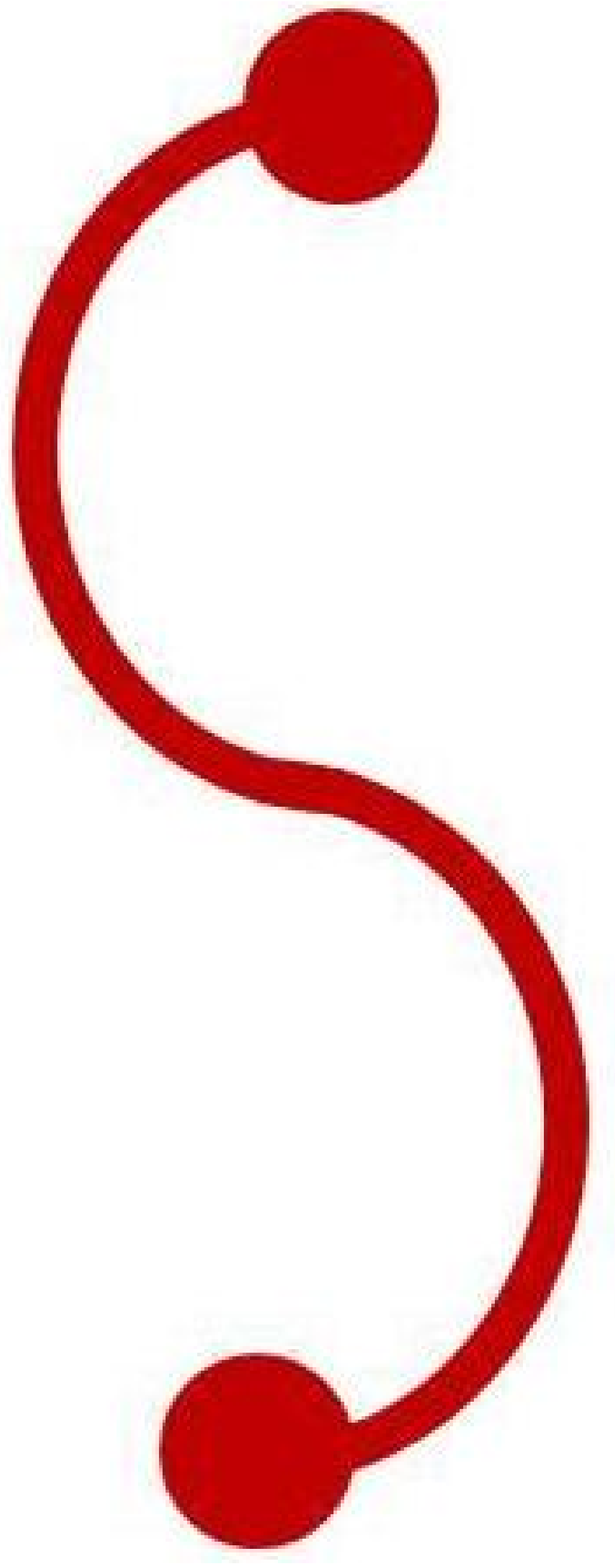}%
}%
\\\hline
\multicolumn{3}{|c|}{%
\begin{tabular}
[c]{c}%
\textbf{Wiggle}\\
\textbf{Curvature }$\kappa$\textbf{ Move}\\
\textbf{Inextensible}%
\end{tabular}
}\\\hline
\end{tabular}
\qquad%
\begin{tabular}
[c]{|c|c|c|}\hline%
{\includegraphics[
height=0.7334in,
width=0.2906in
]%
{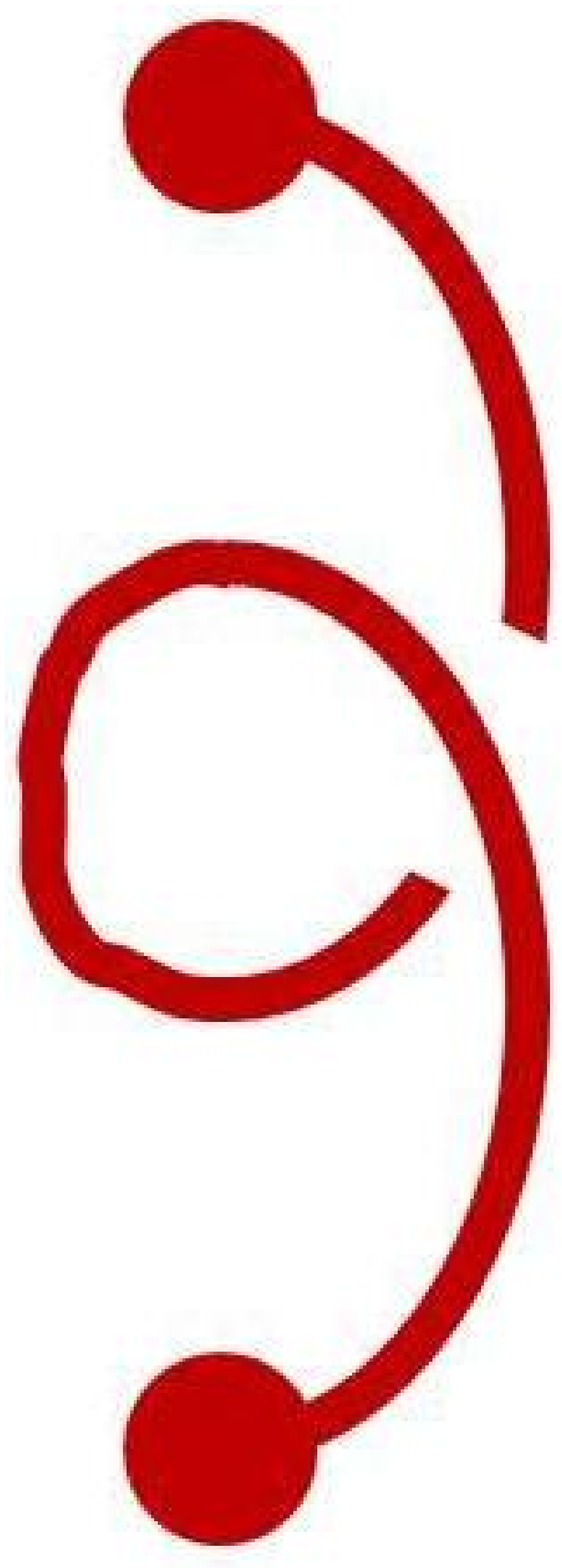}%
}%
& $%
\raisebox{0.2508in}{\includegraphics[
height=0.2361in,
width=0.5967in
]%
{arrow-ya.ps}%
}%
$ &
{\includegraphics[
height=0.7334in,
width=0.2906in
]%
{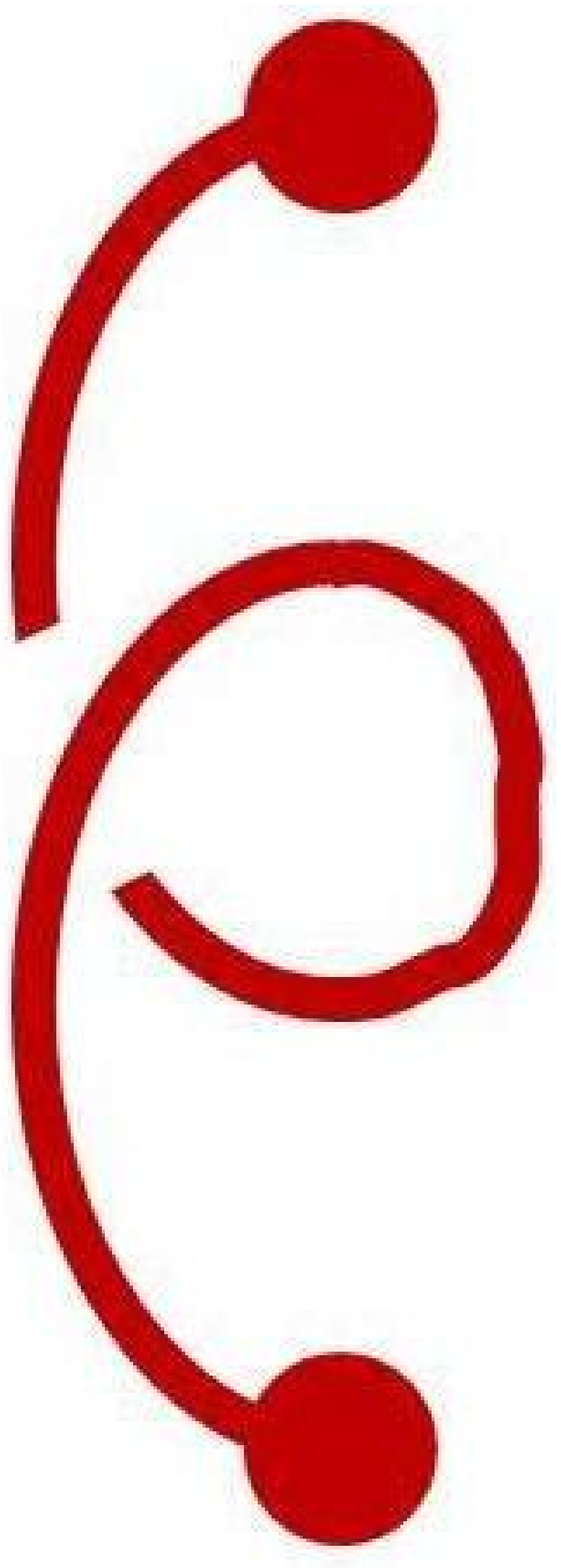}%
}%
\\\hline
\multicolumn{3}{|c|}{%
\begin{tabular}
[c]{c}%
\textbf{Wag}\\
\textbf{Torsion }$\tau$\textbf{ Move}\\
\textbf{Inextensible}%
\end{tabular}
}\\\hline
\end{tabular}
\qquad%
\begin{tabular}
[c]{|c|c|c|}\hline%
{\includegraphics[
height=0.704in,
width=0.1487in
]%
{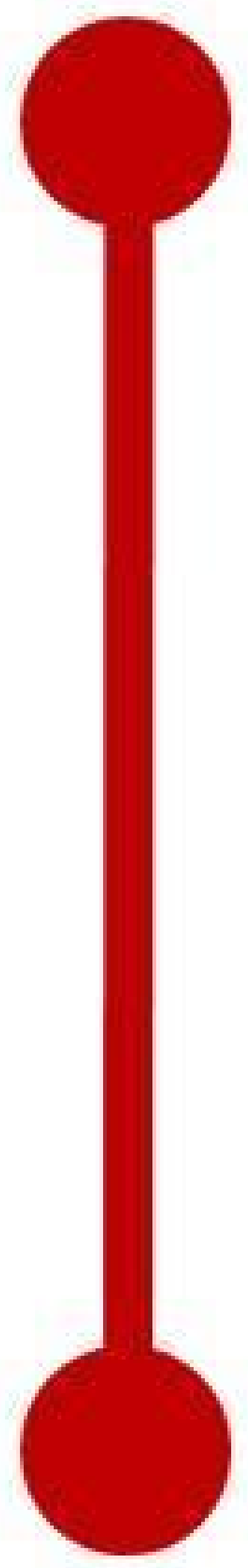}%
}%
& $%
\raisebox{0.2508in}{\includegraphics[
height=0.2361in,
width=0.5967in
]%
{arrow-ya.ps}%
}%
$ &
{\includegraphics[
height=0.6918in,
width=0.3269in
]%
{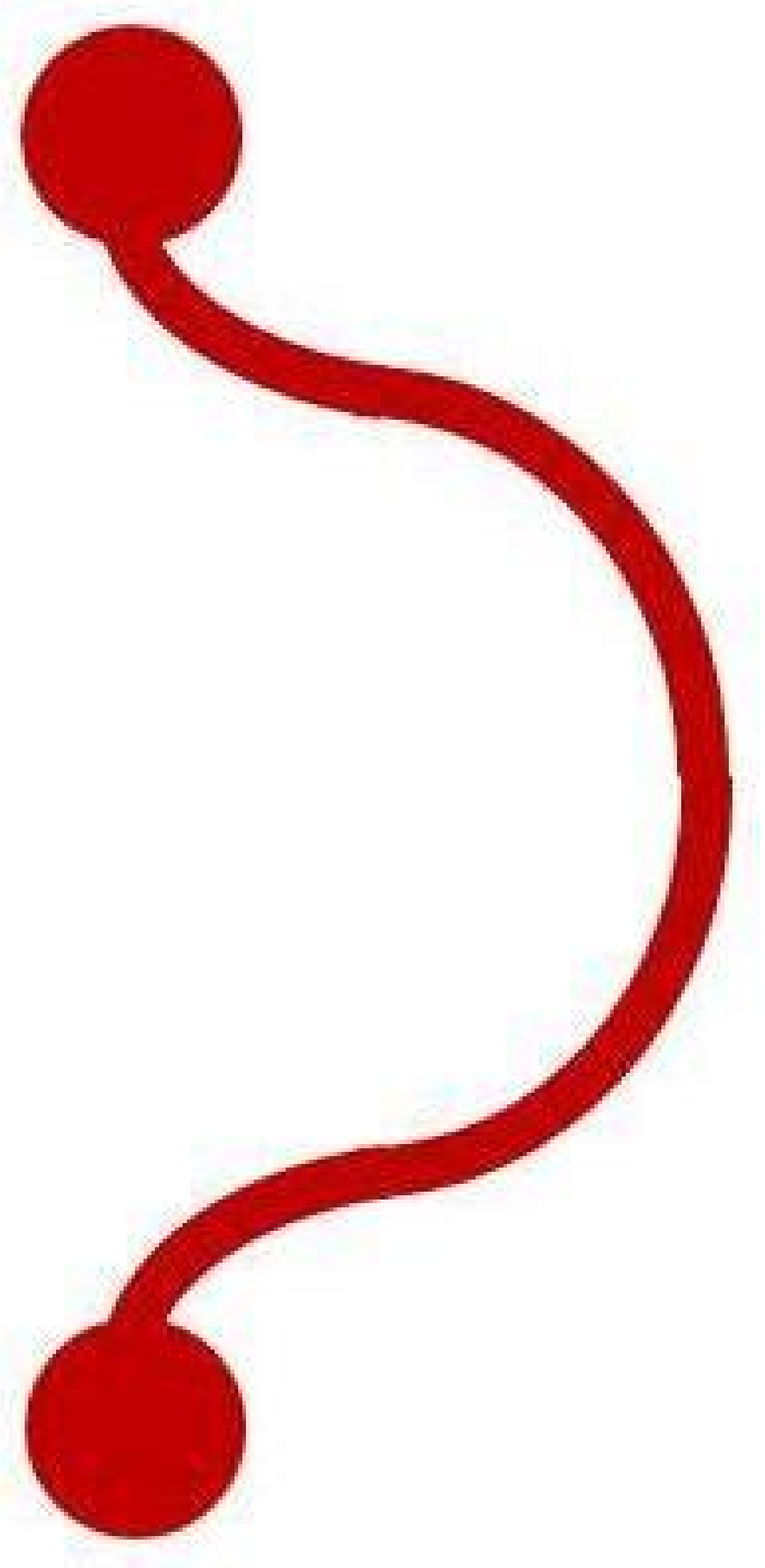}%
}%
\\\hline
\multicolumn{3}{|c|}{%
\begin{tabular}
[c]{c}%
\textbf{Tug}\\
\textbf{Elongation/Contraction Move}\\
\textbf{Extensible}%
\end{tabular}
}\\\hline
\end{tabular}

\bigskip

There are clues from mechanical engineering that suggest a possible approach
to creating a mathematically rigorous definition.

\bigskip

In mechanical engineering, a \textbf{linkage} is a sequence of inextensible
bars (i.e., rods) connected by \textbf{joints}
\begin{center}
\includegraphics[
height=0.5076in,
width=1.5082in
]%
{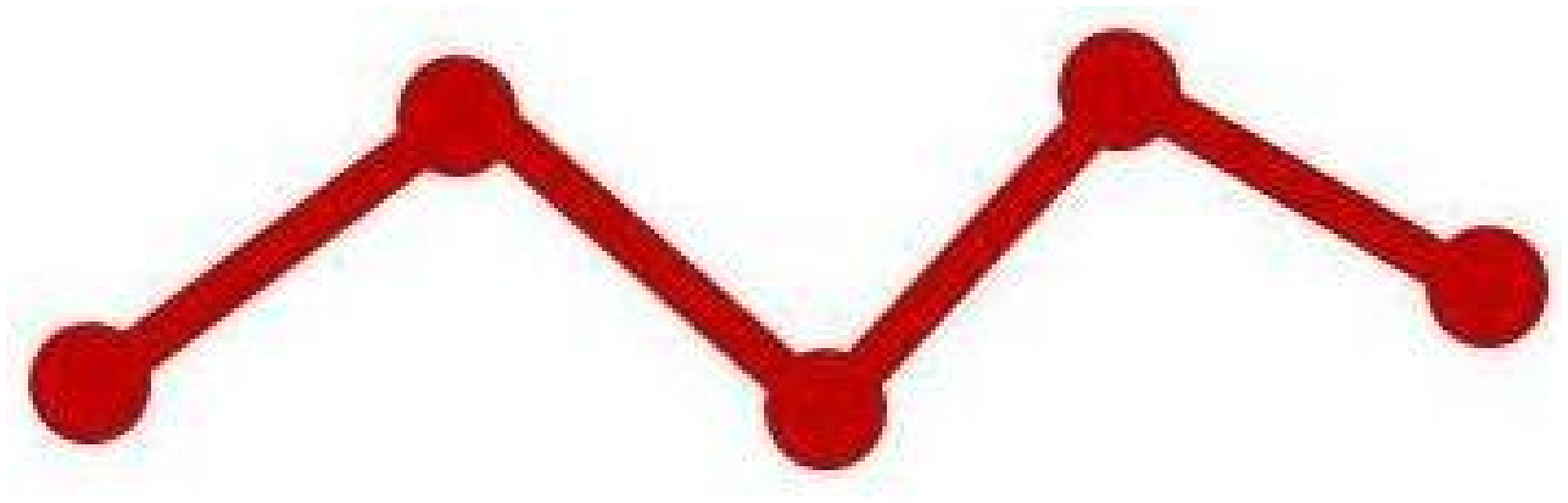}%
\\
\textbf{A mechanical linkage.}%
\end{center}

\bigskip

We will have need of the following three kinds of mechanical joints,
\textbf{planar} (a.k.a. \textbf{revolute}, \textbf{pin}, or \textbf{hinge}),
\textbf{spherical} (a.k.a., \textbf{ball} or \textbf{socket}), and
\textbf{prismatic (}a.k.a. \textbf{slider)}, which are illustrated below:%

\[%
\begin{tabular}
[c]{|c|c|}\hline
$%
\begin{array}
[c]{c}%
\mathstrut\\
\text{\textbf{Joints}}\\
\mathstrut
\end{array}
$ & \\\hline\hline
$%
\begin{array}
[c]{c}%
\mathstrut\\
\text{\textbf{Planar}}\\
\mathstrut
\end{array}
$ & $%
\begin{array}
[c]{c}%
\mathstrut\\%
{\includegraphics[
height=0.4281in,
width=0.5483in
]%
{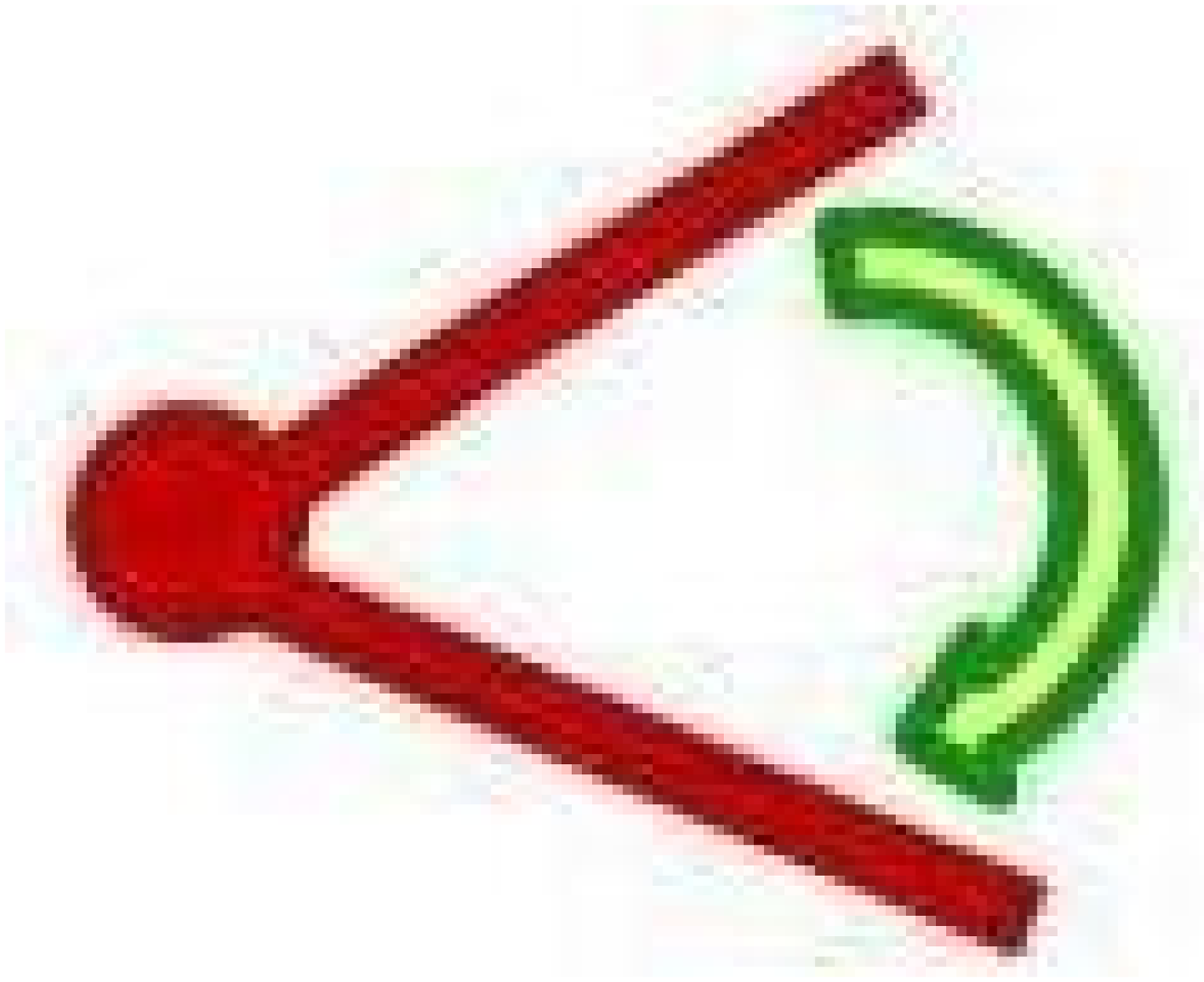}%
}%
\\
\mathstrut
\end{array}
$\\\hline
$%
\begin{array}
[c]{c}%
\\
\text{\textbf{Spherical}}\\
\end{array}
$ & $%
\begin{array}
[c]{c}%
\mathstrut\\%
{\includegraphics[
height=0.6287in,
width=0.8285in
]%
{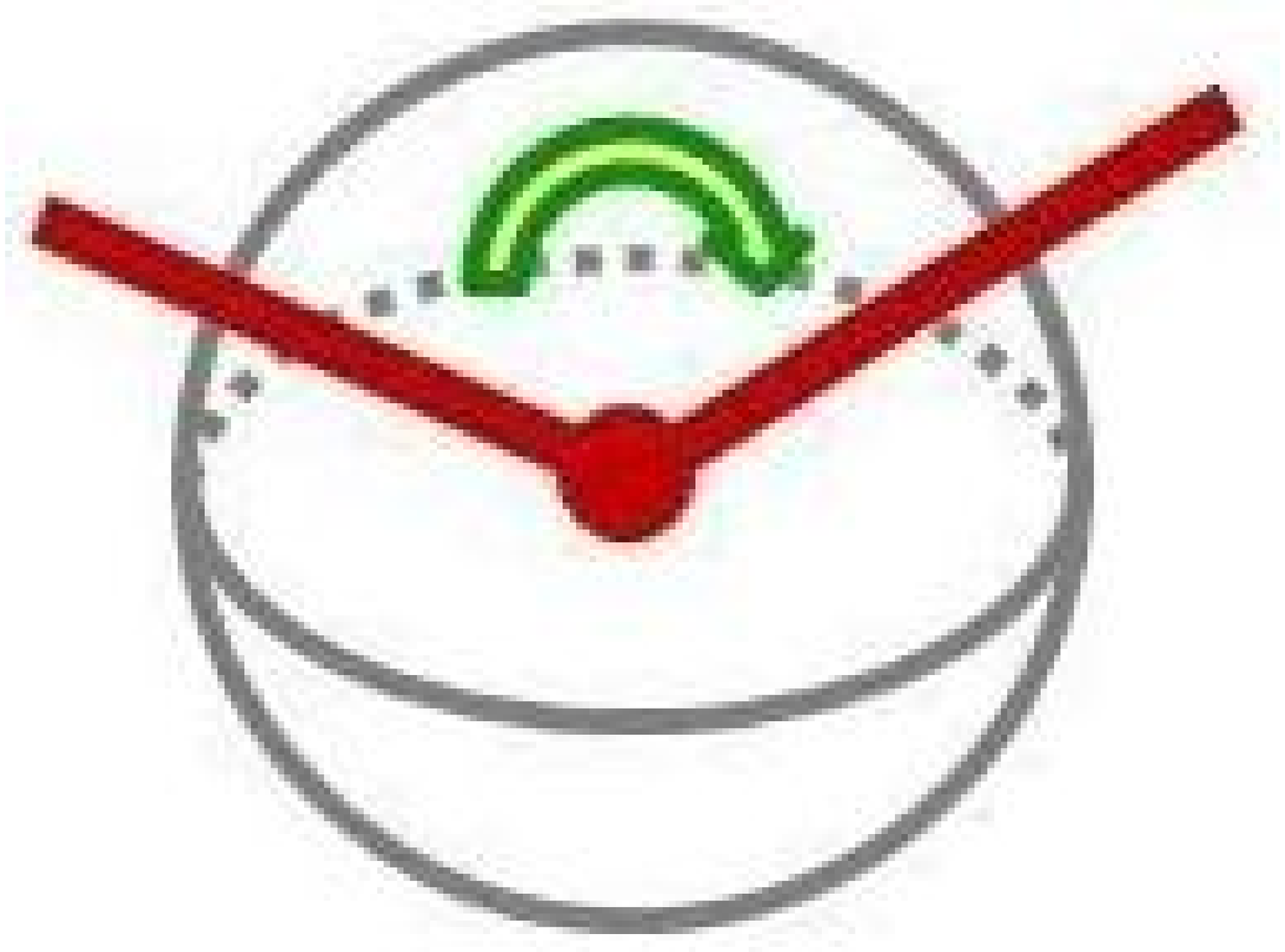}%
}%
\\
\mathstrut
\end{array}
$\\\hline
$%
\begin{array}
[c]{c}%
\\
\text{\textbf{prismatic}}\\
\end{array}
$ & $%
\begin{array}
[c]{c}%
\mathstrut\\%
{\includegraphics[
height=0.3061in,
width=2.7484in
]%
{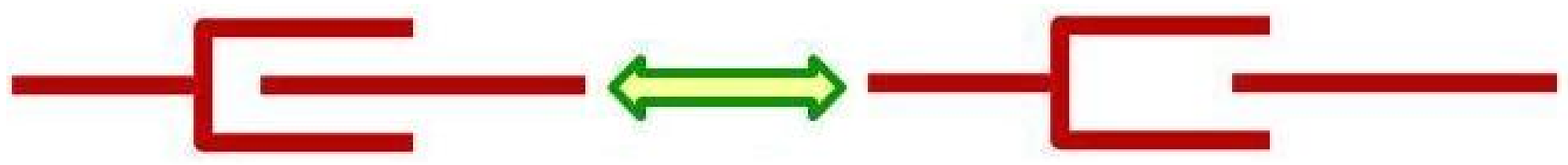}%
}%
\\
\mathstrut
\end{array}
$\\\hline
\end{tabular}
\ \ \
\]

\bigskip

In mechanical engineering, a \textbf{mechanism} is a linkage with one degree
of freedom. \ We will consider the following three mechanisms:\bigskip

\begin{itemize}
\item \textbf{The 4-Bar Linkage}: \ The 4-bar linkage, illustrated below, has
exactly one degree of freedom.%
\begin{center}
\fbox{\includegraphics[
height=1.2263in,
width=2.9075in
]%
{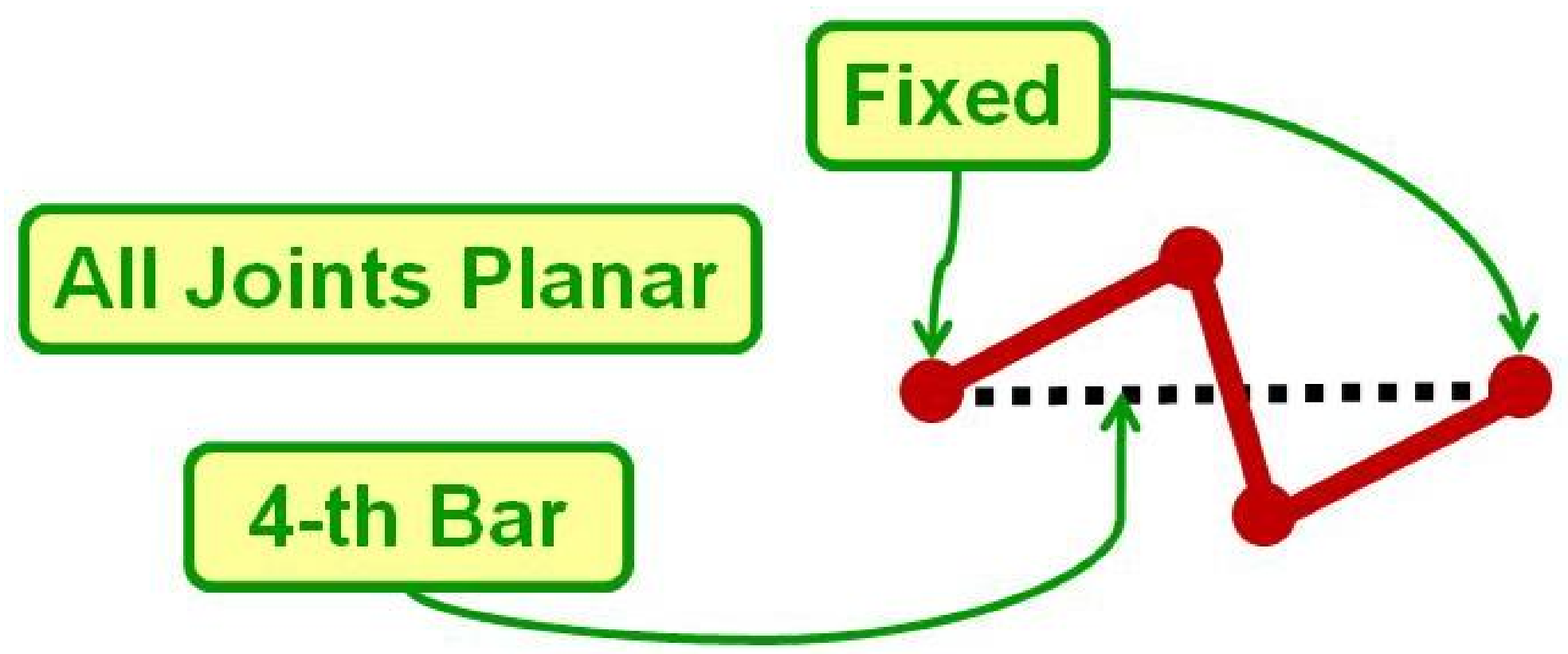}%
}
\end{center}
All joints in this linkage are planar. \ Since the leftmost and rightmost
joints are fixed (stationary), the missing fourth bar is effectively the
dotted line shown in the figure. \ If the leftmost and the rightmost joints
are connected to other linkages, then movement of the 4-bar linkage does not
effect any bars of the larger composite linkage other than the above three red
bars. \ In other words, the 4-bar linkage is a \textbf{local move} on a larger
linkage. \ \bigskip

In particular, the 4-bar linkage gives an illustration of a \textbf{local
curvature move}, taking place in a plane. \ We will call this local move a
\textbf{wiggle}. \bigskip

\item \textbf{The 3-Bar Linkage:} \ The 3-bar linkage, illustrated below, has
one degree of freedom.
\begin{center}
\fbox{\includegraphics[
height=1.3076in,
width=2.5858in
]%
{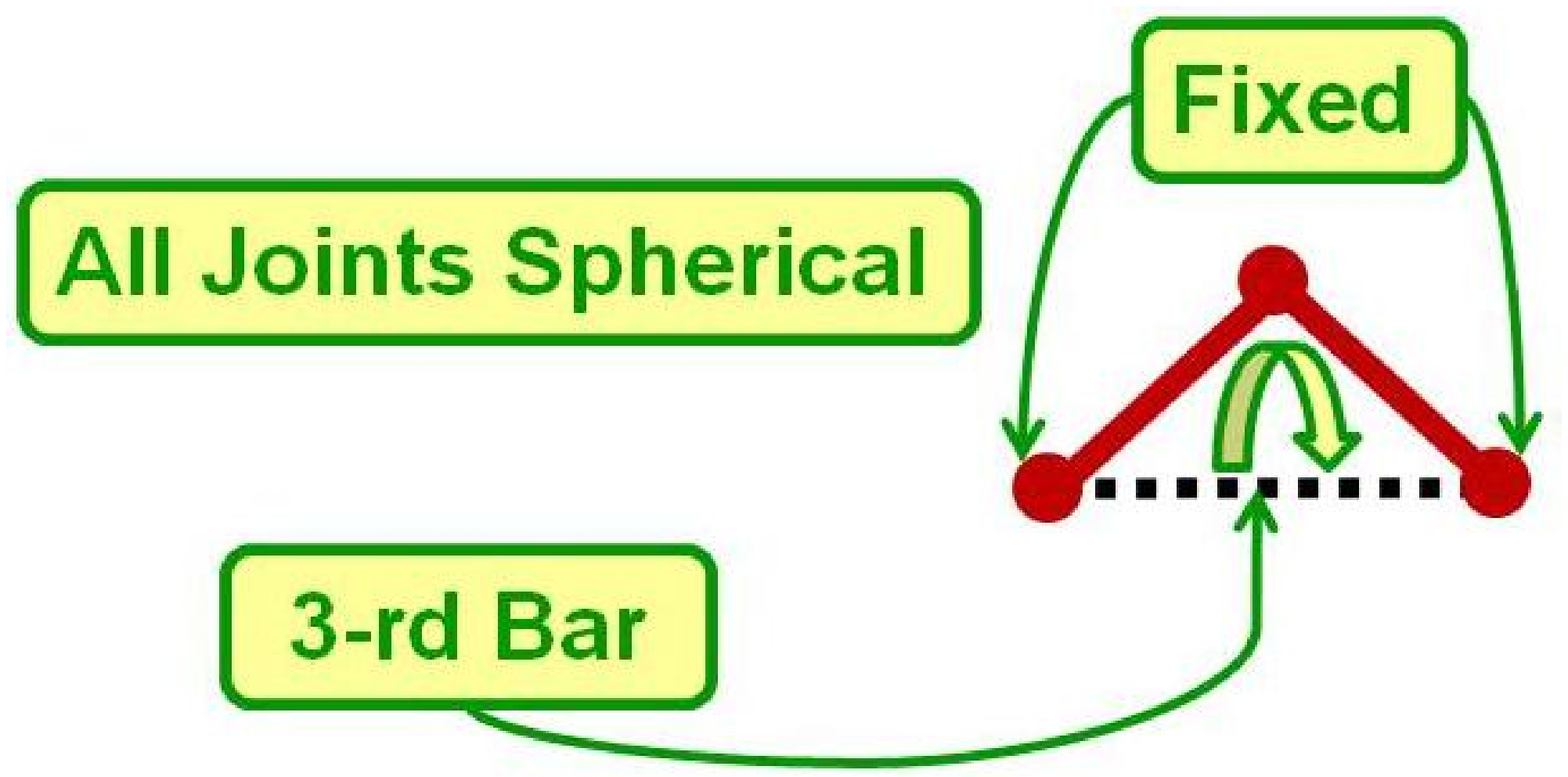}%
}
\end{center}
All joints in this linkage are spherical. \ Since its outermost joints are
fixed (stationary), the missing third bar is effectively the dotted line shown
in the figure. \ If the outermost joints are connected to other linkages, then
movement of the 3-bar linkage does not effect any bars of the larger composite
linkage other than the above two red bars. \ In other words, the 3-bar linkage
is a \textbf{local move} on a larger linkage. \ \bigskip

This is a \textbf{local torsion move}, locally twisting a portion of the
linkage into a new plane. \ We call it a \textbf{wag}.\bigskip

\item \textbf{The 4-Bar Slider:} \ The 4-bar slider, illustrated below, has
one degree of freedom.%
\begin{center}
\fbox{\includegraphics[
height=1.5074in,
width=3.5483in
]%
{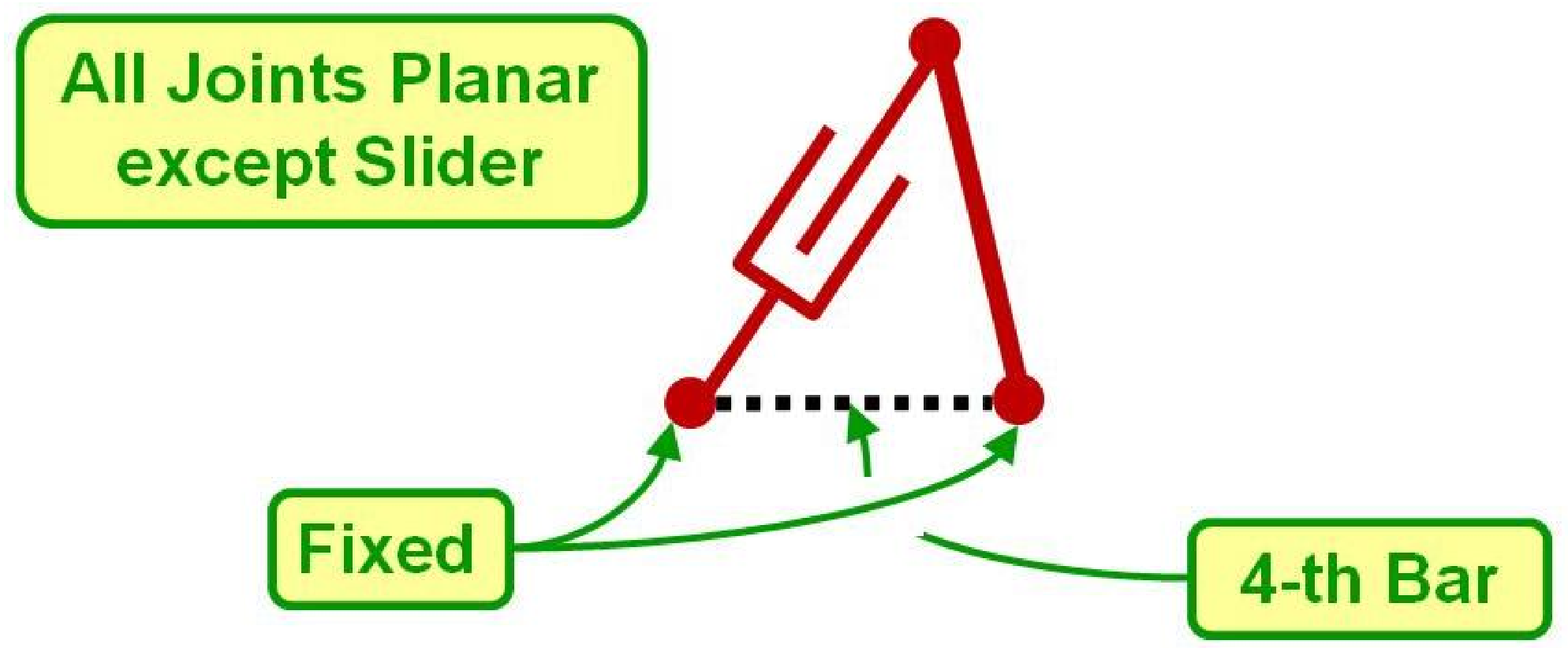}%
}
\end{center}
All joints in this linkage are planar except for the prismatic joint. \ Since
the outermost joints are fixed (stationary), the missing fourth bar is
effectively the dotted line shown in the figure. \ If the outermost joints are
connected to other linkages, then movement of the 4-bar slider does not effect
any bars of the larger composite linkage other than the above three red bars.
\ Thus, the 4-bar slider can be thought of as a \textbf{local move} on a
larger linkage. \ \bigskip

This is a \textbf{local expansion/contraction move}, taking place in a fixed
plane. \ We will call it a \textbf{tug}.
\end{itemize}

\bigskip

Before closing this section, we should mention three striking examples of
linkages. \ The first is the \textbf{Tangle} \cite{Zawitz1}, invented by
Richard E. Zawitz, and shown in the figure given below:\bigskip%
\begin{center}
\includegraphics[
height=1.5904in,
width=1.5904in
]%
{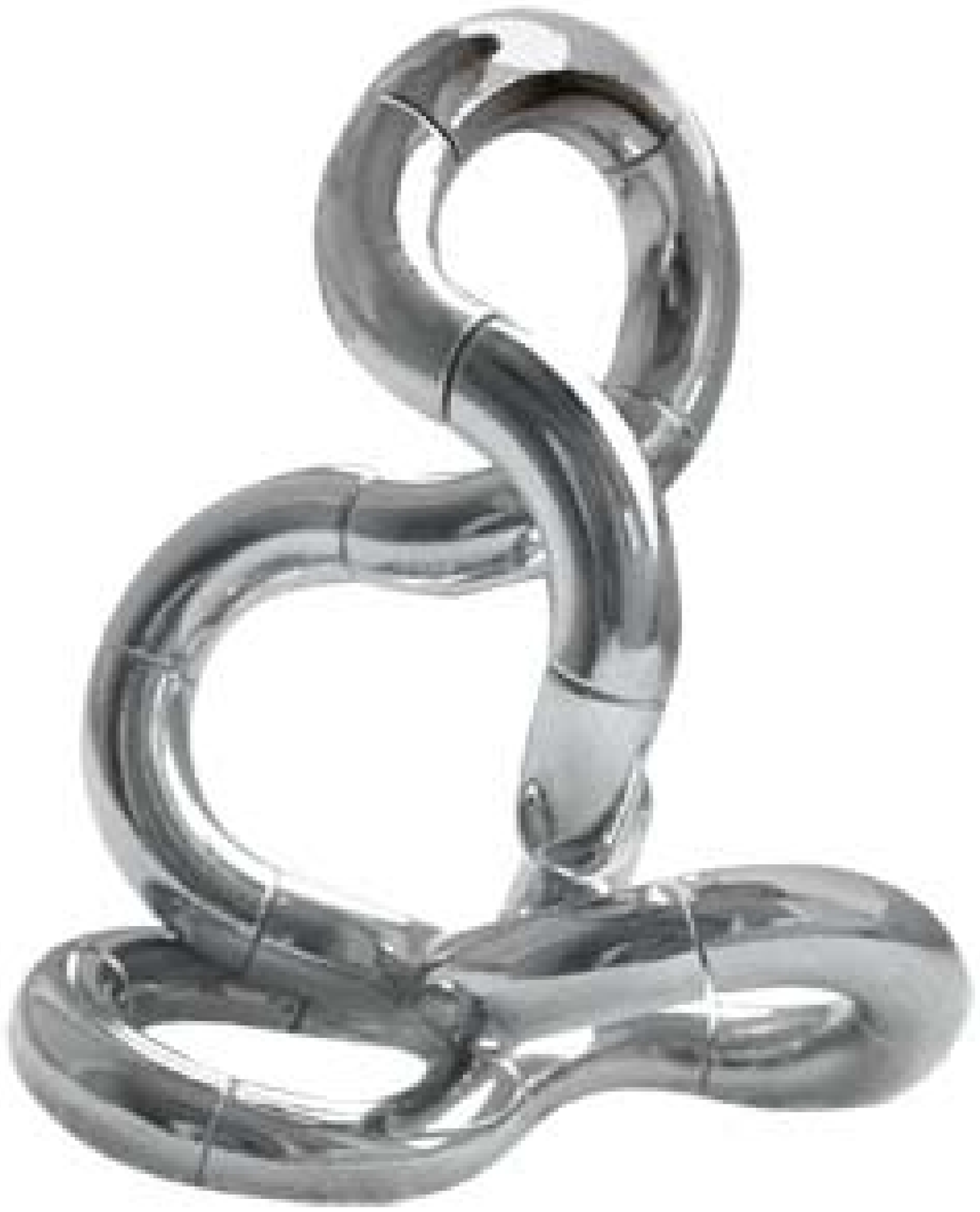}%
\\
\textbf{Zawitz's Tangle}$^{\textregistered}$\textbf{ moves only by wagging.}%
\end{center}
The Tangle is a linkage with only one local degree of freedom. \ It moves only
by wagging.

\bigskip

The second and third linkages are the \textbf{Bendandle}$^{\text{TM}}$ and the
\textbf{Universal Bendangle}$^{\text{TM}}$, invented by Samuel J. Lomonaco
(patents pending), and shown respectively in the two figures given below:

\bigskip

\quad%
{\parbox[b]{2.2044in}{\begin{center}
\includegraphics[
height=1.6596in,
width=2.2044in
]%
{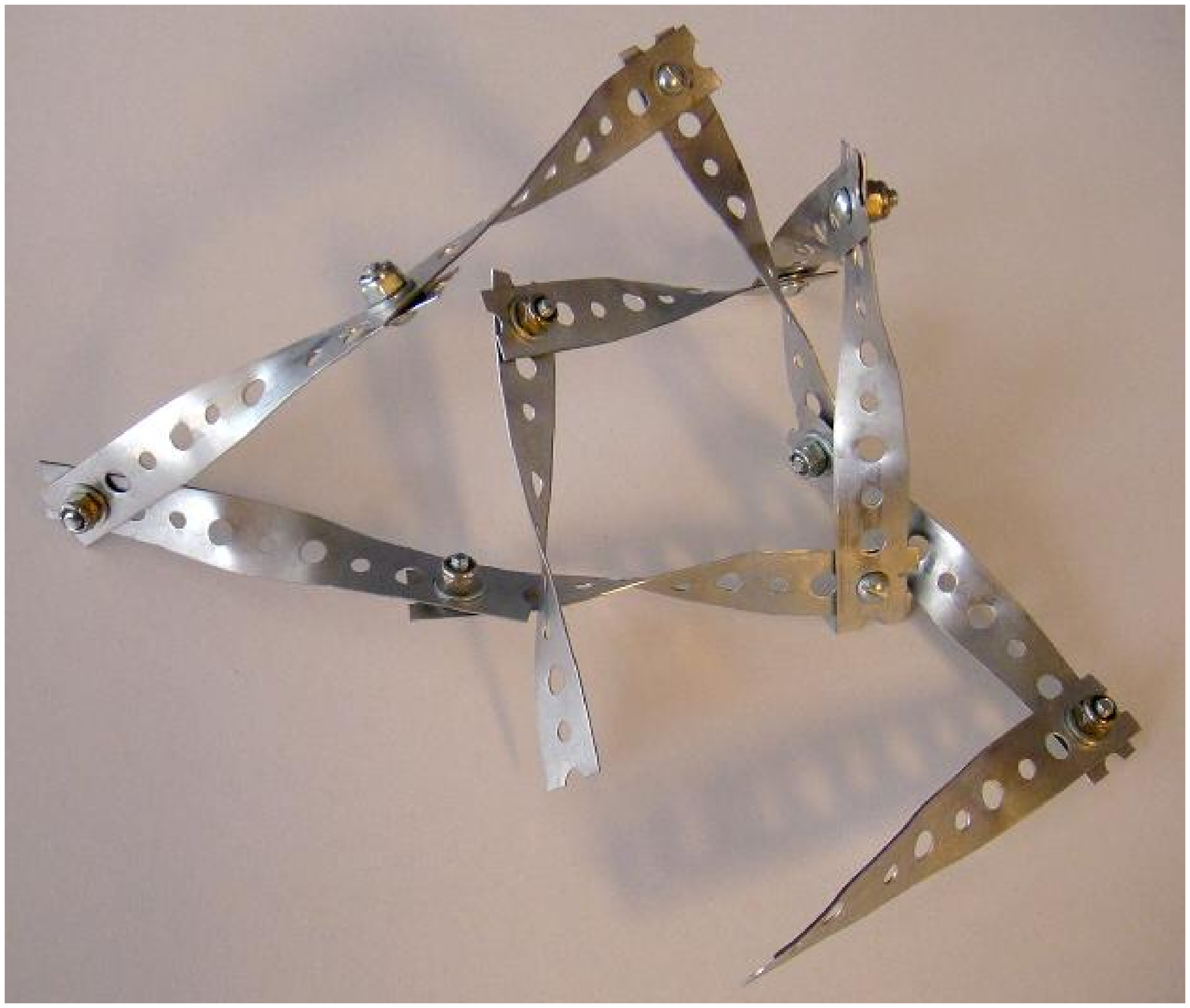}%
\\
\textbf{Lomonaco's Bendangle}$^{\text{\textbf{TM}}}$\textbf{ moves only by
wiggling. \ (Patent pending)}%
\end{center}}}%
\qquad\qquad%
{\parbox[b]{2.2044in}{\begin{center}
\includegraphics[
height=1.6596in,
width=2.2044in
]%
{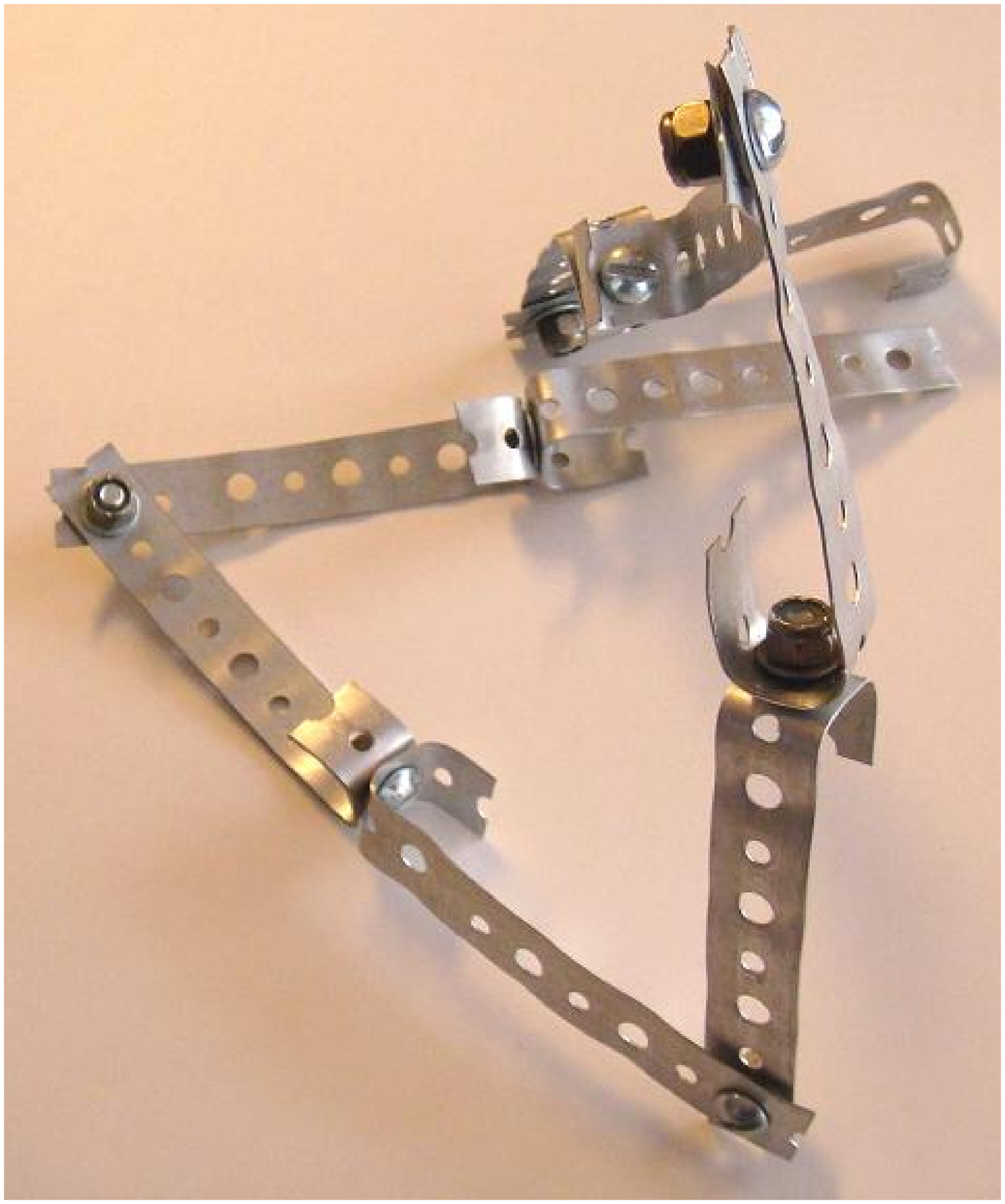}%
\\
\textbf{Lomonaco's Universal Bendangle}$^{\text{\textbf{TM}}}$\textbf{ moves
only by wiggling and wagging. (Patent pending.)}%
\end{center}}}%

\bigskip

The Bendangle also has only one local degree of freedom, but in this case,
moves only by bending. \ The Universal Bendangle, as its name suggests, has
two local degrees of freedom, moving only by bending and twisting.

\bigskip

In some sense, these three examples further support the key intuition that
curves in 3-space have in some sense three local degrees of freedom.

\bigskip

\section{A translation of mechanical engineering into knot theory}

\bigskip

Let us now translate mechanical engineering into knot theory.

\bigskip

\begin{definition}
Two piecewise linear (PL) knots $K_{1}$ and $K_{2}$ are said to be of the
\textbf{same knot type}, written%
\[
K_{1}\thicksim K_{2}\text{ ,}%
\]
provided there exist finite subdivisions $K_{1}^{\prime}$ and $K_{2}^{\prime}$
of $K_{1}$ and $K_{2}$ respectively such that one can be transformed into the
other by a finite sequence of the following three \textbf{local moves}:
\end{definition}

\begin{itemize}
\item[\textbf{1)}] A \textbf{tug}: \
\begin{center}
\includegraphics[
height=0.4281in,
width=2.0695in
]%
{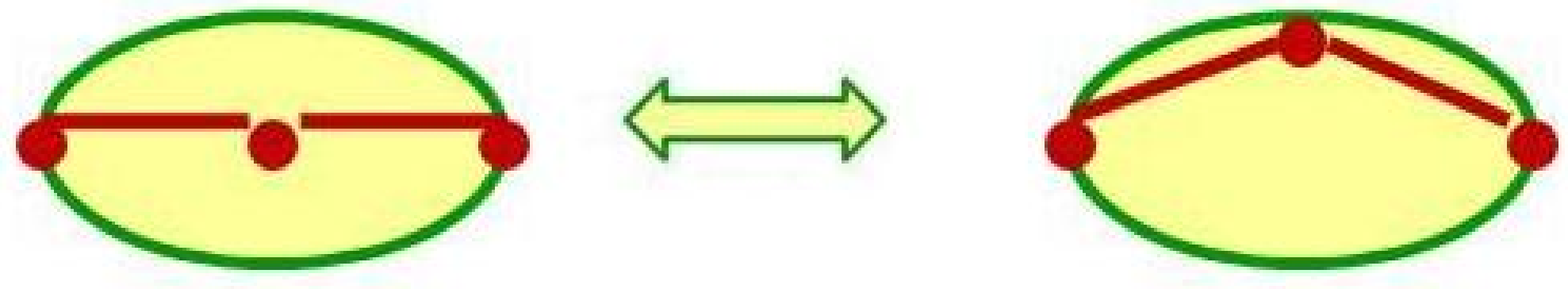}%
\end{center}
\bigskip

\item[\textbf{2)}] A \textbf{wiggle}:
\begin{center}
\includegraphics[
height=0.4281in,
width=2.028in
]%
{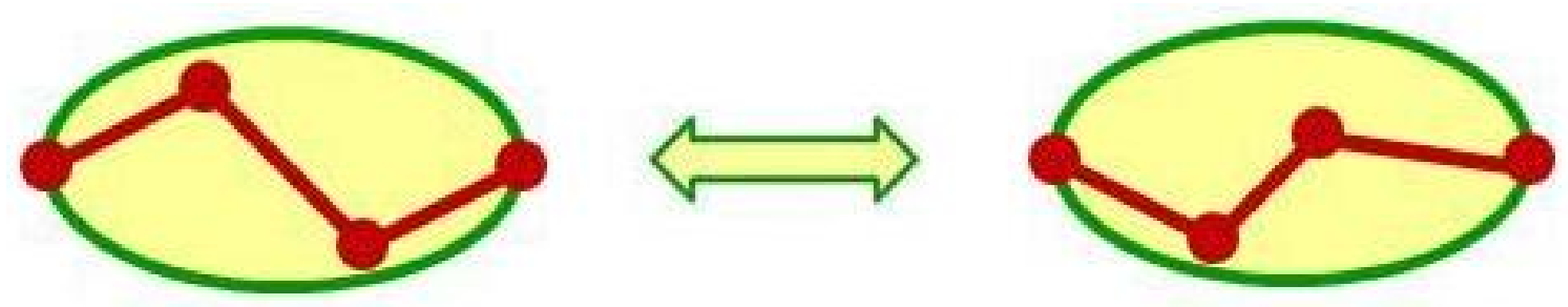}%
\end{center}
\bigskip

\item[\textbf{3)}] A \textbf{wag}:
\begin{center}
\includegraphics[
height=0.9063in,
width=2.348in
]%
{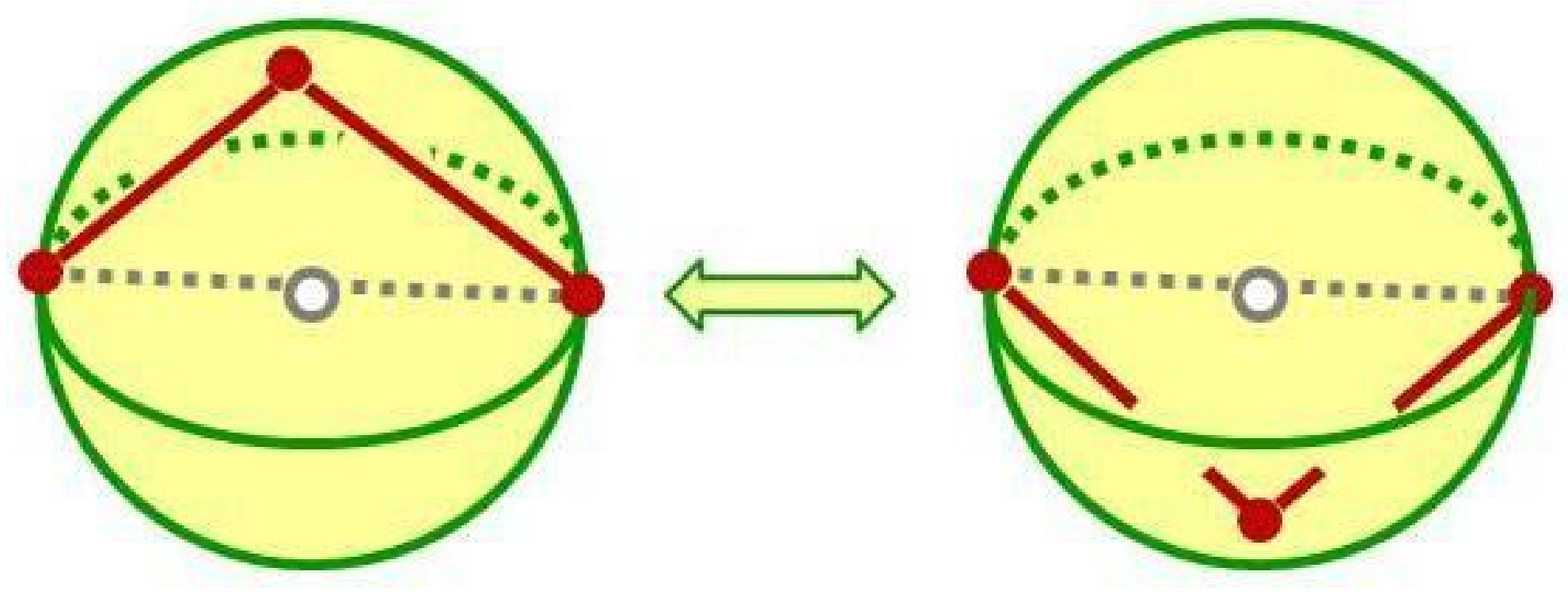}%
\end{center}

\end{itemize}

\bigskip

Using the methods found in Reidemeister's proof of the completeness of the
Reidemeister moves, we have:

\bigskip

\begin{theorem}
Wiggles and wags can be expressed as finite sequences of tugs.
\end{theorem}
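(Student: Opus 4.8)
The plan is to follow the template of Reidemeister's completeness proof: first reduce both a wiggle and a wag to the single elementary piecewise-linear move --- the triangle move (or $\Delta$-move) --- and then realize each $\Delta$-move by a short sequence of tugs. Recall that a $\Delta$-move replaces an edge $[PQ]$ of the knot by the two edges $[PR]\cup[RQ]$, where the solid triangle $PQR$ meets the knot in the single edge $[PQ]$; its inverse deletes the vertex $R$. Because a tug is precisely the in-plane elongation/contraction move, and because a $\Delta$-move takes place entirely within the single plane spanned by $P$, $Q$, and $R$, the $\Delta$-move is a natural candidate for synthesis out of tugs.

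First I would establish the key reduction: \emph{every $\Delta$-move is a finite sequence of tugs}. Working in the plane of the triangle $PQR$, I would insert a vertex $M$ in the interior of $[PQ]$ (an allowed subdivision, realized as a degenerate tug), producing the straight path $P$-$M$-$Q$. Holding the endpoints $P$ and $Q$ fixed, a tug slides the interior vertex $M$ within this plane along the one-parameter family supplied by the slider; by composing finitely many such tugs --- introducing and later deleting auxiliary collinear vertices as needed --- I can drive $M$ from the segment $[PQ]$ out to the apex $R$, thereby converting $[PQ]$ into $[PR]\cup[RQ]$. The inverse $\Delta$-move is obtained by running the same sequence of tugs backwards. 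Throughout, all intermediate configurations stay inside the solid triangle $PQR$, which by hypothesis meets the knot only along $[PQ]$, so every intermediate tug is legal.

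With this reduction in hand, I would assemble the two moves. A wag displaces a single interior vertex $B$ of a path $A$-$B$-$C$ to a new position $B'$ lying off the original plane while fixing $A$ and $C$. I would realize this displacement as two $\Delta$-moves: first replace $[BC]$ by $[BB']\cup[B'C]$ (creating $B'$), then delete $B$ via the triangle $ABB'$ (replacing $[AB]\cup[BB']$ by $[AB']$), each triangle chosen in general position so as to meet the knot only in the permitted edges. By the reduction each $\Delta$-move is a sequence of tugs, hence so is the wag; note that although the individual $\Delta$-moves, and the individual tugs, lie in various planes and temporarily alter edge lengths, their composite recovers the length-preserving, out-of-plane wag. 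A wiggle, being a planar, length-preserving displacement of the two interior vertices of a path $A$-$B$-$C$-$D$, is handled the same way by displacing $B$ and then $C$, each displacement being a pair of $\Delta$-moves and therefore a sequence of tugs.

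The hard part will be the transversality (general-position) bookkeeping, exactly as in Reidemeister's argument. One must guarantee that every intermediate solid triangle swept out by the auxiliary vertices meets the rest of the knot only in the edges being operated on; this is secured by triangulating the two-dimensional region swept out by the wiggle or wag into sufficiently small triangles --- permitted by the finite-subdivision clause of the definition --- and then invoking compactness and general position so that each small $\Delta$-move is supported in a region disjoint from the remainder of the knot. A secondary technical point is that a single tug carries only one local degree of freedom (the slider), so arbitrary in-plane repositioning of a vertex must be achieved by composing several tugs together with auxiliary collinear subdivisions; verifying that finitely many such tugs suffice to traverse each small triangle is the routine computation underlying the key reduction.
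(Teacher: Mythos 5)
Your proposal is correct and takes essentially the same approach as the paper: the paper justifies this theorem by invoking the methods of Reidemeister's completeness proof together with the remark that Reidemeister's triangle move is essentially a tug, which is exactly your two-step reduction of wiggles and wags to $\Delta$-moves and of $\Delta$-moves to tugs. The only difference is one of detail --- the paper leaves the $\Delta$-move/tug identification as a one-line remark, whereas you reconstruct it explicitly along with the general-position bookkeeping.
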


\bigskip

\begin{remark}
In fact, Reidemeister's fundamental move, i.e., his \textbf{triangle move}, is
essentially a tug. \ (See \cite{Reidemeister1, Reidemeister2}.)
\end{remark}

\bigskip

So it would appear that we have accomplished nothing!

\bigskip

But on the contrary, we have indeed accomplished something after all. \ For we
are now in a position to alter knot theory in such a way as to incorporate
more of the geometry of 3-space. \ The telltale clue is that wiggle and wag
are \textbf{inextensible moves}, while tug is not. By an \textbf{inextensible
move}, we mean one that does not locally change the length of a curve (and
hence, preserves global length.)

\bigskip

\begin{definition}
Two piecewise linear (PL) knots $K_{1}$ and $K_{2}$ are said to be of the
\textbf{same inextensible knot type}, written
\[
K_{1}\approx K_{2}\text{ ,}%
\]
provided there exist finite subdivisions $K_{1}^{\prime}$ and $K_{2}^{\prime}$
of $K_{1}$ and $K_{2}$ respectively such that one can be transformed into the
other by applying a finite sequence of wiggles and wags.
\end{definition}

\bigskip

\begin{theorem}
Two PL knots $K_{1}$ and $K_{2}$ are of the same knot type if and only if they
have \bigskip

\begin{itemize}
\item[\textbf{1)}] The same inextensible knot type, i.e. $K_{1}\approx K_{2}$
, and\bigskip

\item[\textbf{2)}] The same length, i.e., $\left\vert K_{1}\right\vert
=\left\vert K_{2}\right\vert $ .
\end{itemize}
\end{theorem}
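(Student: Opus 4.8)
The content of the equivalence is the assertion that, among PL knots of a fixed total length, topological equivalence and inextensible equivalence coincide, and the plan is to separate it into a formal half and a substantive half. The formal half is the implication $K_{1}\approx K_{2}\Rightarrow\left(K_{1}\sim K_{2}\text{ and }\left\vert K_{1}\right\vert =\left\vert K_{2}\right\vert\right)$: if $K_{1}\approx K_{2}$, then after suitable finite subdivisions $K_{1}$ is carried to $K_{2}$ by a finite sequence of wiggles and wags; since each wiggle and each wag is one of the three local moves defining $\sim$, this same sequence witnesses $K_{1}\sim K_{2}$, and since wiggles and wags are \emph{inextensible}, no move in the sequence alters total arc length, so $\left\vert K_{1}\right\vert =\left\vert K_{2}\right\vert$. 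All the real difficulty therefore lies in the converse: \emph{two knots of the same knot type and the same length are of the same inextensible knot type}, and this is where I would concentrate the argument.

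For the substantive implication I would first normalize. Using $\left\vert K_{1}\right\vert =\left\vert K_{2}\right\vert =L$ together with the fact that finite subdivision is built into the relation $\approx$, I would subdivide each knot into $N$ edges of common length $L/N$, so that both $K_{1}$ and $K_{2}$ become closed equilateral $N$-gons. Next I would invoke the completeness machinery already available: since $K_{1}\sim K_{2}$, Reidemeister's theorem in its PL form — where the generating move is the triangle move, which by the remark above is essentially a tug — furnishes a finite sequence of tugs (with any interspersed wiggles and wags kept as they are) carrying $K_{1}$ to $K_{2}$. The sole obstruction to this being an $\approx$-equivalence is the presence of tugs, which change edge lengths and hence the total length of the intermediate configurations.

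The heart of the proof is then to trade each tug for inextensible moves while respecting the global length constraint. I would factor each tug across a triangle as a wag twisting the relevant arc into the plane of the triangle, a wiggle bending it, and a residual \emph{pure stretch} $\delta_{i}$ along a single edge, and record the signed stretches $\delta_{1},\delta_{2},\dots$. Because the first and last configurations both have length $L$, these signed stretches sum to zero. To realize a net-zero collection of pure stretches by wiggles and wags alone, I would splice onto the knot a small mobile \textbf{slack reservoir}: a short equilateral zig-zag whose end-to-end span can be increased by straightening it and decreased by folding it, purely through wiggles, without changing either its total edge length or the knot type. A positive stretch is then paid for by releasing span from the reservoir and a negative stretch by absorbing span into it; since the net stretch is zero, the reservoir returns to its initial shape and can be removed, recovering $K_{2}$ exactly.

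The step I expect to be the true obstacle is precisely this global length bookkeeping. One must show that the reservoir can be created and later deleted by wiggles and wags alone, that it can be transported along the knot to the site of each tug without the moving arc passing through the remainder of the knot — the same delicate general-position issue that drives Reidemeister-type proofs — and that its span range always dominates $\max_{k}\left\vert\sum_{i\le k}\delta_{i}\right\vert$, so that it never runs out of capacity to cover the instantaneous deficit or surplus. Verifying that the inextensible moves can be sequenced so as never to force a self-intersection is where the genuine work lies. An alternative to the reservoir construction would be a configuration-space argument, showing that the space of length-$L$, fixed-knot-type equilateral $N$-gons is path connected and that wiggles and wags generate its path components; but the discrete triangle-move approach stays closer to the method cited in the statement, and so is the one I would pursue first.
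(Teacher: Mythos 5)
Your reading of the statement is the charitable and correct one: as literally written ($K_{1}\thicksim K_{2}$ iff $K_{1}\approx K_{2}$ and $\left\vert K_{1}\right\vert =\left\vert K_{2}\right\vert$) the "only if" half is false, since tugs change length and hence $\thicksim$ does not preserve it; the intended content is $K_{1}\approx K_{2}\Longleftrightarrow\left(  K_{1}\thicksim K_{2}\text{ and }\left\vert K_{1}\right\vert =\left\vert K_{2}\right\vert \right)$, which is what you actually argue, and your "formal half" of that equivalence is correct. Note also that the paper itself contains no proof to compare against: the theorem is asserted on the strength of "Reidemeister's methods," with the details of inextensible knot theory deferred to a reference listed as in preparation. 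So your proposal must stand on its own, and its hard direction does not.

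The genuine gap is in the reservoir bookkeeping, and it is not just an unchecked detail — as described, the mechanism fails. A crimped zig-zag can only \emph{store} surplus arc length; it cannot \emph{manufacture} length. Your scheme therefore handles stages $k$ at which the running sum $\sum_{i\le k}\delta_{i}$ is negative (the ambient tug-sequence configuration is shorter than $L$), but if the Reidemeister-derived sequence ever passes through a configuration of length greater than $L$ — and nothing in that theorem prevents this; untangling a tight configuration typically requires lengthening first — then the reservoir would need negative content, and no amount of capacity helps. Pre-loading the reservoir is not an escape: "splicing on" and later "removing" a zig-zag are length-changing operations, hence not inextensible, so the reservoir must be harvested from the knot's own arc length; but creating a crimp is itself not a free primitive, because a straight subarc whose endpoints are held fixed is rigid under wiggles and wags (its arc length equals the distance between its endpoints), so generating slack anywhere forces a global deformation bringing two points of the knot closer together — an instance of the very problem being solved. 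The standard repair is a normal-form lemma proved monotonically: show every length-$L$ representative can be carried by wiggles and wags to a tiny standard diagram of its knot type together with a standard crimp holding the remaining length, arranging that the simulated ambient configurations only ever shrink, so the stored surplus is always nonnegative; then $K_{1}$ and $K_{2}$ reach the same normal form. Your fallback — connectivity of the space of equilateral $N$-gons of fixed length and knot type for fixed $N$ — is known to be delicate and in general fails (locked polygonal chains exist), so any correct argument must exploit subdivision essentially, as the monotone normal-form route does.
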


\bigskip

Thus, nothing from classical knot theory is lost with the above modified
definition of knot type. \ But on the other hand, with this modified
definition, we have succeeded in incorporating more of the geometry of 3-space
into knot theory!\footnote{For a more detailed justification of this
definition, we refer the reader to Section 16 of this paper.}

\bigskip

\section{Part 1: Lattice Knots}

\bigskip

\bigskip Lest we forget, one of our objectives is to create a firm
mathematical foundation for the intuition that sufficiently well-behaved
curves in 3-space have three local (i.e., infinitesimal) degrees of freedom.
\ We would like to answer the following question:

\bigskip

\noindent\textbf{Question:} Can we transform the following intuitive moves
into well-defined \textbf{infinitesimal} moves?

\bigskip

\hspace{-0.5in}%
\begin{tabular}
[c]{|c|c|c|}\hline%
{\includegraphics[
height=0.7325in,
width=0.3122in
]%
{i-wigglel.ps}%
}%
& $%
\raisebox{0.2508in}{\includegraphics[
height=0.2361in,
width=0.5967in
]%
{arrow-ya.ps}%
}%
$ &
{\includegraphics[
height=0.7247in,
width=0.3061in
]%
{i-wiggler.ps}%
}%
\\\hline
\multicolumn{3}{|c|}{%
\begin{tabular}
[c]{c}%
\textbf{Wiggle}\\
\textbf{Curvature }$\kappa$\textbf{ Move}\\
\textbf{Inextensible}%
\end{tabular}
}\\\hline
\end{tabular}
\qquad%
\begin{tabular}
[c]{|c|c|c|}\hline%
{\includegraphics[
height=0.7334in,
width=0.2906in
]%
{i-wagglel.ps}%
}%
& $%
\raisebox{0.2508in}{\includegraphics[
height=0.2361in,
width=0.5967in
]%
{arrow-ya.ps}%
}%
$ &
{\includegraphics[
height=0.7334in,
width=0.2906in
]%
{i-waggler.ps}%
}%
\\\hline
\multicolumn{3}{|c|}{%
\begin{tabular}
[c]{c}%
\textbf{Wag}\\
\textbf{Torsion }$\tau$\textbf{ Move}\\
\textbf{Inextensible}%
\end{tabular}
}\\\hline
\end{tabular}
\qquad%
\begin{tabular}
[c]{|c|c|c|}\hline%
{\includegraphics[
height=0.704in,
width=0.1487in
]%
{i-tugglel.ps}%
}%
& $%
\raisebox{0.2508in}{\includegraphics[
height=0.2361in,
width=0.5967in
]%
{arrow-ya.ps}%
}%
$ &
{\includegraphics[
height=0.6918in,
width=0.3269in
]%
{i-tuggler.ps}%
}%
\\\hline
\multicolumn{3}{|c|}{%
\begin{tabular}
[c]{c}%
\textbf{Tug}\\
\textbf{Elongation/Contraction Move}\\
\textbf{Extensible}%
\end{tabular}
}\\\hline
\end{tabular}

\bigskip

The easiest way to answer this question is to use a "scaffolding" for 3-space,
i.e., the so called cubic honeycomb.

\bigskip

\section{Lattice knots}

\bigskip

For each non-negative integer $\ell$, let $\mathcal{L}_{\ell}$ denote the
three dimensional \textbf{lattice} of points
\[
\mathcal{L}_{\ell}=\mathbb{\ }\frac{1}{2^{\ell}}\mathbb{Z}\times
\mathbb{Z}\times\mathbb{Z}=\left\{  \left(  \frac{m_{1}}{2^{\ell}},\frac
{m_{2}}{2^{\ell}},\frac{m_{3}}{2^{\ell}}\right)  :m_{1},m_{2},m_{3}%
\in\mathbb{Z}\right\}  \text{ ,}%
\]
lying in Euclidean 3-space $\mathbb{R}^{3}$, where $\mathbb{Z}$ denotes the
set of rational integers. \ This lattice determines a tiling of $\mathbb{R}%
^{3}$ by cubes of edge $2^{\ell}$, called the \textbf{cubic honeycomb}
(a.k.a., the \textbf{cubic tesselation}) \textbf{of order} $\ell$.%
\begin{center}
\includegraphics[
height=1.9061in,
width=1.9061in
]%
{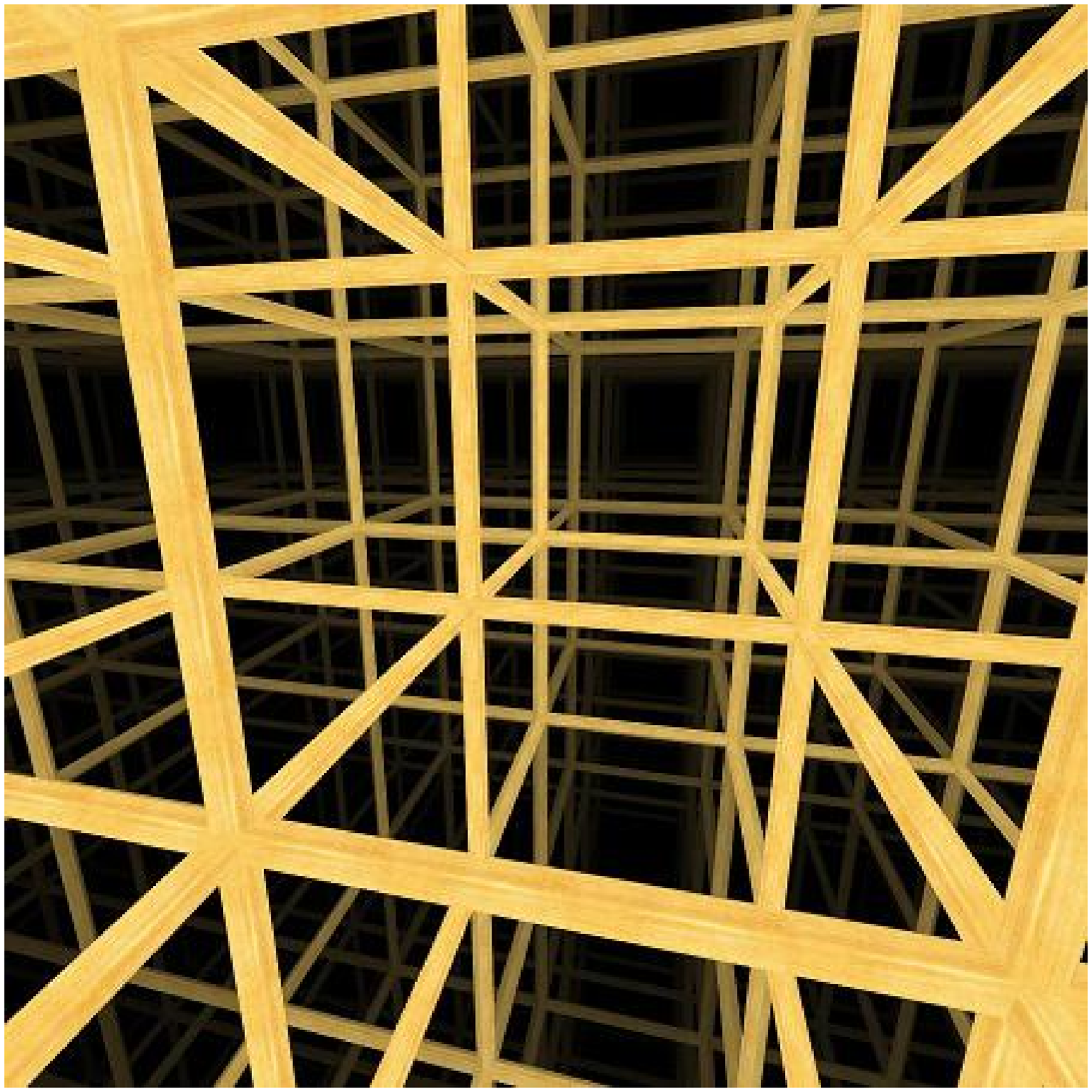}%
\\
\textbf{Cubic honeycomb of 3-space. [Figure taken from Wikipedia.]}%
\end{center}

We think of this honeycomb as a cell complex $\mathcal{C}_{\ell}$ for
$\mathbb{R}^{3}$ consisting of \textbf{0-cells, 1-cells, 2-cells, }and
\textbf{3-cells} called respectively \textbf{vertices, edges, faces, }and
\textbf{cubes}.%
\begin{center}
\fbox{\includegraphics[
height=1.2687in,
width=3.4281in
]%
{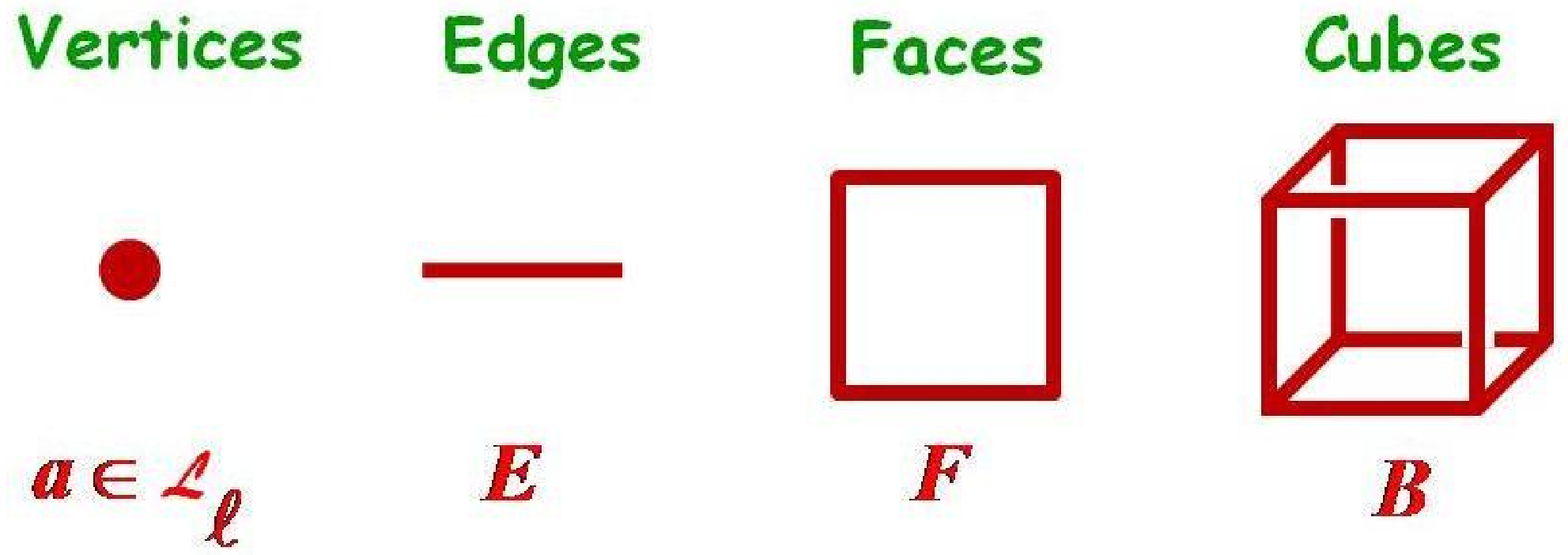}%
}
\end{center}
\textit{All cells of positive dimension are assumed to be open cells.
\ }Moreover, let $\mathcal{C}_{\ell}^{j}$ denote the $j$-\textbf{skeleton} of
the cell complex $\mathcal{C}_{\ell}$ for $j=0,1,2,3$.

\bigskip

\begin{definition}
A \textbf{lattice graph} $G$ (\textbf{of order} $\ell$) is a finite subset of
edges (together with their respective vertices) of the cubic honeycomb
$\mathcal{C}_{\ell}$. \ A \textbf{lattice knot} $K$ (\textbf{of order} $\ell$)
is a 2-valent lattice graph of order $\ell$. \ Moreover, let $\mathbb{G}%
^{(\ell)}$ and $\mathbb{K}^{(\ell)}$ respectively denote the \textbf{set of
all lattice graphs (of order }$\ell$\textbf{.)} and the \textbf{set of all
lattice knots} (\textbf{of order} $\ell$).
\end{definition}

\bigskip

\noindent\textbf{Reminder:} \ \textit{Throughout this paper, the term "knot"
will refer to both knots and links.}

\bigskip

Two examples of lattice knots are illustrated in the figure below:

\bigskip

\begin{center}%
{\parbox[b]{2.348in}{\begin{center}
\includegraphics[
height=1.6276in,
width=2.348in
]%
{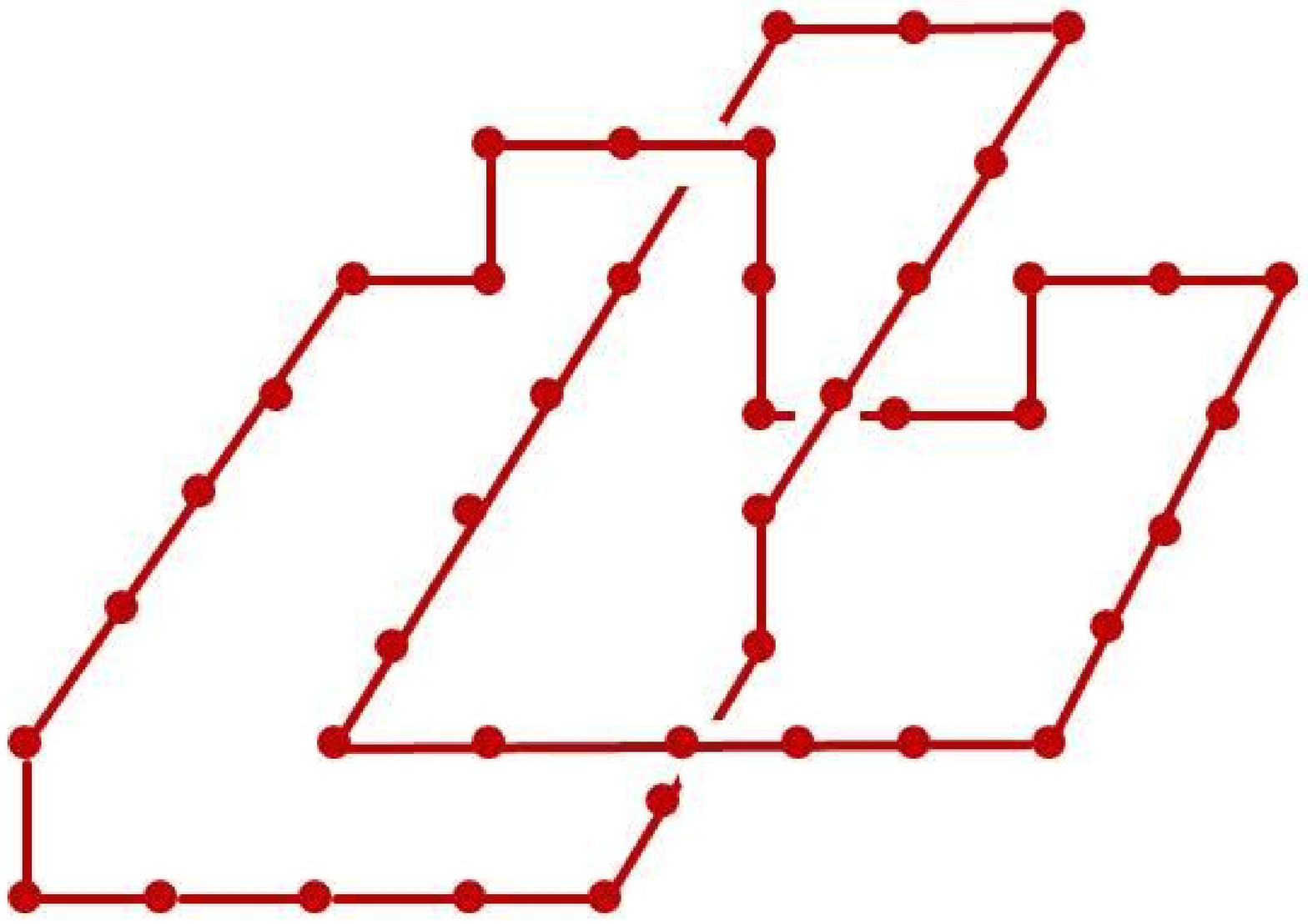}%
\\
\textbf{A lattice trefoil knot.}%
\end{center}}}%
\qquad%
{\parbox[b]{2.2771in}{\begin{center}
\includegraphics[
height=1.6786in,
width=2.2771in
]%
{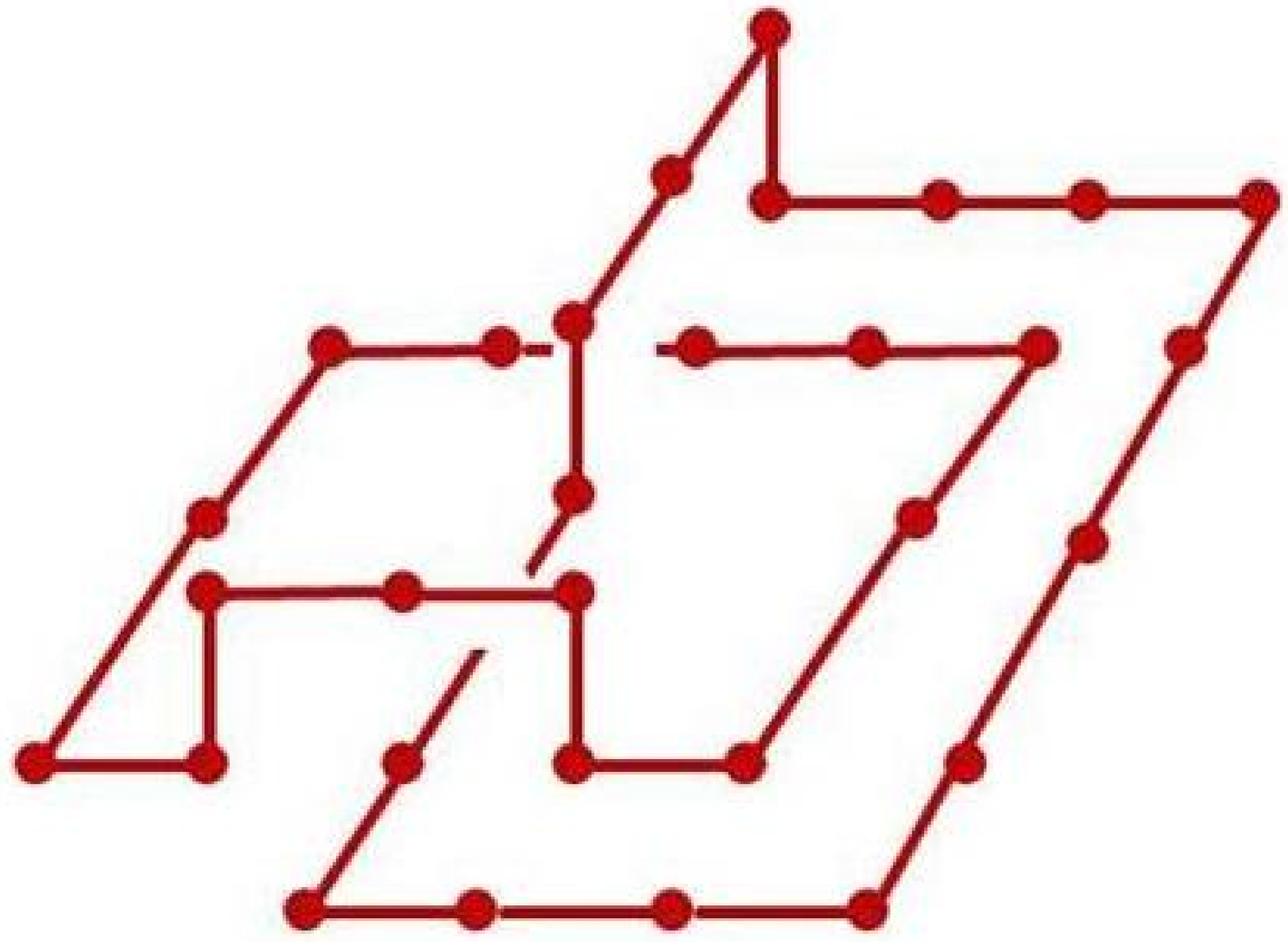}%
\\
\textbf{A lattice Hopf link.}%
\end{center}}}%

\end{center}

\bigskip

\section{Basic terminology and conventions}

\bigskip

Before we can proceed further, we will need to create an infrastructure and
nomenclature in which to work. \ 

\bigskip

\begin{remark}
The reader may find it convenient to quickly skim through this section, and
later to refer back to it as needed.
\end{remark}

\bigskip

We define an orientation of Euclidean 3-space $\mathbb{R}^{3}$ by selecting a
right handed frame
\[
e=\left\{  e_{1}=\left(
\begin{array}
[c]{c}%
1\\
0\\
0
\end{array}
\right)  ,e_{2}=\left(
\begin{array}
[c]{c}%
0\\
0\\
1
\end{array}
\right)  ,e_{3}=\left(
\begin{array}
[c]{c}%
0\\
0\\
1
\end{array}
\right)  \right\}
\]
at the origin, and by parallel transporting it to each vertex $a$ of the
honeycomb $\mathcal{C}_{\ell}$. \ We will refer to this frame as the
\textbf{preferred frame}.

\bigskip

\begin{definition}
A vertex $a$ of a cube $B$ is called the \textbf{preferred vertex of cube} $B$
if $B$ lies in the first octant of the preferred frame at $a$. \ Since $B$ is
uniquely determined by its preferred vertex, we use the following notation:%
\[
B=B^{(\ell)}(a)\text{ .}%
\]
The \textbf{preferred edges} and \textbf{the preferred faces of the cube}
$B^{(\ell)}(a)$ are respectively the edges and faces of $B^{(\ell)}(a)$ that
have $a$ as a vertex. \ We let
\[
E_{p}^{(\ell)}(a)\text{ and }F_{p}^{(\ell)}(a)
\]
denote respectively the\textbf{ preferred edge parallel to the frame vector}
$e_{p}$ and \textbf{the preferred face perpendicular to the frame vector
}$e_{p}$. The \textbf{preferred edges of} $F_{p}^{(\ell)}(a)$ are the edges of
$F_{p}^{(\ell)}(a)$\ that are preferred edges of the cube $B^{(\ell)}(a)$.
\ Finally, $a$ is called the \textbf{preferred vertex of the edge}
$E_{p}^{(\ell)}(a)$ \textbf{and of the face} $F_{p}^{(\ell)}(a)$
\end{definition}

\bigskip%

\begin{center}
\includegraphics[
height=3.2785in,
width=4.779in
]%
{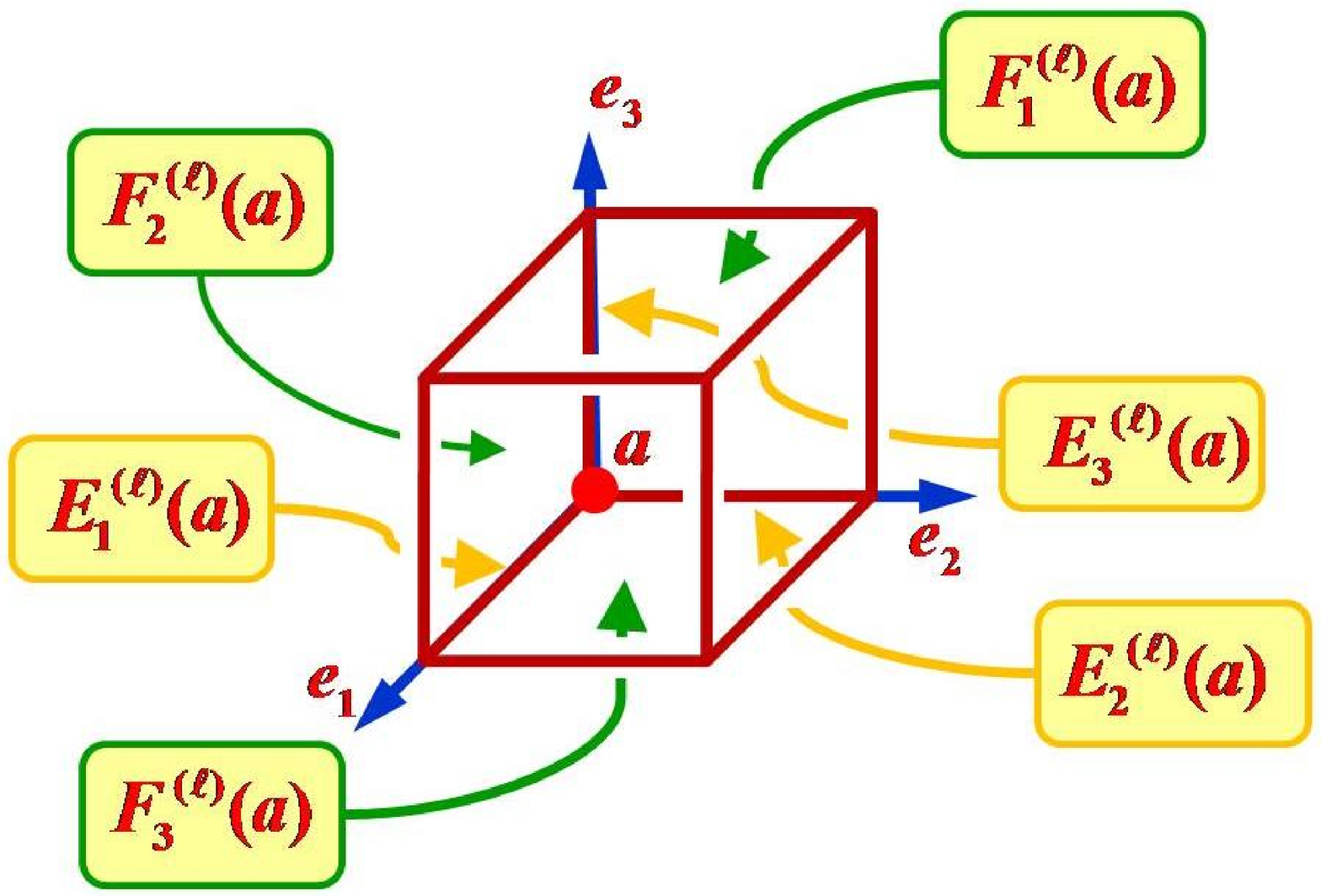}%
\\
\textbf{The preferred vertex, edges, and faces of the cube }$B^{(\ell)}%
(a)$\textbf{.}%
\end{center}

\bigskip

We will use the following \textbf{drawing conventions}:

\bigskip

\begin{itemize}
\item[\quad] \textbf{First drawing convention for cubes}: \textit{Each cube
}$B^{(\ell)}(a)$\textit{, when drawn in isolation, is drawn with edges
parallel to its preferred frame, and with its preferred vertex in the back
bottom left hand corner.}\bigskip

\item[\quad] \textbf{First drawing convention for faces}: \textit{Each face
}$F_{p}^{(\ell)}(a)$\textit{, when drawn in isolation, is always drawn with
its preferred vertex }$a$\textit{ in the upper left hand corner, and with the
frame vector }$e_{p}(a)$\textit{ pointing out of the page.} \ (Please refer to
the figure below.)%
\begin{center}
\includegraphics[
height=3.0277in,
width=4.0274in
]%
{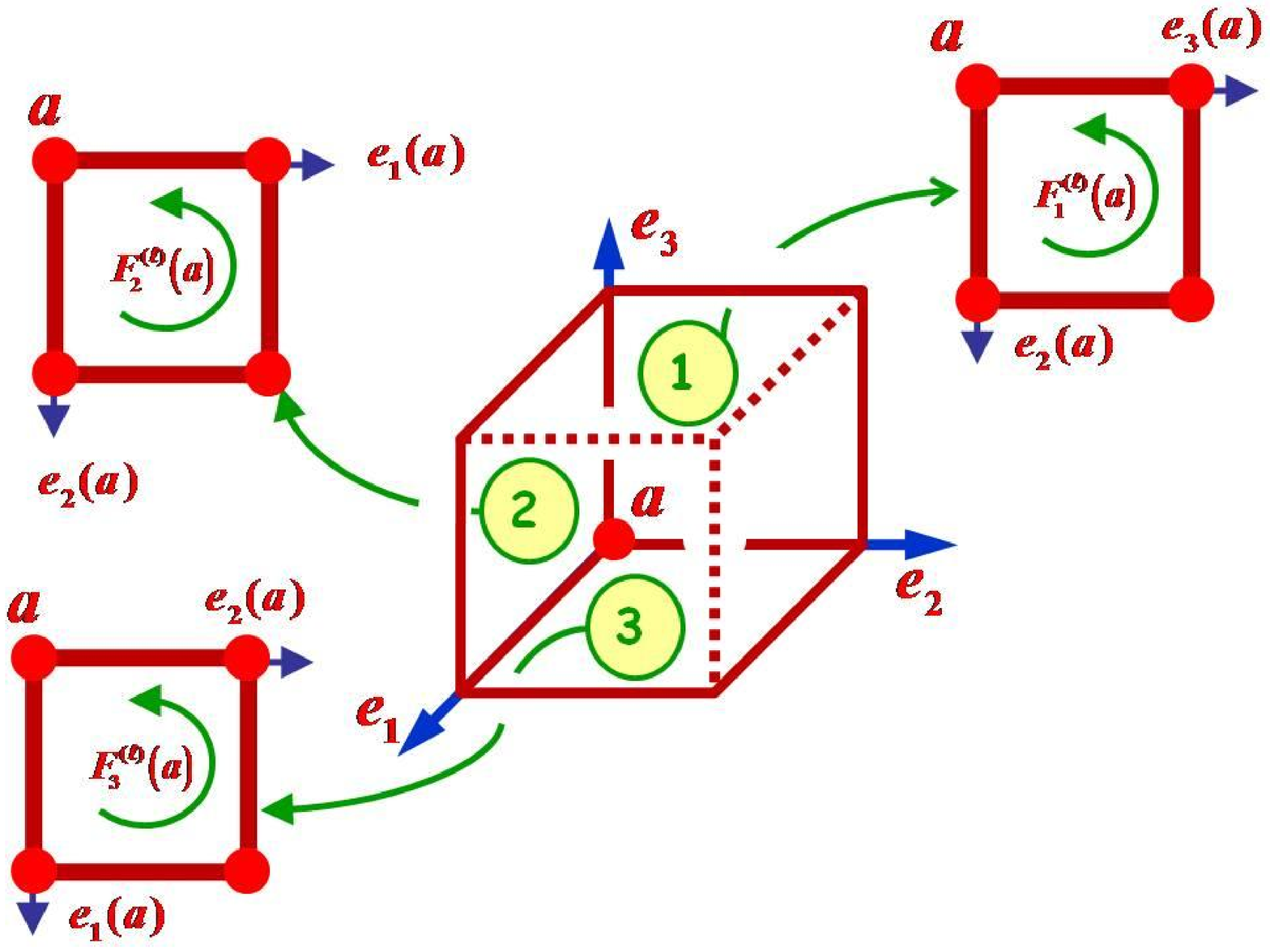}%
\\
\textbf{Drawing conventions for faces.}%
\end{center}

\end{itemize}

\bigskip

\begin{itemize}
\item[\quad] \textbf{Second drawing convention for faces and cubes}:
\textit{We will also make use of the \textbf{left} and \textbf{right
permutations} }$\lfloor$\textit{ and }$\rceil$\textit{ defined by}%
\[%
\begin{array}
[c]{ccc}%
\begin{array}
[c]{c}%
\left\lfloor \ \right.  :\left\{  1,2,3\right\}  \longrightarrow\left\{
1,2,3\right\}  \quad\quad\\
1\longmapsto2\\
2\longmapsto3\\
3\longmapsto1
\end{array}
&  &
\begin{array}
[c]{c}%
\left.  \ \right\rceil :\left\{  1,2,3\right\}  \longrightarrow\left\{
1,2,3\right\}  \quad\quad\\
1\longmapsto3\\
2\longmapsto1\\
3\longmapsto2
\end{array}
\end{array}
\]
These permutations have been defined so that
\[
e_{p}=e_{\left\lfloor p\right.  }\times e_{\left.  p\right\rceil }\text{ ,
\ \ }e_{\left\lfloor p\right.  }=e_{\left.  p\right\rceil }\times e_{p}\text{
, and \ \ }e_{\left.  p\right\rceil }=e_{p}\times e_{\left\lfloor p\right.  }%
\]
where `$\times$' denotes the \textbf{right handed vector cross product}.
\ \ With the left and right permutations, the first drawing conventions for
faces and cubes can now be more generally illustrated as shown below:
\begin{center}
\includegraphics[
height=1.7071in,
width=3.186in
]%
{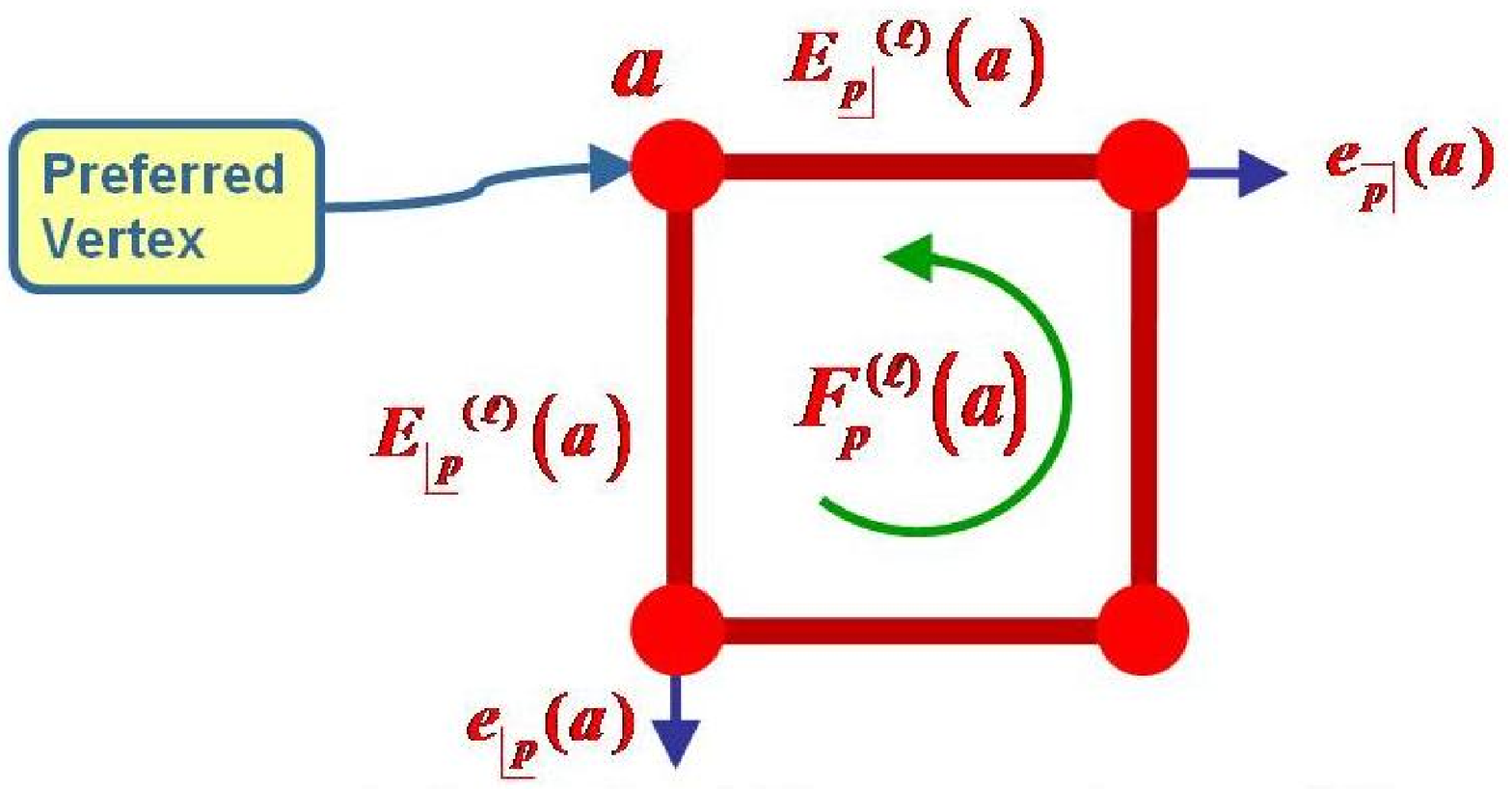}%
\\
\textbf{Face drawing conventions using the left and right permutations
"}$\lfloor$"\textbf{ and "}$\rceil$"\textbf{ . \ The frame vector }$e_{p}%
(a)$\textbf{ points out of the page toward the reader.}%
\end{center}
\begin{center}
\includegraphics[
height=2.7285in,
width=3.7689in
]%
{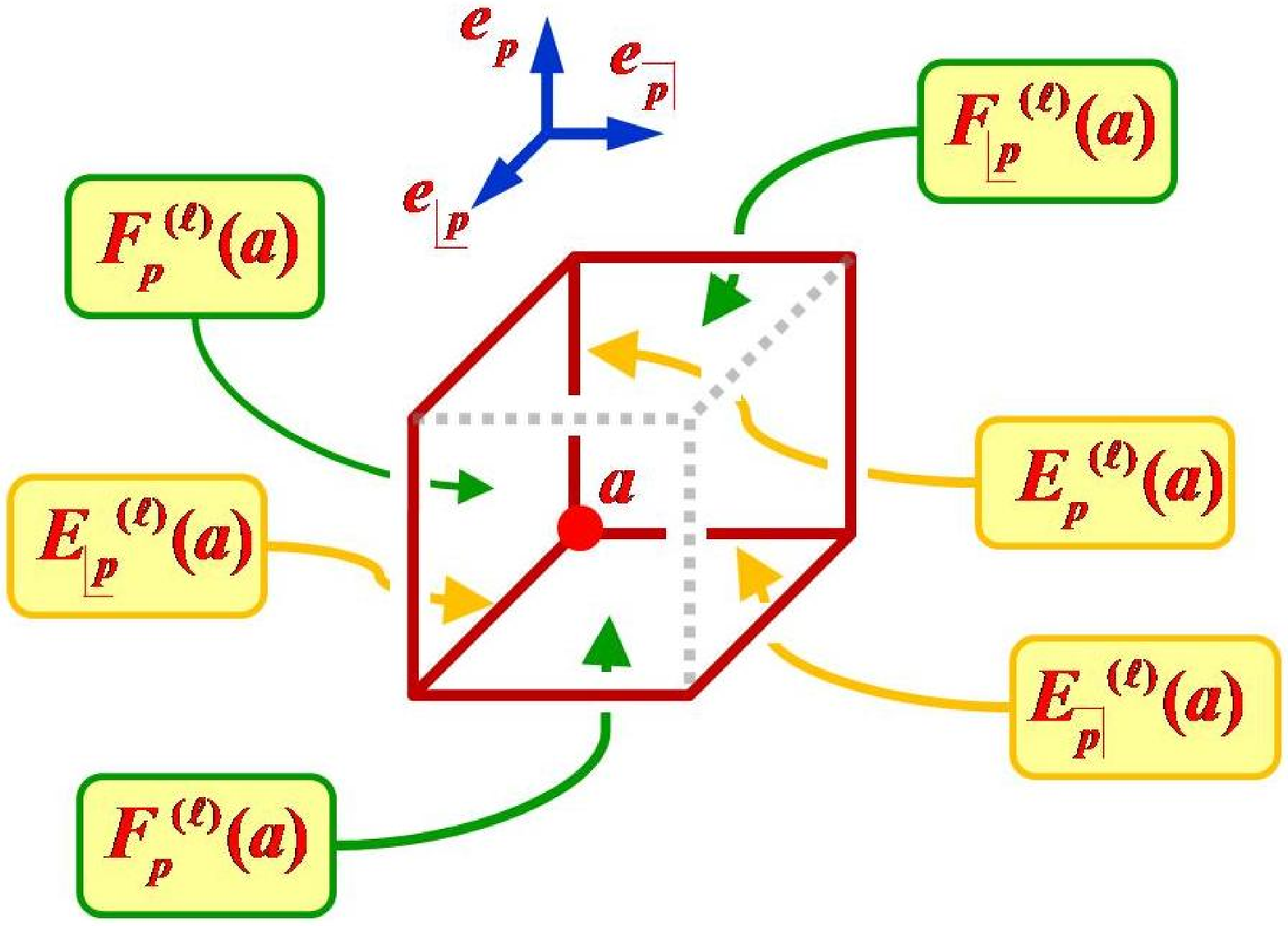}%
\\
\textbf{Cube drawing conventions using the left and right permutations
}$\lfloor$\textbf{ and }$\rceil$\textbf{ .}%
\end{center}

\end{itemize}

\bigskip

\bigskip

We will use the following \textbf{color coding scheme} for the vertices $a$
and edges $E$:%
\[%
\begin{tabular}
[c]{c|c|c}
& $\overset{\mathstrut}{\underset{\mathstrut}{\text{\textbf{Color}}}}$\textbf{
Coding Scheme} & \\\hline\hline
\multicolumn{1}{||c|}{$\overset{\mathstrut}{\underset{\mathstrut
}{\text{\textbf{Solid}}}}$\textbf{ Red}} & \textbf{"Hollow" Gray} &
\multicolumn{1}{|c||}{\textbf{Solid Gray}}\\\hline\hline
\multicolumn{1}{|c|}{$\overset{\mathstrut}{\underset{\mathstrut}{%
{\includegraphics[
height=0.2093in,
width=0.2093in
]%
{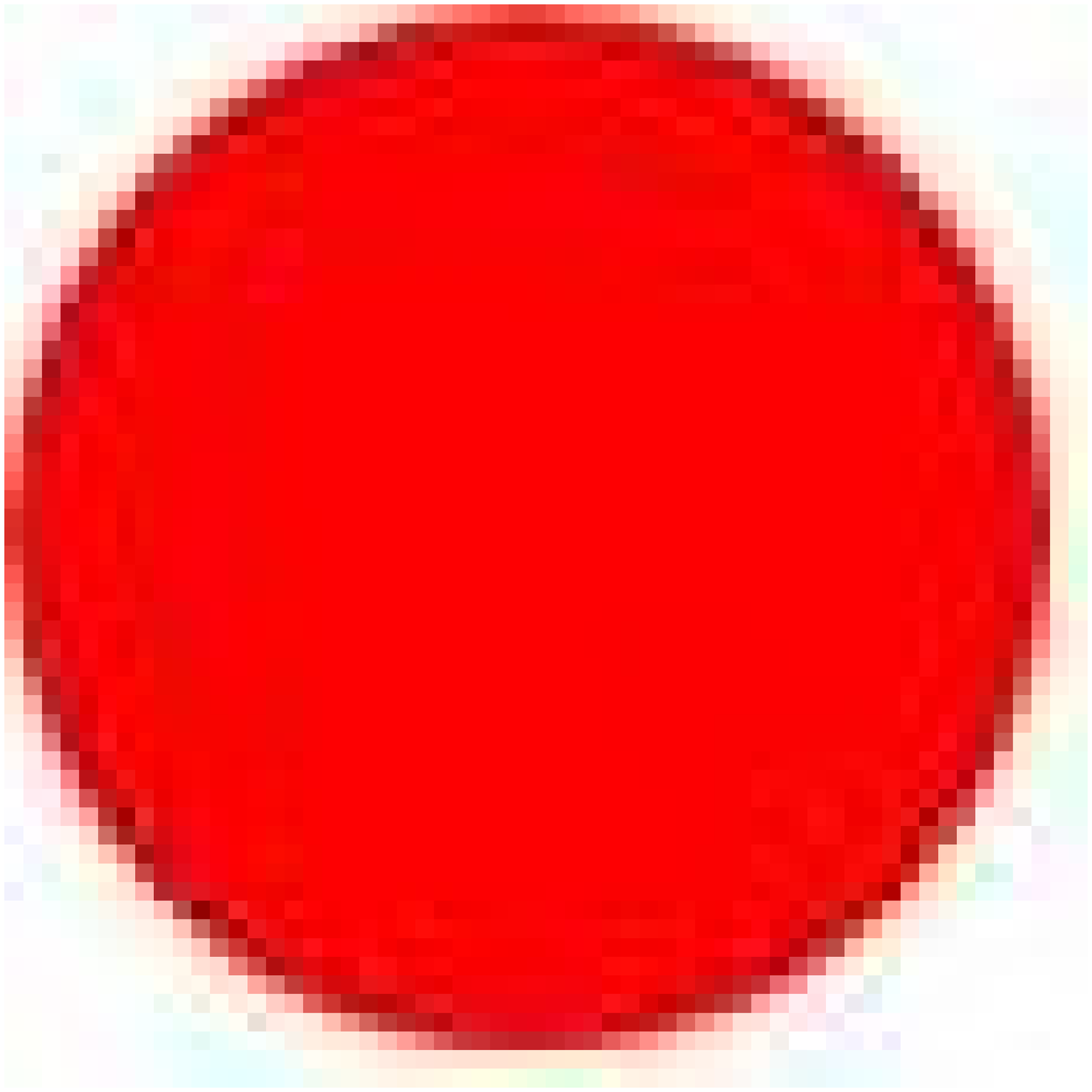}%
}%
}}$\quad%
{\includegraphics[
height=0.1781in,
width=0.6287in
]%
{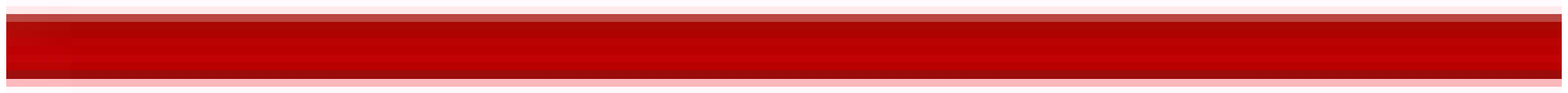}%
}%
} &
{\includegraphics[
height=0.2093in,
width=0.2093in
]%
{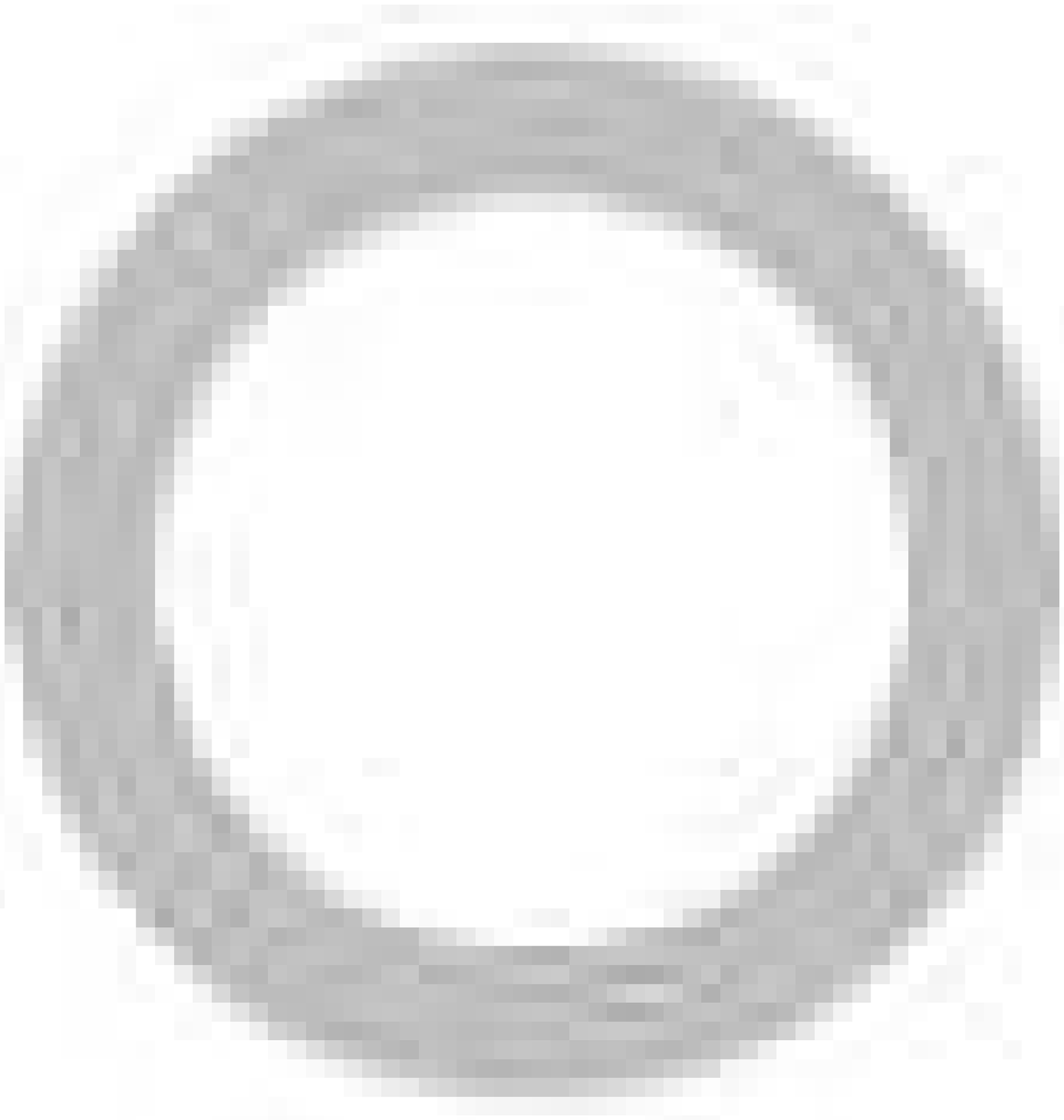}%
}%
\quad%
{\includegraphics[
height=0.1781in,
width=0.6287in
]%
{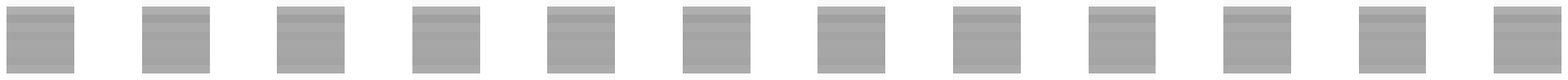}%
}%
& \multicolumn{1}{|c|}{%
{\includegraphics[
height=0.2093in,
width=0.2093in
]%
{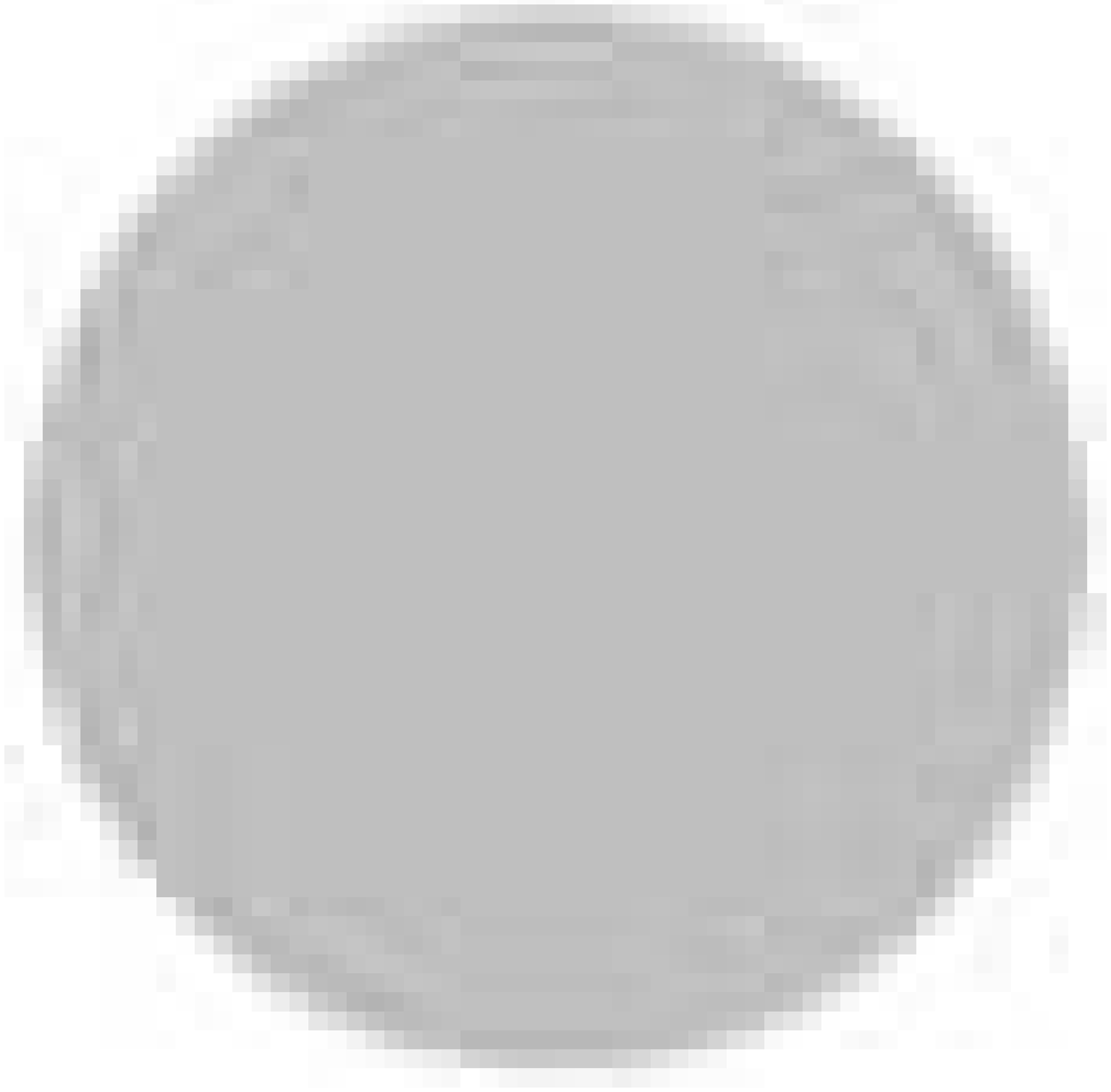}%
}%
\quad%
{\includegraphics[
height=0.1781in,
width=0.6287in
]%
{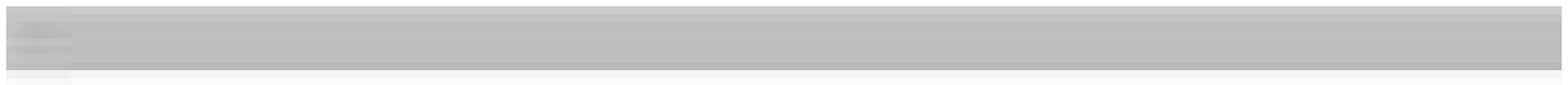}%
}%
}\\\hline
\multicolumn{1}{|c|}{Part of the lattice knot} & Not part of the lattice
knot & \multicolumn{1}{|l|}{%
\begin{tabular}
[c]{c}%
Indeterminate, may\\
or may not be part\\
of the lattice knot.
\end{tabular}
}\\\hline
\end{tabular}
\ \ \
\]

\bigskip%

\begin{center}
\fbox{\includegraphics[
height=3.7775in,
width=4.5282in
]%
{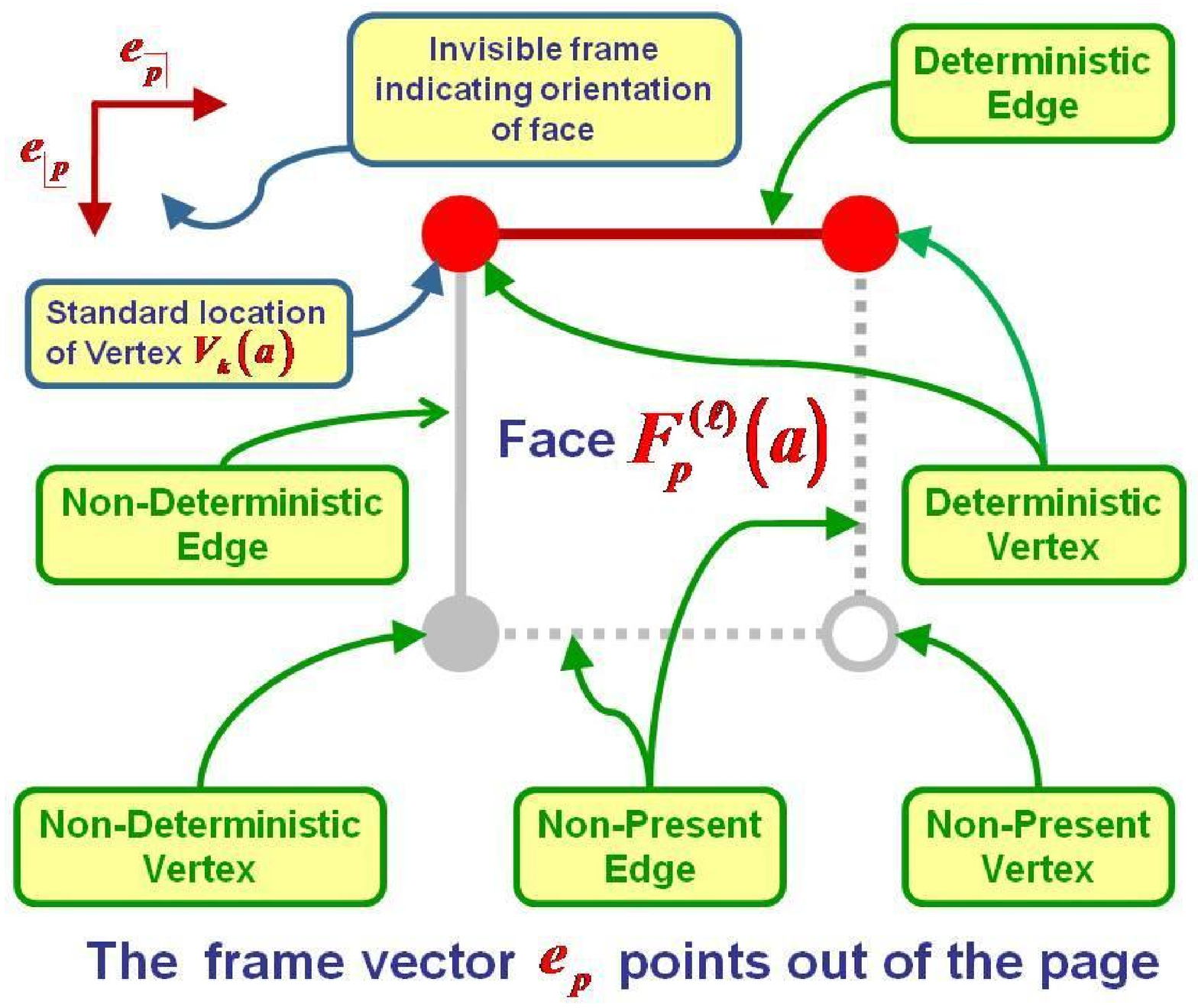}%
}\\
\textbf{An illustration of the color coding scheme..}%
\end{center}

\bigskip

Finally, we will have need of the following definition:

\bigskip

\begin{definition}
For each integer $p=1,2,3$, we define the \textbf{lattice translation map}
from the lattice $\mathcal{L}_{\ell}$ into itself as:
\[%
\begin{array}
[c]{ccc}%
\top_{p}:\mathcal{L}_{\ell} & \longrightarrow & \mathcal{L}_{\ell}\\
\quad\quad\quad a & \longmapsto & a+2^{-\ell}e_{p}%
\end{array}
\]
where $e_{p}$ denotes the $p$-th unit length vector of the preferred frame.
\ Moreover, we will often use the following more compact notation%
\[
\top_{p}a=a^{:p}\text{ .}%
\]
For example, $a^{:1^{2}\overline{2}^{3}3}$ denotes%
\[
a^{:1^{2}\overline{2}^{3}3}=\top_{1}^{2}\top_{2}^{-3}\top_{3}a=a+2\cdot
2^{-\ell}e_{1}-3\cdot2^{-\ell}e_{2}+2^{-\ell}e_{3}%
\]

\end{definition}

\bigskip

\begin{remark}
Throughout this paper, we have made an effort to devise a mathematical
notation that is intuitive as well as non-cumbersome. \ We hope the reader
will find that this is the case.
\end{remark}

\bigskip

\section{Lattice Knot Moves: Wiggle, Wag, and Tug}

\bigskip

Using the graphical conventions prescribed in the previous section, we now
define, for each non-negative integer $\ell$, three \textbf{lattice knot
moves} $L_{1}^{(\ell)}$,$L_{2}^{(\ell)}$, and $L_{3}^{(\ell)}$, called
respectively \textbf{tug}, \textbf{wiggle}, and \textbf{wag}. \ Each lattice
move is a bijection from the set of lattice knots $\mathbb{K}^{(\ell)}$ onto
itself, i.e., a permutation of $\mathbb{K}^{(\ell)}$.

\bigskip

While reading this section, the reader may find it helpful to refer to
notational summaries found in Appendix B.

\bigskip

\subsection{Definition of the move tug}

\bigskip

\bigskip

The first move, called a \textbf{tug}, and denoted by
\[
L_{1}^{(\ell)}\left(  a,p,q\right)  \text{ ,}%
\]
is defined for each of the four edges of each preferred face $F_{p}^{(\ell
)}\left(  a\right)  $ of each cube $B^{(\ell)}\left(  a\right)  $ in the cell
complex $\mathcal{C}_{\ell}$. \ As indicated in the figure given below, we
index the four edges of a preferred face\ $F_{p}^{(\ell)}(a)$, beginning with
the preferred edge $E_{\left\lfloor p\right.  }^{(\ell)}(a)$, with the
integers $q=0$, $1$, $2$, $3$ (also respectively by the symbols $q=$
\raisebox{-0.0199in}{\includegraphics[
height=0.1436in,
width=0.1505in
]%
{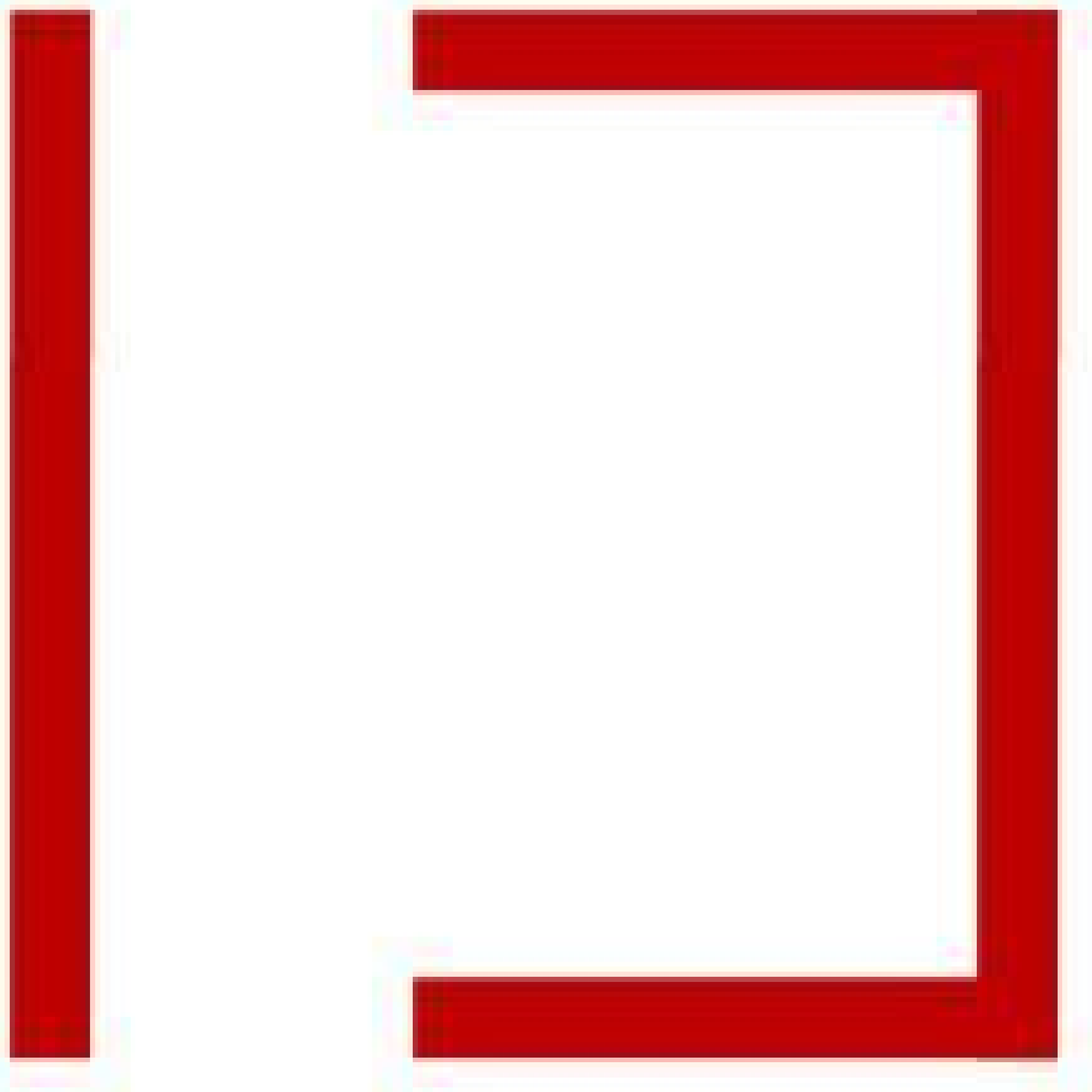}%
}%
,
\raisebox{-0.0199in}{\includegraphics[
height=0.1436in,
width=0.1436in
]%
{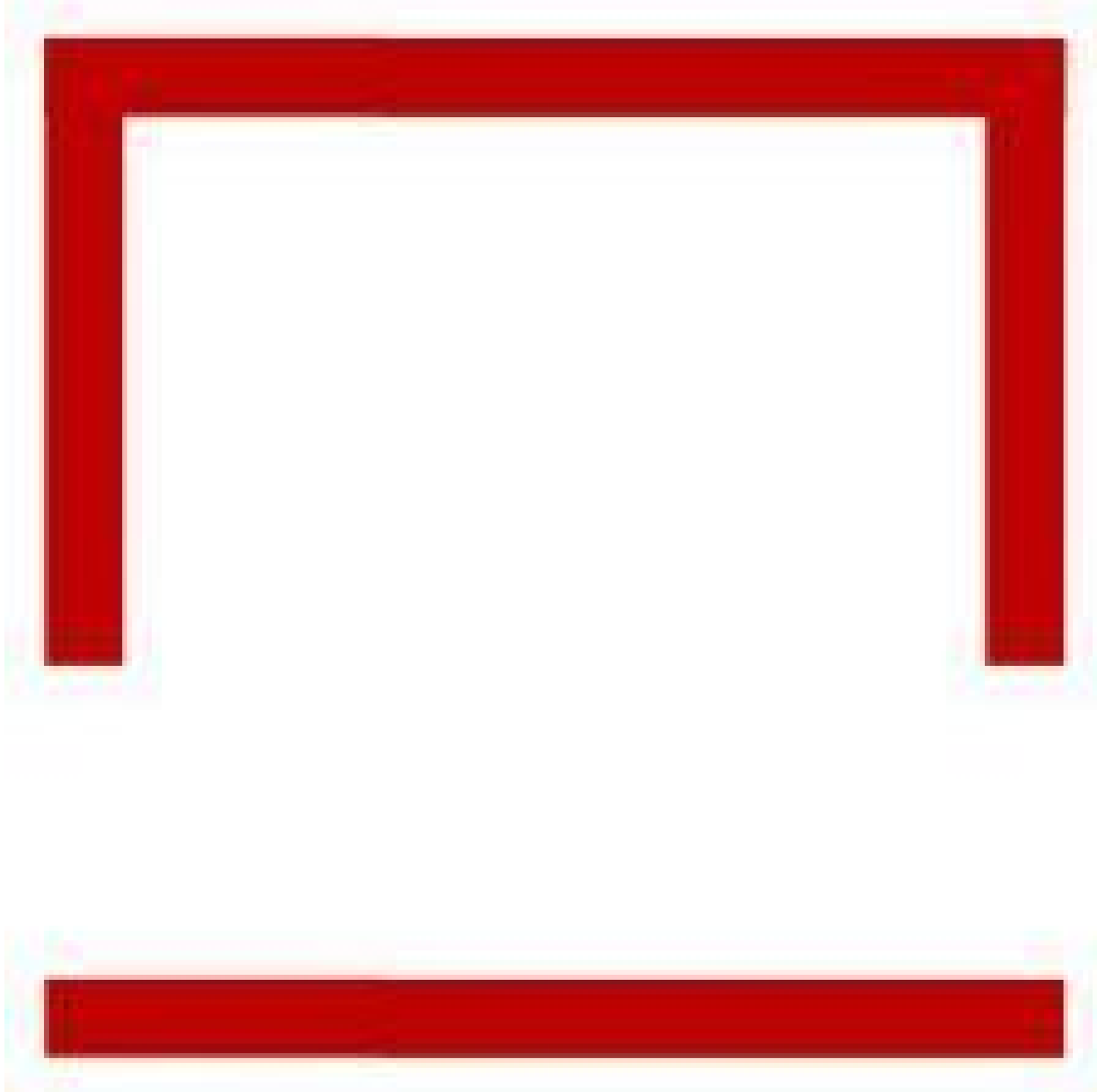}%
}%
,
\raisebox{-0.0199in}{\includegraphics[
height=0.1436in,
width=0.1436in
]%
{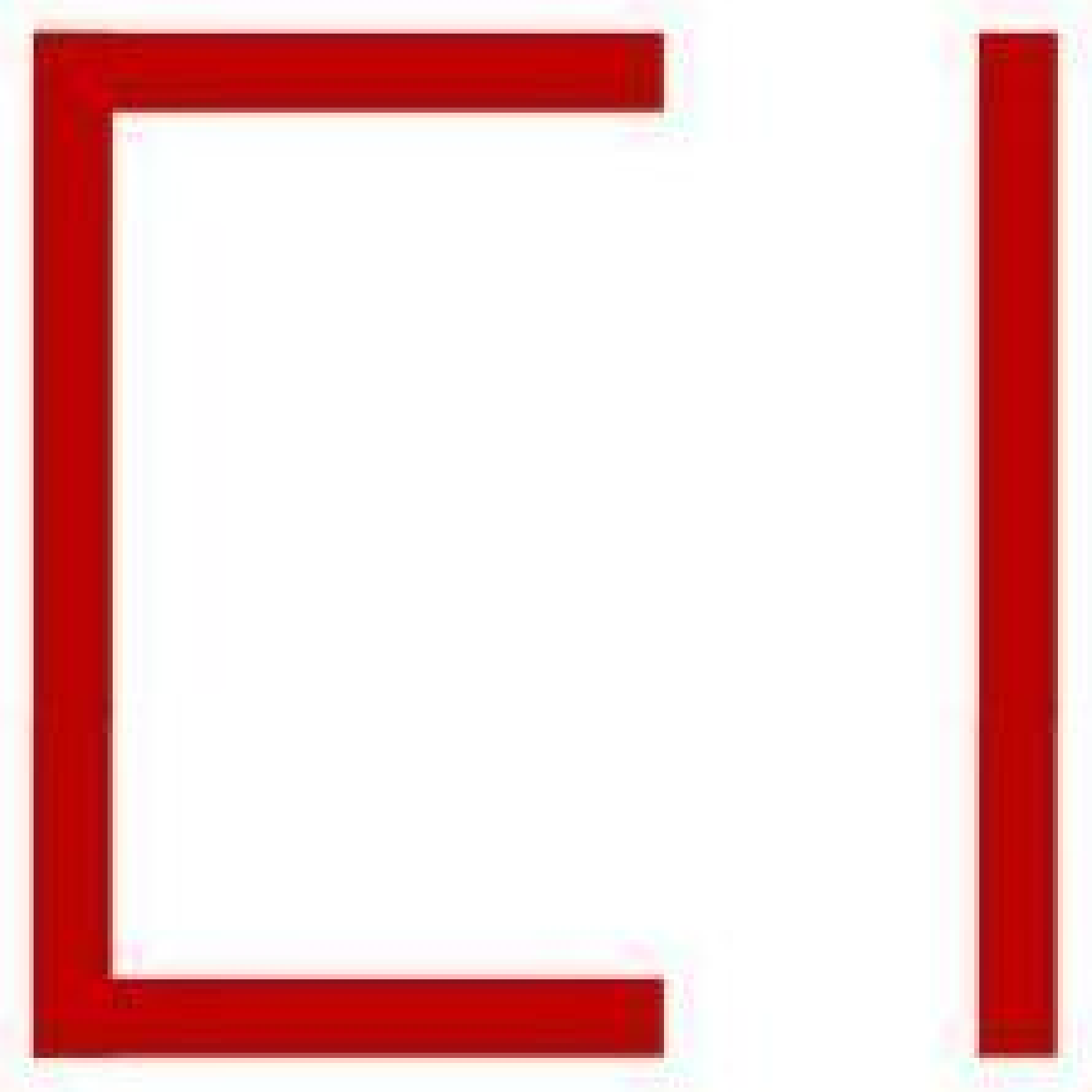}%
}%
,
\raisebox{-0.0199in}{\includegraphics[
height=0.1436in,
width=0.1436in
]%
{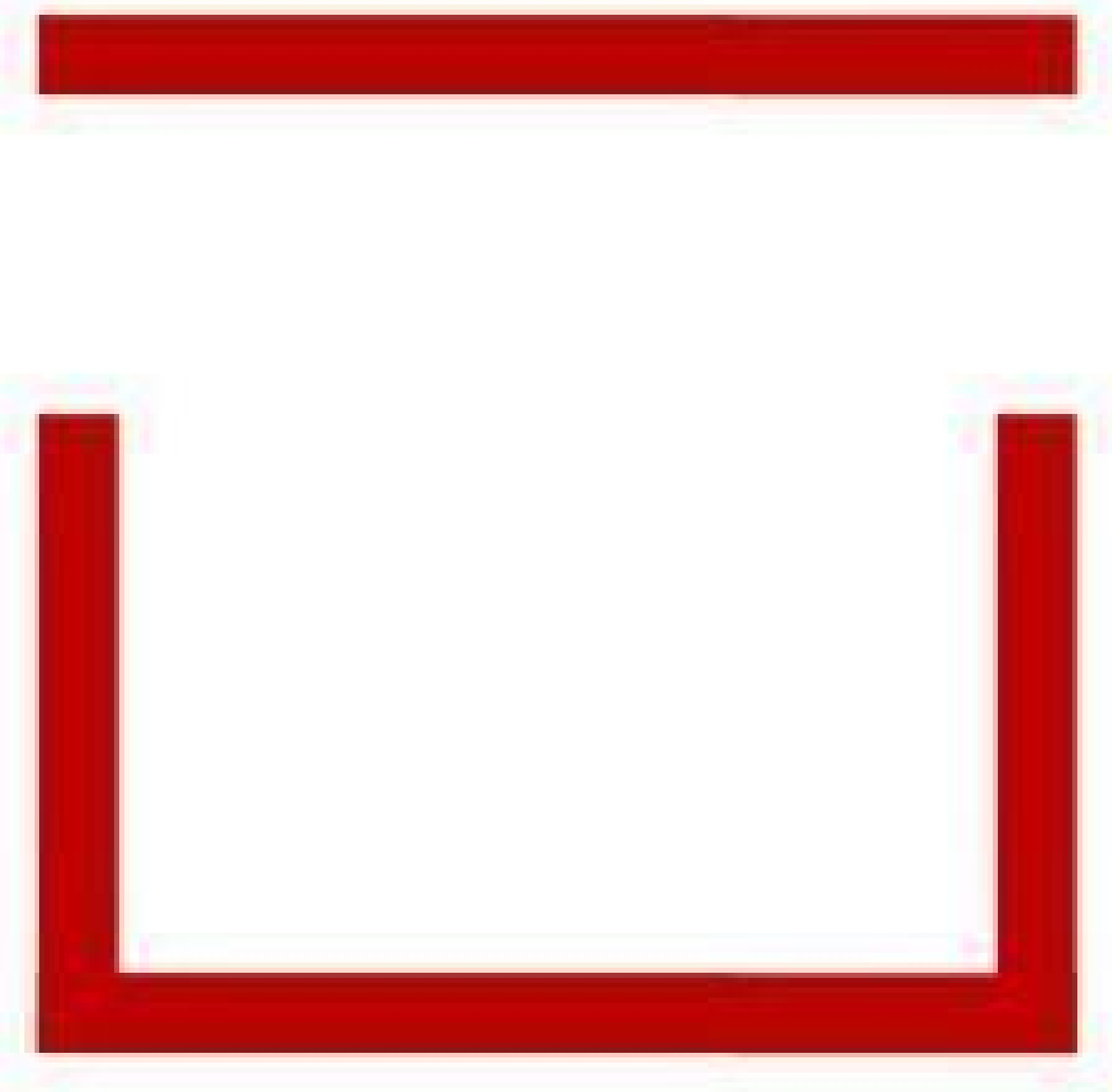}%
}%
), using the counterclockwise orientation induced on the face $F_{p}^{(\ell
)}(a)$ by the preferred frame $e$.

\bigskip%

\begin{center}
\includegraphics[
height=3.7775in,
width=5.028in
]%
{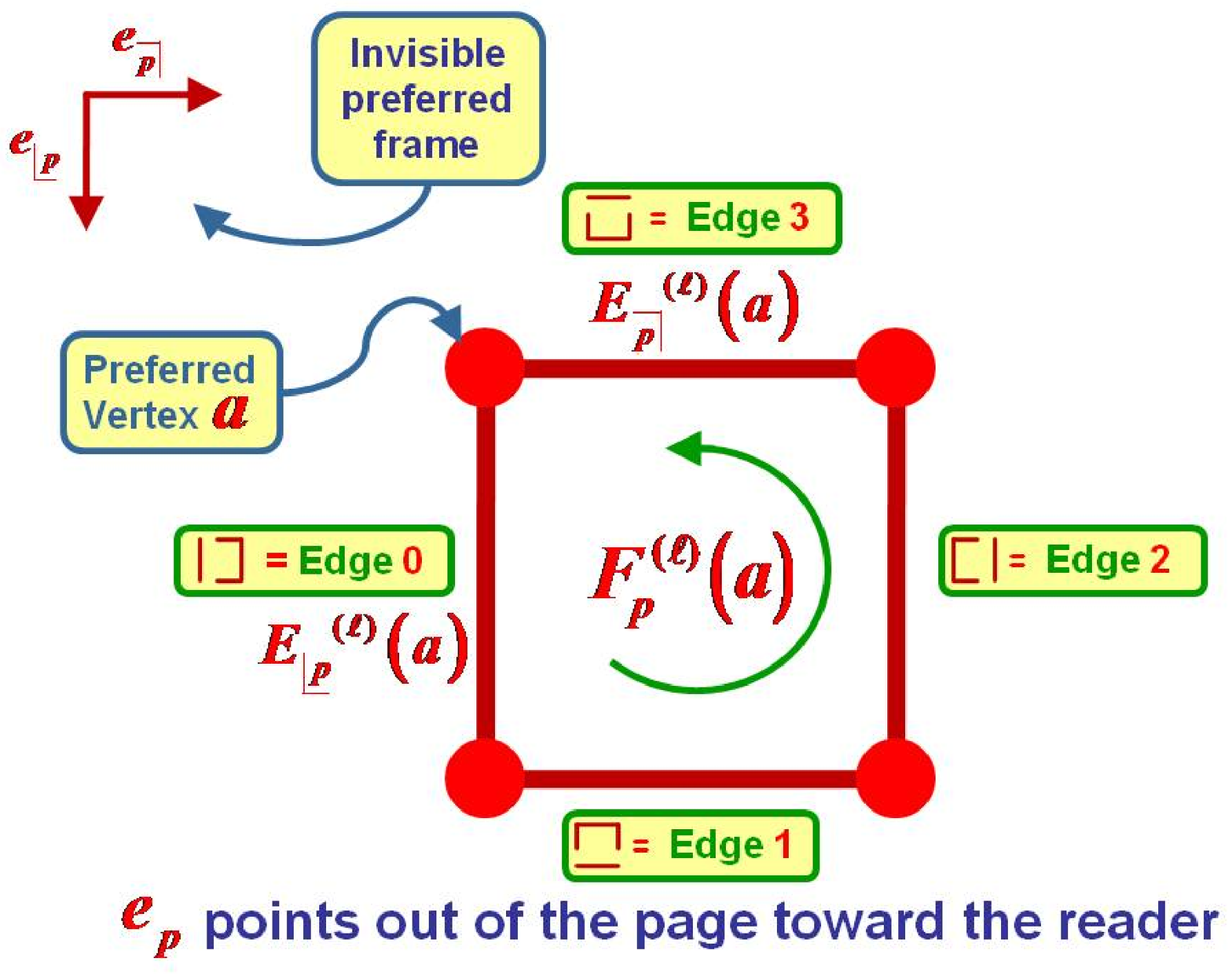}%
\\
\textbf{Edge ordering conventions for tug move }$L_{1}^{(\ell)}(a,p,q)$, for
edges $q=0,1,2,3$, which are also respectively denoted
by\ \raisebox{-0.0199in}{\includegraphics[
height=0.1436in,
width=0.1436in
]%
{icon10.ps}%
}%
$^{(\ell)}$,$\ $\raisebox{-0.0199in}{\includegraphics[
height=0.1436in,
width=0.1436in
]%
{icon11.ps}%
}%
$^{(\ell)}$, \raisebox{-0.0199in}{\includegraphics[
height=0.1436in,
width=0.1436in
]%
{icon12.ps}%
}%
$^{(\ell)}$, and$\ \ $\raisebox{-0.0199in}{\includegraphics[
height=0.1436in,
width=0.1436in
]%
{icon13.ps}%
}%
$^{(\ell)}$.
\end{center}

\bigskip

The tug $L_{1}^{(\ell)}\left(  a,p,q\right)  $ associated with the edge $q=0$
(also denoted by $q=$%
\raisebox{-0.0199in}{\includegraphics[
height=0.1436in,
width=0.1505in
]%
{icon10.ps}%
}%
) of the preferred face $F_{p}^{(\ell)}\left(  a\right)  $ of the cube
$B^{(\ell)}\left(  a\right)  $ will be denoted in anyone of the following
three ways%
\[
L_{1}^{(\ell)}\left(  a,p,%
\raisebox{-0.0199in}{\includegraphics[
height=0.1436in,
width=0.1505in
]%
{icon10.ps}%
}%
\right)  =L_{1}^{(\ell)}\left(  a,p,0\right)  =%
\raisebox{-0.0199in}{\includegraphics[
height=0.1436in,
width=0.1505in
]%
{icon10.ps}%
}%
^{(\ell)}\left(  a,p\right)  \text{ .}%
\]
The remaining tugs $L_{1}^{(\ell)}\left(  a,p,q\right)  $, for $q=1,2,3$ (also
indicated respectively by $q=%
\raisebox{-0.0199in}{\includegraphics[
height=0.1436in,
width=0.1436in
]%
{icon11.ps}%
}%
,%
\raisebox{-0.0199in}{\includegraphics[
height=0.1436in,
width=0.1436in
]%
{icon12.ps}%
}%
,%
\raisebox{-0.0199in}{\includegraphics[
height=0.1436in,
width=0.1436in
]%
{icon13.ps}%
}%
$ ), are denoted in like manner.

\bigskip

\begin{definition}
We define the \textbf{tug}, written $L_{1}^{(\ell)}\left(  a,p,0\right)  $
(also denoted by $%
\raisebox{-0.0406in}{\includegraphics[
height=0.1436in,
width=0.1436in
]%
{icon10.ps}%
}%
^{(\ell)}\left(  a,p\right)  $ and $L_{1}^{(\ell)}\left(  a,p,%
\raisebox{-0.0199in}{\includegraphics[
height=0.1436in,
width=0.1505in
]%
{icon10.ps}%
}%
\right)  $ ), associated with the $0$-th edge of the $p$-th preferred face
$F_{p}^{(\ell)}(a)$ of the cube $B_{\ell}(a)$ as the move

\bigskip

$\hspace{-1in}%
\raisebox{-0.0406in}{\includegraphics[
height=0.1436in,
width=0.1436in
]%
{icon10.ps}%
}%
^{(\ell)}\left(  a,p\right)  (K)=\left\{
\begin{array}
[c]{ll}%
\left(  K-%
\raisebox{-0.2508in}{\includegraphics[
height=0.6529in,
width=0.6529in
]%
{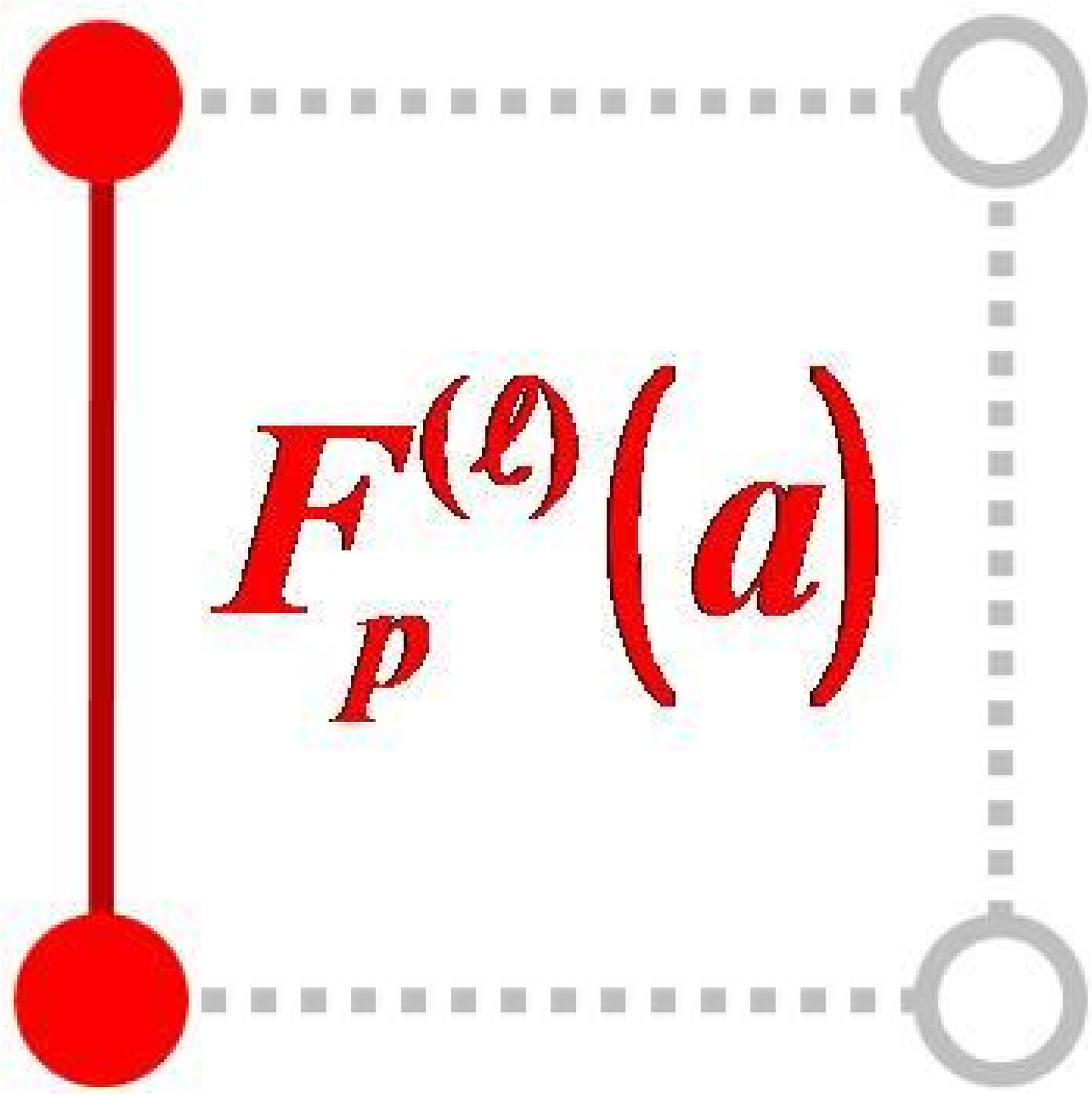}%
}%
\right)  \cup\left(  \mathcal{C}_{\ell}^{1}\cap%
\raisebox{-0.2508in}{\includegraphics[
height=0.6529in,
width=0.6529in
]%
{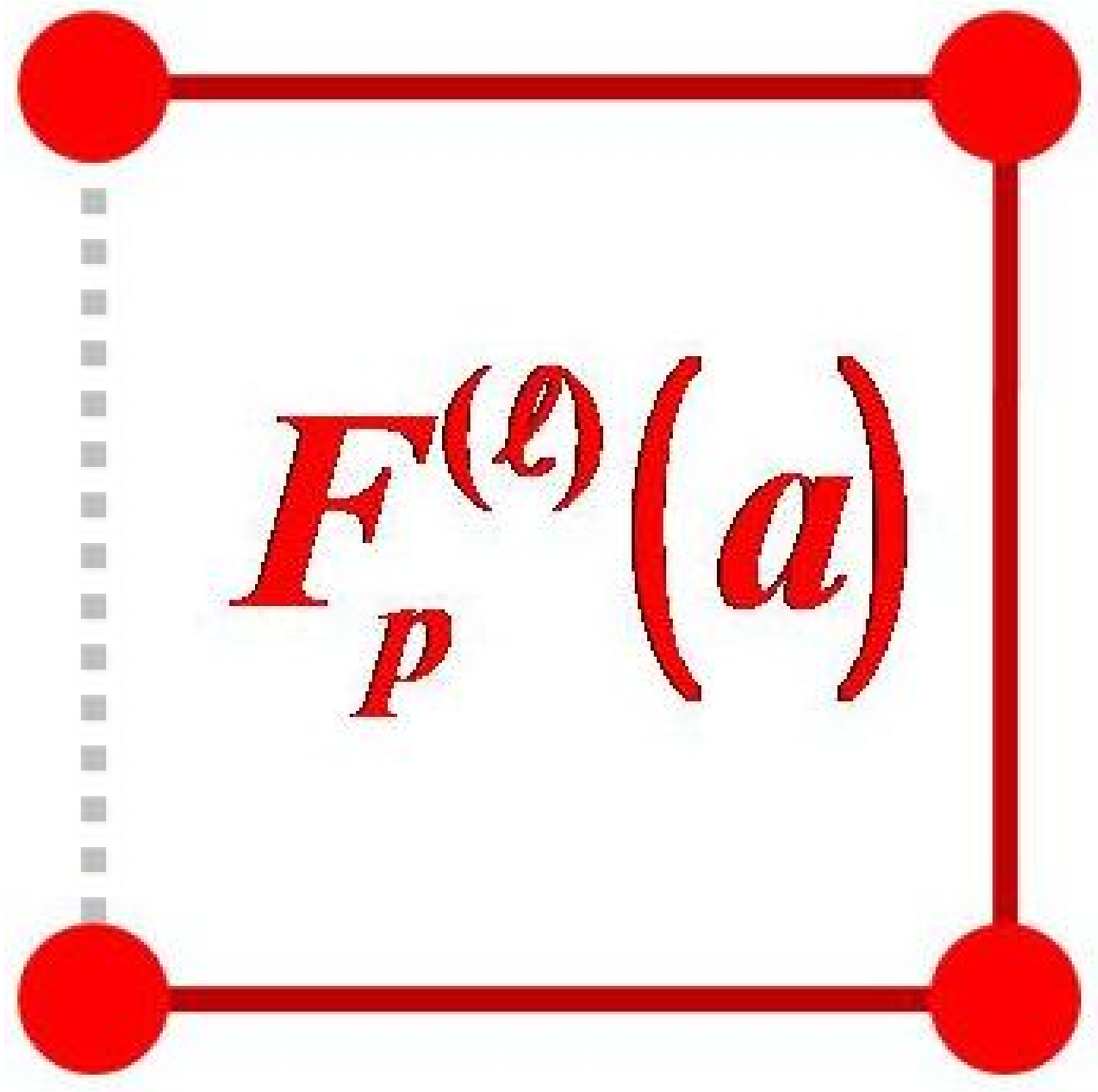}%
}%
\right)  & \text{if \ }K\cap%
\raisebox{-0.2508in}{\includegraphics[
height=0.6478in,
width=0.6478in
]%
{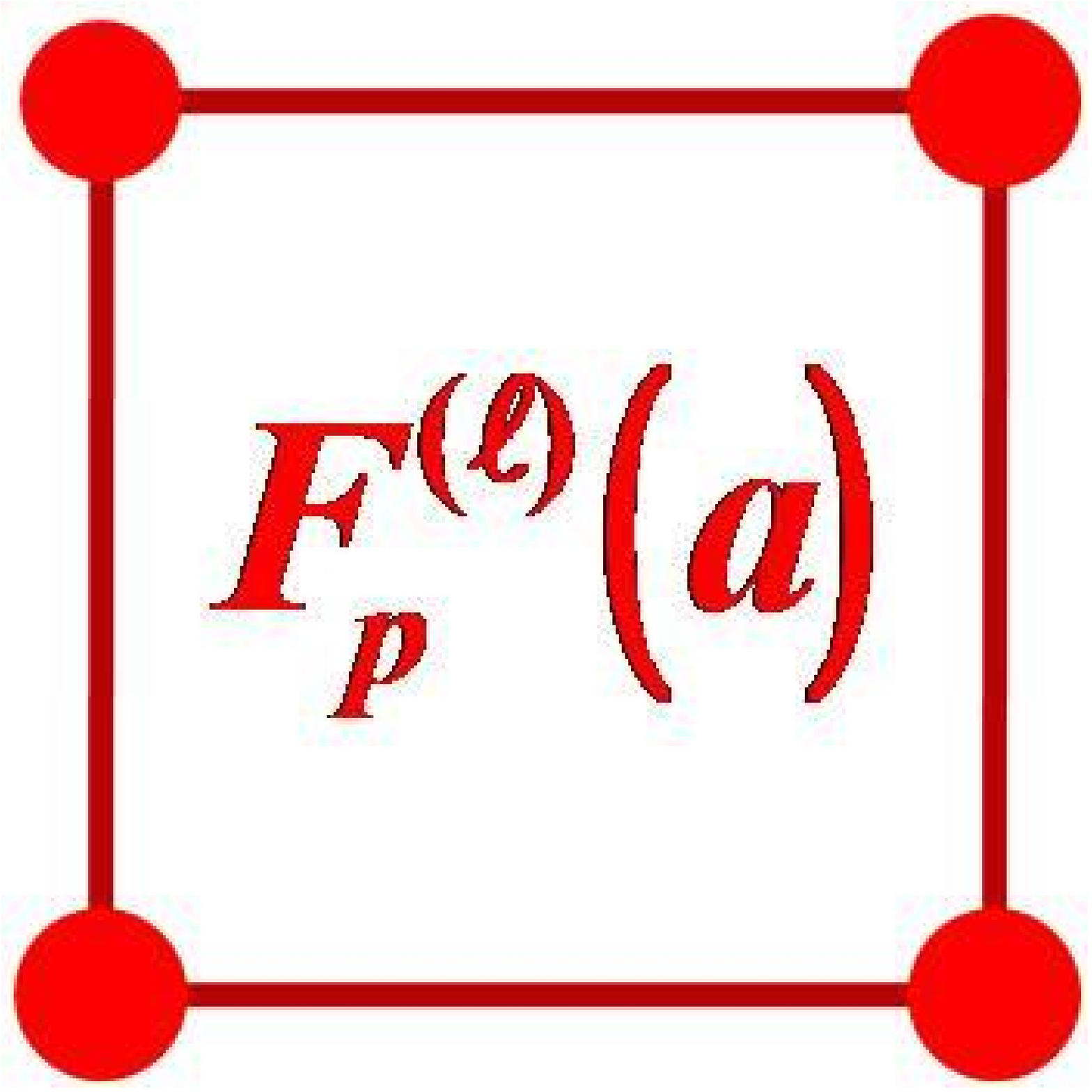}%
}%
=\mathcal{C}_{\ell}^{1}\cap%
\raisebox{-0.2508in}{\includegraphics[
height=0.6529in,
width=0.6529in
]%
{tug0l.ps}%
}%
\\
& \\
\left(  K-%
\raisebox{-0.2508in}{\includegraphics[
height=0.6529in,
width=0.6529in
]%
{tug0r.ps}%
}%
\right)  \cup\left(  \mathcal{C}_{\ell}^{1}\cap%
\raisebox{-0.2508in}{\includegraphics[
height=0.6529in,
width=0.6529in
]%
{tug0l.ps}%
}%
\right)  & \text{if \ }K\cap%
\raisebox{-0.2508in}{\includegraphics[
height=0.6478in,
width=0.6478in
]%
{face.ps}%
}%
=\mathcal{C}_{\ell}^{1}\cap%
\raisebox{-0.2508in}{\includegraphics[
height=0.6529in,
width=0.6529in
]%
{tug0r.ps}%
}%
\\
& \\
K & \text{otherwise}%
\end{array}
\right.  $\bigskip

\noindent where $%
\raisebox{-0.2508in}{\includegraphics[
height=0.6529in,
width=0.6529in
]%
{tug0l.ps}%
}%
$ , $%
\raisebox{-0.2508in}{\includegraphics[
height=0.6529in,
width=0.6529in
]%
{tug0r.ps}%
}%
$ , and
\raisebox{-0.2508in}{\includegraphics[
height=0.6478in,
width=0.6478in
]%
{face.ps}%
}%
denote the 2-subcomplexes of the cell complex $\mathcal{C}_{\ell}$, as defined
by the graphical conventions found in the previous section.

\bigskip

This \textbf{tug}, $L_{1}^{(\ell)}\left(  a,p,0\right)  =L_{1}^{(\ell)}\left(
a,p,%
\raisebox{-0.0406in}{\includegraphics[
height=0.1436in,
width=0.1436in
]%
{icon10.ps}%
}%
\right)  =%
\raisebox{-0.0406in}{\includegraphics[
height=0.1436in,
width=0.1436in
]%
{icon10.ps}%
}%
^{(\ell)}\left(  a,p\right)  $ is more succinctly illustrated in the figure
given below:

\bigskip%
\[%
\begin{array}
[c]{c}%
\begin{array}
[c]{ccc}%
\raisebox{-0.5016in}{\includegraphics[
height=1.1372in,
width=1.1372in
]%
{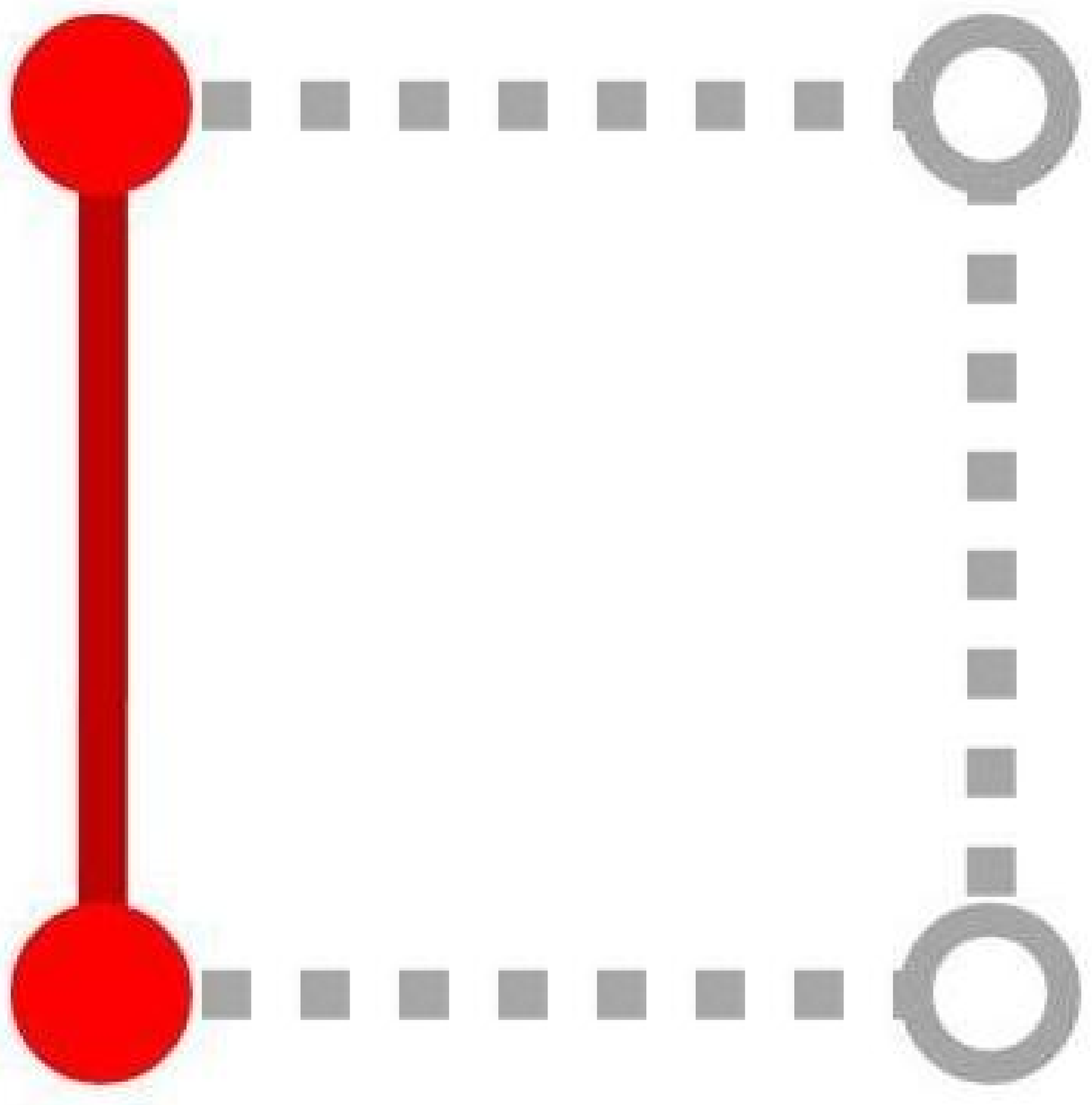}%
}%
&
\begin{array}
[c]{c}%
{\includegraphics[
height=0.237in,
width=0.5967in
]%
{arrow-ya.ps}%
}%
\\
F_{p}^{(\ell)}(a)
\end{array}
&
\raisebox{-0.5016in}{\includegraphics[
height=1.1372in,
width=1.1372in
]%
{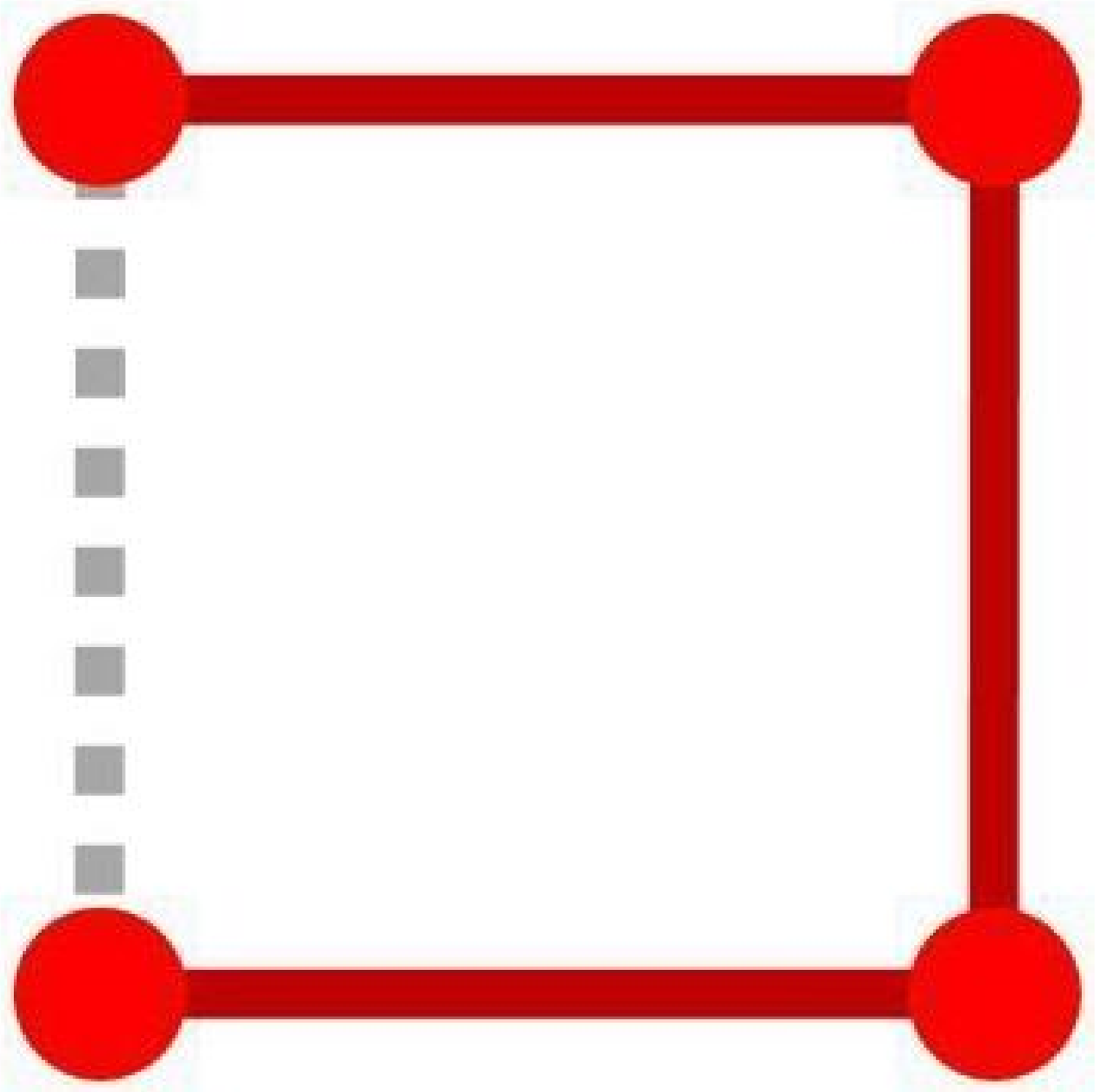}%
}%
\end{array}
\\
\text{\textbf{Lattice knot move }}L_{1}^{(\ell)}\left(  a,p,0\right)
=L_{1}^{(\ell)}\left(  a,p,%
\raisebox{-0.0406in}{\includegraphics[
height=0.1436in,
width=0.1436in
]%
{icon10.ps}%
}%
\right)  =%
\raisebox{-0.0406in}{\includegraphics[
height=0.1436in,
width=0.1436in
]%
{icon10.ps}%
}%
^{(\ell)}\left(  a,p\right)  \text{\textbf{, called a tug.}}%
\end{array}
\]

\bigskip

The remaining three tugs,
\[
\left\{
\begin{array}
[c]{l}%
L_{1}^{(\ell)}\left(  a,p,1\right)  =L_{1}^{(\ell)}\left(  a,p,%
\raisebox{-0.0406in}{\includegraphics[
height=0.1436in,
width=0.1436in
]%
{icon11.ps}%
}%
\right)  =%
\raisebox{-0.0406in}{\includegraphics[
height=0.1436in,
width=0.1436in
]%
{icon11.ps}%
}%
^{(\ell)}\left(  a,p\right)  \text{ ,}\\
\\
L_{1}^{(\ell)}\left(  a,p,2\right)  =L_{1}^{(\ell)}\left(  a,p,%
\raisebox{-0.0406in}{\includegraphics[
height=0.1436in,
width=0.1436in
]%
{icon12.ps}%
}%
\right)  =%
\raisebox{-0.0406in}{\includegraphics[
height=0.1436in,
width=0.1436in
]%
{icon12.ps}%
}%
^{(\ell)}\left(  a,p\right)  \text{ , and}\\
\\
L_{1}^{(\ell)}\left(  a,p,3\right)  =L_{1}^{(\ell)}\left(  a,p,%
\raisebox{-0.0406in}{\includegraphics[
height=0.1436in,
width=0.1436in
]%
{icon13.ps}%
}%
\right)  =%
\raisebox{-0.0406in}{\includegraphics[
height=0.1436in,
width=0.1436in
]%
{icon13.ps}%
}%
^{(\ell)}\left(  a,p\right)
\end{array}
\right.
\]
are defined in like manner, and illustrated in the three figures given below

\bigskip%

\[%
\begin{array}
[c]{c}%
\begin{array}
[c]{ccc}%
\raisebox{-0.5016in}{\includegraphics[
height=1.1372in,
width=1.1372in
]%
{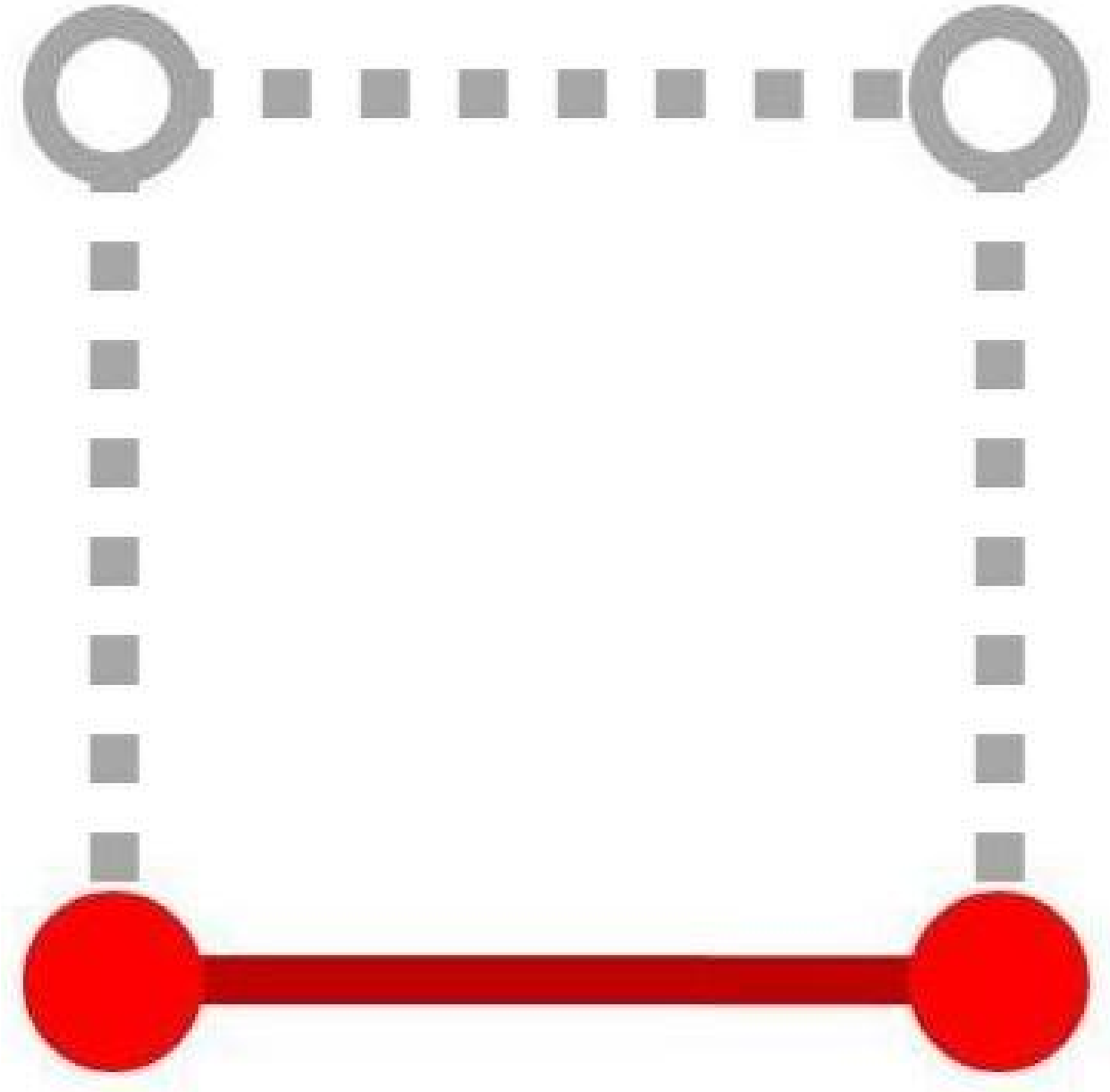}%
}%
&
\begin{array}
[c]{c}%
{\includegraphics[
height=0.237in,
width=0.5967in
]%
{arrow-ya.ps}%
}%
\\
F_{p}^{(\ell)}(a)
\end{array}
&
\raisebox{-0.5016in}{\includegraphics[
height=1.1372in,
width=1.1372in
]%
{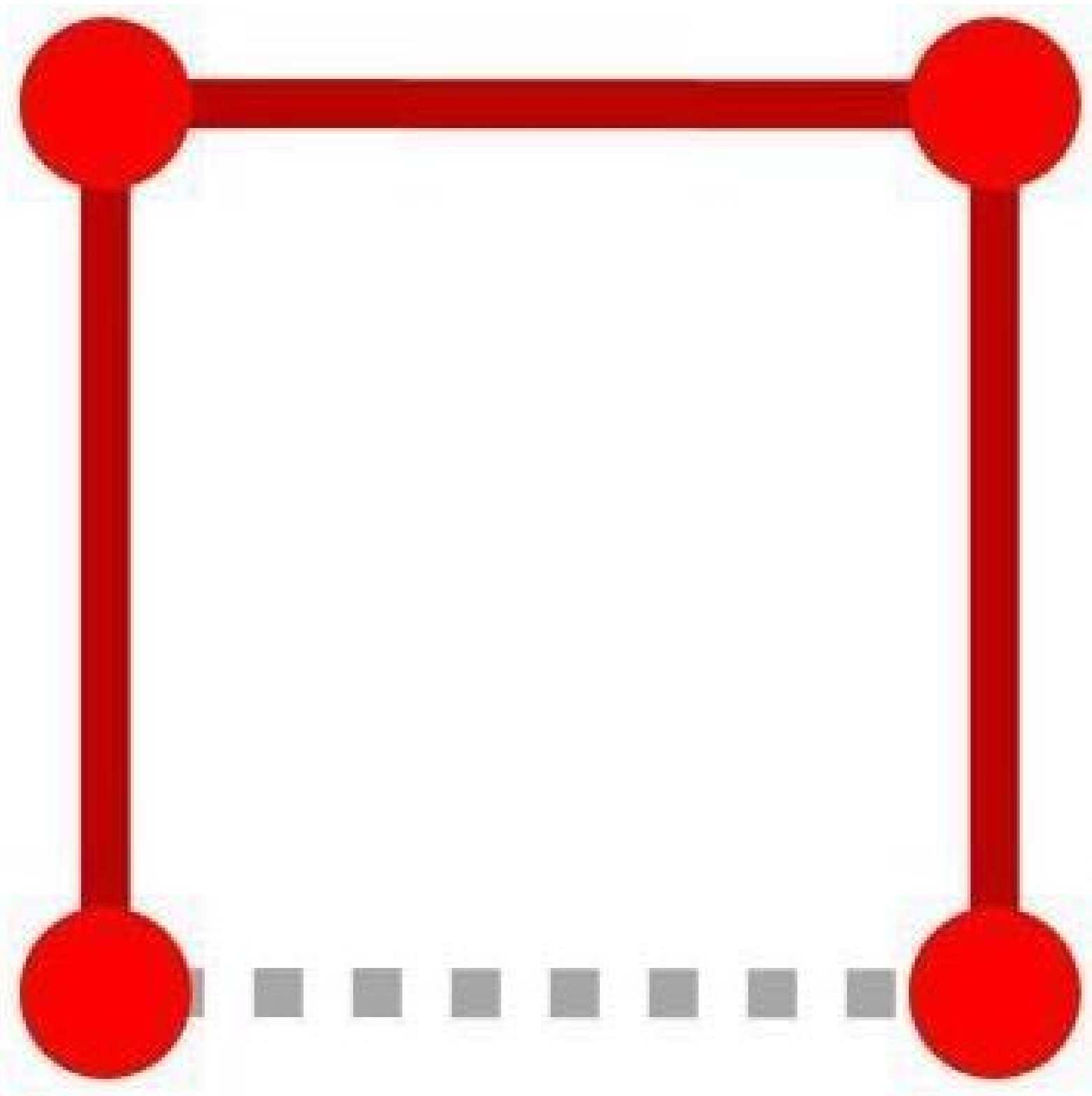}%
}%
\end{array}
\\
\text{\textbf{Lattice knot move }}L_{1}^{(\ell)}\left(  a,p,1\right)
=L_{1}^{(\ell)}\left(  a,p,%
\raisebox{-0.0406in}{\includegraphics[
height=0.1436in,
width=0.1436in
]%
{icon11.ps}%
}%
\right)  =%
\raisebox{-0.0406in}{\includegraphics[
height=0.1436in,
width=0.1436in
]%
{icon11.ps}%
}%
^{(\ell)}\left(  a,p\right)  \text{\textbf{, called a tug.}}%
\end{array}
\]

\bigskip%

\[%
\begin{array}
[c]{c}%
\begin{array}
[c]{ccc}%
\raisebox{-0.5016in}{\includegraphics[
height=1.1372in,
width=1.1372in
]%
{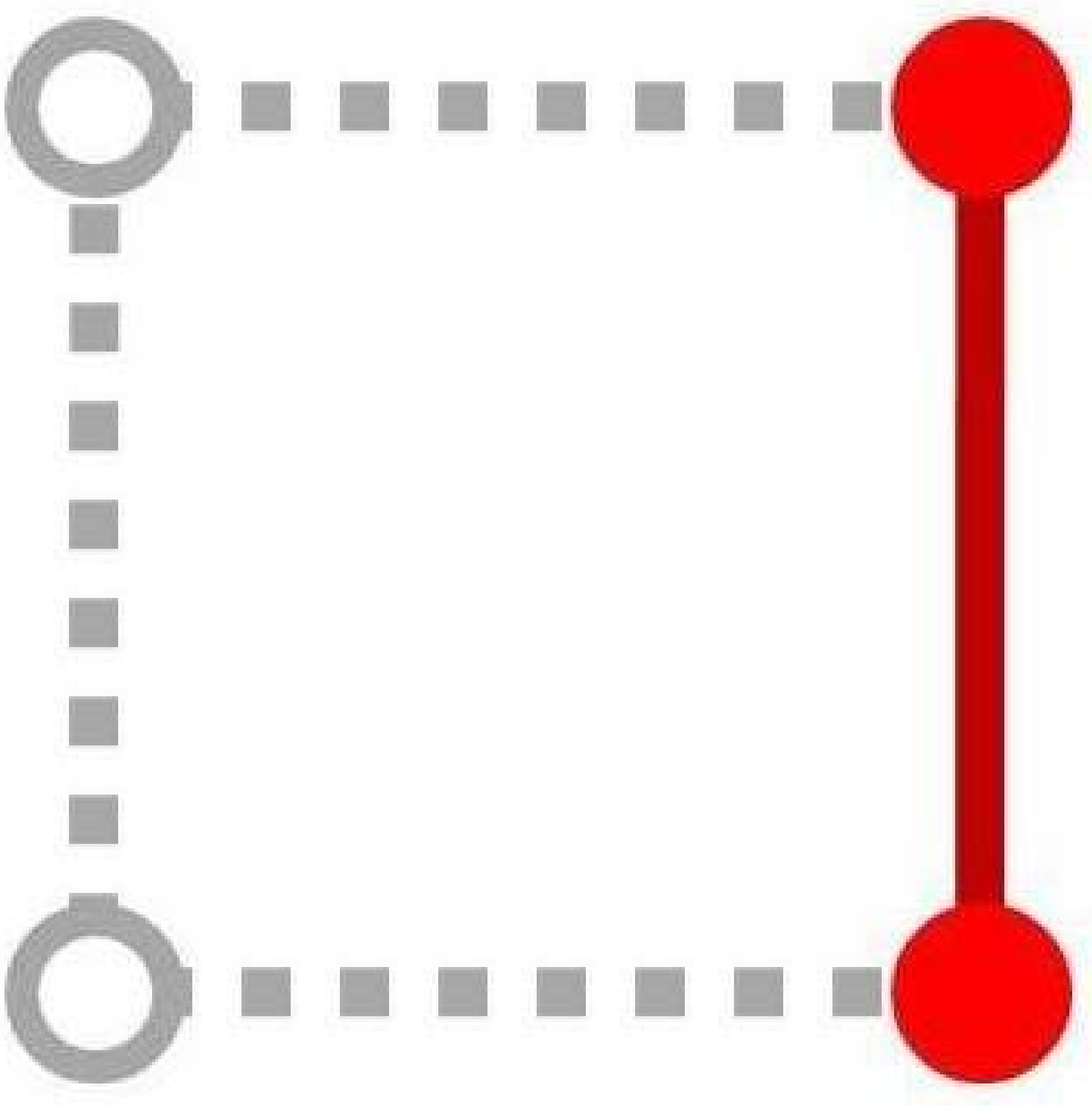}%
}%
&
\begin{array}
[c]{c}%
{\includegraphics[
height=0.237in,
width=0.5967in
]%
{arrow-ya.ps}%
}%
\\
F_{p}^{(\ell)}(a)
\end{array}
&
\raisebox{-0.5016in}{\includegraphics[
height=1.1372in,
width=1.1372in
]%
{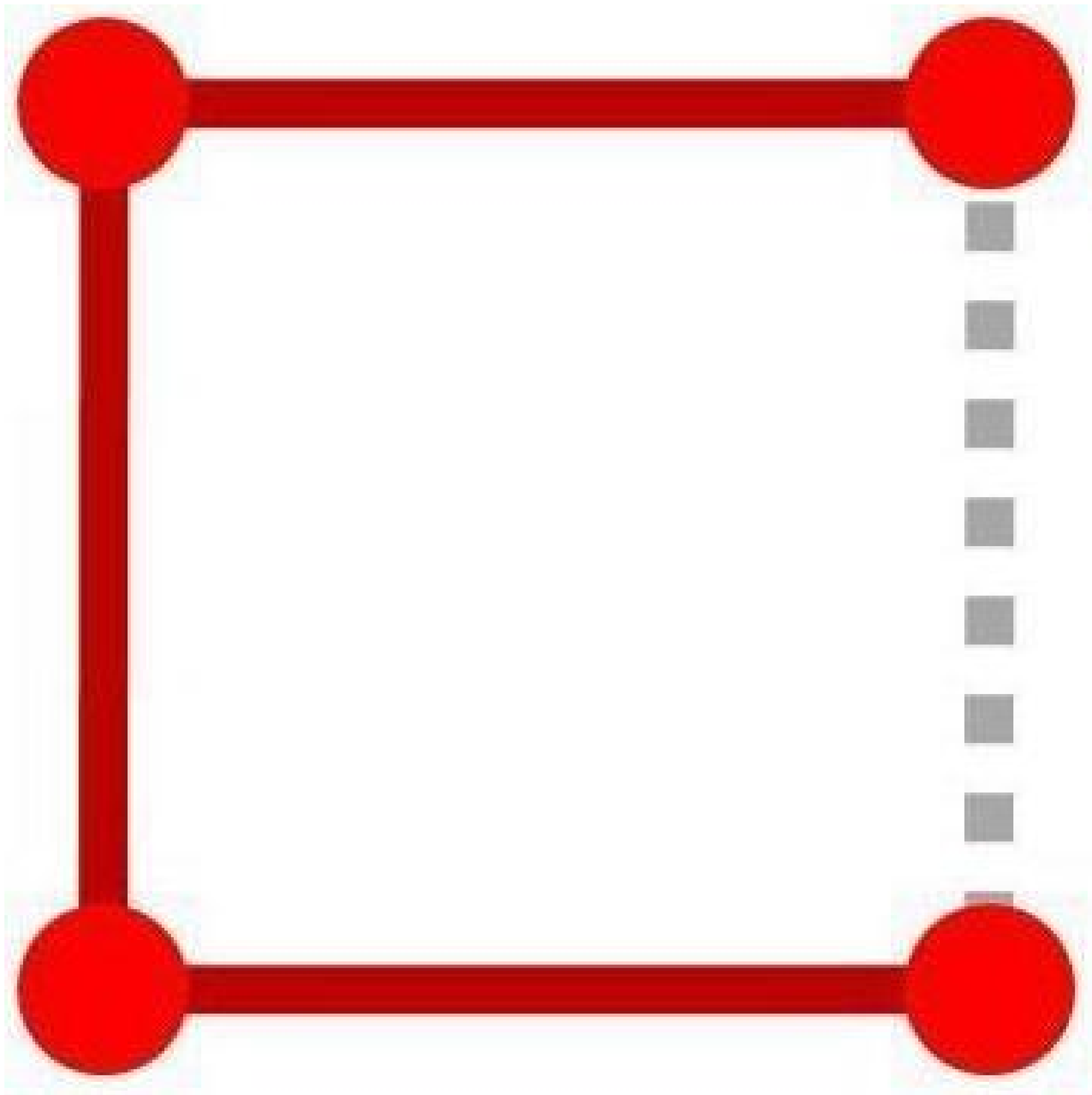}%
}%
\end{array}
\\
\text{\textbf{Lattice knot move }}L_{1}^{(\ell)}\left(  a,p,2\right)
=L_{1}^{(\ell)}\left(  a,p,%
\raisebox{-0.0406in}{\includegraphics[
height=0.1436in,
width=0.1436in
]%
{icon12.ps}%
}%
\right)  =%
\raisebox{-0.0406in}{\includegraphics[
height=0.1436in,
width=0.1436in
]%
{icon12.ps}%
}%
^{(\ell)}\left(  a,p\right)  \text{\textbf{, called a tug.}}%
\end{array}
\]

\bigskip%

\[%
\begin{array}
[c]{c}%
\begin{array}
[c]{ccc}%
\raisebox{-0.5016in}{\includegraphics[
height=1.1372in,
width=1.1372in
]%
{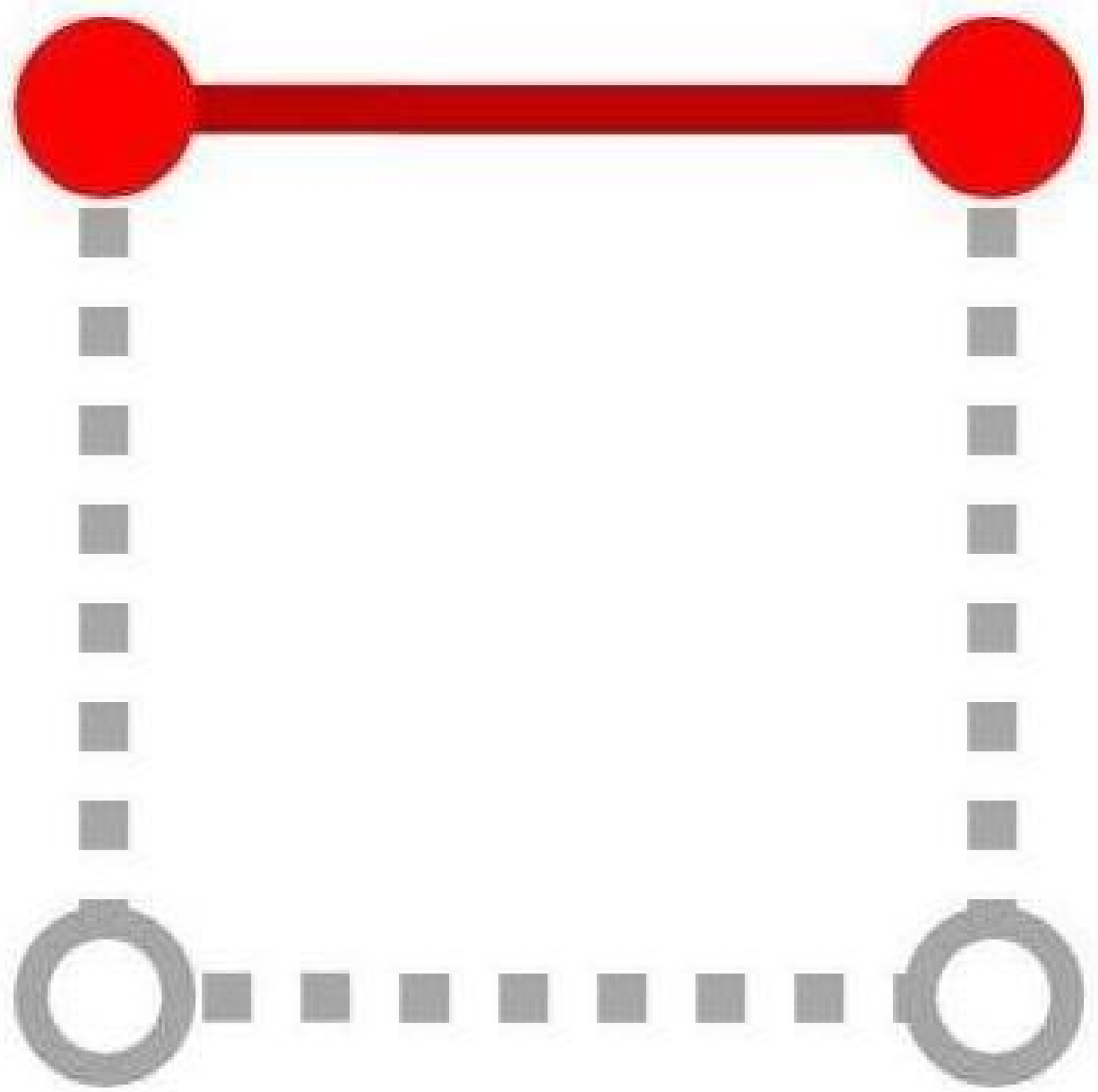}%
}%
&
\begin{array}
[c]{c}%
{\includegraphics[
height=0.237in,
width=0.5967in
]%
{arrow-ya.ps}%
}%
\\
F_{p}^{(\ell)}(a)
\end{array}
&
\raisebox{-0.5016in}{\includegraphics[
height=1.1372in,
width=1.1372in
]%
{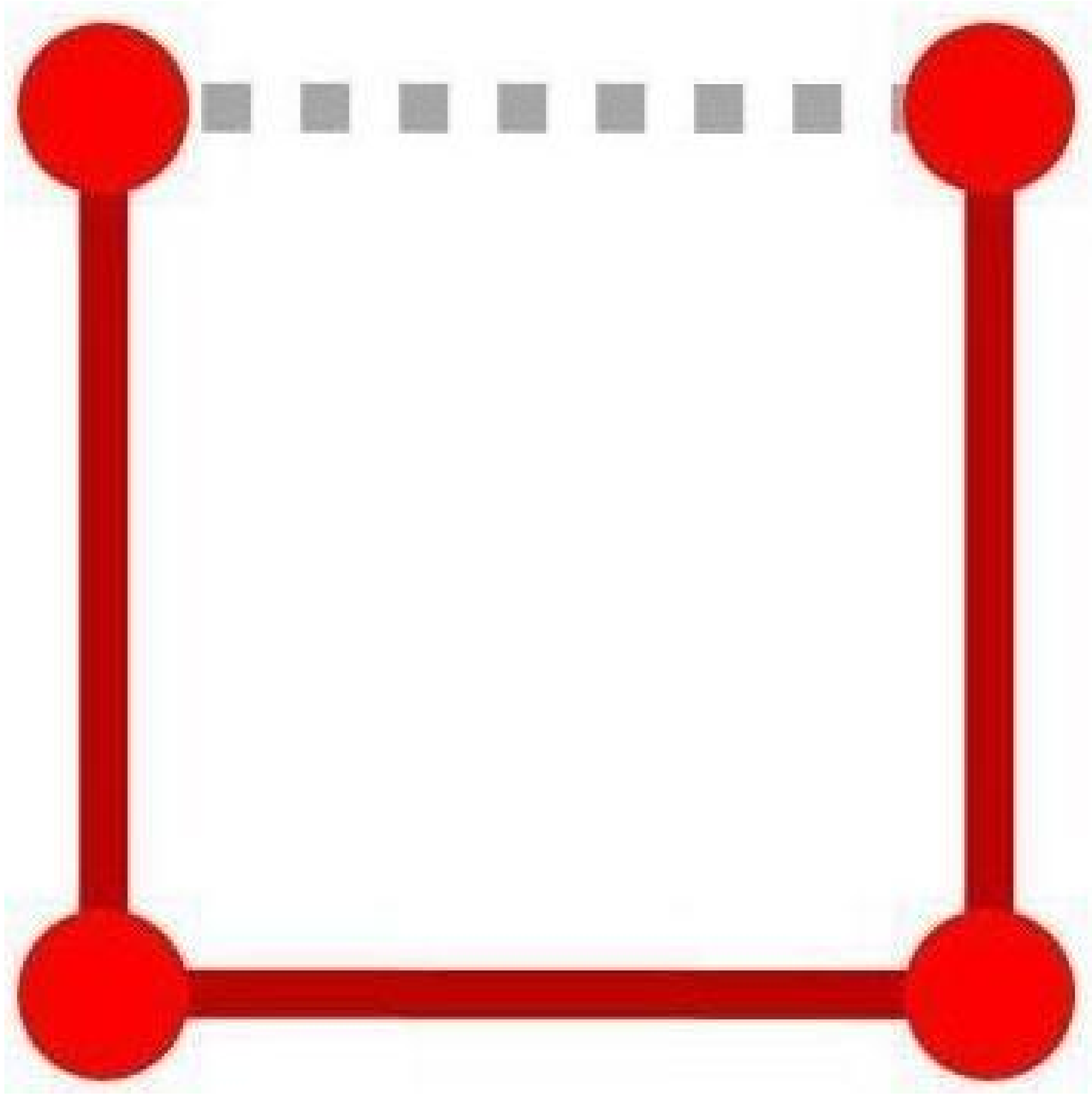}%
}%
\end{array}
\\
\text{\textbf{Lattice knot move }}L_{1}^{(\ell)}\left(  a,p,3\right)
=L_{1}^{(\ell)}\left(  a,p,%
\raisebox{-0.0406in}{\includegraphics[
height=0.1436in,
width=0.1436in
]%
{icon13.ps}%
}%
\right)  =%
\raisebox{-0.0406in}{\includegraphics[
height=0.1436in,
width=0.1436in
]%
{icon13.ps}%
}%
^{(\ell)}\left(  a,p\right)  \text{\textbf{, called a tug.}}%
\end{array}
\]

\end{definition}

\bigskip

\noindent\textbf{Terminology:} \ \textit{The designated edge of a tug move
will be frequently called the tug's }\textbf{extendable edge}.

\bigskip

\begin{remark}
For each cube $B_{\ell}(a)$, there are $12$ tug moves, i.e., $4$ for each of
the $3$ preferred faces.
\end{remark}

\bigskip

\subsection{Definition of the move wiggle}

\bigskip

The second move, called a \textbf{wiggle}, and denoted by
\[
L_{2}^{(\ell)}\left(  a,p,q\right)  \text{ ,}%
\]
is defined for each of the two diagonals $q=$`$/$' and $q=$`$\backslash$' of
each preferred face $F_{p}^{(\ell)}\left(  a\right)  $ of each cube
$B^{(\ell)}\left(  a\right)  $ in the cell complex $\mathcal{C}_{\ell}$.

\vspace{0.5in}

For reasons that will soon become apparent, we will denote the diagonal
$q=$`$/$' by either the symbol
\[
q=%
\raisebox{-0.0406in}{\includegraphics[
height=0.1436in,
width=0.1436in
]%
{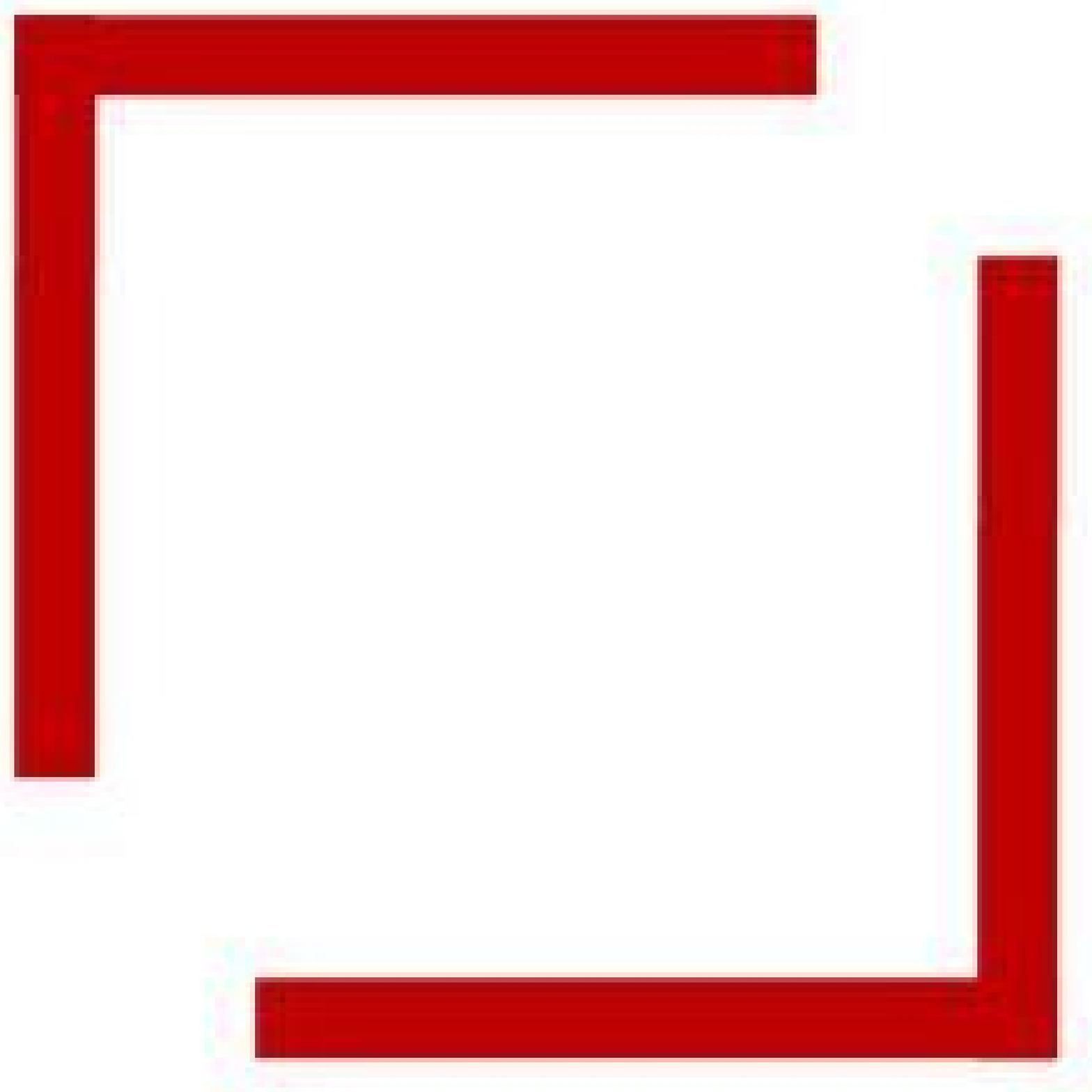}%
}%
\]
or by either one of the integers
\[
q=0\text{ \ \ or \ }q=2.
\]
Thus, the wiggle $L_{2}^{(\ell)}\left(  a,p,q\right)  $ with respect to
diagonal $q=$`$/$' of the preferred face $F_{p}^{(\ell)}\left(  a\right)  $ of
the cube $B^{(\ell)}\left(  a\right)  $ will be denoted in anyone of the
following three ways%
\[
L_{2}^{(\ell)}\left(  a,p,%
\raisebox{-0.0406in}{\includegraphics[
height=0.1436in,
width=0.1436in
]%
{icon21.ps}%
}%
\right)  =L_{2}^{(\ell)}\left(  a,p,0\right)  =L_{2}^{(\ell)}\left(
a,p,2\right)  \text{ ,}%
\]
or simply by
\[%
\raisebox{-0.0406in}{\includegraphics[
height=0.1436in,
width=0.1436in
]%
{icon21.ps}%
}%
^{(\ell)}\left(  a,p\right)  \text{ .}%
\]

\bigskip

In like manner, we will denote the diagonal $q=$`$/$' by either the symbol
\[
q=%
\raisebox{-0.0406in}{\includegraphics[
height=0.1436in,
width=0.1436in
]%
{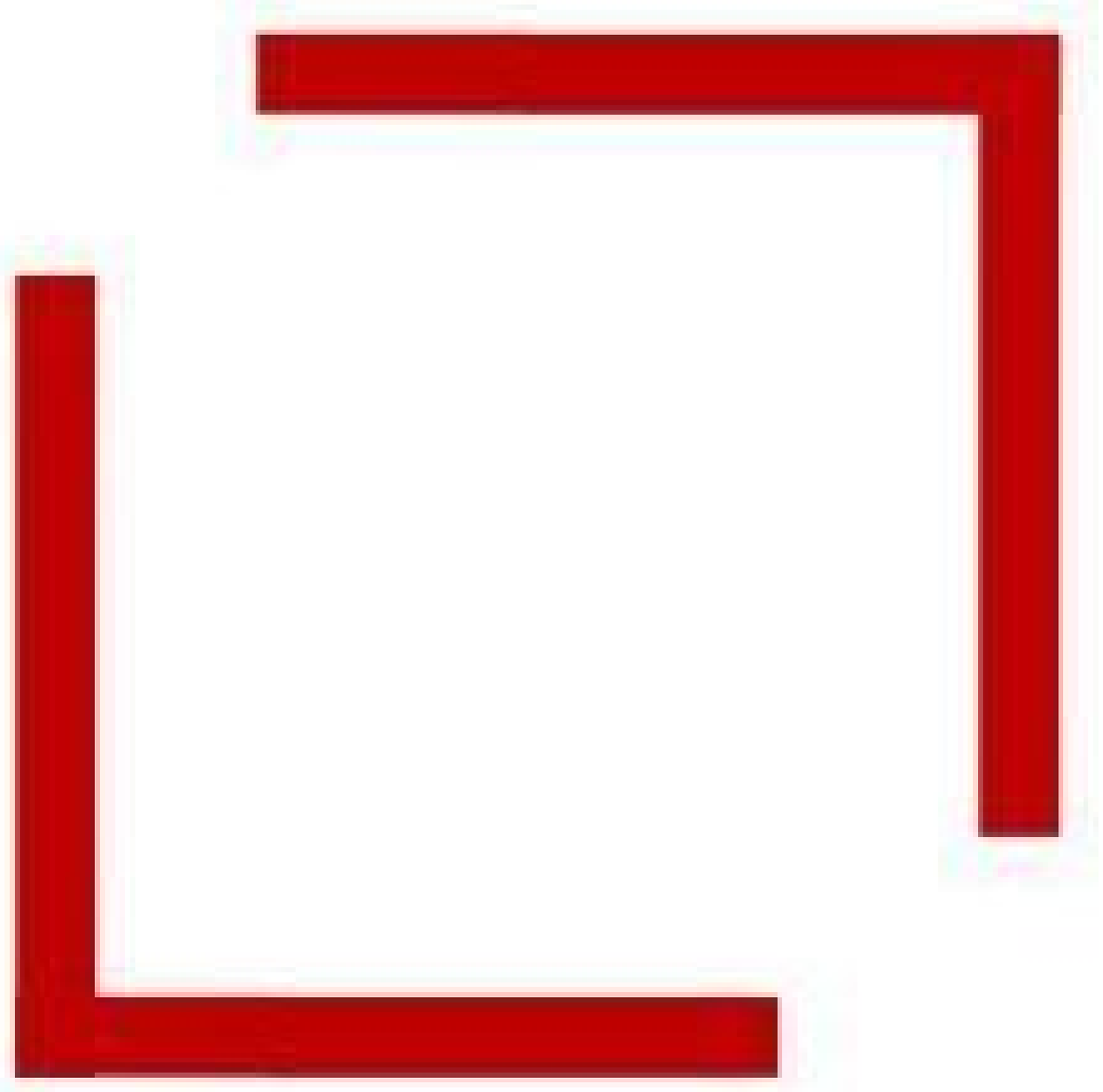}%
}%
\]
or by either one of the integers
\[
q=1\text{ \ \ or \ }q=3.
\]
Thus, the wiggle $L_{2}^{(\ell)}\left(  a,p,q\right)  $ with respect to
diagonal $q=$`$\backslash$' of the preferred face $F_{p}^{(\ell)}\left(
a\right)  $ of the cube $B^{(\ell)}\left(  a\right)  $ will be denoted in
anyone of the following three ways%
\[
L_{2}^{(\ell)}\left(  a,p,%
\raisebox{-0.0406in}{\includegraphics[
height=0.1436in,
width=0.1436in
]%
{icon20.ps}%
}%
\right)  =L_{2}^{(\ell)}\left(  a,p,0\right)  =L_{2}^{(\ell)}\left(
a,p,2\right)  \text{ ,}%
\]
or simply by
\[%
\raisebox{-0.0406in}{\includegraphics[
height=0.1436in,
width=0.1436in
]%
{icon20.ps}%
}%
^{(\ell)}\left(  a,p\right)  \text{ .}%
\]

\bigskip

\begin{definition}
The \textbf{wiggle} associated with the diagonal
\raisebox{-0.0406in}{\includegraphics[
height=0.1436in,
width=0.1436in
]%
{icon21.ps}%
}%
\ of the $p$-th preferred face $F_{p}^{(\ell)}(a)$ of the cube $B_{\ell}(a)$
on , written
\[
L_{2}^{(\ell)}\left(  a,p,%
\raisebox{-0.0406in}{\includegraphics[
height=0.1436in,
width=0.1436in
]%
{icon21.ps}%
}%
\right)  =L_{2}^{(\ell)}\left(  a,p,0\right)  =L_{2}^{(\ell)}\left(
a,p,2\right)  =%
\raisebox{-0.0406in}{\includegraphics[
height=0.1436in,
width=0.1436in
]%
{icon21.ps}%
}%
^{\!(\ell)}\left(  a,p\right)  \text{ ,}%
\]
is defined by

\bigskip%

\[
\hspace{-1in}%
\raisebox{-0.0406in}{\includegraphics[
height=0.1436in,
width=0.1436in
]%
{icon21.ps}%
}%
^{\!(\ell)}(a,p)(K)=\left\{
\begin{array}
[c]{ll}%
\left(  K-%
\raisebox{-0.2508in}{\includegraphics[
height=0.6564in,
width=0.6564in
]%
{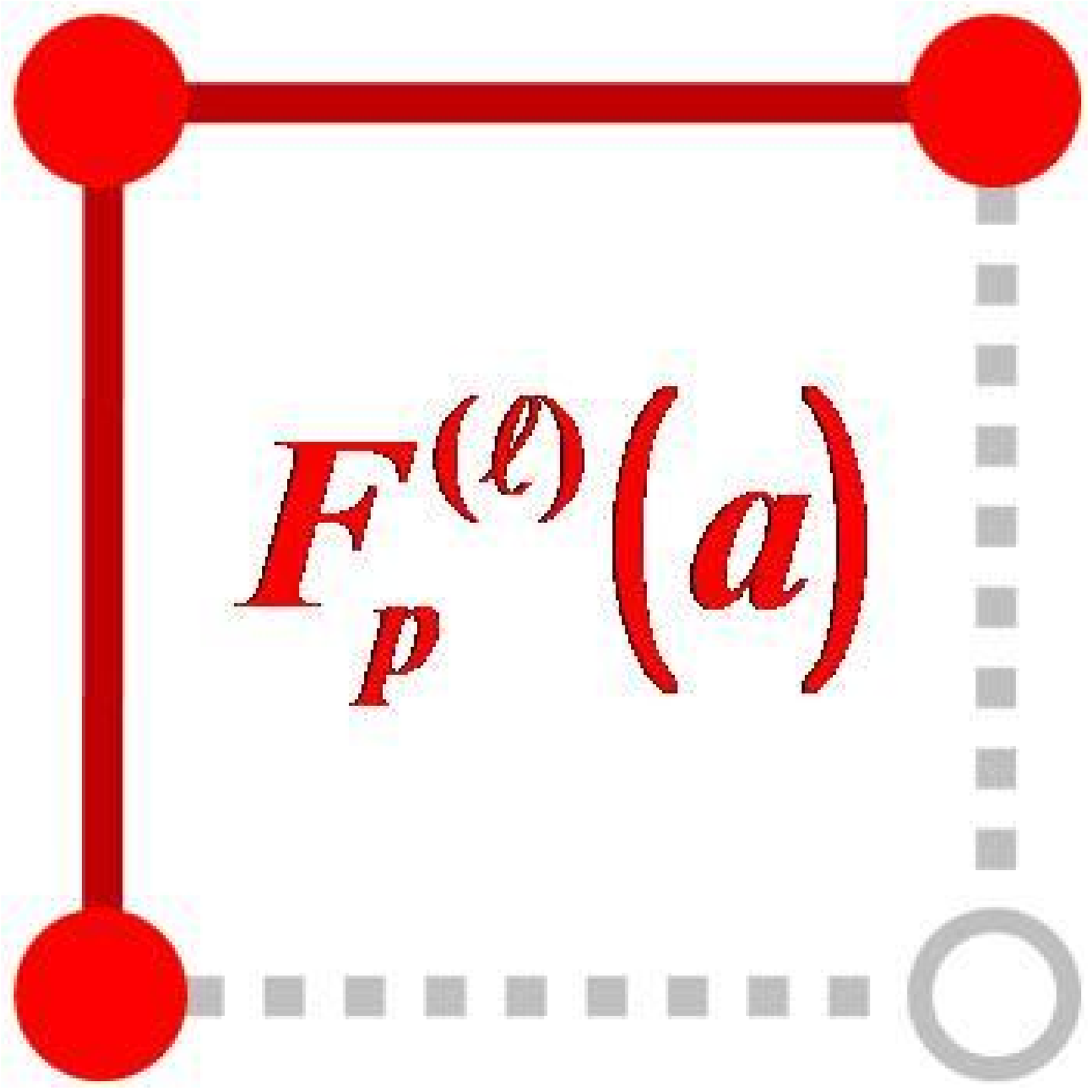}%
}%
\right)  \cup\left(  \mathcal{L}_{\ell}^{1}\cap%
\raisebox{-0.2508in}{\includegraphics[
height=0.6564in,
width=0.6564in
]%
{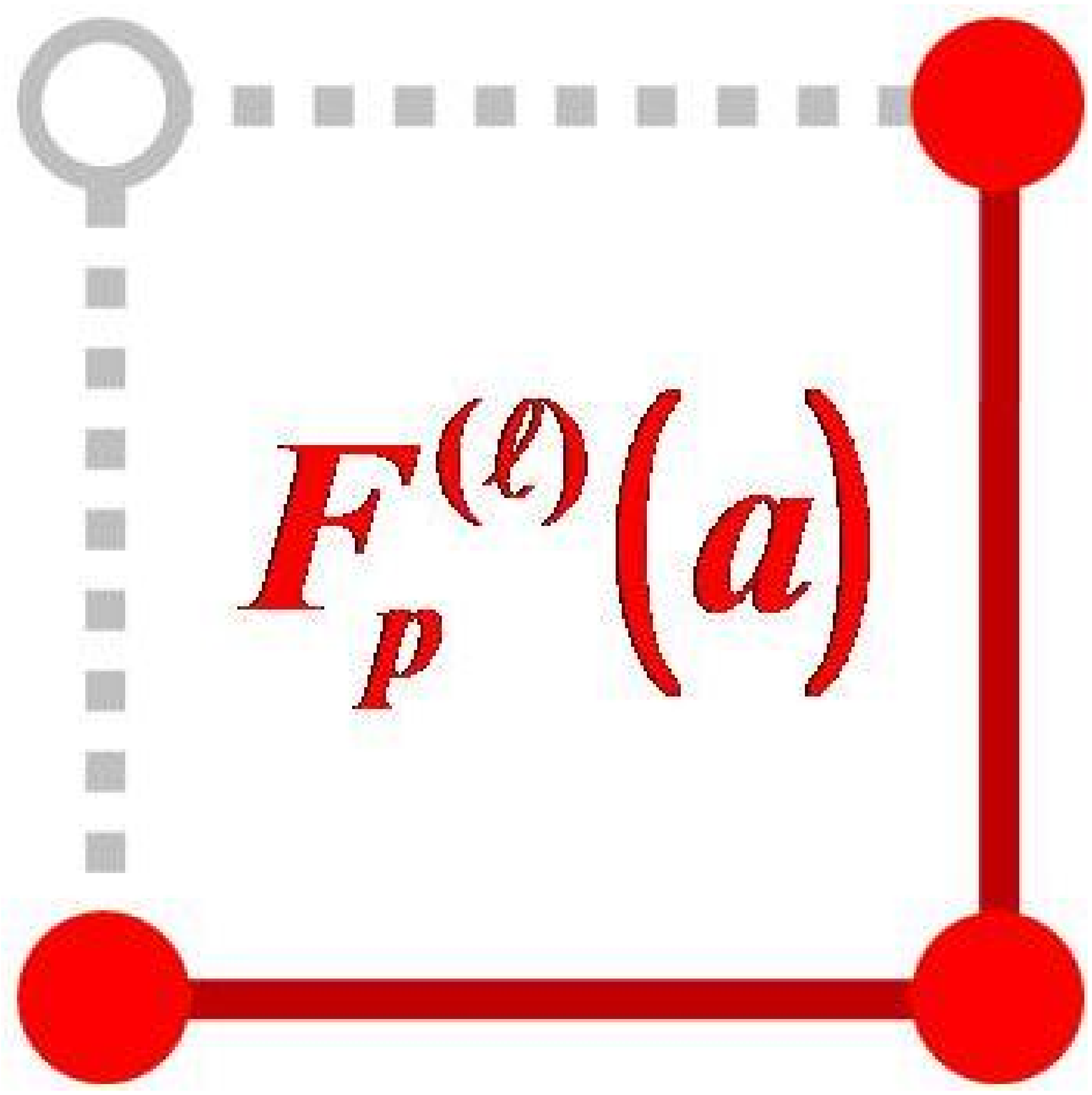}%
}%
\right)  & \text{if \ }K\cap%
\raisebox{-0.2508in}{\includegraphics[
height=0.6478in,
width=0.6478in
]%
{face.ps}%
}%
=\mathcal{L}_{\ell}^{1}\cap%
\raisebox{-0.2508in}{\includegraphics[
height=0.6564in,
width=0.6564in
]%
{wig0l.ps}%
}%
\\
& \\
\left(  K-%
\raisebox{-0.2508in}{\includegraphics[
height=0.6564in,
width=0.6564in
]%
{wig0r.ps}%
}%
\right)  \cup\left(  \mathcal{L}_{\ell}^{1}\cap%
\raisebox{-0.2508in}{\includegraphics[
height=0.6564in,
width=0.6564in
]%
{wig0l.ps}%
}%
\right)  & \text{if \ }K\cap%
\raisebox{-0.2508in}{\includegraphics[
height=0.6478in,
width=0.6478in
]%
{face.ps}%
}%
=\mathcal{L}_{\ell}^{1}\cap%
\raisebox{-0.2508in}{\includegraphics[
height=0.6564in,
width=0.6564in
]%
{wig0r.ps}%
}%
\\
& \\
K & \text{otherwise}%
\end{array}
\right.
\]
\bigskip It is more succinctly illustrated in the figure given below:

\bigskip%

\[%
\begin{array}
[c]{c}%
\begin{array}
[c]{ccc}%
\raisebox{-0.5016in}{\includegraphics[
height=1.1372in,
width=1.1372in
]%
{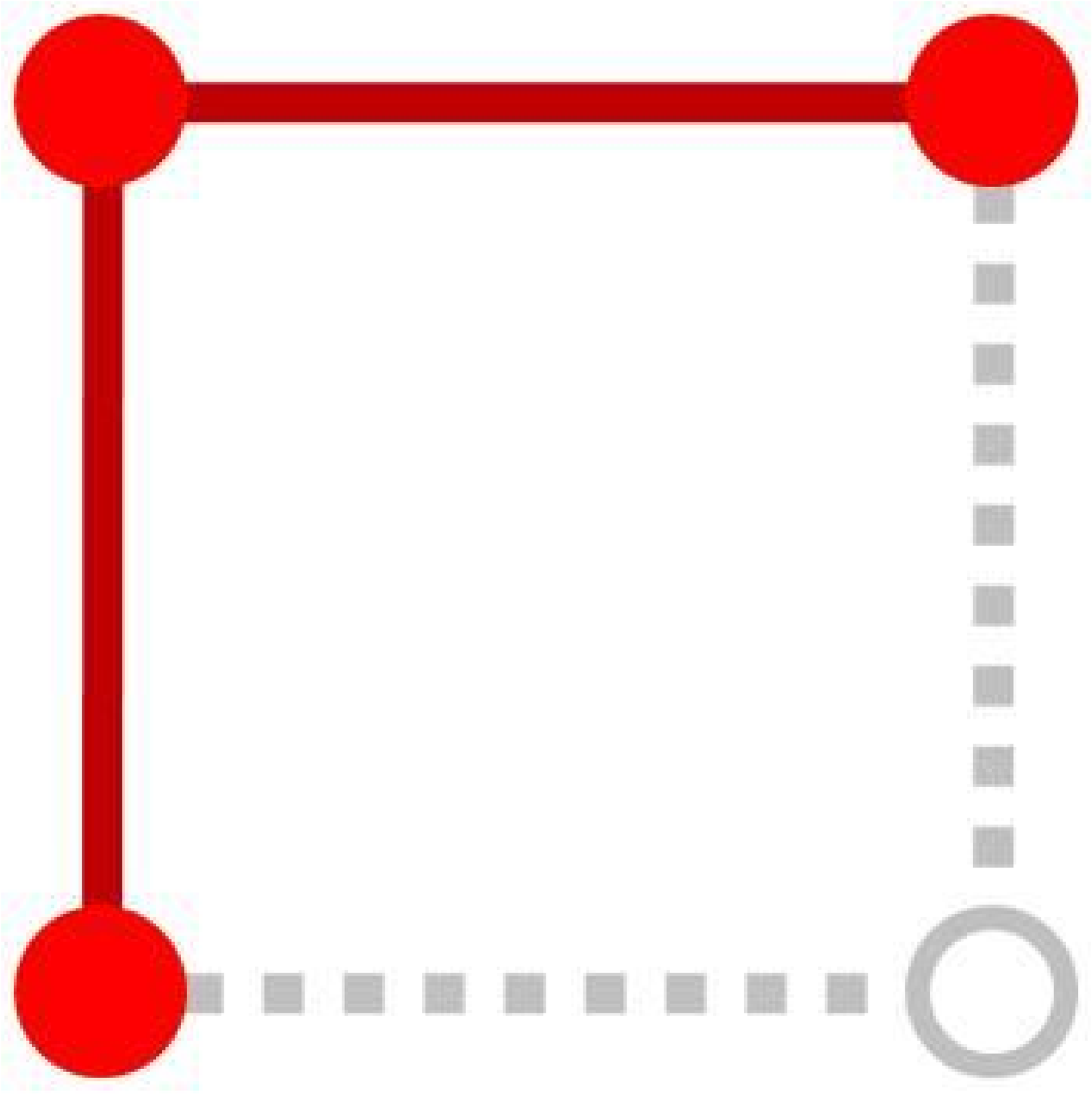}%
}%
&
\begin{array}
[c]{c}%
{\includegraphics[
height=0.237in,
width=0.5967in
]%
{arrow-ya.ps}%
}%
\\
F_{p}^{(\ell)}(a)
\end{array}
&
\raisebox{-0.5016in}{\includegraphics[
height=1.1372in,
width=1.1372in
]%
{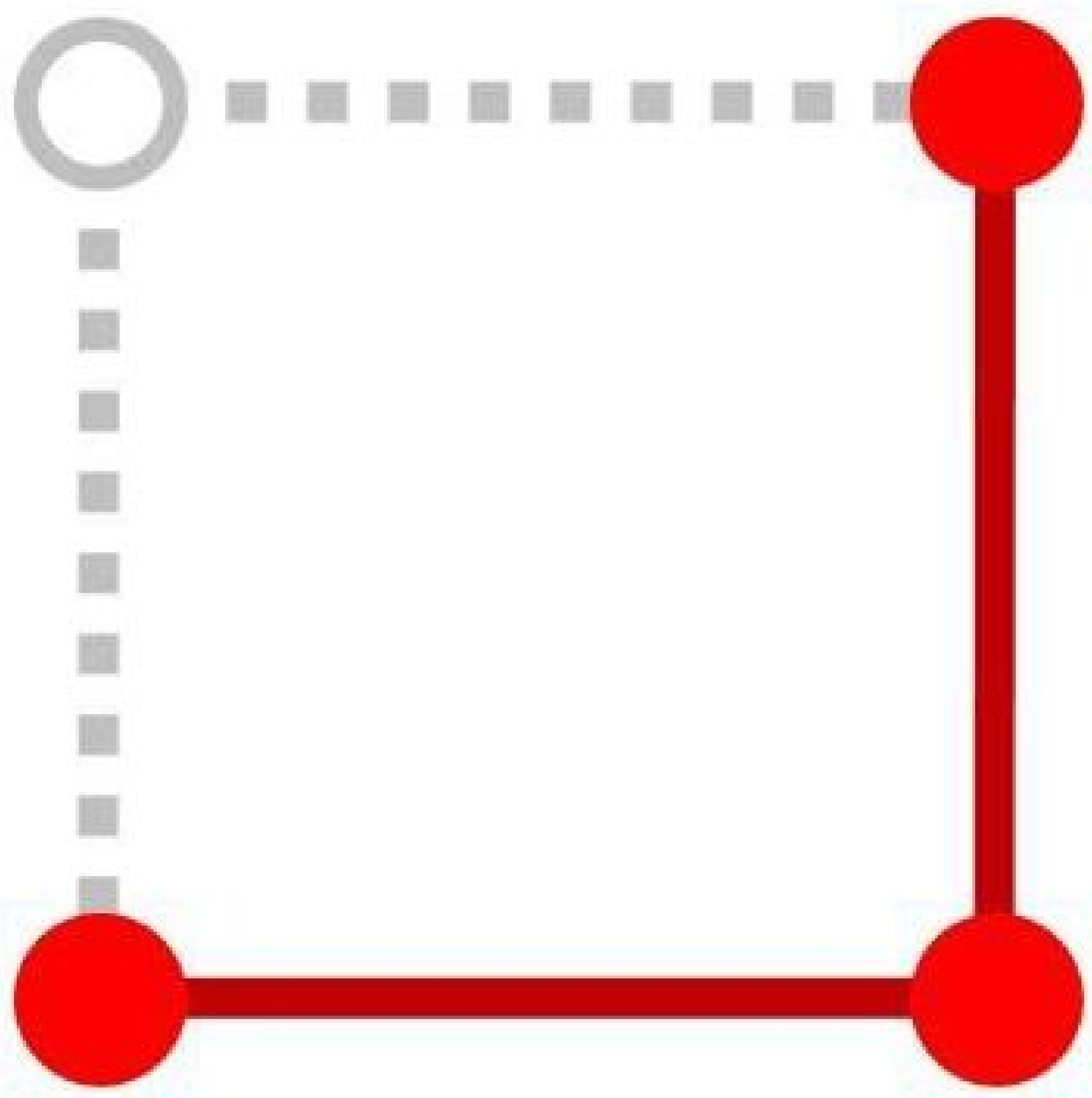}%
}%
\end{array}
\\
\text{\textbf{Lattice knot move }}L_{2}^{(\ell)}\left(  a,p,%
\raisebox{-0.0406in}{\includegraphics[
height=0.1436in,
width=0.1436in
]%
{icon21.ps}%
}%
\right)  =\text{\textbf{ }}L_{2}^{(\ell)}\left(  a,p,0\right)  =L_{2}^{(\ell
)}\left(  a,p,2\right)  =%
\raisebox{-0.0406in}{\includegraphics[
height=0.1436in,
width=0.1436in
]%
{icon21.ps}%
}%
^{\!(\ell)}(a,p)\text{\textbf{, called a wiggle.}}%
\end{array}
\]
\bigskip The remaining \textbf{wiggle}, $L_{2}^{(\ell)}\left(  a,p,%
\raisebox{-0.0406in}{\includegraphics[
height=0.1436in,
width=0.1436in
]%
{icon20.ps}%
}%
\right)  $ is defined in like manner as\bigskip

$\hspace{-1in}%
\raisebox{-0.0406in}{\includegraphics[
height=0.1436in,
width=0.1436in
]%
{icon20.ps}%
}%
^{\!(\ell)}(a,p)(K)=\left\{
\begin{array}
[c]{ll}%
\left(  K-%
\raisebox{-0.2508in}{\includegraphics[
height=0.6564in,
width=0.6564in
]%
{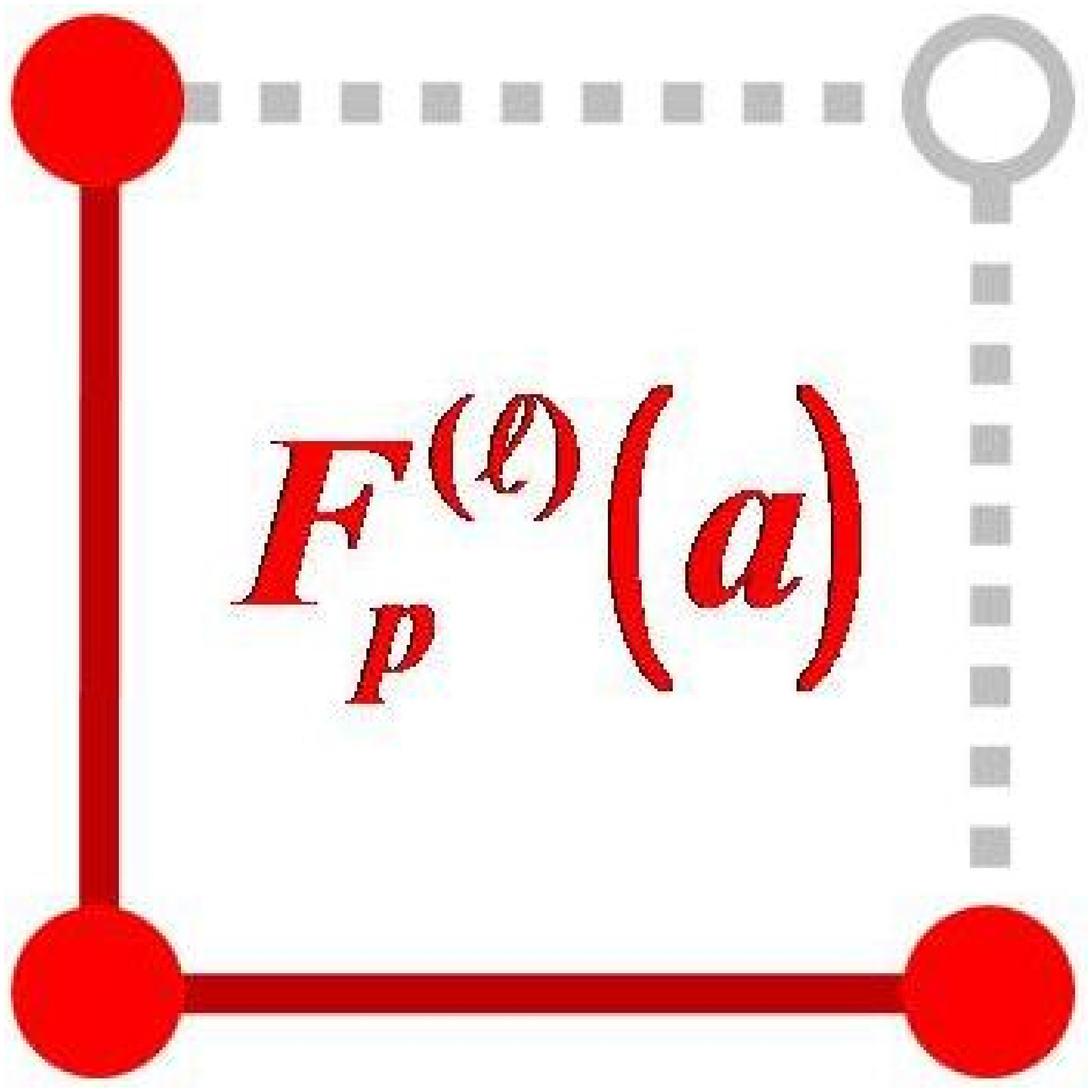}%
}%
\right)  \cup\left(  \mathcal{L}_{\ell}^{1}\cap%
\raisebox{-0.2508in}{\includegraphics[
height=0.6564in,
width=0.6564in
]%
{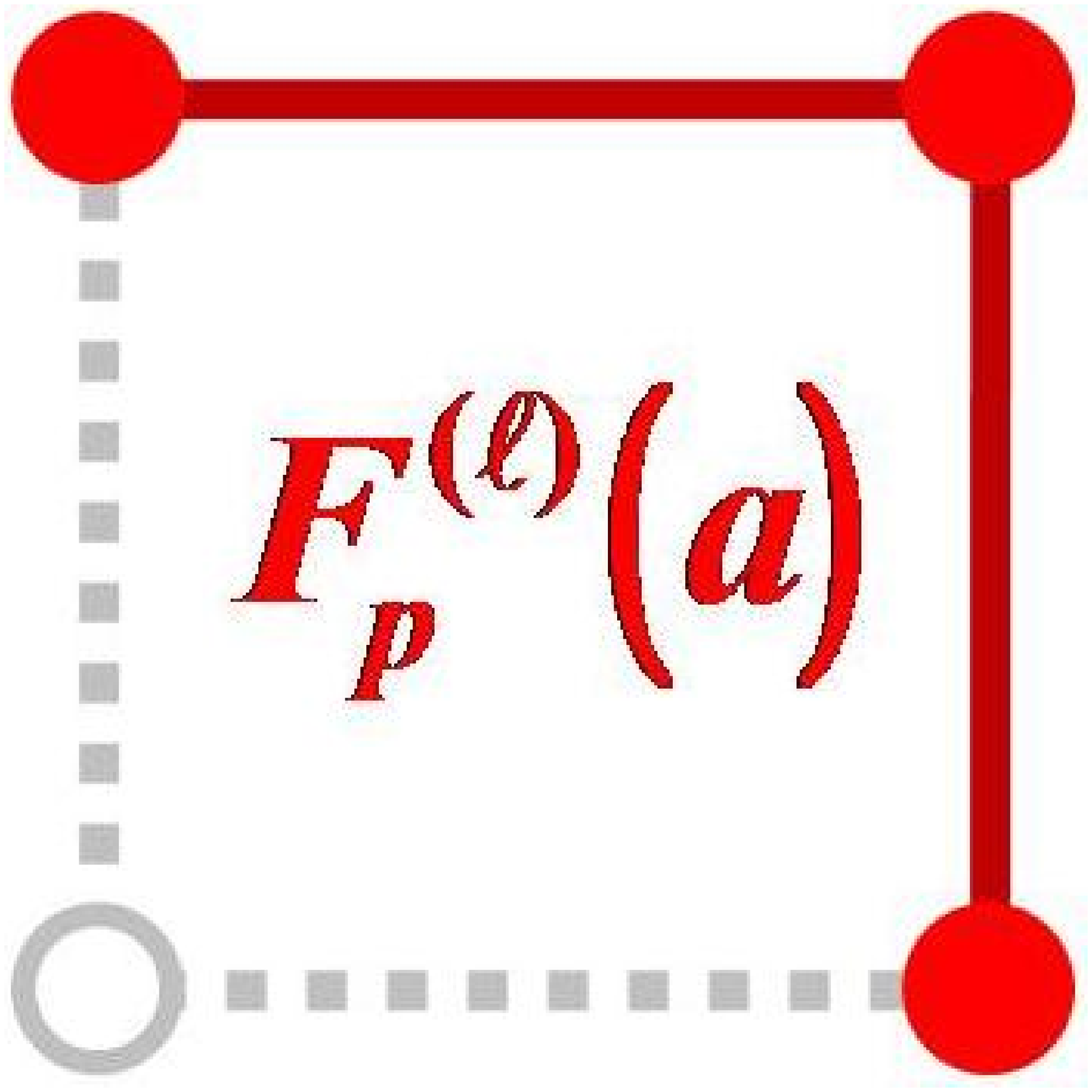}%
}%
\right)  & \text{if \ }K\cap%
\raisebox{-0.2508in}{\includegraphics[
height=0.6478in,
width=0.6478in
]%
{face.ps}%
}%
=\mathcal{L}_{\ell}^{1}\cap%
\raisebox{-0.2508in}{\includegraphics[
height=0.6564in,
width=0.6564in
]%
{wig1l.ps}%
}%
\\
& \\
\left(  K-%
\raisebox{-0.2508in}{\includegraphics[
height=0.6564in,
width=0.6564in
]%
{wig1r.ps}%
}%
\right)  \cup\left(  \mathcal{L}_{\ell}^{1}\cap%
\raisebox{-0.2508in}{\includegraphics[
height=0.6564in,
width=0.6564in
]%
{wig1l.ps}%
}%
\right)  & \text{if \ }K\cap%
\raisebox{-0.2508in}{\includegraphics[
height=0.6478in,
width=0.6478in
]%
{face.ps}%
}%
=\mathcal{L}_{\ell}^{1}\cap%
\raisebox{-0.2508in}{\includegraphics[
height=0.6564in,
width=0.6564in
]%
{wig1r.ps}%
}%
\\
& \\
K & \text{otherwise}%
\end{array}
\right.  $

\bigskip

\noindent And it is illustrated more succinctly in the figure given below

\bigskip%

\[%
\begin{array}
[c]{c}%
\begin{array}
[c]{ccc}%
\raisebox{-0.5016in}{\includegraphics[
height=1.1372in,
width=1.1372in
]%
{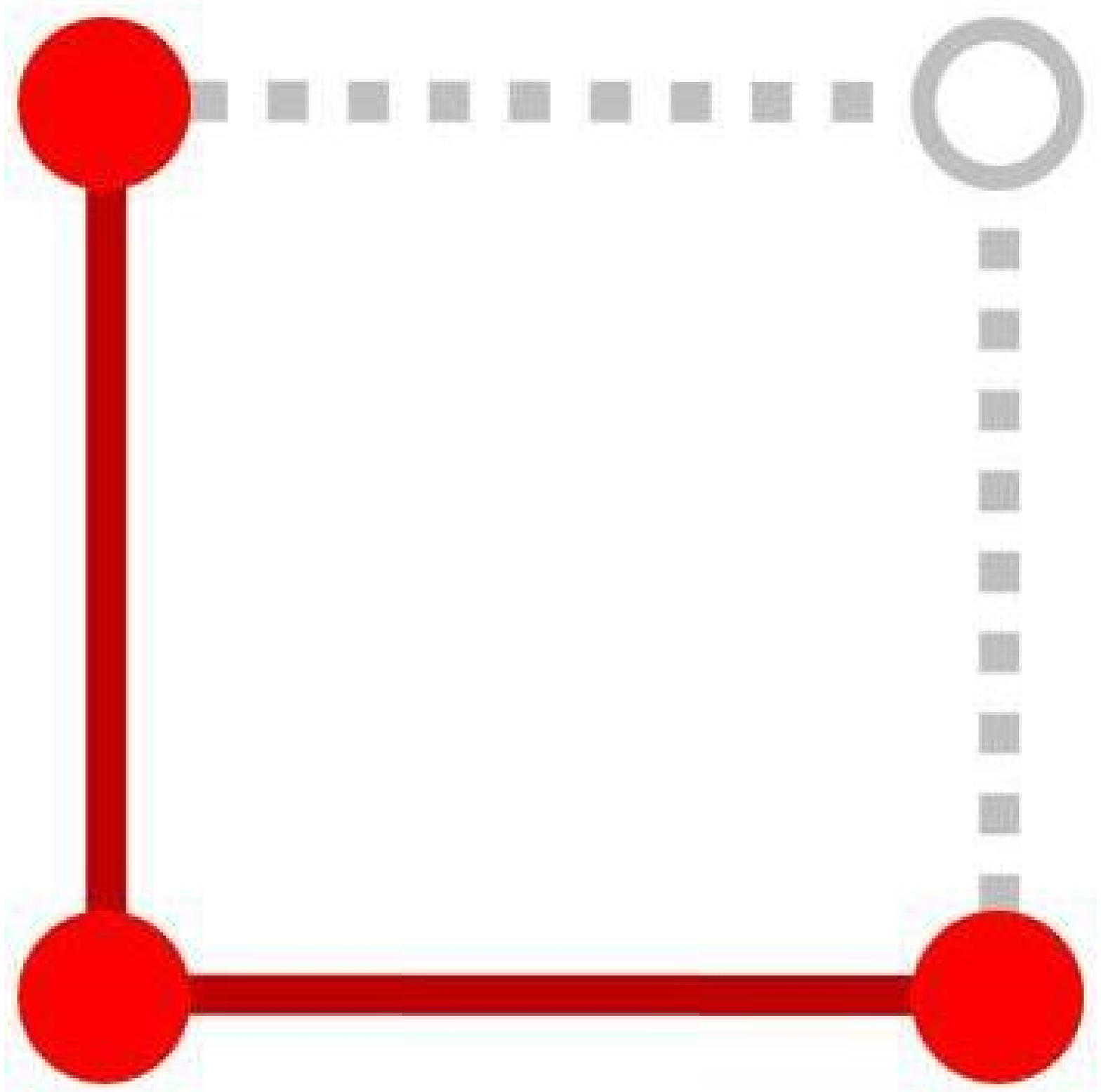}%
}%
&
\begin{array}
[c]{c}%
{\includegraphics[
height=0.237in,
width=0.5967in
]%
{arrow-ya.ps}%
}%
\\
F_{p}^{(\ell)}(a)
\end{array}
&
\raisebox{-0.5016in}{\includegraphics[
height=1.1372in,
width=1.1372in
]%
{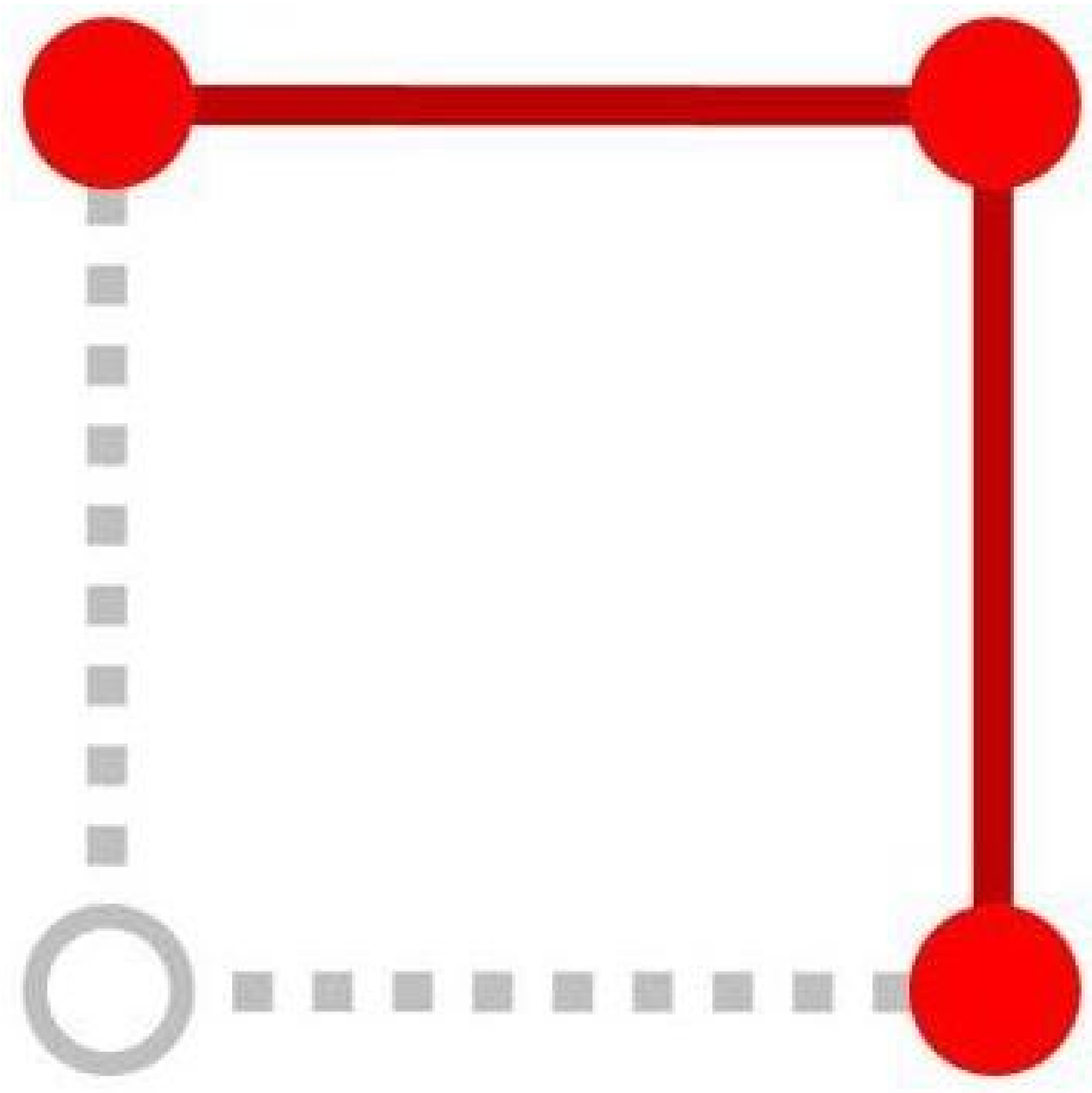}%
}%
\end{array}
\\
\text{\textbf{Lattice knot move }}L_{2}^{(\ell)}\left(  a,p,%
\raisebox{-0.0406in}{\includegraphics[
height=0.1436in,
width=0.1436in
]%
{icon20.ps}%
}%
\right)  =\text{\textbf{ }}L_{2}^{(\ell)}\left(  a,p,0\right)  =L_{2}^{(\ell
)}\left(  a,p,2\right)  =%
\raisebox{-0.0406in}{\includegraphics[
height=0.1436in,
width=0.1436in
]%
{icon20.ps}%
}%
^{\!(\ell)}(a,p)\text{\textbf{, called a wiggle.}}%
\end{array}
\text{ \ ,}%
\]

\end{definition}

\bigskip

\begin{remark}
For each cube $B^{(\ell)}(a)$, there are $6$ wiggle moves, $2$ for each of the
$3$ preferred faces.
\end{remark}

\bigskip

\subsection{Definition of the move wag}

\bigskip

The third move, called a \textbf{wag}, and denoted by%
\[
L_{3}^{(\ell)}\left(  a,p,q\right)  \text{ ,}%
\]
is defined for each of the four perpendicular edges of a preferred face
$F_{p}^{(\ell)}(a)$ of a cube $B^{(\ell)}(a)$ in the cell complex
$\mathcal{C}_{\ell}$. \ As indicated in the figure given below, we index the
four edges of the cube $B^{(\ell)}(a)$, which are perpendicular to a preferred
face $F_{p}^{(\ell)}(a)$, beginning with the preferred edge $E_{p}^{(\ell
)}(a)$ perpendicular to $F_{p}^{(\ell)}(a)$ at $a$, with the integers $0$,
$1$, $2$, $3$ (or respectively with the symbols
\raisebox{-0.0406in}{\includegraphics[
height=0.1531in,
width=0.1531in
]%
{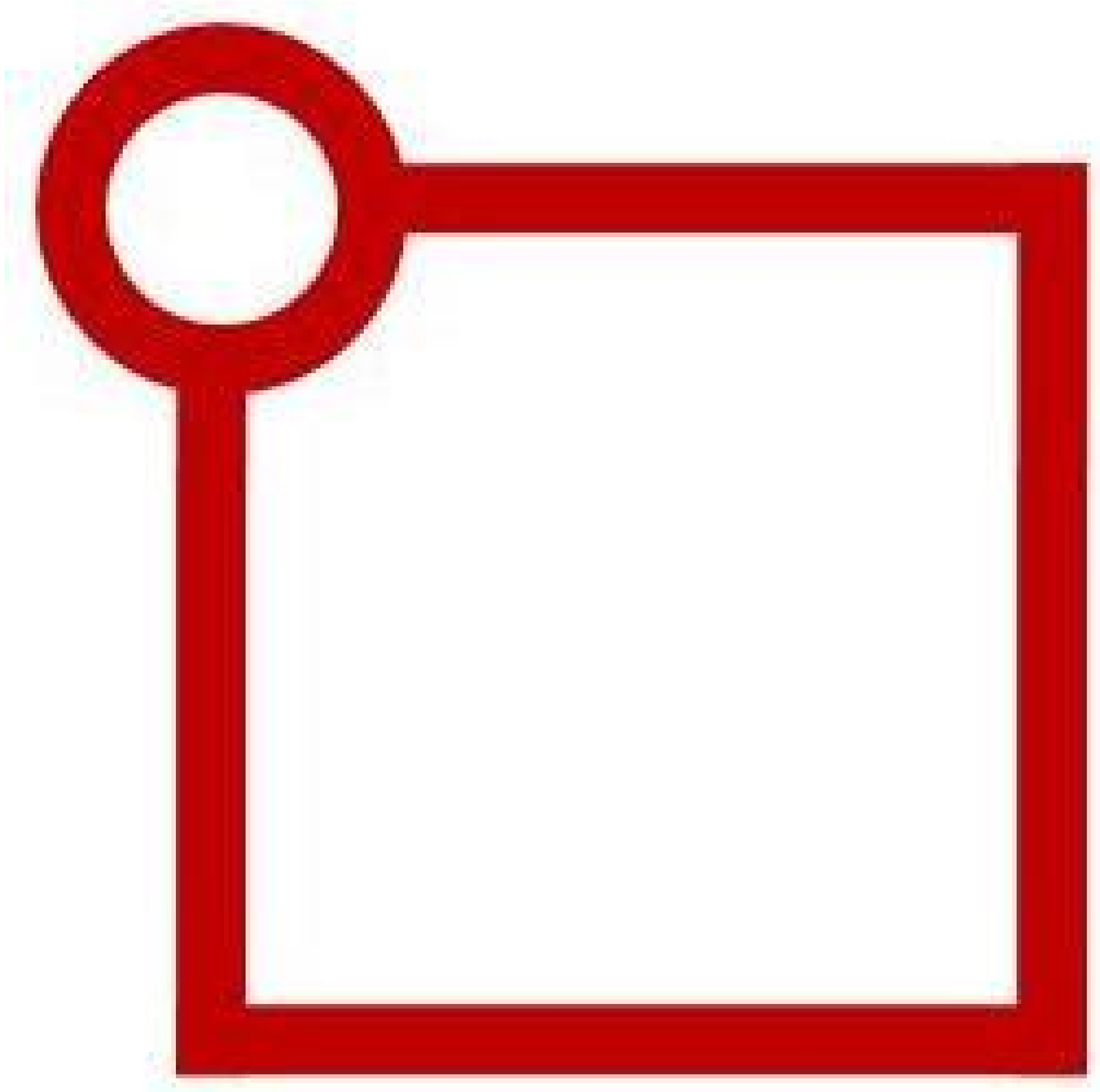}%
}%
,
\raisebox{-0.0406in}{\includegraphics[
height=0.1531in,
width=0.1531in
]%
{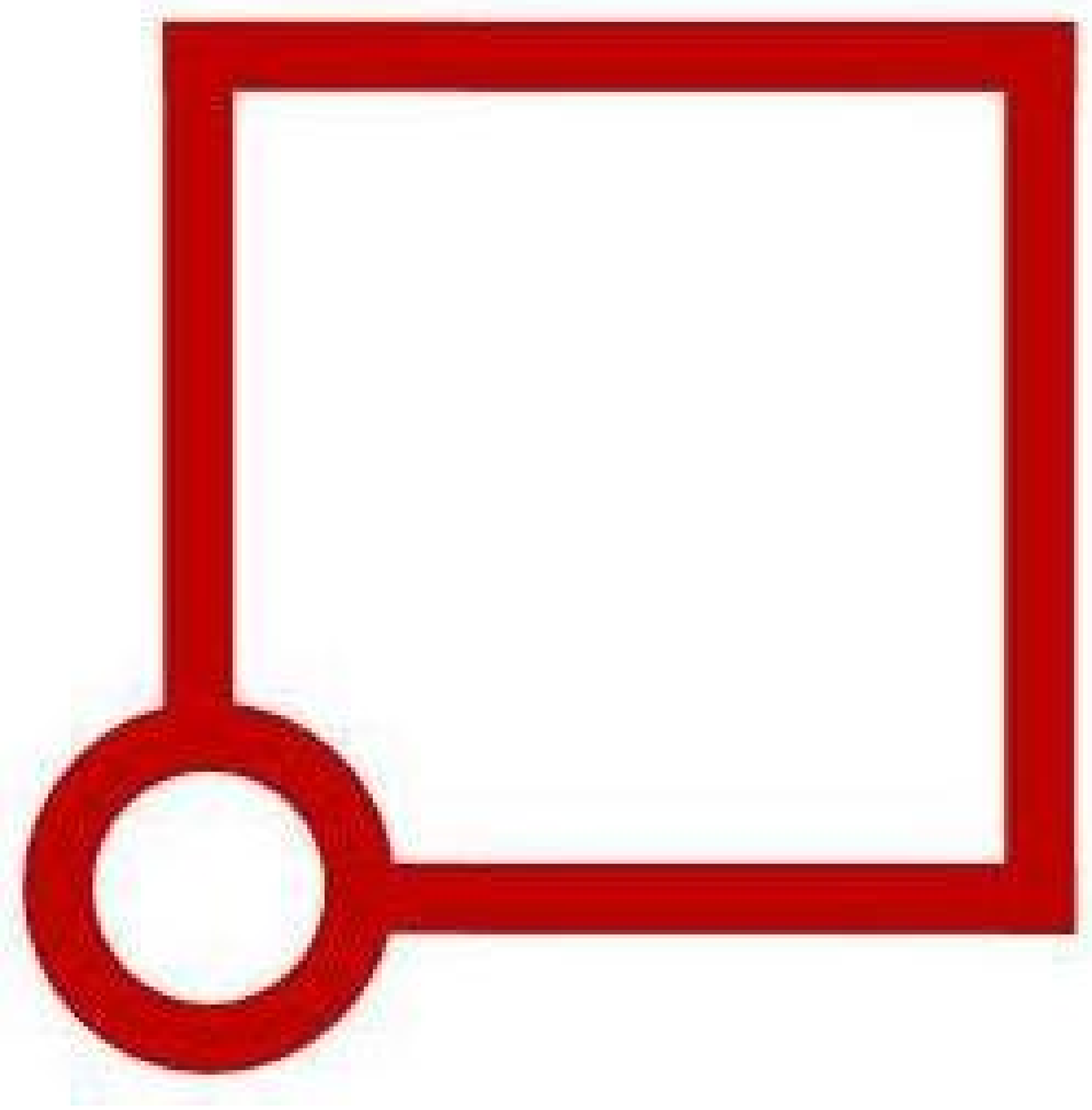}%
}%
,
\raisebox{-0.0406in}{\includegraphics[
height=0.1531in,
width=0.1531in
]%
{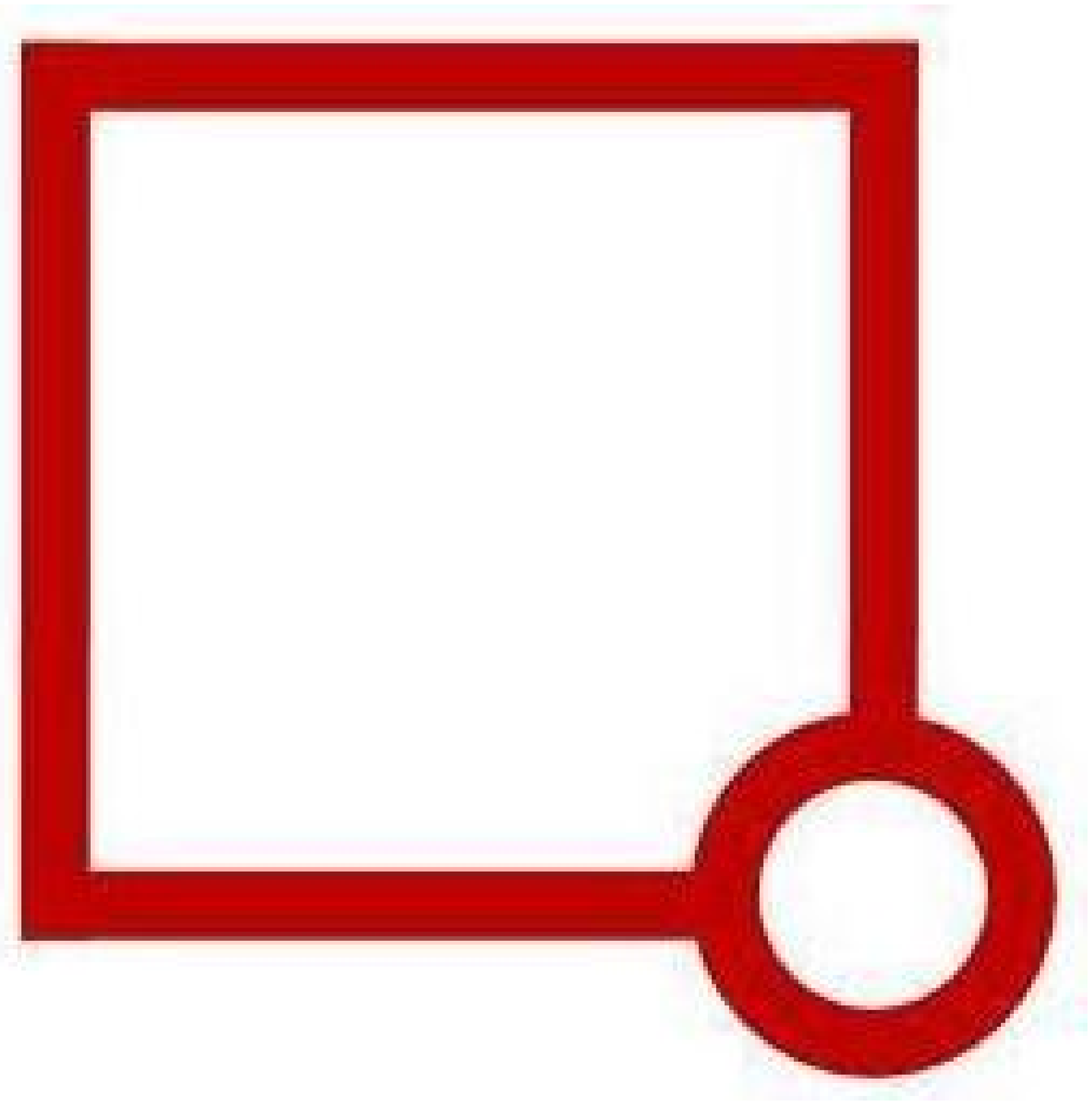}%
}%
,
\raisebox{-0.0406in}{\includegraphics[
height=0.1531in,
width=0.1531in
]%
{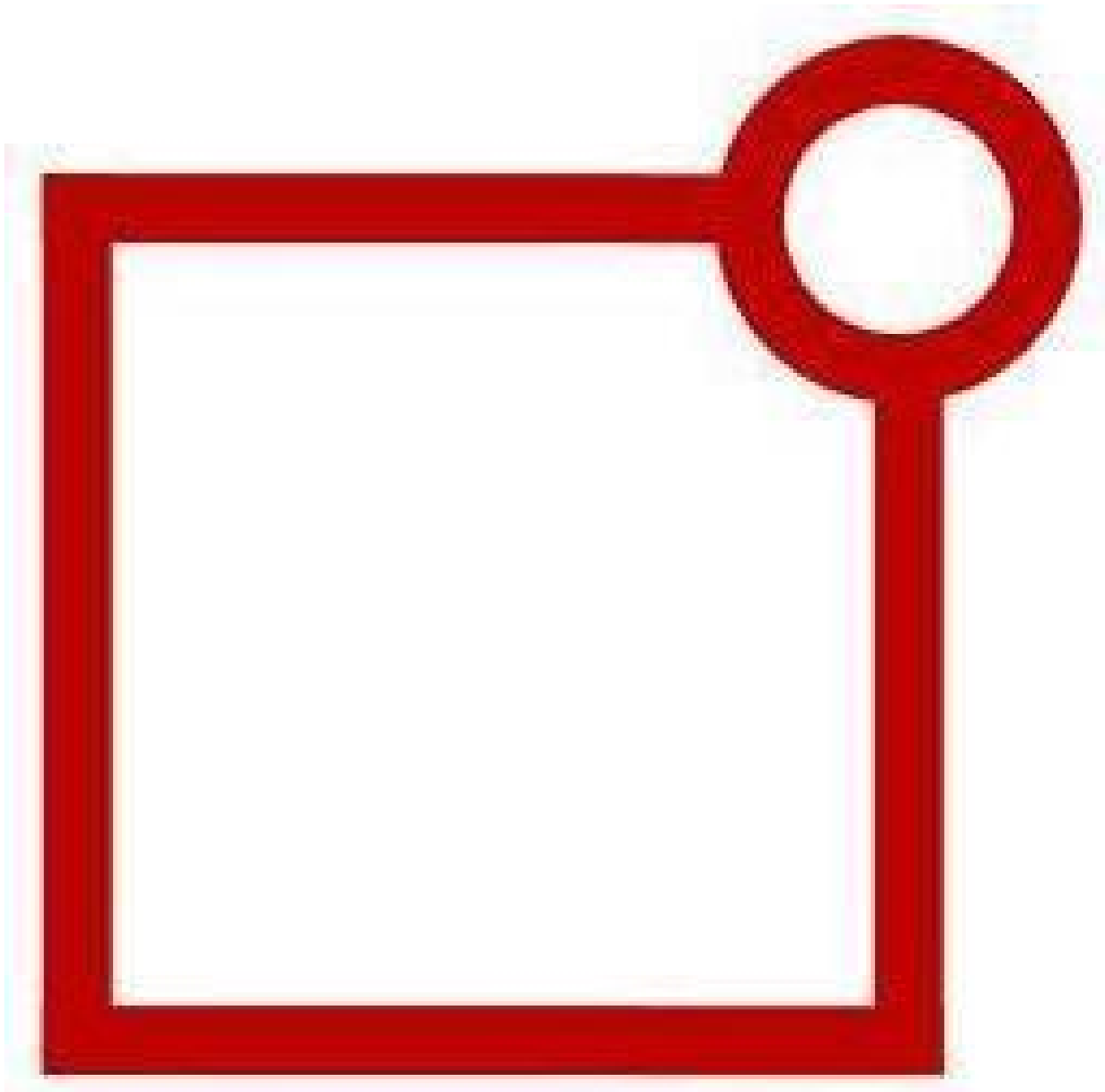}%
}%
), using the counterclockwise orientation induced on the face $F_{p}^{(\ell
)}(a)$ by the preferred frame $e$. \ The chosen perpendicular edge will be
called the \textbf{hinge} of the wag. \ 

\bigskip%
\begin{center}
\includegraphics[
height=3.5267in,
width=5.028in
]%
{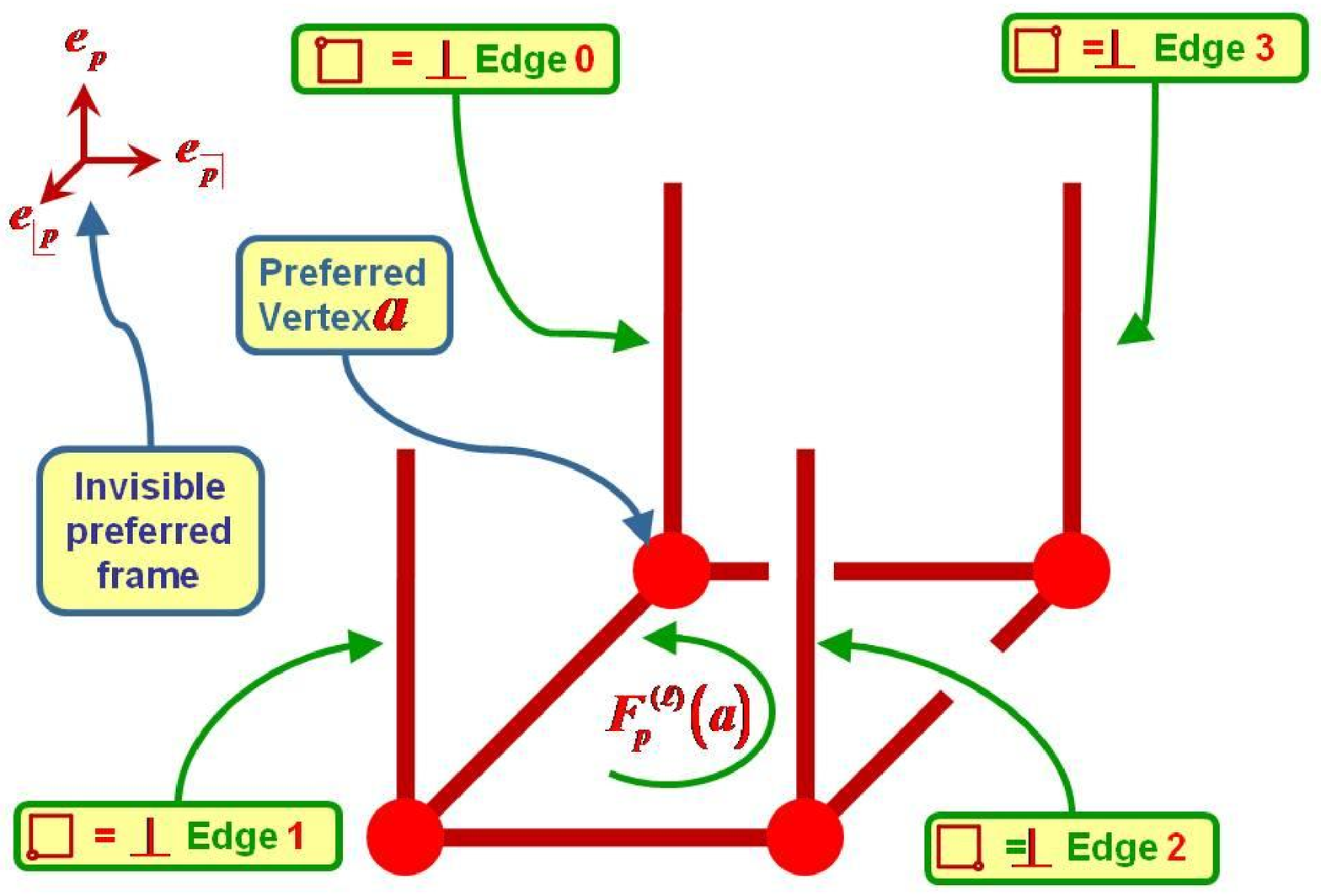}%
\\
\textbf{Edge ordering conventions for the }$L_{3}^{(\ell)}\left(
a,p,q\right)  $\textbf{ wag move, where }$q=0,1,2,3$ (or respectively by
$q=$\raisebox{-0.0406in}{\includegraphics[
height=0.1531in,
width=0.1531in
]%
{icon30.ps}%
}%
, \raisebox{-0.0406in}{\includegraphics[
height=0.1531in,
width=0.1531in
]%
{icon31.ps}%
}%
, \raisebox{-0.0406in}{\includegraphics[
height=0.1531in,
width=0.1531in
]%
{icon32.ps}%
}%
, \raisebox{-0.0406in}{\includegraphics[
height=0.1531in,
width=0.1531in
]%
{icon33.ps}%
}
).
\end{center}

\bigskip

We display the figure below as a cryptic reminder for the reader of the
notational conventions for the preferred edges and preferred faces of the cube
$B^{(\ell)}(a)$ which are defined in a previous section of this paper:

\bigskip%

\begin{center}
\fbox{\includegraphics[
height=2.6273in,
width=4.7305in
]%
{cubefaces.ps}%
}\\
\textbf{Preferred vertex }$a$\textbf{, preferred edges }$E_{p}^{(\ell)}%
(a)$\textbf{, }$E_{\left\lfloor p\right.  }^{(\ell)}(a)$\textbf{, }$E_{\left.
p\right\rceil }^{(\ell)}(a)$\textbf{, and preferred faces }$F_{p}^{(\ell)}%
(a)$\textbf{, }$F_{\left\lfloor p\right.  }^{(\ell)}(a)$\textbf{, }$F_{\left.
p\right\rceil }^{(\ell)}(a)$\textbf{ of cube }$B^{(\ell)}(a)$\textbf{.}%
\end{center}

\bigskip

The wag $L_{3}^{(\ell)}\left(  a,p,q\right)  $ associated with the hinge $q=0$
(also denoted by $q=%
\raisebox{-0.0406in}{\includegraphics[
height=0.1531in,
width=0.1531in
]%
{icon30.ps}%
}%
$ ) of the preferred face $F_{p}^{(\ell)}(a)$ of the cube $B^{(\ell)}(a)$ will
be denoted in any one of the following three ways%
\[
L_{3}^{(\ell)}\left(  a,p,%
\raisebox{-0.0406in}{\includegraphics[
height=0.1531in,
width=0.1531in
]%
{icon30.ps}%
}%
\right)  =L_{3}^{(\ell)}\left(  a,p,0\right)  =%
\raisebox{-0.0406in}{\includegraphics[
height=0.1531in,
width=0.1531in
]%
{icon30.ps}%
}%
^{(\ell)}\left(  a,p\right)  \text{ .}%
\]
The remaining three tugs $L_{3}^{(\ell)}\left(  a,p,q\right)  $, for $q=1,2,3$
(also indicated respectively by $q=$%
\raisebox{-0.0406in}{\includegraphics[
height=0.1531in,
width=0.1531in
]%
{icon31.ps}%
}%
,
\raisebox{-0.0406in}{\includegraphics[
height=0.1531in,
width=0.1531in
]%
{icon32.ps}%
}%
,
\raisebox{-0.0406in}{\includegraphics[
height=0.1531in,
width=0.1531in
]%
{icon33.ps}%
}%
), are denoted in like manner.

\bigskip

\begin{definition}
We define the \textbf{wag}, written $L_{3}^{(\ell)}\left(  a,p,0\right)  $
(also denoted by $L_{3}^{(\ell)}\left(  a,p,%
\raisebox{-0.0406in}{\includegraphics[
height=0.1531in,
width=0.1531in
]%
{icon30.ps}%
}%
\right)  $ and $%
\raisebox{-0.0406in}{\includegraphics[
height=0.1531in,
width=0.1531in
]%
{icon30.ps}%
}%
^{(\ell)}\left(  a,p\right)  $ .), associated with the $0$-th perpendicular
edge (called the $0$-th \textbf{hinge}) to the preferred face $F_{p}^{(\ell
)}(a)$ of the cube $B^{(\ell)}(a)$ as the move

\bigskip%
\[%
\raisebox{-0.0406in}{\includegraphics[
height=0.1531in,
width=0.1531in
]%
{icon30.ps}%
}%
^{\!(\ell)}\left(  a,p\right)  (K)=\left\{
\begin{array}
[c]{ll}%
\left(  K-%
\raisebox{-0.3009in}{\includegraphics[
height=0.7066in,
width=0.6694in
]%
{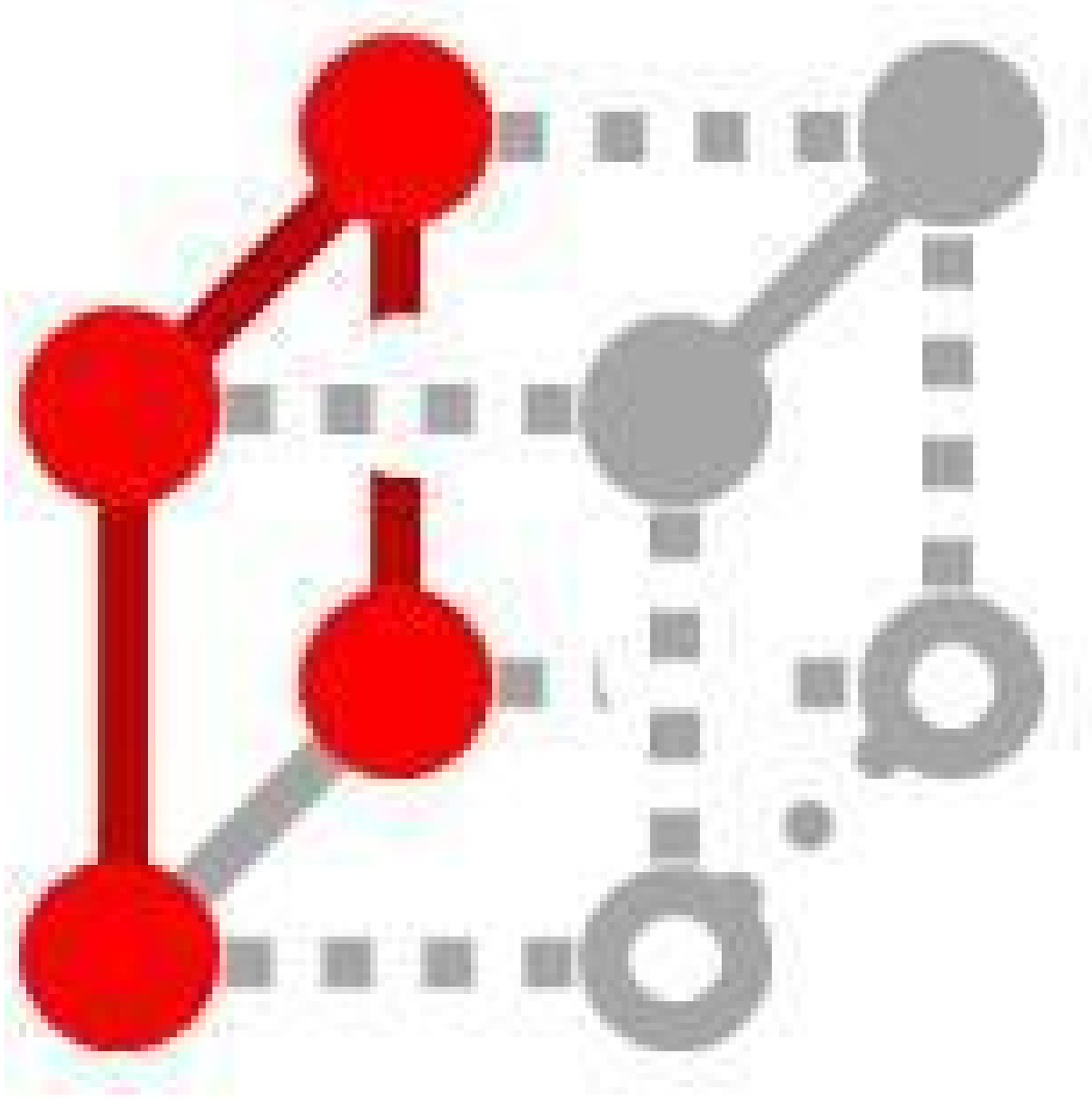}%
}%
\right)  \cup%
\raisebox{-0.3009in}{\includegraphics[
height=0.7066in,
width=0.6694in
]%
{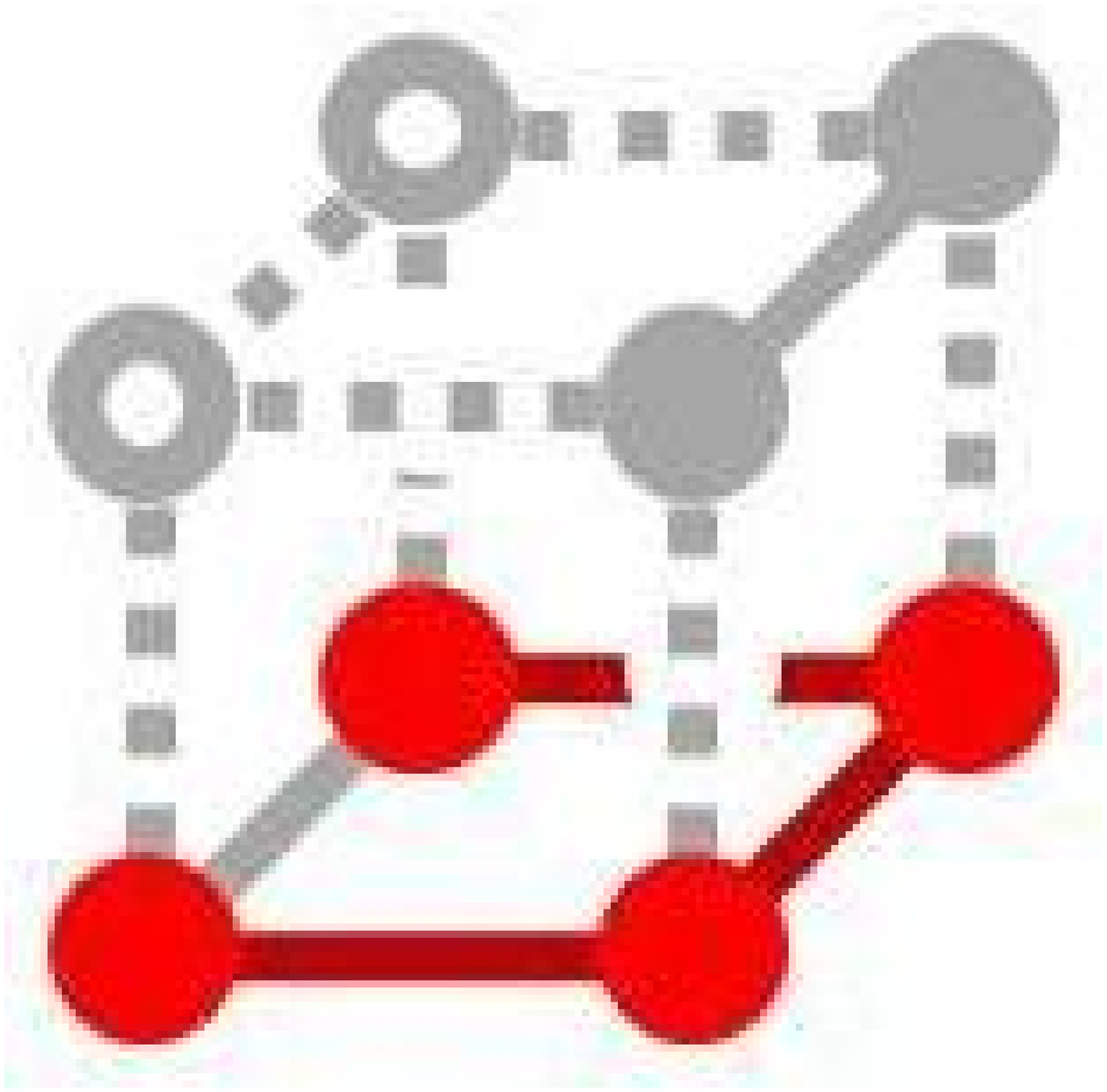}%
}%
& \text{if }\ K\cap%
\raisebox{-0.3009in}{\includegraphics[
height=0.7057in,
width=0.6694in
]%
{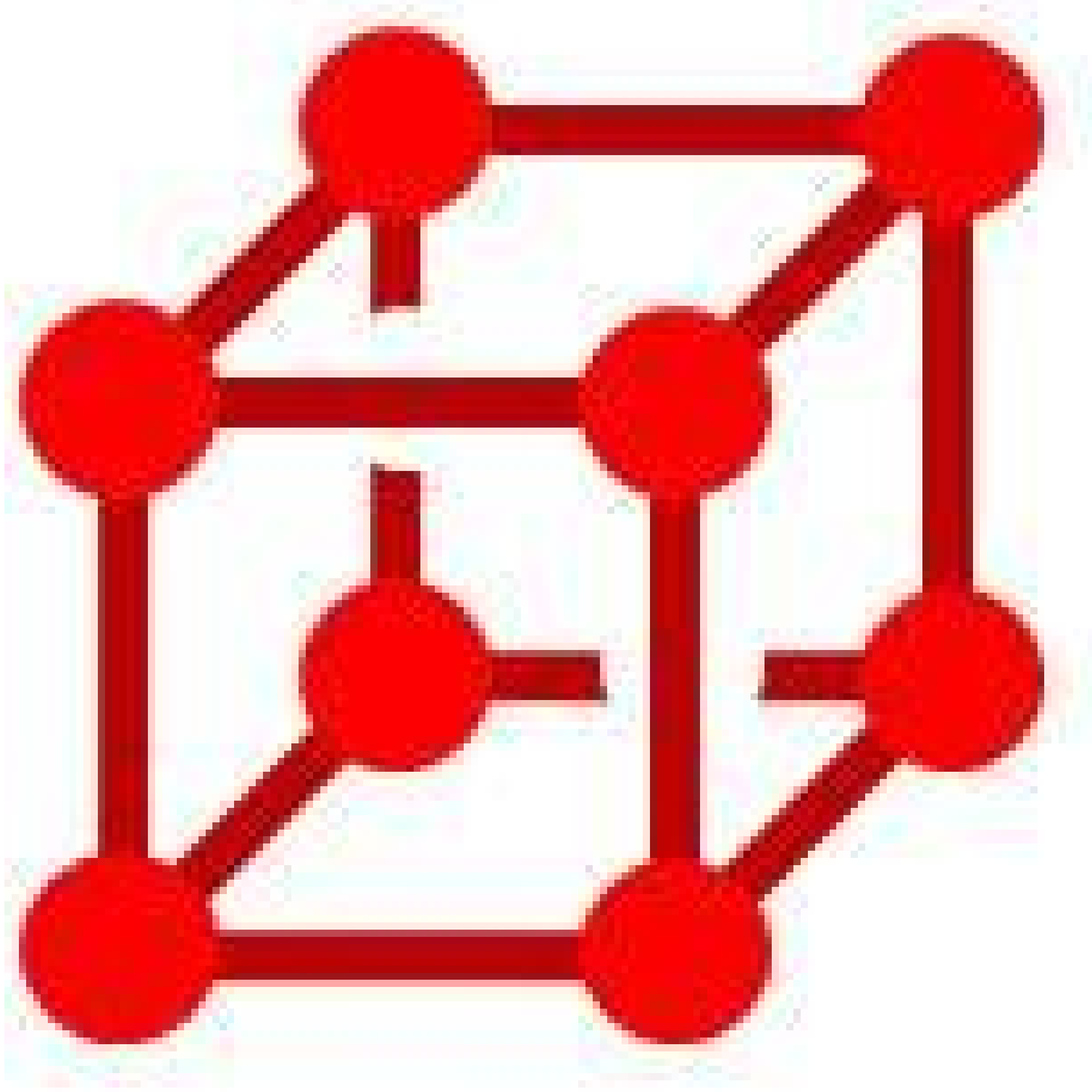}%
}%
=%
\raisebox{-0.3009in}{\includegraphics[
height=0.7066in,
width=0.6694in
]%
{wagle0f1.ps}%
}%
\\
& \\
\left(  K-%
\raisebox{-0.3009in}{\includegraphics[
height=0.7066in,
width=0.6694in
]%
{wagre0f1.ps}%
}%
\right)  \cup%
\raisebox{-0.3009in}{\includegraphics[
height=0.7066in,
width=0.6694in
]%
{wagle0f1.ps}%
}%
& \text{if \ }K\cap%
\raisebox{-0.3009in}{\includegraphics[
height=0.7057in,
width=0.6694in
]%
{cube.ps}%
}%
=%
\raisebox{-0.3009in}{\includegraphics[
height=0.7066in,
width=0.6694in
]%
{wagre0f1.ps}%
}%
\\
& \\
K & \text{otherwise}%
\end{array}
\right.
\]

\bigskip

\noindent where, in each of the above graphics, the preferred face
$F_{p}^{(\ell)}(a)$ is assumed to be the back face, and where%
\[%
\raisebox{-0.3009in}{\includegraphics[
height=0.7066in,
width=0.6694in
]%
{wagle0f1.ps}%
}%
\text{, \ \ \ \ \ }%
\raisebox{-0.3009in}{\includegraphics[
height=0.7066in,
width=0.6694in
]%
{wagre0f1.ps}%
}%
\text{, \ \ \ \ \ }%
\raisebox{-0.3009in}{\includegraphics[
height=0.7057in,
width=0.6694in
]%
{cube.ps}%
}%
\]
denote subcomplexes of the 1-skeleton of the boundary of the cube $B^{(\ell
)}(a)$, as defined by the notational conventions found in the previous
section, and where we have drawn the preferred face $F_{p}^{(\ell)}(a)$ as the
back face in each of the above drawings.

\bigskip

This wag move $L_{3}^{(\ell)}\left(  a,p,0\right)  =L_{3}^{(\ell)}\left(  a,p,%
\raisebox{-0.0406in}{\includegraphics[
height=0.1531in,
width=0.1531in
]%
{icon30.ps}%
}%
\right)  =%
\raisebox{-0.0406in}{\includegraphics[
height=0.1531in,
width=0.1531in
]%
{icon30.ps}%
}%
^{(\ell)}\left(  a,p\right)  $ is illustrated more succinctly in the figure
given below

\bigskip%

\[%
\begin{array}
[c]{c}%
\begin{array}
[c]{ccc}%
\raisebox{-0.3009in}{\includegraphics[
height=0.7066in,
width=0.6694in
]%
{wagle0f1.ps}%
}%
&
\begin{array}
[c]{c}%
{\includegraphics[
height=0.237in,
width=0.5967in
]%
{arrow-ya.ps}%
}%
\\
F_{p}^{(\ell)}(a)
\end{array}
&
\raisebox{-0.3009in}{\includegraphics[
height=0.7066in,
width=0.6694in
]%
{wagre0f1.ps}%
}%
\end{array}
\\
\text{\textbf{Lattice knot move }}L_{3}^{(\ell)}\left(  a,p,0\right)
=L_{3}^{(\ell)}\left(  a,p,%
\raisebox{-0.0406in}{\includegraphics[
height=0.1531in,
width=0.1531in
]%
{icon30.ps}%
}%
\right)  =%
\raisebox{-0.0406in}{\includegraphics[
height=0.1531in,
width=0.1531in
]%
{icon30.ps}%
}%
^{\!(\ell)}\left(  a,p\right)  \text{\textbf{, called a wag.}}%
\end{array}
\]

\bigskip

The other three tugs for face $F_{p}^{(\ell)}(a)$ are given below:%
\[%
\begin{array}
[c]{c}%
\begin{array}
[c]{ccc}%
\raisebox{-0.3009in}{\includegraphics[
height=0.7057in,
width=0.6694in
]%
{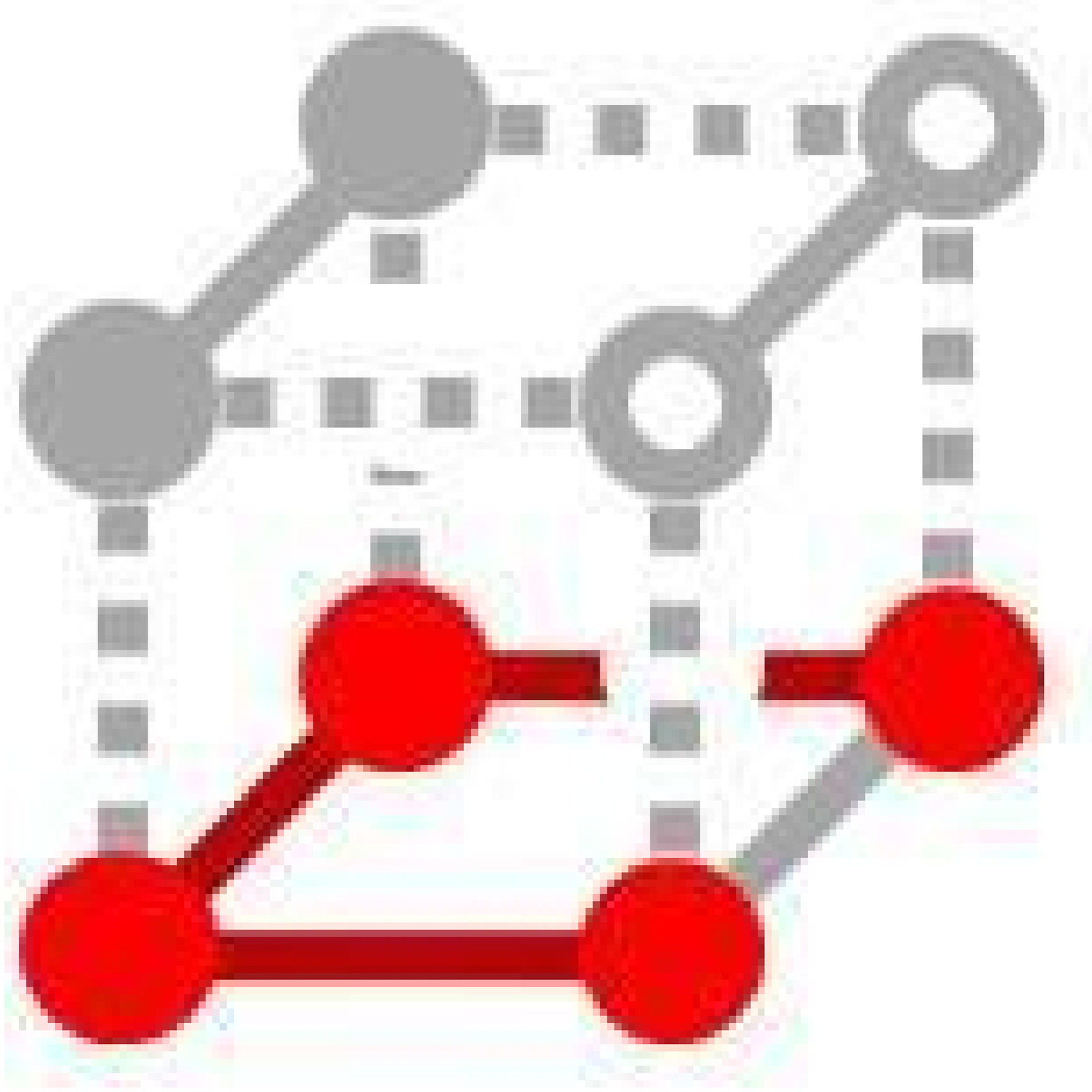}%
}%
&
\begin{array}
[c]{c}%
{\includegraphics[
height=0.237in,
width=0.5967in
]%
{arrow-ya.ps}%
}%
\\
F_{p}^{(\ell)}(a)
\end{array}
&
\raisebox{-0.3009in}{\includegraphics[
height=0.7057in,
width=0.6694in
]%
{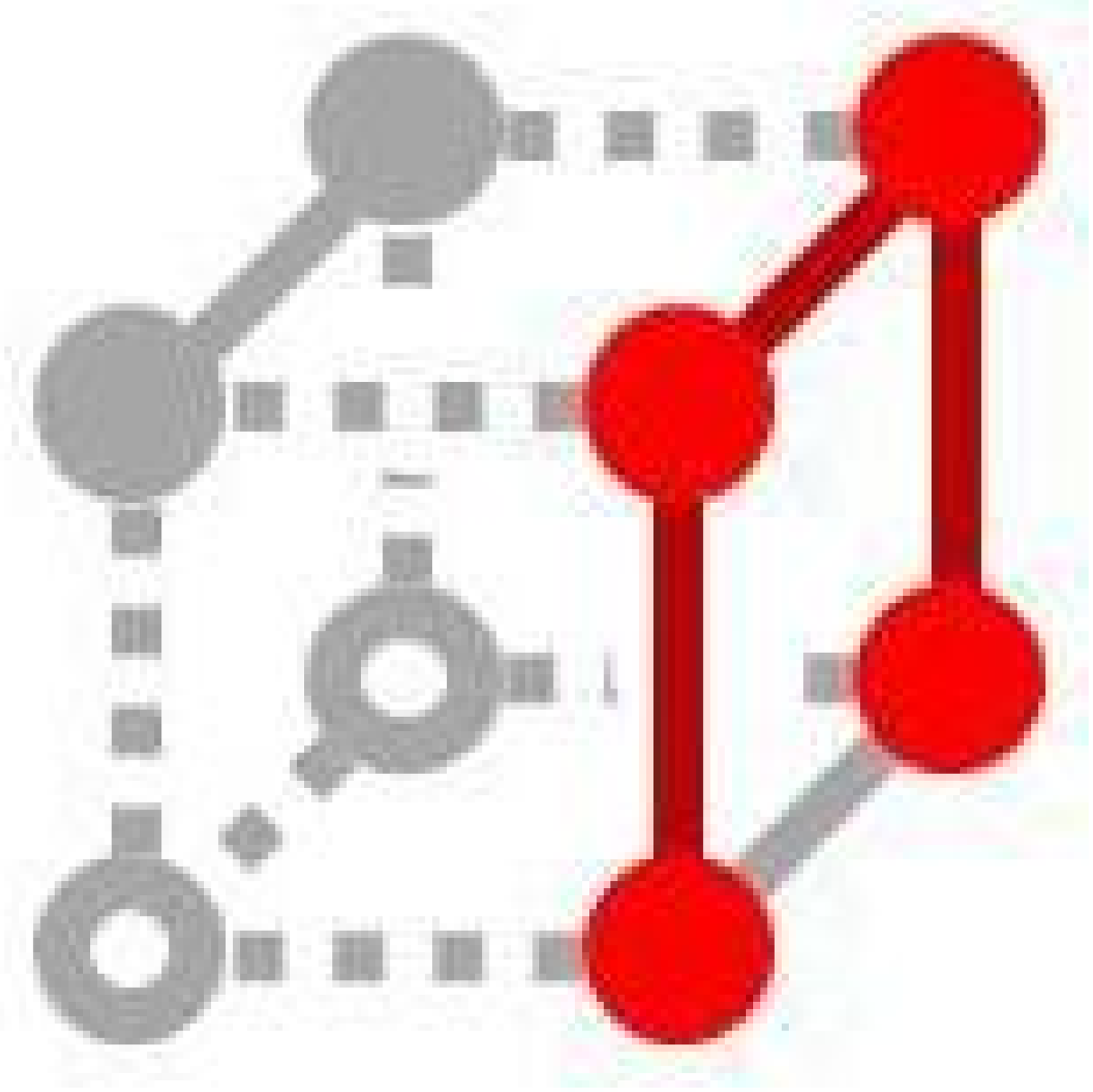}%
}%
\end{array}
\\
\text{\textbf{Lattice knot move }}L_{3}^{(\ell)}\left(  a,p,1\right)
=L_{3}^{(\ell)}\left(  a,p,%
\raisebox{-0.0406in}{\includegraphics[
height=0.1531in,
width=0.1531in
]%
{icon31.ps}%
}%
\right)  =%
\raisebox{-0.0406in}{\includegraphics[
height=0.1531in,
width=0.1531in
]%
{icon31.ps}%
}%
^{\!(\ell)}\left(  a,p\right)  \text{\textbf{, called a wag.}}%
\end{array}
\]%
\[%
\begin{array}
[c]{c}%
\begin{array}
[c]{ccc}%
\raisebox{-0.3009in}{\includegraphics[
height=0.7057in,
width=0.6694in
]%
{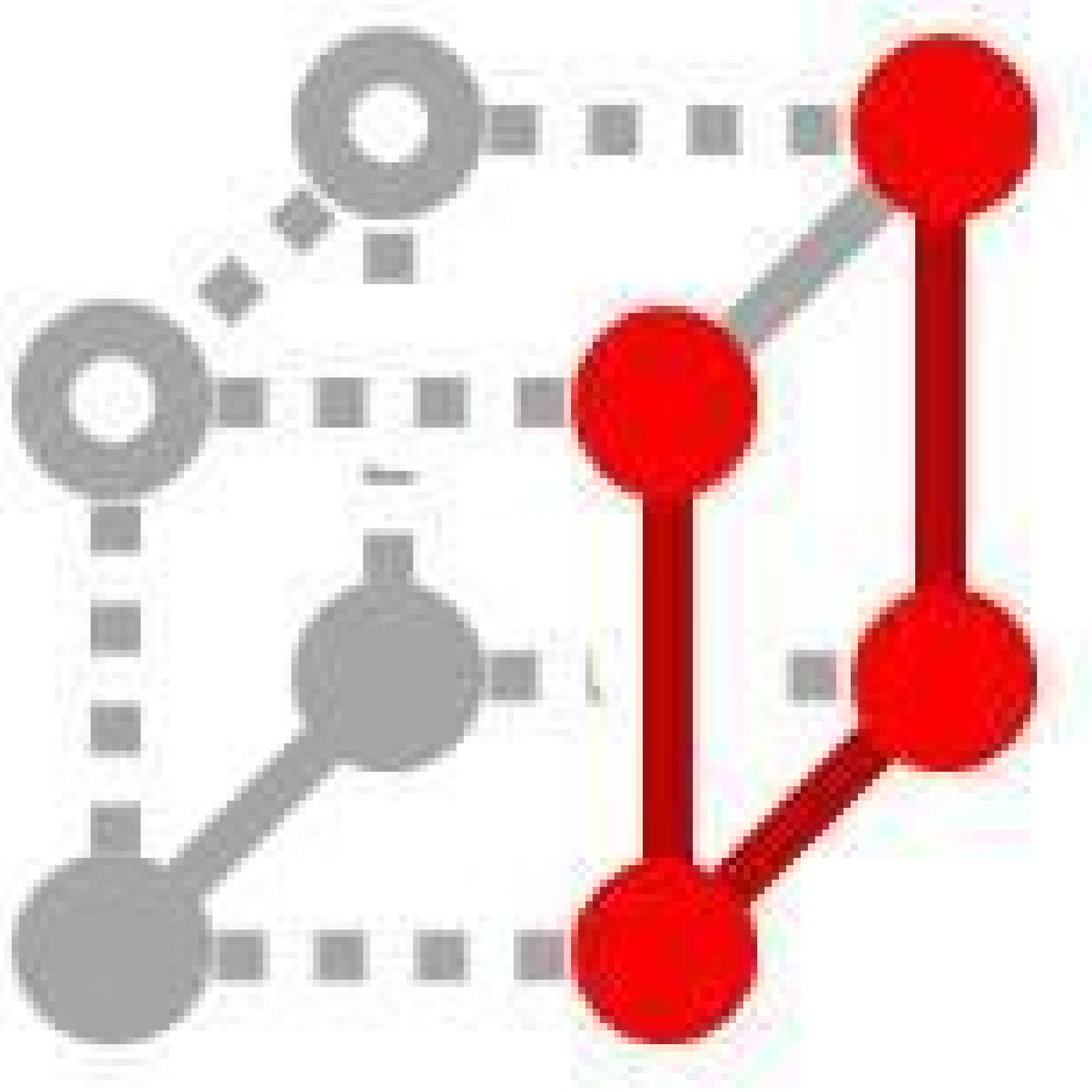}%
}%
&
\begin{array}
[c]{c}%
{\includegraphics[
height=0.237in,
width=0.5967in
]%
{arrow-ya.ps}%
}%
\\
F_{p}^{(\ell)}(a)
\end{array}
&
\raisebox{-0.3009in}{\includegraphics[
height=0.7057in,
width=0.6694in
]%
{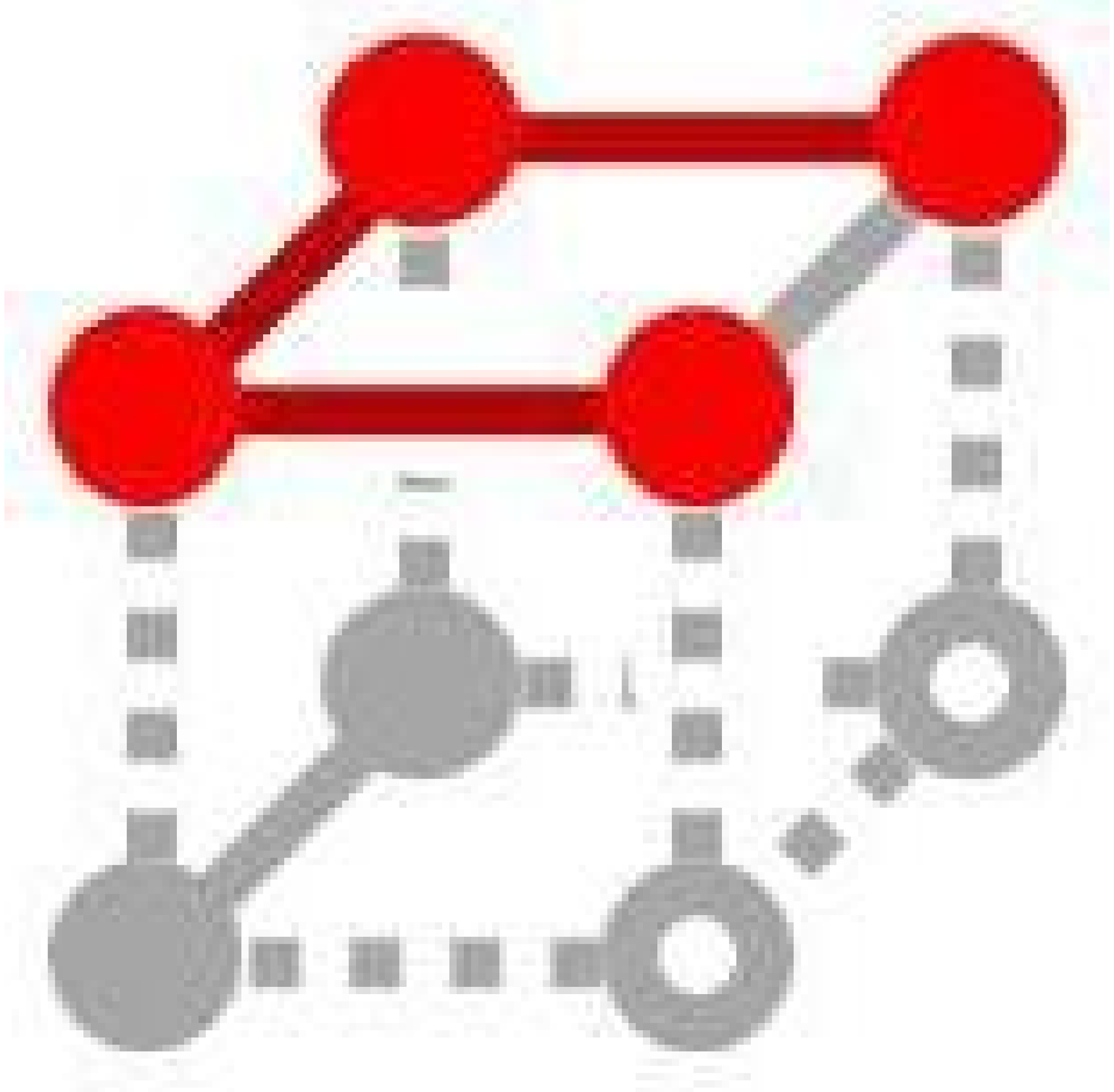}%
}%
\end{array}
\\
\text{\textbf{Lattice knot move }}L_{3}^{(\ell)}\left(  a,p,2\right)
=L_{3}^{(\ell)}\left(  a,p,%
\raisebox{-0.0406in}{\includegraphics[
height=0.1531in,
width=0.1531in
]%
{icon32.ps}%
}%
\right)  =%
\raisebox{-0.0406in}{\includegraphics[
height=0.1531in,
width=0.1531in
]%
{icon32.ps}%
}%
^{\!(\ell)}\left(  a,p\right)  \text{\textbf{, called a wag.}}%
\end{array}
\]%
\[%
\begin{array}
[c]{c}%
\begin{array}
[c]{ccc}%
\raisebox{-0.3009in}{\includegraphics[
height=0.7057in,
width=0.6694in
]%
{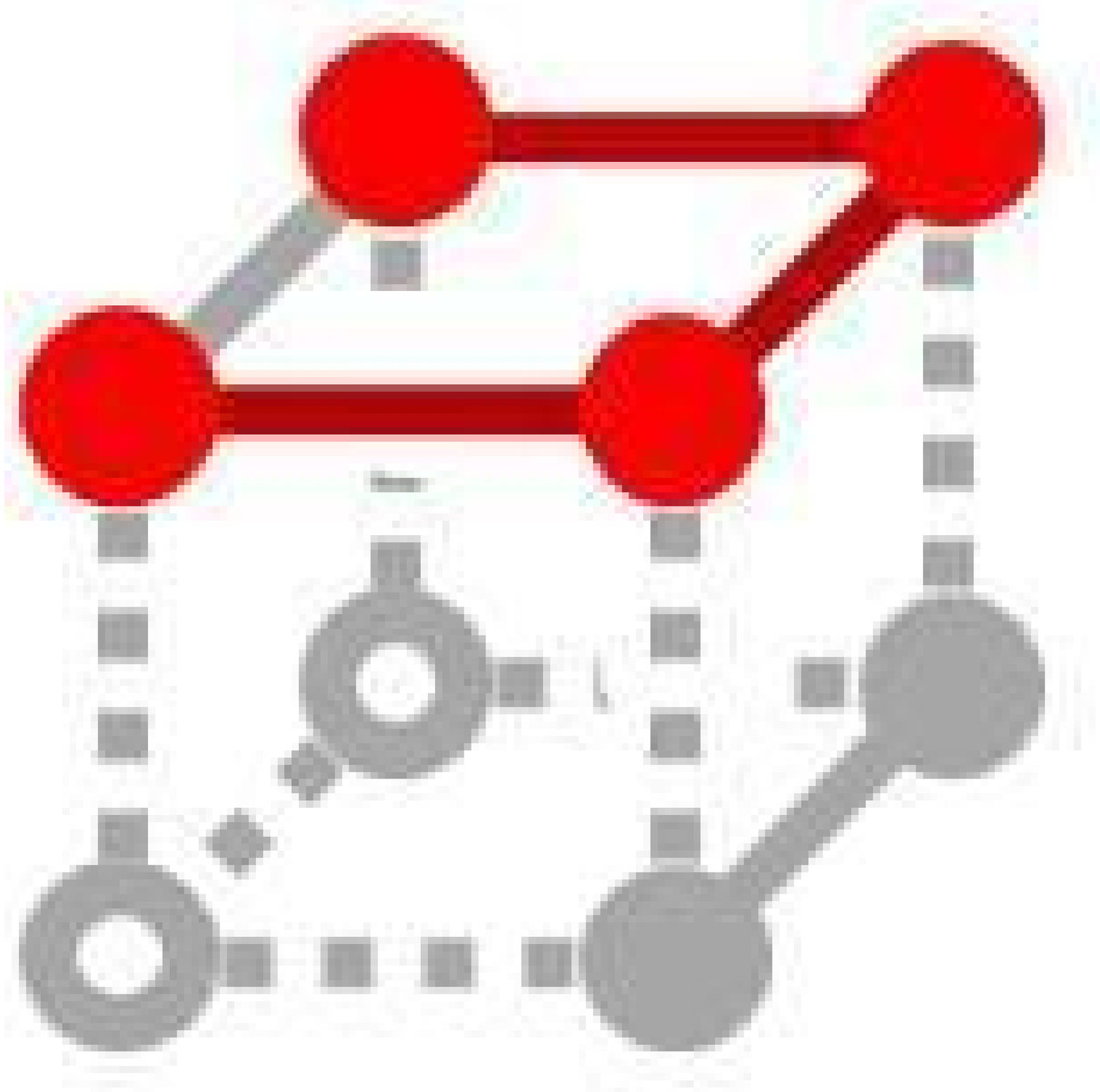}%
}%
&
\begin{array}
[c]{c}%
{\includegraphics[
height=0.237in,
width=0.5967in
]%
{arrow-ya.ps}%
}%
\\
F_{p}^{(\ell)}(a)
\end{array}
&
\raisebox{-0.3009in}{\includegraphics[
height=0.7057in,
width=0.6694in
]%
{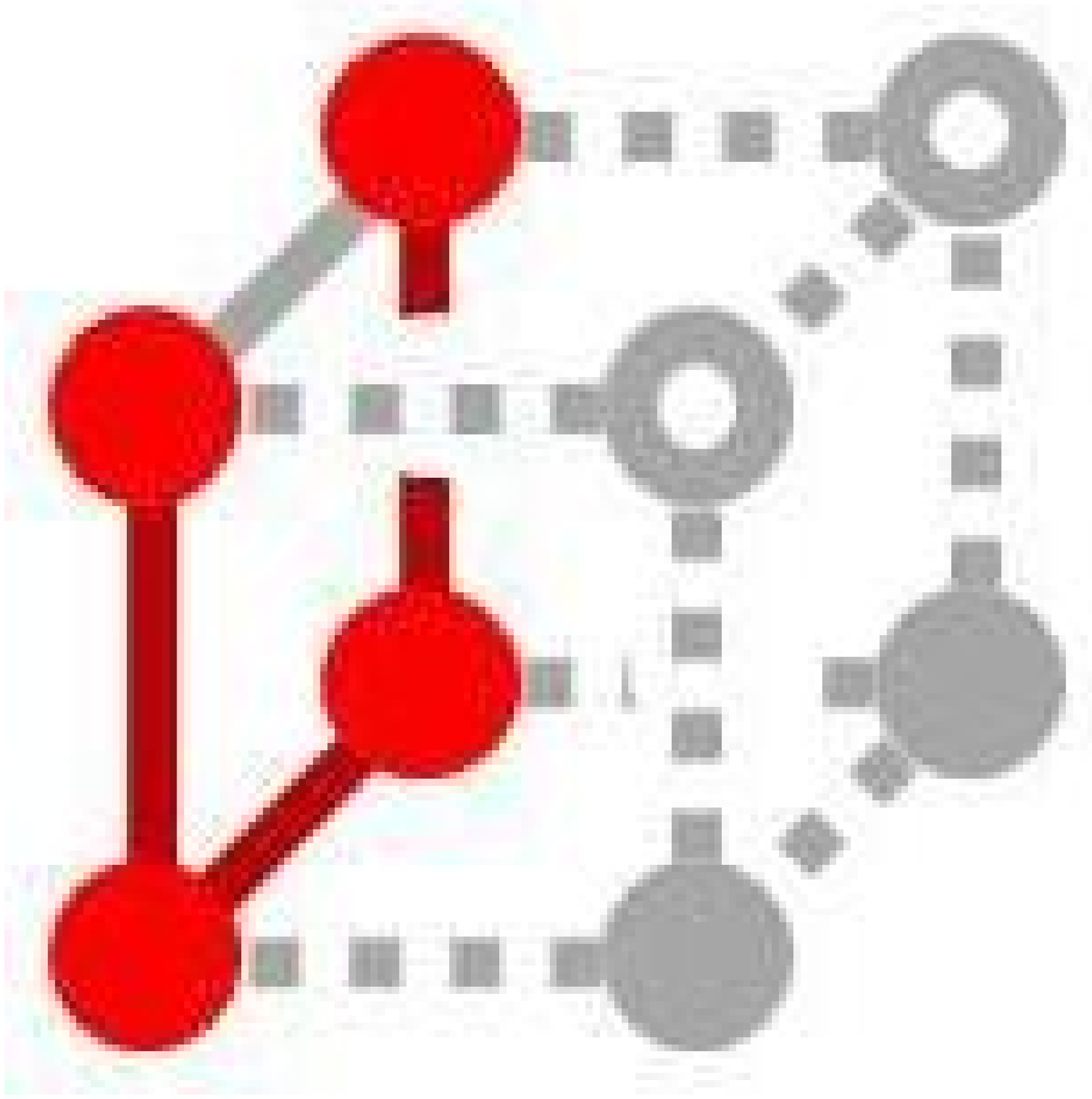}%
}%
\end{array}
\\
\text{\textbf{Lattice knot move }}L_{3}^{(\ell)}\left(  a,p,3\right)
=L_{3}^{(\ell)}\left(  a,p,%
\raisebox{-0.0406in}{\includegraphics[
height=0.1531in,
width=0.1531in
]%
{icon33.ps}%
}%
\right)  =%
\raisebox{-0.0406in}{\includegraphics[
height=0.1531in,
width=0.1531in
]%
{icon33.ps}%
}%
^{\!(\ell)}\left(  a,p\right)  \text{\textbf{, called a wag.}}%
\end{array}
\]

\end{definition}

\bigskip

\begin{remark}
For each cube $B^{(\ell)}(a)$, there are $12$ wag moves, i.e., $4$ for each of
the $3$ preferred faces.
\end{remark}

\bigskip

\subsection{Historical perspective}

\bigskip

We should mention that the lattice moves tug and wiggle were first used toward
the end of the ninetieth century by Dehn and Heegard. \ For more information,
we refer the reader to \cite{Dehn1} and \cite{Przytycki1}.

\bigskip

\section{The ambient groups $\Lambda_{\ell}$ and $\widetilde{\Lambda}_{\ell}$}

\bigskip

The following proposition is an almost immediate consequence of the
definitions of lattice knot moves given in the previous section.

\bigskip

\begin{proposition}
For each non-negative integer $\ell$, each lattice knot move $L_{m}^{(\ell
)}(a,p,q)$ is a permutation of the set of all lattice knots $\mathbb{K}%
^{(\ell)}$ of order $\ell$. \ In fact, each is a permutation which is the
product of disjoint transpositions.
\end{proposition}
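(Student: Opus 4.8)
The plan is to treat all three moves uniformly, since each is defined by one and the same template: a three-way case split in which $L_m^{(\ell)}(a,p,q)$ replaces a prescribed local configuration $A$ by a second local configuration $B$ whenever $K$ meets the relevant face (for the tug and wiggle) or cube (for the wag) exactly in $A$, replaces $B$ by $A$ whenever $K$ meets it exactly in $B$, and leaves $K$ untouched otherwise. I would first record that the three cases are mutually exclusive: the two configurations $A$ and $B$ are distinct subcomplexes of the $1$-skeleton, so no $K$ can simultaneously satisfy two of the hypotheses. Hence $L_m^{(\ell)}(a,p,q)$ is single-valued, and since it only adds and deletes finitely many edges it is a well-defined self-map of the set $\mathbb{G}^{(\ell)}$ of all lattice graphs of order $\ell$.

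Next I would verify that $L_m^{(\ell)}(a,p,q)$ actually carries $\mathbb{K}^{(\ell)}$ into itself, i.e.\ that the output is again a $2$-valent lattice graph. This is the heart of the matter, and I would establish it by a local degree count at every vertex touched by the move. The two configurations are engineered so that at each vertex on the frontier of the affected region they contribute the same valence (one edge of the region in each picture), while at each interior vertex one configuration contributes valence $2$ and the other valence $0$; the matching hypothesis $K\cap(\text{region})=A$ (resp.\ $=B$) forces those interior vertices to carry no further edges of $K$. Consequently, after the swap every affected vertex again has valence exactly $2$ and every other vertex of $K$ is unchanged. I would carry this out once in full for the tug, where $A$ is the extendable edge and $B$ the other three edges of the square, so that $A\cup B$ is the full boundary $4$-cycle of the face and the two shared endpoints keep face-valence $1$ while the two interior corners switch between $0$ and $2$; I would then indicate the analogous count for the wiggle (the two two-edge paths joining the endpoints of a fixed diagonal of the face) and for the wag (where the strand is pushed across the hinge onto a perpendicular face, so one additionally checks the hinge edge and the perpendicular edges that the hypothesis forces to be absent).

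With well-definedness in hand the involution property is almost immediate. If $K\cap(\text{region})=A$, the first clause produces a knot $K'$ that differs from $K$ only inside the affected region, where it now reads exactly $B$; reapplying the move therefore triggers the second clause and restores $K$. The situation is symmetric when $K$ meets the region in $B$, and fixed knots stay fixed. Thus $\bigl(L_m^{(\ell)}(a,p,q)\bigr)^2=\mathrm{id}$, so each move is a bijection, hence a permutation of $\mathbb{K}^{(\ell)}$. Finally, any involution of a set partitions it into orbits of size $1$ and $2$; the size-$2$ orbits are disjoint transpositions and the size-$1$ orbits are its fixed points, so $L_m^{(\ell)}(a,p,q)$ is precisely the (in general infinite) product of these disjoint transpositions, which is the asserted conclusion.

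I expect the main obstacle to be the degree count of the second paragraph: one must read off from the color-coding conventions exactly which edges each configuration occupies—including, for the contraction direction and for the wag, the perpendicular edges whose absence is encoded by the matching hypothesis—and check that no interior vertex ever acquires valence greater than $2$. The tug is the cleanest base case, while the wag is the most delicate, since it is genuinely three-dimensional and transports a strand out of its starting face.
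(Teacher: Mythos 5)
Your proof is correct, and it rests on the same core observation as the paper's---each move does nothing except exchange two fixed local configurations $A$ and $B$ inside one face (or cube)---but the two arguments are organized differently. The paper is explicit and constructive: writing $G_{L}(q)$, $G_{R}(q)$ for the two configurations of a tug, it forms the common set of remainders $\mathbb{K}_{\ast}^{(\ell)}(q)=\mathbb{K}_{L}^{(\ell)}(q)-G_{L}(q)=\mathbb{K}_{R}^{(\ell)}(q)-G_{R}(q)$ and exhibits the move as the product $\prod_{\alpha\in\mathbb{K}_{\ast}^{(\ell)}(q)}\left(\alpha\cup G_{L}(q),\,\alpha\cup G_{R}(q)\right)$ of visibly disjoint transpositions, with wiggles and wags handled ``similarly.'' You instead show the move is a well-defined involution of $\mathbb{K}^{(\ell)}$ and then quote the general fact that an involution is the product of the disjoint transpositions given by its two-element orbits; your two-element orbits $\left\{\alpha\cup A,\ \alpha\cup B\right\}$ are exactly the paper's transpositions. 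What your route buys is rigor precisely where the paper is thinnest: the asserted equality of remainder sets is nothing other than the claim that swapping $A$ for $B$ preserves $2$-valence, which the paper declares to ``immediately follow,'' whereas your vertex-by-vertex degree count (frontier vertices keep their valence, interior vertices trade valence $0\leftrightarrow 2$, and the matching hypothesis excludes stray edges) actually proves it. What the paper's route buys is the explicit indexing of the transpositions by remainders $\alpha$, which is the form invoked later (for instance in the Hamiltonian construction, where a generator is written as a product of transpositions of basis knots). Both arguments are valid; yours is essentially the paper's proof with the implicit step made honest and the cycle decomposition derived abstractly rather than displayed.
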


\begin{proof}
Let $L_{1}^{(\ell)}(a,p,q)$ be an arbitrary tug move, and let
\begin{align*}
G_{L}  &  :\left\{  0,1,2,3\right\}  \longrightarrow\left\{
\raisebox{-0.2508in}{\includegraphics[
height=0.5829in,
width=0.5829in
]%
{tugl0.ps}%
}%
,%
\raisebox{-0.2508in}{\includegraphics[
height=0.5829in,
width=0.5829in
]%
{tugl1.ps}%
}%
,%
\raisebox{-0.2508in}{\includegraphics[
height=0.5829in,
width=0.5829in
]%
{tugl2.ps}%
}%
,%
\raisebox{-0.2508in}{\includegraphics[
height=0.5838in,
width=0.5838in
]%
{tugl3.ps}%
}%
\right\} \\
&  \text{ \ \ and \ \ }\\
G_{R}  &  :\left\{  0,1,2,3\right\}  \longrightarrow\left\{
\raisebox{-0.2508in}{\includegraphics[
height=0.5829in,
width=0.5829in
]%
{tugr0.ps}%
}%
,%
\raisebox{-0.2508in}{\includegraphics[
height=0.5829in,
width=0.5829in
]%
{tugr1.ps}%
}%
,%
\raisebox{-0.2508in}{\includegraphics[
height=0.5829in,
width=0.5829in
]%
{tugr2.ps}%
}%
,%
\raisebox{-0.2508in}{\includegraphics[
height=0.5838in,
width=0.5838in
]%
{tugr3.ps}%
}%
\right\}
\end{align*}
be the functions, from the set of integers $\left\{  0,1,2,3\right\}  $ into
the above indicated set of symbols, defined by%
\[
G_{L}\left(  q\right)  =\left\{
\begin{array}
[c]{cc}%
\raisebox{-0.2508in}{\includegraphics[
height=0.5829in,
width=0.5829in
]%
{tugl0.ps}%
}%
& \text{if }q=0\\
& \\%
\raisebox{-0.2508in}{\includegraphics[
height=0.5829in,
width=0.5829in
]%
{tugl1.ps}%
}%
& \text{if }q=1\\
& \\%
\raisebox{-0.2508in}{\includegraphics[
height=0.5829in,
width=0.5829in
]%
{tugl2.ps}%
}%
& \text{if }q=2\\
& \\%
\raisebox{-0.2508in}{\includegraphics[
height=0.5838in,
width=0.5838in
]%
{tugl3.ps}%
}%
& \text{if }q=3
\end{array}
\right.  \text{ \ \ and \ \ }G_{R}\left(  q\right)  =\left\{
\begin{array}
[c]{cc}%
\raisebox{-0.2508in}{\includegraphics[
height=0.5829in,
width=0.5829in
]%
{tugr0.ps}%
}%
& \text{if }q=0\\
& \\%
\raisebox{-0.2508in}{\includegraphics[
height=0.5829in,
width=0.5829in
]%
{tugr1.ps}%
}%
& \text{if }q=1\\
& \\%
\raisebox{-0.2508in}{\includegraphics[
height=0.5829in,
width=0.5829in
]%
{tugr2.ps}%
}%
& \text{if }q=2\\
& \\%
\raisebox{-0.2508in}{\includegraphics[
height=0.5838in,
width=0.5838in
]%
{tugr3.ps}%
}%
& \text{if }q=3
\end{array}
\right.
\]

\noindent Then the definition of the tug move, which has been given in a
previous section of this paper, can more succinctly be written as%
\[
L_{1}^{(\ell)}(a,p,q)=G_{L}(q)\underset{F_{p}(a)}{\longleftrightarrow}%
G_{R}(q)\text{ .}%
\]

\vspace{0.3in}

Now let $\mathbb{K}_{L}^{(\ell)}(q)$ and $\mathbb{K}_{R}^{(\ell)}(q)$ be sets
of order $\ell$ lattice knots respectively defined by\bigskip

$\hspace{-0.5in}\mathbb{K}_{L}^{(\ell)}(q)=\left\{  K\in\mathbb{K}^{(\ell
)}:K\cap%
\raisebox{-0.2508in}{\includegraphics[
height=0.6478in,
width=0.6478in
]%
{face.ps}%
}%
=G_{L}(q)\right\}  $ \ \ and \ \ $\mathbb{K}_{R}^{(\ell)}(q)=\left\{
K\in\mathbb{K}^{(\ell)}:K\cap%
\raisebox{-0.2508in}{\includegraphics[
height=0.6478in,
width=0.6478in
]%
{face.ps}%
}%
=G_{R}(q)\right\}  $.\bigskip

\noindent Finally, let $\mathbb{K}_{R}^{(\ell)}(q)$ be the set of order $\ell$
lattice graphs defined by%
\[
\mathbb{K}_{\ast}^{(\ell)}(q)=\mathbb{K}_{L}^{(\ell)}(q)-G_{L}(q)=\mathbb{K}%
_{R}^{(\ell)}(q)-G_{R}(q)\text{ .}%
\]

\noindent Then it immediately follows that $L_{1}^{(\ell)}\left(
a,p,q\right)  $ is the permutation%
\[
L_{1}^{(\ell)}(a,p,q=%
{\displaystyle\prod\limits_{\alpha\in\mathbb{K}_{\ast}^{(\ell)}(q)}}
\left(  \alpha\cup G_{L}(q),\alpha\cup G_{R}(q)\right)  \text{ ,}%
\]
where $\left(  \alpha\cup G_{L}(q),\alpha\cup G_{R}(q)\right)  $ is the
transposition that interchanges the lattice knots $\alpha\cup G_{L}(q)$ and
$\alpha\cup G_{R}(q)$.

\bigskip

For the remaining moves, i.e., for wiggles and wags, the proof is similar.
\end{proof}

\bigskip

Since we have shown that tug, wiggle, and wag are permutations of the set of
lattice knots, we can now give the following definition:

\bigskip

\begin{definition}
For each non-negative integer $\ell$, we define the \textbf{(lattice) ambient
group}
\[
\Lambda_{\ell}%
\]
as the group of all permutations of the set $\mathbb{K}^{(\ell)}$ of lattice
knots of order $\ell$ generated by the lattice knot moves tug, wiggle, and
wag. \ Moreover, we define the \textbf{inextensible (lattice) ambient group}%
\[
\widetilde{\Lambda}_{\ell}%
\]
as the group of all permutations of the set $\mathbb{K}^{(\ell)}$ of lattice
knots of order $\ell$ generated by the lattice knot moves wiggle and wag.
\end{definition}

\bigskip

\begin{theorem}
As abstract groups, all ambient groups $\Lambda_{\ell}$ are isomorphic, i.e.,
$\Lambda_{\ell}\simeq\Lambda_{\ell+1}$, for $\ell\geq0$. \ More specifically,
\[
L_{m}^{(\ell)}\left(  a,p,q\right)  \longmapsto L_{m}^{(\ell)}\left(  \frac
{a}{2},p,q\right)
\]
uniquely defines an isomorphism from $\Lambda_{\ell}$ onto $\Lambda_{\ell+1}$,
for $\ell\geq0$. \ The same is true for all inextensible ambient groups
$\widetilde{\Lambda}_{\ell}$, i.e., $\widetilde{\Lambda}_{\ell}\simeq
\widetilde{\Lambda}_{\ell+1}$, for $\ell\geq0$.
\end{theorem}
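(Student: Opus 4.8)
The plan is to realize both isomorphisms simultaneously by a single geometric rescaling of $3$-space. Consider the similarity $\phi\colon\mathbb{R}^{3}\longrightarrow\mathbb{R}^{3}$ given by $\phi(x)=x/2$. Since $\mathcal{L}_{\ell}=\frac{1}{2^{\ell}}\mathbb{Z}^{3}$, one has $\phi(\mathcal{L}_{\ell})=\frac{1}{2^{\ell+1}}\mathbb{Z}^{3}=\mathcal{L}_{\ell+1}$, so $\phi$ restricts to a bijection of lattices $\mathcal{L}_{\ell}\to\mathcal{L}_{\ell+1}$. First I would verify that $\phi$ is in fact a cellular isomorphism of cell complexes $\mathcal{C}_{\ell}\to\mathcal{C}_{\ell+1}$: it carries $j$-cells to $j$-cells for each $j$, preserves all incidence relations, and, being an orientation-preserving homothety fixing the coordinate directions $e_{1},e_{2},e_{3}$, it respects the entire preferred-frame apparatus. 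Concretely,
\[
\phi\big(B^{(\ell)}(a)\big)=B^{(\ell+1)}(a/2),\quad \phi\big(F_{p}^{(\ell)}(a)\big)=F_{p}^{(\ell+1)}(a/2),\quad \phi\big(E_{p}^{(\ell)}(a)\big)=E_{p}^{(\ell+1)}(a/2),
\]
and, because the counterclockwise orientation used to index edges is preserved, the $q$-th indexed edge of $F_{p}^{(\ell)}(a)$ is carried to the $q$-th indexed edge of $F_{p}^{(\ell+1)}(a/2)$.

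Next, since $\phi$ is a bijection on edges that preserves valence, it induces a bijection $\Phi\colon\mathbb{K}^{(\ell)}\to\mathbb{K}^{(\ell+1)}$ on the sets of lattice knots: a $2$-valent lattice graph of order $\ell$ maps to a $2$-valent lattice graph of order $\ell+1$, and every order-$(\ell+1)$ lattice knot arises uniquely this way. The crux of the argument is the equivariance relation
\[
\Phi\circ L_{m}^{(\ell)}(a,p,q)=L_{m}^{(\ell+1)}(a/2,p,q)\circ\Phi
\]
for every move $m\in\{1,2,3\}$ and every admissible $(a,p,q)$. This I would establish by inspecting the three move definitions case by case: each move is defined entirely in terms of the local subcomplexes of $F_{p}^{(\ell)}(a)$ (for tug and wiggle) or of $\partial B^{(\ell)}(a)$ (for wag), together with the rule that swaps one admissible local configuration for the complementary one. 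Because $\phi$ carries exactly these order-$\ell$ subcomplexes at $a$ onto the corresponding order-$(\ell+1)$ subcomplexes at $a/2$, each defining case for $L_{m}^{(\ell)}(a,p,q)$ transports to the matching case for $L_{m}^{(\ell+1)}(a/2,p,q)$. Equivalently, in the transposition description from the preceding Proposition, $\Phi$ sends each transposition $(\alpha\cup G_{L}(q),\alpha\cup G_{R}(q))$ of the order-$\ell$ move to the corresponding transposition of the order-$(\ell+1)$ move, which yields the relation at once.

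Finally I would package this into the isomorphism. Conjugation by the bijection $\Phi$ is a group isomorphism $c_{\Phi}\colon\operatorname{Sym}(\mathbb{K}^{(\ell)})\to\operatorname{Sym}(\mathbb{K}^{(\ell+1)})$, and the equivariance relation says precisely that $c_{\Phi}\big(L_{m}^{(\ell)}(a,p,q)\big)=L_{m}^{(\ell+1)}(a/2,p,q)$. Because $\phi$ is a bijection $\mathcal{L}_{\ell}\to\mathcal{L}_{\ell+1}$, as $a$ ranges over all preferred vertices of order $\ell$ the point $a/2$ ranges over all preferred vertices of order $\ell+1$; hence $c_{\Phi}$ carries the full generating set of $\Lambda_{\ell}$ bijectively onto the full generating set of $\Lambda_{\ell+1}$, and therefore restricts to an isomorphism $\Lambda_{\ell}\xrightarrow{\sim}\Lambda_{\ell+1}$ realizing the stated assignment. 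Uniqueness is immediate, since the $L_{m}^{(\ell)}(a,p,q)$ generate $\Lambda_{\ell}$, so any homomorphism agreeing with the assignment on generators is determined by it. The inextensible case is identical: restricting attention to the wiggle and wag generators, the same $c_{\Phi}$ maps the generators of $\widetilde{\Lambda}_{\ell}$ bijectively onto those of $\widetilde{\Lambda}_{\ell+1}$, giving $\widetilde{\Lambda}_{\ell}\simeq\widetilde{\Lambda}_{\ell+1}$. I expect the case-by-case verification of the equivariance relation — in particular matching the edge-index $q$ and the wag's cube-boundary subcomplexes under $\phi$ — to be the main obstacle, though it becomes routine once the cellular isomorphism is set up.
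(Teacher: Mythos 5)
Your proof is correct. The paper in fact states this theorem without giving any proof, and your argument — conjugation by the bijection $\Phi:\mathbb{K}^{(\ell)}\to\mathbb{K}^{(\ell+1)}$ induced by the homothety $x\mapsto x/2$, which is a cellular isomorphism $\mathcal{C}_{\ell}\to\mathcal{C}_{\ell+1}$ respecting the preferred-frame and edge-indexing conventions and hence carrying the generating moves of $\Lambda_{\ell}$ bijectively onto those of $\Lambda_{\ell+1}$ — is precisely the argument implicit in the theorem's own statement of the generator correspondence $L_{m}^{(\ell)}(a,p,q)\mapsto L_{m}^{(\ell+1)}(a/2,p,q)$ (note the superscript on the right-hand side must be $\ell+1$, a typo in the paper that your reading silently corrects).
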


\bigskip

\begin{remark}
Thus, as an abstract group, the ambient groups do not "see" the metric
structure of Euclidean 3-space $\mathbb{R}^{3}$. \ However, as we will later
see, the metric structure can be found in the sequences of actions
$\Lambda_{\ell}\times\mathbb{K}^{(\ell)}\longrightarrow\mathbb{K}^{(\ell)}$
and $\widetilde{\Lambda}_{\ell}\times\mathbb{K}^{(\ell)}\longrightarrow
\mathbb{K}^{(\ell)}$.
\end{remark}

\bigskip

At first, one might think that each wag can simply be written as a product of
tugs. \ Surprisingly, the following theorem states that this is only
conditionally true.

\bigskip

\begin{lemma}
Let $a$ be a vertex in the lattice $\mathcal{L}_{\ell}$. \ Then%
\[%
\raisebox{-0.0406in}{\includegraphics[
height=0.1531in,
width=0.1531in
]%
{icon30.ps}%
}%
^{(\ell)}\left(  a,1\right)  \left(  K\right)  =\left(
\raisebox{-0.0406in}{\includegraphics[
height=0.1436in,
width=0.1436in
]%
{icon13.ps}%
}%
^{(\ell)}\left(  a,2\right)
\raisebox{-0.0406in}{\includegraphics[
height=0.1436in,
width=0.1436in
]%
{icon10.ps}%
}%
^{(\ell)}\left(  a,3\right)
\raisebox{-0.0406in}{\includegraphics[
height=0.1436in,
width=0.1436in
]%
{icon13.ps}%
}%
^{(\ell)}\left(  a,2\right)  \right)  \left(  K\right)
\]
if and only if either
\[
\text{either }K\cap\overline{F_{3}^{(\ell)}(a)}\notin\left\{
\raisebox{-0.1505in}{\includegraphics[
height=0.4091in,
width=0.4091in
]%
{tugl0.ps}%
}%
\text{, }%
\raisebox{-0.1505in}{\includegraphics[
height=0.3987in,
width=0.3987in
]%
{tugr0.ps}%
}%
\right\}  \text{ or \ }K\cap\overline{F_{2}^{(\ell)}(a)}\in\left\{
\raisebox{-0.1505in}{\includegraphics[
height=0.4091in,
width=0.4091in
]%
{tugl3.ps}%
}%
\text{, }%
\raisebox{-0.1505in}{\includegraphics[
height=0.3987in,
width=0.3987in
]%
{tugr3.ps}%
}%
\right\}  \text{, or both}%
\]
and
\[%
\raisebox{-0.0406in}{\includegraphics[
height=0.1531in,
width=0.1531in
]%
{icon30.ps}%
}%
^{(\ell)}\left(  a,1\right)  \left(  K\right)  =\left(
\raisebox{-0.0406in}{\includegraphics[
height=0.1436in,
width=0.1436in
]%
{icon10.ps}%
}%
^{(\ell)}\left(  a,3\right)
\raisebox{-0.0406in}{\includegraphics[
height=0.1436in,
width=0.1436in
]%
{icon13.ps}%
}%
^{(\ell)}\left(  a,2\right)
\raisebox{-0.0406in}{\includegraphics[
height=0.1436in,
width=0.1436in
]%
{icon10.ps}%
}%
^{(\ell)}\left(  a,3\right)  \right)  \left(  K\right)
\]
if and only if%
\[
\text{either }K\cap\overline{F_{2}^{(\ell)}(a)}\notin\left\{
\raisebox{-0.1505in}{\includegraphics[
height=0.4091in,
width=0.4091in
]%
{tugl3.ps}%
}%
\text{, }%
\raisebox{-0.1505in}{\includegraphics[
height=0.3987in,
width=0.3987in
]%
{tugr3.ps}%
}%
\right\}  \text{ or \ }K\cap\overline{F_{3}^{(\ell)}(a)}\in\left\{
\raisebox{-0.1505in}{\includegraphics[
height=0.4091in,
width=0.4091in
]%
{tugl0.ps}%
}%
\text{, }%
\raisebox{-0.1505in}{\includegraphics[
height=0.3987in,
width=0.3987in
]%
{tugr0.ps}%
}%
\right\}  \text{, or both}%
\]

Similar statements can be made for the remaining 11 tugs.
\end{lemma}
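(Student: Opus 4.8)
The plan is to reduce the two claimed operator identities to a finite local verification and then to organize that verification around the conjugation structure of the two tug products. First I would record that all of the moves in the statement --- the wag $L_3^{(\ell)}(a,1,0)$ together with the tugs $L_1^{(\ell)}(a,2,3)$ and $L_1^{(\ell)}(a,3,0)$ --- are, by the Proposition of the preceding section, products of disjoint transpositions (in particular involutions), and each is supported on, i.e. modifies only, the edges of $K$ lying in the $1$-skeleton of $\partial B^{(\ell)}(a)$. The decisive geometric observation is that the hinge $E_1^{(\ell)}(a)$ of this wag is exactly the common edge of the two faces $F_2^{(\ell)}(a)$ and $F_3^{(\ell)}(a)$, and every move appearing in the statement touches only edges of these two faces. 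Hence, whether a given equation holds for a particular lattice knot $K$ depends only on the finite local datum $K\cap\bigl(\overline{F_2^{(\ell)}(a)}\cup\overline{F_3^{(\ell)}(a)}\bigr)$, subject to the constraint that $K$ be $2$-valent. The whole statement is therefore equivalent to a check over the finitely many admissible local edge-configurations, the $2$-valence condition cutting the list down to a manageable size.

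Next I would make the group-theoretic skeleton explicit. Writing $s=L_1^{(\ell)}(a,2,3)$ and $t=L_1^{(\ell)}(a,3,0)$, the first asserted product is $s\,t\,s$ and the second is $t\,s\,t$; since $s$ and $t$ are involutions, each product is a genuine conjugate of one tug by the other, the lattice analogue of the symmetric-group computation $(a\,b)(b\,c)(a\,b)=(a\,c)$. The geometric content is that conjugating the tug on one face by the tug on the adjacent face slides the affected length-two path around the hinge $E_1^{(\ell)}(a)$, producing precisely the rotation that defines the wag. I would compute $s\,t\,s$ by tracking a trigger configuration through the three steps (applying the rightmost factor first), using at each step the explicit case-split in that move's definition to record whether it fires a transposition or falls through to its identity (``otherwise'') branch; I would do the same for $t\,s\,t$, and compare the resulting net permutation with the transposition structure of $L_3^{(\ell)}(a,1,0)$.

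The crux, and the reason the result is only \emph{conditionally} true, lies entirely in those identity branches. Unlike honest transpositions in a symmetric group, these tugs do not obey a clean braid relation: a tug performs its swap only when $K$ presents its specific $L$-shaped (or reflected-$L$) trigger on its face, and otherwise fixes $K$. I therefore expect the main obstacle to be the precise determination of which combinations of $K\cap\overline{F_2^{(\ell)}(a)}$ and $K\cap\overline{F_3^{(\ell)}(a)}$ cause the middle conjugating tug either to fail to fire (so that $s\,t\,s$ stalls short of the wag) or to fire spuriously (so that it overshoots). The stated hypotheses for the first identity --- that $K\cap\overline{F_3^{(\ell)}(a)}$ avoid the two $q=0$ trigger states of $t$ \emph{or} that $K\cap\overline{F_2^{(\ell)}(a)}$ lie in the two $q=3$ trigger states of $s$ --- should turn out to be exactly the configurations on which the conjugating tug lands the path where the outer tug can act on it. Proving each ``if and only if'' then splits into showing agreement throughout this good region and, for the ``only if'' direction, exhibiting an explicit witnessing $K$ in the complementary region on which $s\,t\,s$ and the wag genuinely differ.

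Finally, I would dispatch the second equation and the remaining eleven tugs by symmetry rather than by repeating the computation. Interchanging the roles of $F_2^{(\ell)}(a)$ and $F_3^{(\ell)}(a)$, together with the corresponding exchange of the edge indices $q=3$ and $q=0$, carries the analysis of $s\,t\,s$ to that of $t\,s\,t$ and swaps the two stated conditions; this is precisely the asymmetry between the first and second ``if and only if.'' The analogous relabelling of faces, edges, and hinges around the cube $B^{(\ell)}(a)$ reduces the assertions for the other wags and their associated tug products to the case treated here.
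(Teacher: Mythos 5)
The paper states this lemma without any proof: after the lemma come only the corollary, a single example illustrating the failure case, and a remark, so there is no argument of the authors' to compare yours against. Assessed on its own terms, your strategy is correct and would yield a complete proof, and its three ingredients are exactly the right ones: locality (all three moves test and modify only edges of $\overline{F_{2}^{(\ell)}(a)}\cup\overline{F_{3}^{(\ell)}(a)}$, the two faces sharing the hinge $E_{1}^{(\ell)}(a)$, so the truth of either identity for a given $K$ is determined by the finitely many $2$-valence-compatible values of $K\cap\bigl(\overline{F_{2}^{(\ell)}(a)}\cup\overline{F_{3}^{(\ell)}(a)}\bigr)$); the conjugation structure (the fact you should state explicitly, rather than leave implicit, is that \emph{both} tugs $s=L_{1}^{(\ell)}(a,2,3)$ and $t=L_{1}^{(\ell)}(a,3,0)$ have the hinge $E_{1}^{(\ell)}(a)$ itself as their extendable edge, which is precisely why $s$ exchanges the hinge with the three-edge path around $F_{2}^{(\ell)}(a)$, $t$ exchanges it with the three-edge path around $F_{3}^{(\ell)}(a)$, and hence $sts$ reproduces the wag whenever no branch falls through to its identity case); and the finite case analysis over trigger configurations, which is where the two stated disjunctions come from. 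Your $e_{2}\leftrightarrow e_{3}$ reflection symmetry does correctly transport the first identity and its condition onto the second, and the analogous relabellings handle the remaining hinges.

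One point needs care in execution. The lemma is a pointwise biconditional, so the ``only if'' direction requires showing that $sts(K)\neq L_{3}^{(\ell)}(a,1,0)(K)$ for \emph{every} $K$ whose local configuration violates the condition (namely $K\cap\overline{F_{3}^{(\ell)}(a)}$ a trigger of $t$ while $K\cap\overline{F_{2}^{(\ell)}(a)}$ is not a trigger of $s$), not merely for one exhibited knot, which is what your phrase ``exhibiting an explicit witnessing $K$'' literally promises. By your own locality reduction the truth of the identity depends only on the local configuration class, so this amounts to running the failure computation once per bad class (hinge edge present together with extra edges of $F_{2}^{(\ell)}(a)$; three-edge path in $F_{3}^{(\ell)}(a)$ present with $F_{2}^{(\ell)}(a)$ not a trigger) --- finite, but it must actually be carried out class by class rather than by a single example. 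A second, cosmetic slip: the path the wag rotates about the hinge has three edges, not two; in a proof that is ultimately bookkeeping of which local staples and L-shapes are present, these local pictures need to be exactly right.
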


\bigskip

\begin{corollary}
For every wag $L_{3}^{(\ell)}(a,p,q)$ of order $\ell$ and for every lattice
knot $K\in\mathbb{K}^{(\ell)}$, there is a finite sequence of tugs that
transform $K$ into $L_{3}^{(\ell)}(a,p,q)(K)$. \ In this sense, every wag can
be written as a finite product of tugs.
\end{corollary}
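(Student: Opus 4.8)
The plan is to deduce the Corollary directly from the Lemma by checking that the two conditional tug-factorizations it supplies are jointly exhaustive, so that for every lattice knot at least one of them is valid. First I would fix the model wag $L_{3}^{(\ell)}(a,1,0)$ analyzed in the Lemma together with an arbitrary $K\in\mathbb{K}^{(\ell)}$. The Lemma furnishes two candidate products of three tugs, namely $L_{1}^{(\ell)}(a,2,3)\,L_{1}^{(\ell)}(a,3,0)\,L_{1}^{(\ell)}(a,2,3)$, which agrees with the wag on $K$ under a first hypothesis, and $L_{1}^{(\ell)}(a,3,0)\,L_{1}^{(\ell)}(a,2,3)\,L_{1}^{(\ell)}(a,3,0)$, which agrees with it under a second hypothesis.

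The crux is a short logical observation. Let $P$ denote the assertion that the trace $K\cap\overline{F_{3}^{(\ell)}(a)}$ is one of the two distinguished configurations flagged in the Lemma, and let $Q$ denote the assertion that $K\cap\overline{F_{2}^{(\ell)}(a)}$ is one of the other two. Then the first hypothesis is precisely $\neg P\vee Q$ and the second is precisely $\neg Q\vee P$. The first fails only when $P$ holds while $Q$ fails, and the second fails only when $Q$ holds while $P$ fails; since $P\wedge\neg Q$ and $Q\wedge\neg P$ cannot hold together, at least one hypothesis is satisfied for every $K$. Hence at least one of the two displayed products equals $L_{3}^{(\ell)}(a,1,0)(K)$, which is exactly the required finite sequence of tugs. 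The degenerate case in which the wag fixes $K$ needs no separate treatment, because the Lemma's biconditionals are asserted pointwise in $K$: once a hypothesis holds, the corresponding product reproduces the wag on $K$ whether or not that wag actually moves $K$.

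Finally I would upgrade this from the single model wag to all twelve by invoking the symmetric statements promised at the close of the Lemma, so that every $L_{3}^{(\ell)}(a,p,q)$ admits, on each $K$, a three-tug factorization valid there. I expect no genuine difficulty in the core argument: once the predicates $P$ and $Q$ are isolated, the mutual exclusivity of the two failure cases is immediate, and no further geometric case analysis is needed beyond what the Lemma already encodes. The most laborious part will simply be the bookkeeping required to transport the model computation across the remaining faces and hinges via the lattice symmetries, which is routine but notationally heavy.
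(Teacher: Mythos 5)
Your proposal is correct and is essentially the argument the paper intends: the Corollary is stated as an immediate consequence of the Lemma, and the implicit reasoning is exactly your observation that the two hypotheses $\neg P\vee Q$ and $\neg Q\vee P$ cannot fail simultaneously, so one of the two three-tug products always realizes the wag on any given $K$. Your explicit isolation of the predicates $P$ and $Q$, and the extension to the remaining eleven wags by symmetry, matches what the paper leaves to the reader.
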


\bigskip

\begin{example}
The following is an example of a lattice knot $K$ where
\begin{align*}
L_{3}(a,1,%
\raisebox{-0.0406in}{\includegraphics[
height=0.1531in,
width=0.1531in
]%
{icon30.ps}%
}%
)K  &  \neq L_{1}\left(  a,3,%
\raisebox{-0.0406in}{\includegraphics[
height=0.1436in,
width=0.1436in
]%
{icon10.ps}%
}%
\right)  L_{1}\left(  a,2,%
\raisebox{-0.0406in}{\includegraphics[
height=0.1436in,
width=0.1436in
]%
{icon13.ps}%
}%
\right)  L_{1}\left(  a,3,%
\raisebox{-0.0406in}{\includegraphics[
height=0.1436in,
width=0.1436in
]%
{icon10.ps}%
}%
\right)  K\\
& \\
L_{3}(a,1,%
\raisebox{-0.0406in}{\includegraphics[
height=0.1531in,
width=0.1531in
]%
{icon30.ps}%
}%
)K  &  =L_{1}\left(  a,2,%
\raisebox{-0.0406in}{\includegraphics[
height=0.1436in,
width=0.1436in
]%
{icon13.ps}%
}%
\right)  L_{1}\left(  a,3,%
\raisebox{-0.0406in}{\includegraphics[
height=0.1436in,
width=0.1436in
]%
{icon10.ps}%
}%
\right)  L_{1}\left(  a,2,%
\raisebox{-0.0406in}{\includegraphics[
height=0.1436in,
width=0.1436in
]%
{icon13.ps}%
}%
\right)  K
\end{align*}%
\begin{center}
\fbox{\includegraphics[
height=1.1969in,
width=2.4267in
]%
{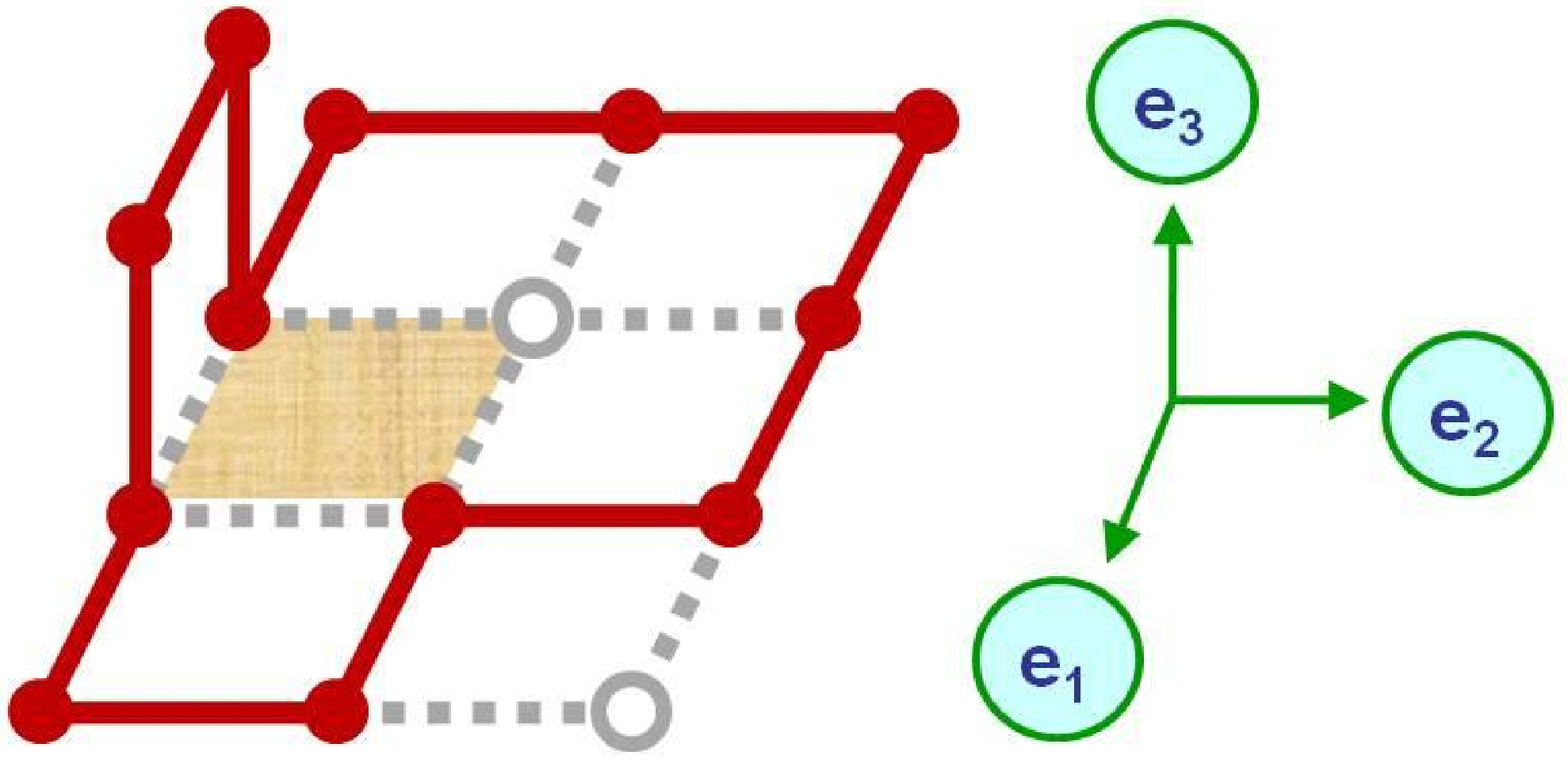}%
}
\end{center}

\end{example}

\bigskip

\begin{remark}
The alert reader may also ask if there is an analogous lemma and corollary for
wiggles. \ Unfortunately, this is not true because, unlike the more general
wiggle move found in Section \ 5 of this paper, the lattice wiggle is a wiggle
confined to a move only in a cubic lattice. \ 
\end{remark}

\bigskip

\section{Conditional auto-homeomorphism representations of $\Lambda_{\ell}$}

\bigskip

\noindent\textbf{Question.} \ \textbf{What is the intuitive meaning of the
ambient group?}

\bigskip

We begin a search for an answer to this question by noting that the moves
wiggle, wag, and tug are \textbf{conditional symbolic moves}, as are the
Reidemeister moves. \ For example, the tug

\bigskip

$\hspace{-1in}%
\raisebox{-0.0406in}{\includegraphics[
height=0.1436in,
width=0.1436in
]%
{icon10.ps}%
}%
^{(\ell)}\left(  a,p\right)  (K)=\left\{
\begin{array}
[c]{ll}%
\left(  K-%
\raisebox{-0.2508in}{\includegraphics[
height=0.6529in,
width=0.6529in
]%
{tug0l.ps}%
}%
\right)  \cup\left(  \mathcal{L}_{\ell}^{1}\cap%
\raisebox{-0.2508in}{\includegraphics[
height=0.6529in,
width=0.6529in
]%
{tug0r.ps}%
}%
\right)  & \text{if \ }K\cap%
\raisebox{-0.2508in}{\includegraphics[
height=0.6478in,
width=0.6478in
]%
{face.ps}%
}%
=\mathcal{L}_{\ell}^{1}\cap%
\raisebox{-0.2508in}{\includegraphics[
height=0.6529in,
width=0.6529in
]%
{tug0l.ps}%
}%
\\
& \\
\left(  K-%
\raisebox{-0.2508in}{\includegraphics[
height=0.6529in,
width=0.6529in
]%
{tug0r.ps}%
}%
\right)  \cup\left(  \mathcal{L}_{\ell}^{1}\cap%
\raisebox{-0.2508in}{\includegraphics[
height=0.6529in,
width=0.6529in
]%
{tug0l.ps}%
}%
\right)  & \text{if \ }K\cap%
\raisebox{-0.2508in}{\includegraphics[
height=0.6478in,
width=0.6478in
]%
{face.ps}%
}%
=\mathcal{L}_{\ell}^{1}\cap%
\raisebox{-0.2508in}{\includegraphics[
height=0.6529in,
width=0.6529in
]%
{tug0r.ps}%
}%
\\
& \\
K & \text{otherwise}%
\end{array}
\right.  $

\bigskip

\noindent is a symbolic move based on a complete set of three mutually
independent conditions. \ 

\bigskip

Each such move is a symbolic representation of a \textbf{conditional authentic
move}, i.e., a conditional orientation preserving (OP) auto-homeomorphism of
$\mathbb{R}^{3}$. \ Moreover, each involved OP auto-homeomorphism
\[
h:\mathbb{R}^{3}\longrightarrow\mathbb{R}^{3}%
\]
is\textbf{ local} if there exists an closed 3-ball $D$ such that $h$ is the
identity homeomorphism $id$ on the complement of $int\left(  D\right)  $,
i.e., such that%
\[
\left.  h\right\vert _{\mathbb{R}^{3}-int(D)}=id:\mathbb{R}^{3}-int\left(
D\right)  \longrightarrow\mathbb{R}^{3}-int\left(  D\right)  \text{ ,}%
\]
where $int\left(  D\right)  $ denotes the interior of the closed 3-ball $D$.

\bigskip

To complete the answer to our question, we will need the following definition.\ 

\bigskip

\begin{definition}
Let $LAH_{OP}\left(  \mathbb{R}^{3}\right)  $ be the \textbf{group of
orientation preserving (OP) local auto-homeomorphisms of 3-space}
$\mathbb{R}^{3}$. \ A \textbf{conditional authentic move} $\Phi$ for a family
$\mathcal{F}$ of knots in 3-space $\mathbb{R}^{3}$is a map%
\[%
\begin{array}
[c]{rcl}%
\Phi:\mathcal{F} & \longrightarrow & AH_{OP}\left(  \mathbb{R}^{3}\right) \\
K & \longmapsto & \left(  \Phi_{K}:\mathbb{R}^{3}\longrightarrow\mathbb{R}%
^{3}\right)
\end{array}
\]
such that
\[
\Phi_{K}\left(  K\right)  \in\mathcal{F}%
\]
for all $K\in\mathcal{F}$.
\end{definition}

\bigskip

\begin{remark}
It readily follows that the elements of the ambient groups $\Lambda_{\ell}$
and $\widetilde{\Lambda}_{\ell}$ are all conditional authentic moves for the
family $\mathbb{K}^{(\ell)}$ of lattice knots of order $\ell$.
\end{remark}

\bigskip

\begin{definition}
Let $LAH_{OP}\left(  \mathbb{R}^{3}\right)  ^{\mathcal{F}}$ denote the
\textbf{space of all conditional OP local auto-homeomorphisms} for a family of
knots $\mathcal{F}$, together with the binary operation%
\[%
\begin{array}
[c]{ccc}%
AH_{OP}\left(  \mathbb{R}^{3}\right)  ^{\mathcal{F}}\times AH_{OP}\left(
\mathbb{R}^{3}\right)  ^{\mathcal{K}} & \longrightarrow & AH_{OP}\left(
\mathbb{R}^{3}\right)  ^{\mathcal{F}}\\
\left(  \Phi^{\prime},\Phi\right)  & \longmapsto & \Phi^{\prime}\cdot\Phi
\end{array}
\]
defined by
\[
\left(  \Phi^{\prime}\cdot\Phi\right)  =\Phi_{\Phi_{K}(K)}^{\prime}\circ
\Phi_{K}\text{ ,}%
\]
where `$\circ$' denotes the composition of functions. \ This is readily seen
to be a well defined binary operation on $LAH_{OP}\left(  \mathbb{R}%
^{3}\right)  ^{\mathcal{F}}$.
\end{definition}

\bigskip

\begin{remark}
We should remind the reader that a knot $K$ is an imbedding $K:\bigsqcup
_{j=1}^{s}S^{1}\longrightarrow\mathbb{R}^{3}$ of a disjoint union of finitely
many circles into 3-space $\mathbb{R}^{3}$. \ Hence, $\Phi_{K}(K)$ is the
imbedding $\Phi_{K}\circ K:\bigsqcup_{j=1}^{s}S^{1}\longrightarrow
\mathbb{R}^{3}$, where '$\circ$' denotes the composition of functions.
\end{remark}

\bigskip

\begin{proposition}
The space $LAH_{OP}\left(  \mathbb{R}^{3}\right)  ^{\mathcal{F}}$ together
with the above binary operation `$\cdot$' is a monoid.
\end{proposition}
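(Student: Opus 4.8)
The plan is to verify the two monoid axioms that are not already on record, namely associativity of $\cdot$ and the existence of a two-sided identity; closure, i.e. that $\Phi'\cdot\Phi$ again lies in $LAH_{OP}\left(\mathbb{R}^{3}\right)^{\mathcal{F}}$, is precisely the well-definedness asserted in the preceding definition (composition of OP local auto-homeomorphisms is again one, since two maps that are the identity off $int(D_{1})$ and off $int(D_{2})$ are jointly the identity off a ball containing $D_{1}\cup D_{2}$, and $(\Phi'\cdot\Phi)_{K}(K)=\Phi'_{\Phi_{K}(K)}(\Phi_{K}(K))\in\mathcal{F}$ because $\Phi_{K}(K)\in\mathcal{F}$ and $\Phi'$ is a conditional authentic move). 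Throughout, the strategy is to evaluate each composite move on an arbitrary knot $K\in\mathcal{F}$ and to track scrupulously the knot at which each factor is evaluated, writing $K'=\Phi_{K}(K)$ and $K''=\Phi'_{K'}(K')$.

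For associativity I would expand both $(\Phi''\cdot\Phi')\cdot\Phi$ and $\Phi''\cdot(\Phi'\cdot\Phi)$ at $K$ directly from the defining rule $(\Psi'\cdot\Psi)_{K}=\Psi'_{\Psi_{K}(K)}\circ\Psi_{K}$. On the left, setting $\Theta=\Phi''\cdot\Phi'$ and using $\Theta_{K'}=\Phi''_{\Phi'_{K'}(K')}\circ\Phi'_{K'}=\Phi''_{K''}\circ\Phi'_{K'}$, one obtains
\[
\bigl((\Phi''\cdot\Phi')\cdot\Phi\bigr)_{K}=\Theta_{\Phi_{K}(K)}\circ\Phi_{K}=\Phi''_{K''}\circ\Phi'_{K'}\circ\Phi_{K}.
\]
On the right, setting $\Xi=\Phi'\cdot\Phi$ so that $\Xi_{K}=\Phi'_{K'}\circ\Phi_{K}$ and $\Xi_{K}(K)=\Phi'_{K'}(K')=K''$, one obtains
\[
\bigl(\Phi''\cdot(\Phi'\cdot\Phi)\bigr)_{K}=\Phi''_{\Xi_{K}(K)}\circ\Xi_{K}=\Phi''_{K''}\circ\Phi'_{K'}\circ\Phi_{K}.
\]
Since ordinary function composition is associative, the two right-hand sides agree as maps $\mathbb{R}^{3}\to\mathbb{R}^{3}$, and as $K$ was arbitrary this yields $(\Phi''\cdot\Phi')\cdot\Phi=\Phi''\cdot(\Phi'\cdot\Phi)$.

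For the identity I would take the constant move $\mathbf{1}$ defined by $\mathbf{1}_{K}=\mathrm{id}_{\mathbb{R}^{3}}$ for every $K\in\mathcal{F}$. This is a conditional authentic move, since $\mathrm{id}$ is an OP local auto-homeomorphism and $\mathbf{1}_{K}(K)=K\in\mathcal{F}$. The two unit laws then follow by direct substitution: $(\mathbf{1}\cdot\Phi)_{K}=\mathbf{1}_{\Phi_{K}(K)}\circ\Phi_{K}=\mathrm{id}\circ\Phi_{K}=\Phi_{K}$, and, using $\mathbf{1}_{K}(K)=K$, $(\Phi\cdot\mathbf{1})_{K}=\Phi_{\mathbf{1}_{K}(K)}\circ\mathbf{1}_{K}=\Phi_{K}\circ\mathrm{id}=\Phi_{K}$, so $\mathbf{1}\cdot\Phi=\Phi=\Phi\cdot\mathbf{1}$.

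I expect the only genuine subtlety to be the bookkeeping of evaluation knots in the associativity step: the three factors $\Phi''$, $\Phi'$, $\Phi$ are evaluated at the \emph{distinct} knots $K''$, $K'$, $K$, and one must confirm that the threaded knots $K'=\Phi_{K}(K)$ and $K''=\Phi'_{K'}(K')$ produced by the two bracketings genuinely coincide. Once these subscripts are matched, associativity is inherited verbatim from the associativity of composition, and nothing beyond the fact that composition and $\mathrm{id}$ preserve the orientation-preserving local property is needed.
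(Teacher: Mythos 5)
Your proof is correct and follows essentially the same route as the paper's: expand both bracketings of the triple product at an arbitrary knot $K$, match the evaluation knots $K'=\Phi_{K}(K)$ and $K''=\Phi'_{K'}(K')$, and let associativity of ordinary composition finish the job, with the constant move $K\mapsto\mathrm{id}_{\mathbb{R}^{3}}$ as the two-sided identity. You are in fact somewhat more complete than the paper, which leaves the unit laws and the closure/well-definedness of $\cdot$ as "readily seen," whereas you verify them explicitly.
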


\begin{proof}
Let $\Phi$, $\Phi^{\prime}$, $\Phi^{\prime\prime}$ be three arbitrary
conditional authentic moves. \ Then%
\[
\left(  \Phi\cdot\left(  \Phi^{\prime}\cdot\Phi^{\prime\prime}\right)
\right)  _{K}=\Phi_{\left(  \Phi^{\prime}\cdot\Phi^{\prime\prime}\right)
_{K}\left(  K\right)  }\circ\left(  \Phi^{\prime}\cdot\Phi^{\prime\prime
}\right)  _{K}=\Phi_{\left(  \Phi_{\Phi_{K}^{\prime\prime}\left(  K\right)
}^{\prime}\circ\Phi_{K}^{\prime\prime}\right)  (K})\circ\Phi_{\Phi_{K}%
^{\prime\prime}\left(  K\right)  }^{\prime}\circ\Phi_{K}^{\prime\prime}%
\]
On the other hand,%
\[
\left(  \left(  \Phi\cdot\Phi^{\prime}\right)  \cdot\Phi^{\prime\prime
}\right)  _{K}=\left(  \Phi\cdot\Phi^{\prime}\right)  _{\Phi_{K}^{\prime
\prime}\left(  K\right)  }\circ\Phi_{K}^{\prime\prime}=\Phi_{\left(
\Phi_{\Phi_{K}^{\prime\prime}\left(  K\right)  }^{\prime}\circ\Phi_{K}%
^{\prime\prime}\right)  (K)}\circ\Phi_{\Phi_{K}^{\prime\prime}\left(
K\right)  }^{\prime}\circ\Phi_{K}^{\prime\prime}%
\]
Hence, `$\cdot$' is associative. \ 

Let $id:\mathbb{R}^{3}\longrightarrow\mathbb{R}^{3}$ be the identity
homeomorphism. \ Then it easily follows that
\[
K\longmapsto id:\mathbb{R}^{3}\longrightarrow\mathbb{R}^{3}%
\]
is a conditional OP auto-homeomorphism which is an identity with respect to
the binary operation `$\cdot$'.
\end{proof}

\bigskip

We will now construct a faithful representation%
\[
\Gamma:\Lambda_{\ell}\longrightarrow LAH_{OP}\left(  \mathbb{R}^{3}\right)
^{\mathbb{K}^{(\ell)}}%
\]
of the ambient group $\Lambda_{\ell}$ onto a subgroup of the monoid $\left(
LAH_{OP}\left(  \mathbb{R}^{3}\right)  ^{\mathbb{K}^{(\ell)}},\cdot\right)  $
by mapping each of the generators of $\Lambda_{\ell}$ onto a conditional OP
local auto-homeomorpism of $\mathbb{R}^{3}$. \ 

\bigskip

To define this representation, we need to construct for each of the generator
$L_{m}^{(\ell)}\left(  a,p,q\right)  $ of $\Lambda_{\ell}$ a conditional OP
local auto-homeomorphism $\Phi_{m,q}^{(\ell)}\left(  a,p\right)
:\mathbb{R}^{3}\longrightarrow\mathbb{R}^{3}$ such that\medskip%
\[
L_{m}^{(\ell)}\left(  a,p,q\right)  K_{1}=L_{m}^{(\ell)}\left(  a,p,q\right)
K_{2}\text{ \ \ if and only if \ \ }\Phi_{m,q}^{(\ell)}\left(  a,p\right)
K_{1}=\Phi_{m,q}^{(\ell)}\left(  a,p\right)  K_{2}\text{ .}%
\]

\bigskip

\bigskip

We now do so for the generators $L_{1}^{(\ell)}\left(  a,1,0\right)  $,
$L_{2}^{(\ell)}\left(  a,1,0\right)  $, $L_{3}^{(\ell)}\left(  a,1,0\right)
$. \ The construction is similar for the remaining generators.

\bigskip

\subsection{Construction for tugs}

\bigskip

For the tug $L_{1}^{(\ell)}\left(  a,1,0\right)  =%
\raisebox{-0.0406in}{\includegraphics[
height=0.1436in,
width=0.1436in
]%
{icon10.ps}%
}%
^{(\ell)}\left(  a,1\right)  $, we construct a conditional OP
auto-homeomorphism $\Phi_{1,0}^{(\ell)}\left(  a,1\right)  :\mathbb{R}%
^{3}\longrightarrow\mathbb{R}^{3}$ as follows:

\bigskip

Let $\epsilon$ be a sufficiently small positive real number, let $c=\left(
a+a^{:23}\right)  /2$ be the center of the face $F_{1}^{(\ell)}(a)$, and let
$D_{\epsilon}$ be the closed 3-cell bounded by the sphere%
\[
\left\vert x-c-\epsilon e_{3}\right\vert ^{2}=\left\vert a-c-\epsilon
e_{3}\right\vert ^{2}%
\]
where $x=\left(  x_{1},x_{2},x_{3}\right)  $. \ Then
\[
\overline{F_{1}^{(\ell)}(a)}\subset D_{\epsilon}\text{ \ \ and \ \ }\partial
F_{1}^{(\ell)}(a)\cap\partial D_{\epsilon}=\left\{  a,a^{:2}\right\}  \text{
,}%
\]
where $\overline{C}$ and $\partial C$ respectively denote the closure and the
boundary of a cell $C$. \ Now let $h:D_{\epsilon}\longrightarrow D_{\epsilon}$
be an OP auto-homeomorpism of the 3-cell $D_{\epsilon}$ such that%
\[
h\left(  \overline{E_{2}^{(\ell)}(a)}\right)  =\overline{E_{3}^{(\ell)}%
(a)}\cup\overline{E_{2}^{(\ell)}(a^{:3})}\cup\overline{E_{3}^{(\ell)}(a^{:2}%
)}\text{ \ \ \ and \ \ }\left.  h\right\vert _{\partial D_{\epsilon}}=\left.
id\right\vert _{\partial D_{\epsilon}}%
\]
where $id$ is the identity auto-homeomorphism of $\mathbb{R}^{3}$.%
\begin{center}
\includegraphics[
height=2.2701in,
width=3.6694in
]%
{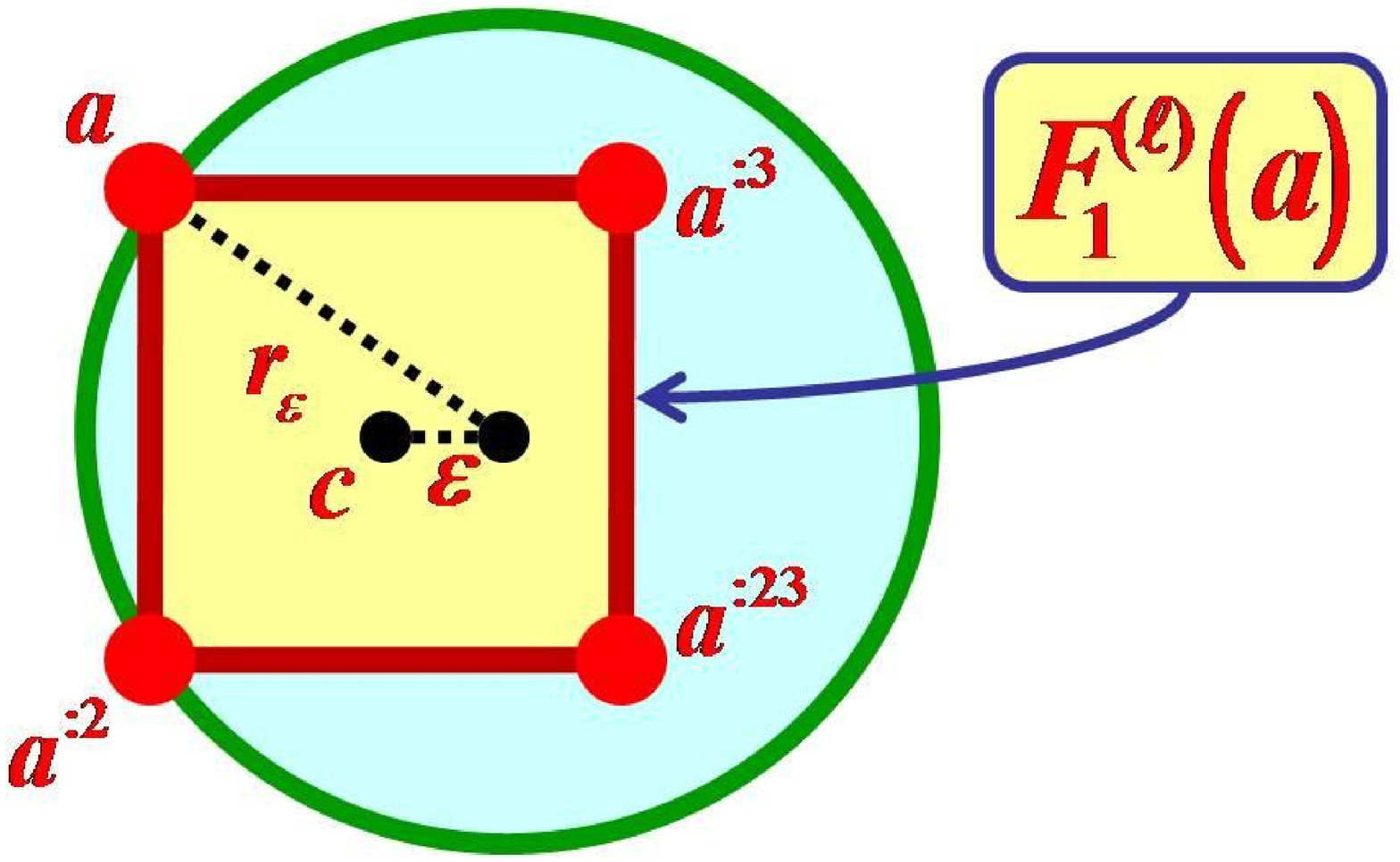}%
\\
\textbf{Cross Section of the 3-cell }$D_{\varepsilon}$\textbf{, where
}$r_{\varepsilon}=\left\vert a-c-\epsilon e_{3}\right\vert $\textbf{.}%
\end{center}

Then we define $\Phi_{1,0}^{(\ell)}\left(  a,1\right)  $ as%

\[
\Phi_{1,0}^{(\ell)}\left(  a,1\right)  _{K}=\left\{
\begin{array}
[c]{ll}%
h & \text{if }K\cap F_{1}^{(\ell)}(a)=\overline{E}_{2}^{(\ell)}\left(
a\right)  \text{ and }x\in D_{\epsilon}\\
& \\
h^{-1} & \text{if }K\cap F_{1}^{(\ell)}(a)=\overline{E_{3}^{(\ell)}(a)}%
\cup\overline{E_{2}^{(\ell)}(a^{:3})}\cup\overline{E_{3}^{(\ell)}(a^{:2}%
)}\text{ and }x\in D_{\epsilon}\\
& \\
id & \text{otherwise}%
\end{array}
\right.
\]

\bigskip

\subsection{Construction for wiggles}

\bigskip

For the wiggle $L_{2}^{(\ell)}\left(  a,1,0\right)  =%
\raisebox{-0.0406in}{\includegraphics[
height=0.1436in,
width=0.1436in
]%
{icon21.ps}%
}%
^{\!(\ell)}\left(  a,1\right)  $, we construct a conditional OP
auto-homeomorphism $\Phi_{2,0}^{(\ell)}\left(  a,1\right)  :\mathbb{R}%
^{3}\longrightarrow\mathbb{R}^{3}$ as follows:

\bigskip

\bigskip

Let $\epsilon$ be a sufficiently small positive real number, and let
$D_{\epsilon}$ be the closed 3-cell bounded by the ellipsoid with axes%
\[
Seg\left(  a^{:2},a^{:3}\right)  \text{, \ \ }Seg\left(  c-\sqrt{2}%
\cdot2^{-\ell-1}e_{1},c+\sqrt{2}\cdot2^{-\ell-1}e_{1}\right)  \text{,
\ }Seg\left(  a-\epsilon(e_{2}+e_{3}),a^{:23}+\epsilon(e_{2}+e_{3})\right)
\]
where $c=\left(  a^{:2}+a^{:3}\right)  /2$ is the center of the face
$F_{1}^{(\ell)}(a)$, and where $Seg\left(  b,b^{\prime}\right)  $ denotes the
line segment connecting points $b$ and $b^{\prime}$. \ Then
\[
\overline{F_{1}^{(\ell)}(a)}\subset D_{\epsilon}\text{ \ \ and \ \ }\partial
F_{1}^{(\ell)}(a)\cap\partial D_{\epsilon}=\left\{  a^{:2},a^{:3}\right\}
\text{ ,}%
\]
where $\overline{C}$ and $\partial C$ respectively denote the closure and the
boundary of a cell $C$. \
\begin{center}
\includegraphics[
height=2.4275in,
width=3.269in
]%
{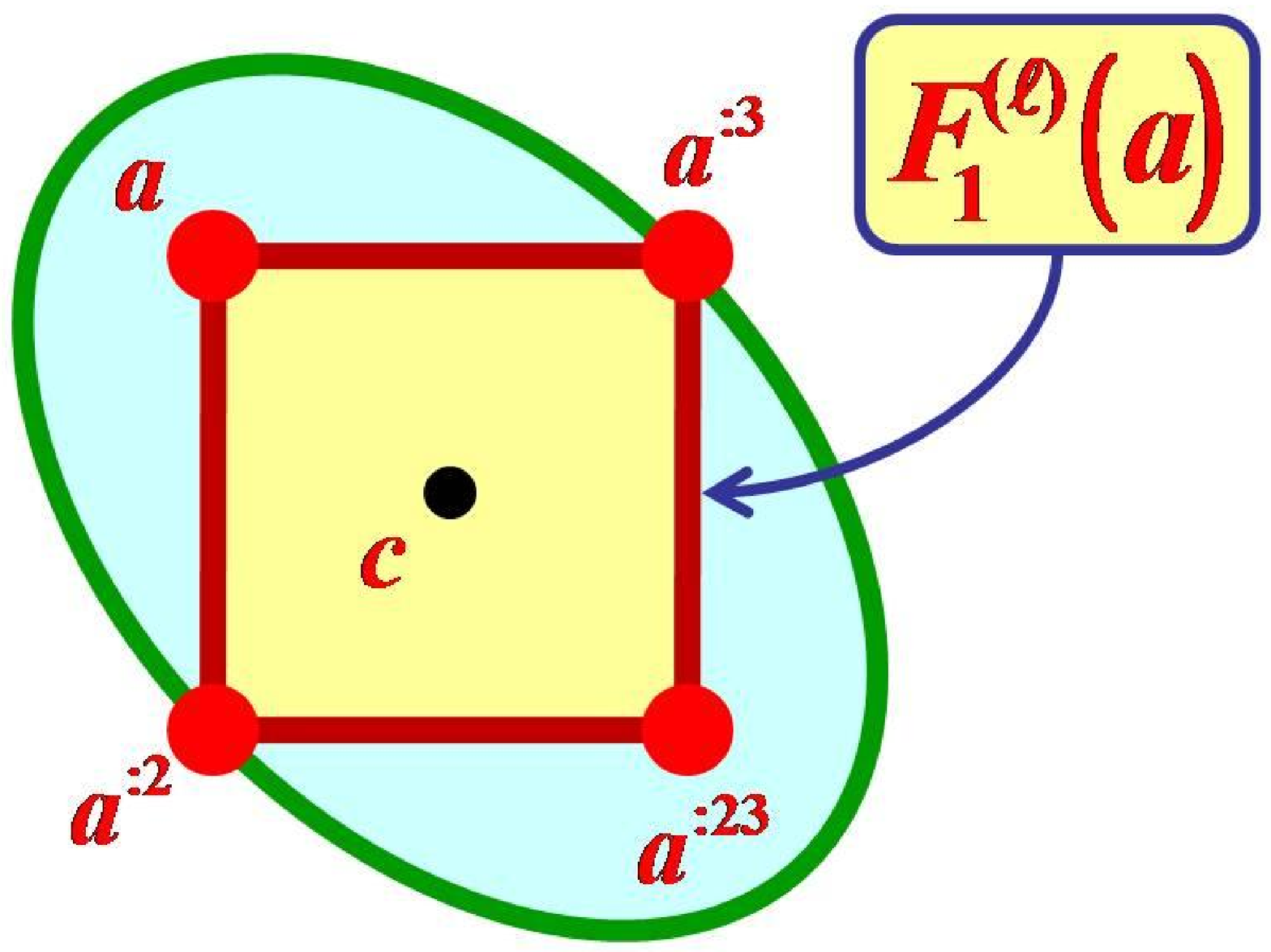}%
\\
\textbf{Cross section of 3-cell }$D_{\varepsilon}$.
\end{center}
Now let $h:D_{\epsilon}\longrightarrow D_{\epsilon}$ be an OP
auto-homeomorpism of the 3-cell $D_{\epsilon}$ such that%
\[
h\left(  \overline{E_{2}^{(\ell)}(a)}\cup\overline{E_{3}^{(\ell)}(a)}\right)
=\overline{E_{2}^{(\ell)}(a^{:3})}\cup\overline{E_{3}^{(\ell)}(a^{:2})}\text{
\ \ \ and \ \ }\left.  h\right\vert _{\partial D_{\epsilon}}=\left.
id\right\vert _{\partial D_{\epsilon}}\text{ \ ,}%
\]
where $id$ is the identity auto-homeomorphism of $\mathbb{R}^{3}$.

\bigskip

Then we define $\Phi_{2,0}^{(\ell)}\left(  a,1\right)  $ as%

\[
\Phi_{2,0}^{(\ell)}\left(  a,1\right)  _{K}=\left\{
\begin{array}
[c]{ll}%
h & \text{if }K\cap F_{1}^{(\ell)}(a)=\overline{E_{2}^{(\ell)}(a)}%
\cup\overline{E_{3}^{(\ell)}(a)}\text{ and }x\in D_{\epsilon}\\
& \\
h^{-1} & \text{if }K\cap F_{1}^{(\ell)}(a)=\overline{E_{2}^{(\ell)}(a^{:3}%
)}\cup\overline{E_{3}^{(\ell)}(a^{:2})}\text{ and }x\in D_{\epsilon}\\
& \\
id & \text{otherwise}%
\end{array}
\right.
\]

\bigskip

\subsection{Construction for wags}

\bigskip

For the wag $L_{3}^{(\ell)}\left(  a,1,0\right)  =%
\raisebox{-0.0406in}{\includegraphics[
height=0.1531in,
width=0.1531in
]%
{icon30.ps}%
}%
^{\!(\ell)}\left(  a,1\right)  $, we construct a conditional OP
auto-homeomorphism $\Phi_{3,0}^{(\ell)}\left(  a,1\right)  :\mathbb{R}%
^{3}\longrightarrow\mathbb{R}^{3}$ as follows:

\bigskip

We wish to construct a closed 3-cell $D_{\epsilon}$ such that%
\[
\overline{F_{2}^{(\ell)}(a)}\cup\overline{F_{3}^{(\ell)}(a)}\subset
D_{\epsilon}\text{, \ }\left(  \partial F_{2}^{(\ell)}(a)\cup\partial
F_{3}^{(\ell)}(a)\right)  \cap\partial D_{\epsilon}=\left\{  a,a^{:1}\right\}
\text{ , \ and \ }\overline{E_{1}^{(\ell)}\left(  a^{:23}\right)  }\cap
D_{\epsilon}=\varnothing\text{ ,}%
\]
where $\overline{C}$ and $\partial C$ respectively denote the closure and the
boundary of a cell $C$. \ We do so as follows:

\bigskip

Let $\epsilon$ be a sufficiently small positive real number, let $c$ be the
center of the cube $B^{(\ell)}(a)$, and let $D_{\epsilon}^{\prime}$ be the
closed 3-cell bounded by the sphere with center%
\[
c+\epsilon\left(  \frac{e_{2}+e_{3}}{\sqrt{2}}\right)
\]
and of radius
\[
r=\left\vert a-c-\epsilon\left(  \frac{e_{2}+e_{3}}{\sqrt{2}}\right)
\right\vert
\]
Let $M$ be the plane passing through the point
\[
c+\frac{e_{2}+e_{3}}{3\sqrt{2}}%
\]
with normal $\left(  e_{2}+e_{3}\right)  /\sqrt{2}$. \ Then $\mathbb{R}^{3}-M$
consists of two disjoint open components $\mathbb{R}_{+}^{3}$ and
$\mathbb{R}_{-}^{3}$, where $\mathbb{R}_{+}^{3}$ is that component into which
the normal $\left(  e_{2}+e_{3}\right)  /\sqrt{2}$ points. \ Let $D_{\epsilon
}$ be the closed 3-cell
\[
D_{\epsilon}=D_{\epsilon}^{\prime}-\mathbb{R}_{-}^{3}\text{ .}%
\]

\bigskip

Now let $h:D_{\epsilon}\longrightarrow D_{\epsilon}$ be an OP
auto-homeomorpism of the 3-cell $D_{\epsilon}$ such that%
\[
h\left(  \overline{E_{3}^{(\ell)}(a)}\cup\overline{E_{1}^{(\ell)}(a^{:3})}%
\cup\overline{E_{3}^{(\ell)}(a^{:1})}\right)  =\overline{E_{2}^{(\ell)}%
(a)}\cup\overline{E_{1}^{(\ell)}(a^{:2})}\cup\overline{E_{2}^{(\ell)}(a^{:1}%
)}\text{ \ \ \ and \ \ }\left.  h\right\vert _{\partial D_{\epsilon}}=\left.
id\right\vert _{\partial D_{\epsilon}}\text{ \ ,}%
\]
where $id$ is the identity auto-homeomorphism of $\mathbb{R}^{3}$.

\bigskip

We can now define $\Phi_{3,0}^{(\ell)}\left(  a,1\right)  $ as%

\[
\Phi_{3,0}^{(\ell)}\left(  a,1\right)  _{K}=\left\{
\begin{array}
[c]{ll}%
h & \text{if }K\cap\left(  \overline{F_{2}^{(\ell)}(a)}\cup\overline
{F_{3}^{(\ell)}(a)}\right)  =\overline{E_{3}^{(\ell)}(a)}\cup\overline
{E_{1}^{(\ell)}(a^{:3})}\cup\overline{E_{3}^{(\ell)}(a^{:1})}\text{ and }x\in
D_{\epsilon}\\
& \\
h^{-1} & \text{if }K\cap\left(  \overline{F_{2}^{(\ell)}(a)}\cup
\overline{F_{3}^{(\ell)}(a)}\right)  =\overline{E_{3}^{(\ell)}(a)}%
\cup\overline{E_{1}^{(\ell)}(a^{:3})}\cup\overline{E_{3}^{(\ell)}(a^{:1}%
)}\text{ and }x\in D_{\epsilon}\\
& \\
id & \text{otherwise}%
\end{array}
\right.
\]

\bigskip

We leave it as an exercise for the reader to verify that the above constructed
elements $\Phi_{m,q}^{(\ell)}\left(  a,p\right)  $ of $LAH_{OP}\left(
\mathbb{R}^{3}\right)  ^{\mathbb{K}^{(\ell)}}$ are invertible in the monoid
$LAH_{OP}\left(  \mathbb{R}^{3}\right)  ^{\mathbb{K}^{(\ell)}}$. \ Hence we have:

\bigskip

\bigskip

\begin{proposition}
A faithful representation
\[
\Gamma:\Lambda_{\ell}\longrightarrow LAH_{OP}\left(  \mathbb{R}^{3}\right)
^{\mathbb{K}^{(\ell)}}%
\]
of the ambient group $\Lambda_{\ell}$ onto a subgroup of the monoid $\left(
LAH_{OP}\left(  \mathbb{R}^{3}\right)  ^{\mathbb{K}^{(\ell)}},\cdot\right)  $
is uniquely determined by
\[
\Gamma:L_{1}^{(\ell)}\left(  a,p,q\right)  \longmapsto\Phi_{m,q}^{(\ell
)}\left(  a,p\right)
\]

\end{proposition}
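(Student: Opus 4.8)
The plan is to exhibit a single structural homomorphism from the monoid down to the self-maps of $\mathbb{K}^{(\ell)}$, observe that it carries each $\Phi_{m,q}^{(\ell)}(a,p)$ to the corresponding generator $L_m^{(\ell)}(a,p,q)$, and then realize $\Gamma$ as the inverse of the restriction of this homomorphism to the subgroup generated by the $\Phi$'s. First I would define the \textbf{evaluation map} $\pi$ from $LAH_{OP}(\mathbb{R}^3)^{\mathbb{K}^{(\ell)}}$ into the monoid of self-maps of $\mathbb{K}^{(\ell)}$ by $\pi(\Phi)(K)=\Phi_K(K)$. The definition of the monoid operation, namely $(\Phi'\cdot\Phi)_K=\Phi'_{\Phi_K(K)}\circ\Phi_K$, is exactly what makes $\pi$ a monoid homomorphism, since $\pi(\Phi'\cdot\Phi)(K)=\Phi'_{\Phi_K(K)}(\Phi_K(K))=\pi(\Phi')\big(\pi(\Phi)(K)\big)$, and $\pi$ sends the identity $K\mapsto id$ to the identity permutation.

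Next I would check, directly from the three constructions, that $\pi\big(\Phi_{m,q}^{(\ell)}(a,p)\big)=L_m^{(\ell)}(a,p,q)$ as permutations of $\mathbb{K}^{(\ell)}$. In each case $h$ was built so that it carries the ``short'' edge-configuration on the relevant face to the ``long'' one (and $h^{-1}$ reverses it), while $\Phi_K=id$ off the two triggering configurations; this is precisely the case-split in the definition of the corresponding lattice move, so the displayed equivalence $L K_1=L K_2\iff\Phi K_1=\Phi K_2$ is subsumed by this stronger equality. Then, using the invertibility of each $\Phi_{m,q}^{(\ell)}(a,p)$ established in the preceding exercise, let $G$ be the subgroup of the monoid they generate. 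Since $\pi$ is a homomorphism sending the generators of $G$ onto the generators of $\Lambda_\ell$, the restriction $\pi|_G:G\to\Lambda_\ell$ is a surjective group homomorphism, and the proposition amounts to showing $\pi|_G$ is an isomorphism and setting $\Gamma=(\pi|_G)^{-1}$. Granting this, faithfulness and uniqueness follow formally: $\pi\circ\Gamma=\mathrm{id}_{\Lambda_\ell}$ exhibits a left inverse of $\Gamma$, so $\Gamma$ is injective, and since $\Lambda_\ell$ is generated by the $L_m^{(\ell)}(a,p,q)$, any homomorphism agreeing with $\Gamma$ on these generators coincides with it.

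The hard part will be the \textbf{injectivity} of $\pi|_G$, equivalently the well-definedness of $\Gamma$: one must show that whenever a word in the generators $L_m^{(\ell)}(a,p,q)$ represents the identity permutation of $\mathbb{K}^{(\ell)}$, the corresponding word in the $\Phi_{m,q}^{(\ell)}(a,p)$ is the identity conditional move $K\mapsto id$. The subtlety is that for a fixed $K$ such a word traces a trajectory $K=K_0\mapsto K_1\mapsto\cdots\mapsto K_n=K$ of lattice knots returning to $K$, yet the associated composite $\Phi^{(n)}_{K_{n-1}}\circ\cdots\circ\Phi^{(1)}_{K_0}$ is a priori only known to fix $K$ setwise, not to be the identity of $\mathbb{R}^3$. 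I would attack this by exploiting two features of the construction: each generator acts by an honest homeomorphism $h$ with a genuine two-sided inverse, so that an inverse move cancels the forward move at the homeomorphism level and not merely on knots; and each generator is \textbf{local}, equal to the identity outside the interior of an explicit closed $3$-ball $D_\epsilon$. Using locality I would factor the composite over the finitely many balls actually triggered along the trajectory and argue that, on each ball, the cancellations forced by the relation in $\Lambda_\ell$ collapse the composite to $id$; where residual loops survive I would pin down the freedom in choosing each $h$ (constrained only on $\partial D_\epsilon$ and on the triggering edges) by a canonical radial choice so that the homeomorphisms composing around any such loop cancel. This relation-lifting step is where essentially all of the work resides; once it is in place the remaining assertions of the proposition follow by the formal argument above.
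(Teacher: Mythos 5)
Your formal reductions are correct, and the ingredients you use are exactly the paper's: the constructed conditional homeomorphisms $\Phi_{m,q}^{(\ell)}(a,p)$, their invertibility in the monoid (each is in fact an involution, since a triggering configuration is carried by $h$ to the opposite triggering configuration, where the move then selects $h^{-1}$), and the compatibility of each $\Phi_{m,q}^{(\ell)}(a,p)$ with the permutation $L_{m}^{(\ell)}(a,p,q)$ under evaluation. Where you diverge from the paper is in what you demand of a proof. The paper's entire argument for this proposition is: carry out the three constructions, leave as an exercise that the resulting elements are invertible in $(LAH_{OP}(\mathbb{R}^{3})^{\mathbb{K}^{(\ell)}},\cdot)$, and then state the proposition with the words ``Hence we have.'' It never addresses the point you correctly isolate as the crux: $\Lambda_{\ell}$ is defined as a group of permutations of $\mathbb{K}^{(\ell)}$, not as a free group on the symbols $L_{m}^{(\ell)}(a,p,q)$, so an assignment of images to the generators determines a homomorphism only if every word in the generators that equals the identity permutation maps to the identity of the monoid. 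Your evaluation homomorphism $\pi(\Phi)(K)=\Phi_{K}(K)$, the verification that $\pi$ respects the operation $(\Phi'\cdot\Phi)_{K}=\Phi'_{\Phi_{K}(K)}\circ\Phi_{K}$, and the reduction of the proposition to injectivity of $\pi|_{G}$ are a clean and correct formalization of this requirement, and they subsume everything the paper actually proves.

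That said, you should be clear that your proposal does not close this step, and that your sketch for it is far from routine. A composite of the homeomorphisms $h^{\pm 1}$ along a trajectory returning to $K$ is only known to carry $K$ to $K$; it is a priori a nontrivial homeomorphism supported in the union of the (generally overlapping) balls $D_{\epsilon}$. Moreover, no presentation of $\Lambda_{\ell}$ is known --- the paper lists determining its structure as open question 1) of its conclusion --- so one cannot verify relations one at a time; any ``canonical radial choice'' of the $h$'s would have to force cancellation for all relations of $\Lambda_{\ell}$ simultaneously, and nothing in your outline (or in the paper) shows this can be done. So your write-up is incomplete at precisely the point you flag. Measured against the paper, however, you have missed nothing it supplies: the gap you leave open is equally open, and unacknowledged, in the paper's own proof, and your proposal has the merit of making it visible.
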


\bigskip

\begin{remark}
Please note that the above construction of the faithful representation
$\Gamma$ is far from unique.
\end{remark}

\bigskip

\section{The refinement injection and the conjectured refinement morphism}

\bigskip

\begin{definition}
We define the refinement injection $%
\raisebox{-0.0303in}{\includegraphics[
height=0.1332in,
width=0.1193in
]%
{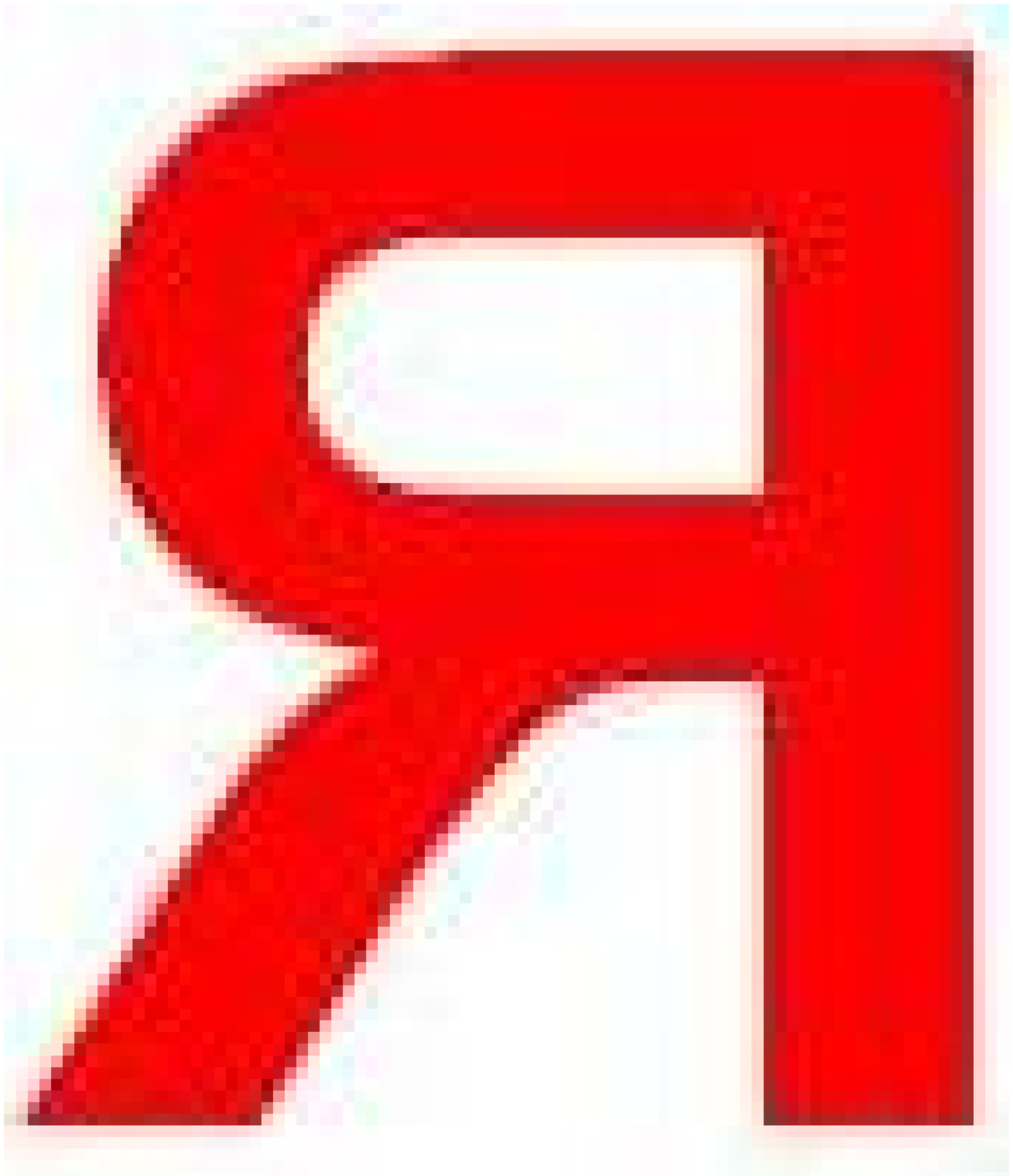}%
}%
:\mathbb{K}^{(\ell)}\longrightarrow\mathbb{K}^{(\ell+1)}$ from the set of
lattice knots $\mathbb{K}^{(\ell)}$ of order $\ell$ to the set $\mathbb{K}%
^{(\ell+1)}$ of lattice knots of order $\ell+1$ as%
\[%
\begin{array}
[c]{rccl}%
\raisebox{-0.0303in}{\includegraphics[
height=0.1332in,
width=0.1193in
]%
{red-refinement.ps}%
}%
: & \mathbb{K}^{(\ell)} & \longrightarrow & \mathbb{K}^{(\ell+1)}\\
& K & \longmapsto &
{\displaystyle\bigcup\limits_{a\in\mathcal{L}_{\ell}}}
\
{\displaystyle\bigcup\limits_{p=1}^{3}}
\
{\displaystyle\bigcup\limits_{E_{p}^{(\ell)}(a)\in K}}
\left\{  \overline{E_{p}^{(\ell+1)}}(a),\overline{E_{p}^{(\ell+1)}}%
(a^{:p})\right\}
\end{array}
\]
where $\overline{E_{p}^{(\ell+1)}}$ denotes the closure of the open edge
$E_{p}^{(\ell+1)}$.%
\[%
\begin{tabular}
[c]{c}%
$%
\begin{array}
[c]{ccc}%
\raisebox{-0.3009in}{\includegraphics[
height=0.7463in,
width=0.7463in
]%
{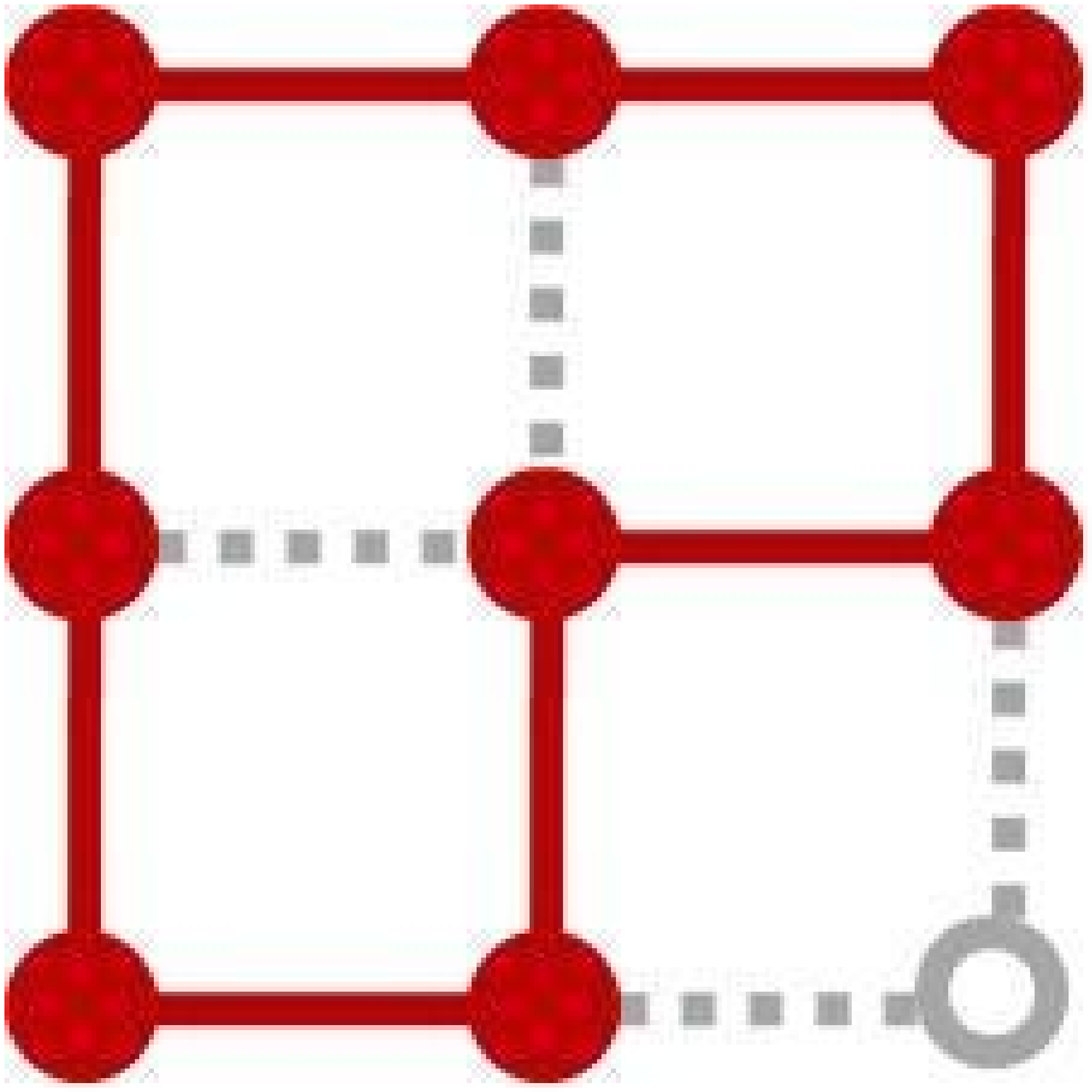}%
}%
&
\begin{array}
[c]{c}%
\\
\overset{%
\raisebox{-0.0303in}{\includegraphics[
height=0.1332in,
width=0.1193in
]%
{red-refinement.ps}%
}%
}{\longrightarrow}\\
\end{array}
&
\raisebox{-0.3009in}{\includegraphics[
height=0.7463in,
width=0.7619in
]%
{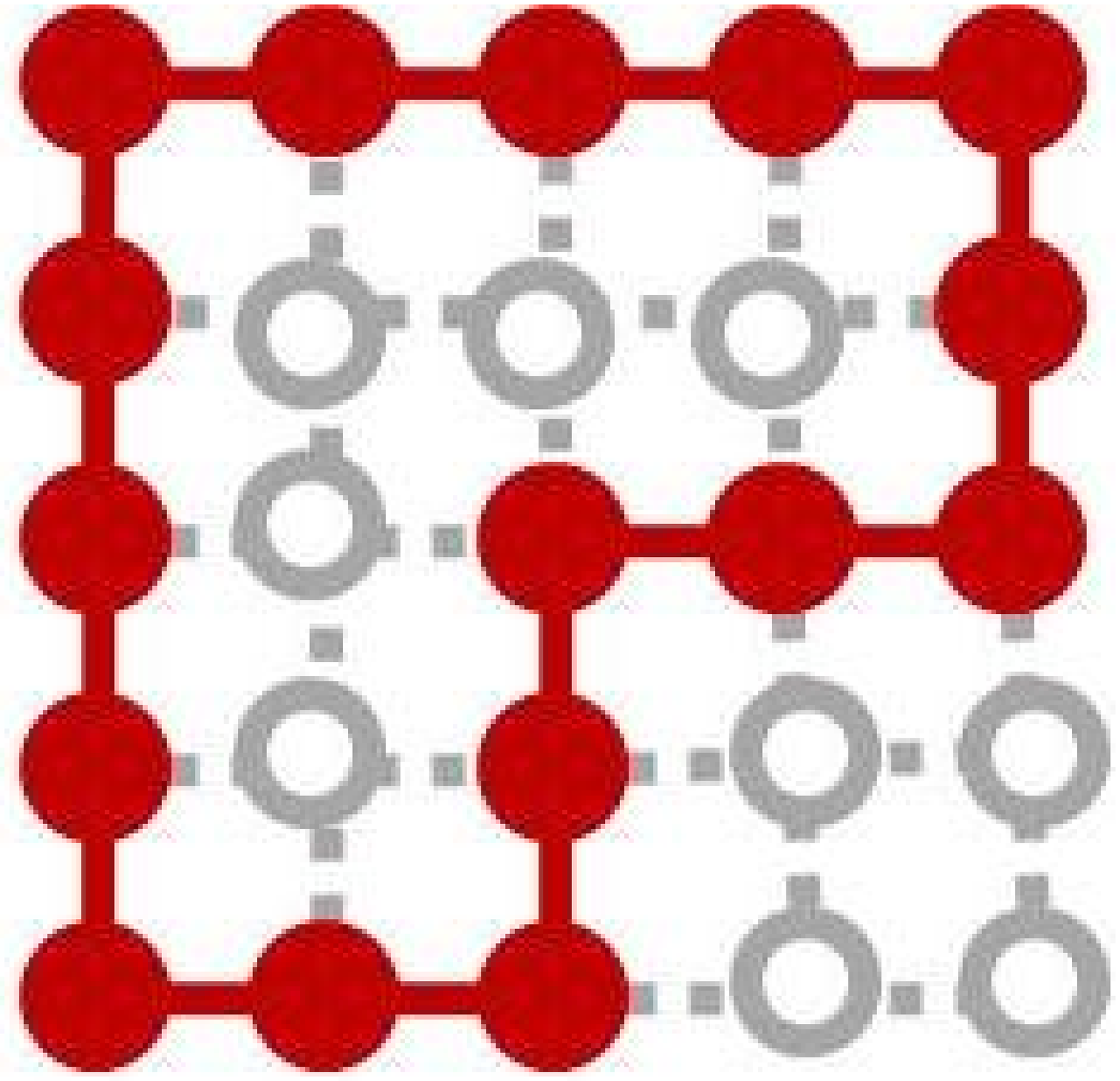}%
}%
\end{array}
$\\
\textbf{An example of the refinement injection }$%
\raisebox{-0.0303in}{\includegraphics[
height=0.1332in,
width=0.1193in
]%
{red-refinement.ps}%
}%
.$%
\end{tabular}
\]

\end{definition}

\bigskip

We would also like to construct a \textbf{refinement map}
\[%
\raisebox{-0.0303in}{\includegraphics[
height=0.1332in,
width=0.1193in
]%
{red-refinement.ps}%
}%
:\Lambda_{\ell}\longrightarrow\Lambda_{\ell+1}%
\]
that

\bigskip

\begin{itemize}
\item[\textbf{(a)}] Is a group monomorphism, and\bigskip

\item[\textbf{(b)}] Preserves the actions of the ambient groups $\Lambda
_{\ell}$ and $\Lambda_{\ell+1}$ on the sets $\mathbb{K}^{(\ell)}$ and
$\mathbb{K}^{(\ell+1)}$, respectively. \ In other words, we would like the
following diagram to be commutative:
\[%
\begin{array}
[c]{ccc}%
\Lambda_{\ell}\times\mathbb{K}^{(\ell)} & \longrightarrow & \mathbb{K}%
^{(\ell)}\\%
\raisebox{-0.0303in}{\includegraphics[
height=0.1332in,
width=0.1193in
]%
{red-refinement.ps}%
}%
\times%
\raisebox{-0.0303in}{\includegraphics[
height=0.1332in,
width=0.1193in
]%
{red-refinement.ps}%
}%
\downarrow\qquad\qquad\quad &  &
\raisebox{-0.0303in}{\includegraphics[
height=0.1332in,
width=0.1193in
]%
{red-refinement.ps}%
}%
\downarrow\qquad\\
\Lambda_{\ell+1}\times\mathbb{K}^{(\ell+1)} & \longrightarrow & \mathbb{K}%
^{(\ell+1)}%
\end{array}
\text{ ,}%
\]

\end{itemize}

\bigskip

Appendix C gives a suggested construction for such a refinement map $%
\raisebox{-0.0303in}{\includegraphics[
height=0.1332in,
width=0.1193in
]%
{red-refinement.ps}%
}%
:\Lambda_{\ell}\longrightarrow\Lambda_{\ell+1}$. \ It also gives a rationale
for the following conjectures:

\bigskip

\noindent\textbf{Conjecture 1A.} \ \textit{The Appendix C construction
produces a well-defined map }$%
\raisebox{-0.0303in}{\includegraphics[
height=0.1332in,
width=0.1193in
]%
{red-refinement.ps}%
}%
:\Lambda_{\ell}\longrightarrow\Lambda_{\ell+1}$\textit{.}\bigskip

\noindent\textbf{Conjecture 1B. \ }\textit{The Appendix C construction
produces an injection }$%
\raisebox{-0.0303in}{\includegraphics[
height=0.1332in,
width=0.1193in
]%
{red-refinement.ps}%
}%
:\Lambda_{\ell}\longrightarrow\Lambda_{\ell+1}$\textit{.}\bigskip

\noindent\textbf{Conjecture 1C. \ }\textit{The Appendix C construction
produces a monomorphism }$%
\raisebox{-0.0303in}{\includegraphics[
height=0.1332in,
width=0.1193in
]%
{red-refinement.ps}%
}%
:\Lambda_{\ell}\longrightarrow\Lambda_{\ell+1}$\textit{ that respects the
action of the ambient groups }$\Lambda_{\ell}$\textit{ and }$\Lambda_{\ell+1}%
$\textit{ on the sets n the sets }$\mathbb{K}^{(\ell)}$\textit{ and
}$\mathbb{K}^{(\ell+1)}$\textit{.}

\bigskip

\section{Lattice knot type}

\bigskip

We are now finally at a point where we can define lattice knot type.

\bigskip

\begin{definition}
Two lattice knots $K_{1}$ and $K_{2}$ of order $\ell$ are said to be of the
\textbf{same knot }$\ell$-\textbf{type}, written%
\[
K_{1}\underset{\ell}{\thicksim}K_{2}\text{ \ ,}%
\]
provided there exists an element $g$ of the lattice ambient group
$\Lambda_{\ell}$ that transforms one into the other. \ \ They are said to be
of the \textbf{same} \textbf{knot type}, written%
\[
K_{1}\thicksim K_{2}\text{ ,}%
\]
provided there exists a non-negative integer $m$ such that
\[%
\raisebox{-0.0303in}{\includegraphics[
height=0.1332in,
width=0.1193in
]%
{red-refinement.ps}%
}%
^{m}K_{1}\underset{\ell+m}{\thicksim}%
\raisebox{-0.0303in}{\includegraphics[
height=0.1332in,
width=0.1193in
]%
{red-refinement.ps}%
}%
^{m}K_{2}%
\]

\end{definition}

\bigskip

In like manner, we can define inextensible knot $\ell$-type and inextensible
knot type as follows:

\bigskip

\begin{definition}
Two lattice knots $K_{1}$ and $K_{2}$ of order $\ell$ are said to be of the
\textbf{same inextensible knot }$\ell$-\textbf{type}, written%
\[
K_{1}\underset{\ell}{\thickapprox}K_{2}\text{ \ ,}%
\]
provided there exists an element $g$ of the inextensible lattice ambient group
$\widetilde{\Lambda}_{\ell}$ that transforms one into the other. \ They are
said to be of the \textbf{same inextensible} \textbf{knot type}, written%
\[
K_{1}\approx K_{2}\text{ ,}%
\]
provided there exists a non-negative integer $m$ such that
\[%
\raisebox{-0.0303in}{\includegraphics[
height=0.1332in,
width=0.1193in
]%
{red-refinement.ps}%
}%
^{m}K_{1}\underset{\ell+m}{\thickapprox}%
\raisebox{-0.0303in}{\includegraphics[
height=0.1332in,
width=0.1193in
]%
{red-refinement.ps}%
}%
^{m}K_{2}%
\]
\ 
\end{definition}

\bigskip

Because the move tug is part of the definition of knot $\ell$-type, but not
part of the definition of inextensible knot $\ell$-type, we have the following proposition:

\bigskip

\section{The preferred vertex (PV) approximation for knots}

\bigskip

\begin{definition}
A knot $x$ in Euclidean 3-space $\mathbb{R}^{3}$ is said to be
\textbf{finitely piecewise smooth (FPS)} provided the knot $x$ consists of
finitely many smooth ($C^{\infty}$) segments $x_{1}$, $x_{2}$, $\ldots$,
$x_{r}$ with no two segments tangent to each other at their
endpoints\footnote{We should also mention that we really only need for the
curve $x$ to be piecewise $C^{3}$.}. \ Such a knot $x$ will be called an
\textbf{FPS knot}. \ Obviously, every lattice knot is an FPS knot.
\end{definition}

\bigskip

In this section, we will create, for each non-negative integer $\ell$, a
procedure for finding an $\ell$-th order lattice graph $PV^{(\ell)}\left(
x\right)  $, called the $\ell$\textbf{-th preferred vertex approximation},
that approximates an arbitrary FPS knot $x$ in $\mathbb{R}^{3}$.

\bigskip

We begin by noting that, for each non-negative integer $\ell$, the set%
\[
\left\{  B^{(\ell)}(a_{\beta})\cup\overset{}{\left(
{\displaystyle\bigcup_{p=1}^{3}}
F_{p}^{(\ell)}(a_{\beta})\right)  }\cup\left(
{\displaystyle\bigcup_{p=1}^{3}}
E_{p}^{(\ell)}(a_{\beta})\right)  \cup\left\{  a\right\}  =\overline
{B^{(\ell)}(a_{\beta})}-%
{\displaystyle\bigcup_{p=1}^{3}}
\overline{F_{p}^{(\ell)}(a_{\beta}^{:p})}\quad:\quad a\in\mathcal{L}_{\ell
}\right\}
\]
of half closed cubes is a partition of 3-space $\mathbb{R}^{3}$. (We have used
$\overline{C}$ to denote the closure of a cell $C$.) \ So we can create a map
of space $\mathbb{R}^{3}$ into the lattice $\mathcal{L}_{\ell}$ simply by
mapping each point of a half closed cube to the preferred vertex of that half
closed cube\footnote{We remind the reader that cells $E_{p}^{(\ell)}(a_{\beta
})$, $F_{p}^{(\ell)}(a_{\beta})$, $B^{(\ell)}(a_{\beta})$ are open cells,
unless stated otherwise.}:%
\begin{center}
\fbox{\includegraphics[
height=1.5074in,
width=3.7905in
]%
{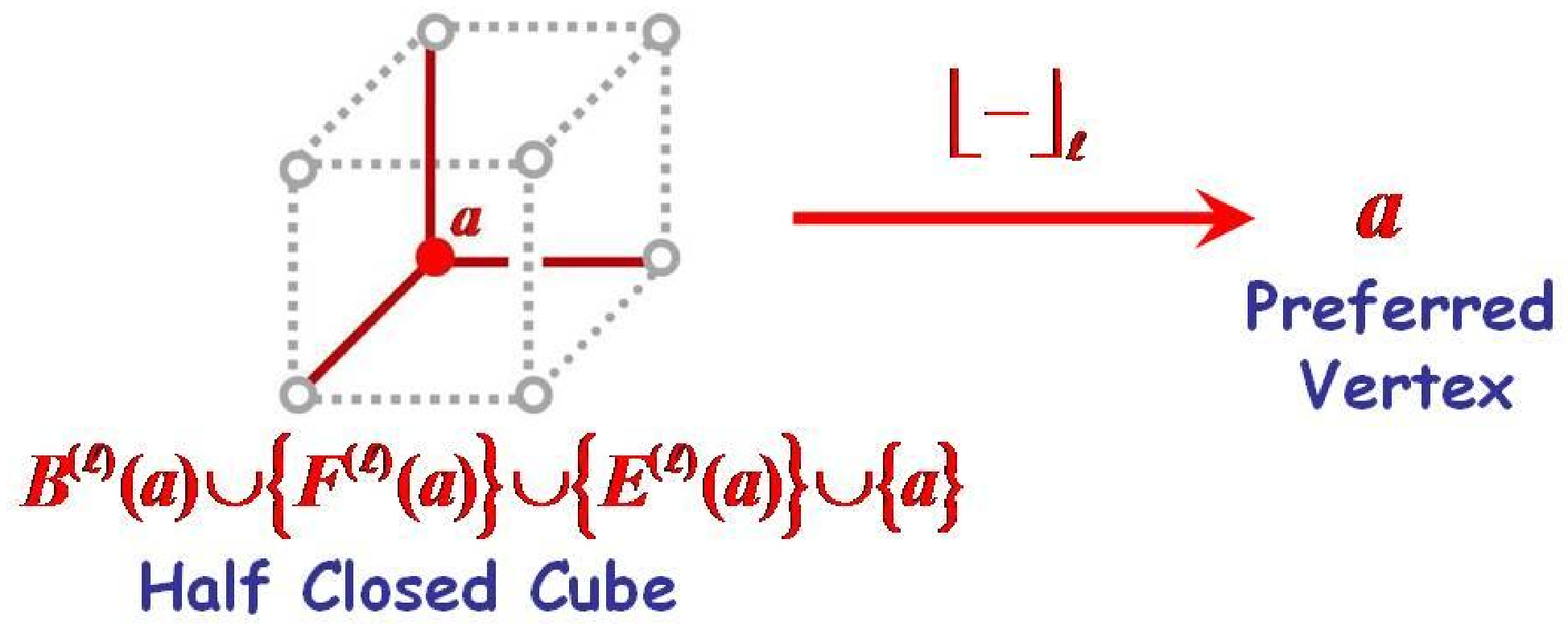}%
}\\
\textbf{The preferred vertex map which maps each half closed cube to its
preferred vertex.}%
\end{center}

\bigskip

\bigskip

\begin{definition}
We define the $\mathbf{\ell}$\textbf{-th preferred vertex map} $\left\lfloor
-\right\rfloor _{\ell}:\mathbb{R}^{3}\longrightarrow\mathcal{L}_{\ell}$ from
Euclidean 3-space $\mathbb{R}^{3}$ to the lattice $\mathcal{L}_{\ell}$ as%
\[%
\begin{array}
[c]{rccc}%
\left\lfloor -\right\rfloor _{\ell}: & \mathbb{R}^{3} & \longrightarrow &
\mathcal{L}_{\ell}\\
& x=\left(  x_{1},x_{2},x_{3}\right)  & \longmapsto & \left\lfloor
x\right\rfloor _{\ell}=2^{-\ell}\left\lfloor 2^{\ell}x\right\rfloor =\left(
2^{-\ell}\underset{}{\overset{}{\left\lfloor 2^{\ell}x_{1}\right\rfloor }%
},2^{-\ell}\left\lfloor 2^{\ell}x_{2}\right\rfloor ,2^{-\ell}\left\lfloor
2^{\ell}x_{3}\right\rfloor \right)
\end{array}
\]
where $\left\lfloor -\right\rfloor $ denotes the floor function.
\end{definition}

\bigskip

\begin{remark}
It immediately follows that the inverse image under $\left\lfloor
-\right\rfloor _{\ell}$ of each vertex $a$ in the lattice $\mathcal{L}_{\ell}$
is the half closed cube $\overline{B^{(\ell)}}(a)-%
{\displaystyle\bigcup_{p=1}^{3}}
\overline{F_{p}^{(\ell)}}(a^{:p})$.
\end{remark}

\bigskip

\begin{remark}
Let $r$ be a real number. $\ $Then $\left\lfloor r\right\rfloor _{\ell}$ is
the rational number resulting from deleting, in the binary expansion of $r$,
all bits to the right of the $\ell$-th bit after the decimal point. \ For
example, $\left\lfloor \ \left(  1101.10110100\ldots\right)  _{2}%
\ \right\rfloor _{5}=\left(  1101.10110\right)  _{2}$, where $\left(
-\right)  _{2}$ denotes the binary expansion of a real number.
\end{remark}

\bigskip

Let $\ell$ be an arbitrarily chosen non-negative integer. Let $x$ be an
arbitrary FPS knot in $\mathbb{R}^{3}$, and let $x_{1}$, $x_{2}$, $\ldots$,
$x_{r}$ denote the simple closed FPS curves which are the components of $x$.
\ (Thus each component $x_{\beta}$ consists of finitely many piecewise smooth segments.)

\bigskip

For each component $x_{\beta}$, choose any one of the two possible
orientations. \ Also select an arbitrary point $P_{0,\beta}$ of $x_{\beta}$.
\ Starting at the point $P_{0,\beta}$, traverse the curve $x_{\beta}$ in the
direction of its orientation all the way around from $P_{0,\beta}$ back to
$P_{0,\beta}$. \ As the simple closed curve is traversed, the preferred vertex
map $\left\lfloor -\right\rfloor _{\ell}$ will produce a finite sequence of
lattice points%
\[
a_{\beta}^{(0)},a_{\beta}^{(1)},a_{\beta}^{(2)},\ldots,a_{\beta}^{(u_{\beta}%
)}=a_{\beta}^{(0)}%
\]

\bigskip

For each pair of two consecutive vertices $\left(  a_{\beta}^{(j)},a_{\beta
}^{(j+1)}\right)  $ of this sequence, select a shortest path from vertex
$a_{\beta}^{(j)}$ to vertex $a_{\beta}^{(j+1)}$ that lies in the 1-skeleton
$\mathcal{C}_{\ell}^{1}$ of the cell complex $\mathcal{C}_{\ell}$, and let
$S_{j,\beta}$ denote the set of edges and vertices in the selected shortest
path. \ An $\ell$\textbf{-th order preferred vertex approximation of }the knot
component $x_{\beta}$, written $PV^{(\ell)}(x_{\beta})$, is defined as the
lattice graph
\[
PV^{(\ell)}(x_{\beta})=%
{\displaystyle\bigcup_{j=0}^{u_{\beta}-1}}
S_{j,\beta}%
\]

\bigskip

Finally, an $\ell$\textbf{-th order preferred vertex approximation of }the
link $x$, written $PV^{(\ell)}(x)$, is defined as the lattice graph
\[
PV^{(\ell)}(x)=%
{\displaystyle\bigcup_{\beta=0}^{r}}
PV^{(\ell)}(x_{\beta})
\]

\bigskip

The following lemma is a consequence of the observation that, since the knot
$x$ is piecewise smooth and compact, its curvature and torsion are bounded.

\bigskip

\begin{lemma}
Let $x$ be a FPS knot in Euclidean 3-space $\mathbb{R}^{3}$. \ Then there
exists a non-negative integer $\ell_{0}^{\prime}=\ell_{0}^{\prime}(x)$ such
that, for $\ell\geq\ell_{0}^{\prime}$, every $\ell$-th order preferred vertex
approximation $PV^{(\ell)}(x)$ of $x$ is a lattice knot of order $\ell$. \ In
other words, for sufficiently large $\ell$, $PV^{(\ell)}(x)$ of $x$ is an
$\ell$-th order lattice knot.
\end{lemma}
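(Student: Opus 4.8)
The plan is to show that, for $\ell$ large, each component $PV^{(\ell)}(x_\beta)$ is a simple closed lattice polygon---a connected, $2$-valent cycle in the $1$-skeleton $\mathcal{C}_\ell^1$---and that the polygons for distinct $\beta$ are pairwise disjoint; since a disjoint union of $2$-valent cycles is exactly a $2$-valent lattice graph, this yields $PV^{(\ell)}(x)\in\mathbb{K}^{(\ell)}$. Two facts are extracted from the hypotheses that $x$ is FPS and compact. First, \emph{embeddedness with positive reach}: there is $\rho>0$ so that any two points of $x$ at Euclidean distance $<\rho$ lie on a common smooth piece and are close in arc length; in particular distinct strands, and distinct components $x_\beta$, stay a definite distance apart, and at each of the finitely many corners the two incident smooth segments meet at an angle bounded away from $0$ and $\pi$. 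Second, a \emph{uniform curvature bound} $\kappa\le\kappa_0$ on the smooth pieces, so the unit tangent $T$ rotates slowly. I then choose $\ell_0'$ so that for $\ell\ge\ell_0'$ the mesh $2^{-\ell}$ is small compared with $\rho$, with $1/\kappa_0$, and with the lengths of the smooth segments of $x$.

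Next I would carry out the local analysis on a single smooth piece. Writing $\epsilon=2^{-\ell}$, as the arc is traversed the map $\lfloor-\rfloor_\ell$ changes value precisely when $x$ crosses a lattice hyperplane $x_i\in\epsilon\mathbb{Z}$; at a generic (transversal, single-hyperplane) crossing the floor moves by one unit in one coordinate, so consecutive vertices $a_\beta^{(j)},a_\beta^{(j+1)}$ are lattice-adjacent and the chosen shortest path $S_{j,\beta}$ is a single edge. Thus $PV^{(\ell)}(x_\beta)$ is literally the closed lattice polygon through the $a_\beta^{(j)}$, and $2$-valence reduces to two statements: the polygon never immediately backtracks (which would leave a degree-$1$ spur), and it never revisits a vertex. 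For the no-backtrack statement I use that $|T|=1$ forces some coordinate $j$ with $|\dot{x}_j|\ge 1/\sqrt3$; by the curvature bound this $\dot{x}_j$ keeps its sign over a parameter window of length $\gg\epsilon$, so $x_j$ is strictly monotone there and its floor never returns. Near a \emph{grazing} point, where $\dot{x}_i\approx 0$ for some $i$, the coordinate $x_i$ may step up and then back down, but because a transverse coordinate is simultaneously advancing, the up-step and the down-step occur at different values of that coordinate's floor: the result is a benign simple ``bump'' rather than a spur, and the path stays injective because the dominant floor is strictly monotone across the window.

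I would then assemble the pieces globally and treat the corners. The positive reach $\rho$ guarantees that two parameter values lying on different strands map under $\lfloor-\rfloor_\ell$ to vertices separated by many multiples of $\epsilon$, so their local staircases cannot meet; in particular the $PV^{(\ell)}(x_\beta)$ are disjoint and no extraneous degree-$\ge 3$ vertex is created by two far-apart strands. At each corner the two incident segments meet at a definite angle, so for $\ell\ge\ell_0'$ the two simple staircases join there as a single lattice turn, again without raising any degree above $2$. Combining the local simplicity of each smooth piece, the benign behaviour at grazing points, the corner analysis, and the global separation, every vertex of $PV^{(\ell)}(x)$ has degree exactly $2$, so $PV^{(\ell)}(x)$ is a lattice knot of order $\ell$.

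\textbf{The main obstacle} is precisely the grazing analysis: at a point where $T$ is nearly perpendicular to a coordinate axis, the fine placement of the local extremum of $x_i$ relative to the lattice hyperplanes decides whether the $i$-floor steps back, and one must rule out that this ever degenerates into a true backtrack (spur) or a self-contact. This is exactly where the curvature bound does the real work: it simultaneously guarantees that the dominant coordinate stays monotone long enough and that the transverse excursion within a single cube-layer is $O(\kappa_0\epsilon^2)\ll\epsilon$, so at most one transverse hyperplane is grazed and the excursion is a single simple bump. A secondary, purely technical nuisance is the finitely many non-generic parameters at which $x$ meets the $1$- or $0$-skeleton (crossing two or three hyperplanes at once); there the consecutive floor-vertices are diagonal neighbours and the shortest path is not unique, but these occur at isolated parameters and are absorbed, like the corners, into a finite exceptional set whose local contribution is again a simple turn for $\ell$ large.
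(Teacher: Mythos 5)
Your overall strategy (each component becomes a connected $2$-valent lattice cycle; positive reach keeps the staircases of distant strands disjoint) is sensible, and it goes far beyond the paper, which offers no proof at all beyond the one-sentence remark that the lemma is a consequence of the boundedness of curvature and torsion. Your transversal-crossing analysis, the monotone-dominant-coordinate argument, and the separation-by-reach argument are fine. The gap is exactly at the step you yourself flag as the main obstacle, and the resolution you propose for it is false. You claim that when the floor of a grazing coordinate $x_i$ steps up and back down, the simultaneously advancing dominant coordinate must cross one of its own hyperplanes in between, so the excursion is a bump rather than a backtrack. But for a local extremum of $x_i$ with value $h$ and second derivative $a\le\kappa_0$, the parameter length of the excursion past the grazed hyperplane is about $2\sqrt{2\delta/a}$, where $\delta$ is the overshoot; hence the dominant coordinate advances only $O(\sqrt{\delta/a})$ during it. The overshoot $\delta=2^{-\ell}\{2^{\ell}h\}$ (fractional part notation) is an \emph{arithmetic} quantity that curvature, reach, and segment lengths do not control: whenever $\{2^{\ell}h\}$ is small compared with $2^{-\ell}$ and the transverse coordinates sit generically in their cells, the excursion meets no transverse hyperplane and the vertex sequence reads $A\to A+2^{-\ell}e_i\to A$, producing a degree-$1$ spur and a degree-$3$ vertex. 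For an extremum height $h$ whose binary expansion has sufficiently long zero blocks (e.g.\ $h=\sum_k 2^{-3^k}$) this happens for infinitely many $\ell$, and for dyadic $h$ the tangency persists for every large $\ell$; so no threshold $\ell_0'(x)$ of the kind you construct can exist, and "take $\ell$ larger" does not repair it. Your $O(\kappa_0\epsilon^2)$ bound on the transverse excursion shows only that at most one hyperplane is grazed; it does not decide whether the grazed floor steps back before or after the dominant floor advances, which is the whole issue.

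The paper's own example already exhibits the failure concretely. For the unit circle $x_1^2+x_2^2=1$, $x_3=0$, the point $(1,0,0)$ is a tangential touch of the hyperplane $x_1=1$, which is a lattice hyperplane of $\mathcal{L}_\ell$ for \emph{every} $\ell$. The floor sequence through it is $(1-2^{-\ell},-2^{-\ell},0)\to(1,0,0)\to(1-2^{-\ell},0,0)$; the first pair is diagonal, so the shortest path is not unique, and the choice routed through $(1-2^{-\ell},0,0)$ gives that vertex degree $3$ and leaves $(1,0,0)$ with degree $1$ — for every $\ell$, contradicting the "every $PV^{(\ell)}(x)$" form of the statement. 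Closing the gap therefore requires an ingredient genuinely absent from your argument (and from the paper): a convention discarding floor values attained only at isolated parameter values, together with a quantitative genericity hypothesis keeping coordinate extrema and corners away from the dyadic grid (or, alternatively, an arbitrarily small rigid motion of $x$, or a weakening of the quantifier to infinitely many $\ell$). Without one of these, the grazing case is not a technicality to be absorbed into a finite exceptional set; it is a genuine failure mode of the construction.
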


\bigskip

\begin{corollary}
Let $x$ be a FPS knot in Euclidean 3-space $\mathbb{R}^{3}$, and let
$\epsilon$ be an arbitrary positive real number. \ Then there exists a
non-negative integer $\ell_{0}=\ell_{0}(x,\epsilon)$ such that, for $\ell
\geq\ell_{0}$, every $\ell$-th order preferred vertex approximation
$PV^{(\ell)}(x)$ of $x$ is such that

\begin{itemize}
\item[\textbf{1)}] $PV^{(\ell)}(x)$ is a lattice knot of order $\ell$,

\item[\textbf{2)}] $PV^{(\ell)}(x)$ lies inside the open tubular neighborhood
of $x$ of radius $\epsilon$, and

\item[\textbf{3)}] $PV^{(\ell)}(x)$ is of the same knot type as the knot $x$.
\end{itemize}
\end{corollary}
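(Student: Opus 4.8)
The plan is to take $\ell_{0}\geq\ell_{0}'(x)$ from the preceding lemma, so that conclusion (1) holds automatically for all $\ell\geq\ell_{0}$, and then to enlarge $\ell_{0}$ twice more to force (2) and (3). The whole argument rests on the fact that the cubes of $\mathcal{L}_{\ell}$ have diameter $\sqrt{3}\,2^{-\ell}\to 0$, so that $PV^{(\ell)}(x)$ can be forced into an arbitrarily thin tube about $x$, together with the geometric input flagged before the lemma: since each component $x_{\beta}$ is compact, embedded, piecewise smooth, and has its smooth pieces meeting non-tangentially, there is a radius $\rho>0$ for which the closed $\rho$-neighborhoods $V_{\beta}$ of the components are pairwise disjoint embedded solid tori with cores $x_{\beta}$, together with a continuous retraction $r\colon\bigsqcup_{\beta}V_{\beta}\to x$ onto these cores (given by nearest-point projection away from the finitely many corners).

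For (2), I would first bound the positional error of the construction. By definition of $\lfloor-\rfloor_{\ell}$, each sampled vertex $a_{\beta}^{(j)}$ is the preferred vertex of the half-closed cube containing the corresponding point of $x_{\beta}$, hence lies within $\sqrt{3}\,2^{-\ell}$ of $x_{\beta}$. Since consecutive distinct samples $a_{\beta}^{(j)},a_{\beta}^{(j+1)}$ are the preferred vertices of two cubes meeting in at least a vertex, they differ by at most $2^{-\ell}$ in each coordinate, so every point of a shortest lattice path $S_{j,\beta}$ joining them lies in their common bounding box and is therefore within a fixed multiple $C\,2^{-\ell}$ of $x$, where $C$ is an absolute constant. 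Thus $PV^{(\ell)}(x)$ is contained in the open $C\,2^{-\ell}$-tube about $x$, and choosing $\ell_{0}$ so large that $C\,2^{-\ell_{0}}<\min\{\epsilon,\rho\}$ simultaneously gives (2) and places $PV^{(\ell)}(x)$ inside $\bigsqcup_{\beta}\operatorname{int}V_{\beta}$.

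For (3), I would compare $PV^{(\ell)}(x_{\beta})$ to the core $x_{\beta}$ of $V_{\beta}$ through the retraction $r$. The goal is to show that, after possibly enlarging $\ell_{0}$ once more, $r$ carries $PV^{(\ell)}(x_{\beta})$ homeomorphically and monotonically onto $x_{\beta}$. The projected sample points $r(a_{\beta}^{(j)})$ advance monotonically around $x_{\beta}$ because the sampling spacing $O(2^{-\ell})$ is eventually smaller than the scale on which the bounded curvature and torsion of $x_{\beta}$ permit any backtracking, and each connecting path $S_{j,\beta}$, being of diameter $O(2^{-\ell})$ and lying inside the tube, projects under $r$ to a short forward arc of $x_{\beta}$ that cannot wrap around. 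Hence $PV^{(\ell)}(x_{\beta})$ is a simple closed curve in the solid torus $V_{\beta}$ meeting each fiber $r^{-1}(\text{pt})$ in a single point and representing a generator of $\pi_{1}(V_{\beta})$; such a curve is ambient isotopic in $V_{\beta}$ to the core $x_{\beta}$. Performing these isotopies simultaneously in the pairwise disjoint solid tori $V_{\beta}$ yields an ambient isotopy of $\mathbb{R}^{3}$ carrying $PV^{(\ell)}(x)$ to $x$, so the two links are of the same knot type.

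The step I expect to be the main obstacle is precisely this monotone-tracking claim inside the tube: converting the purely metric estimate ``$PV^{(\ell)}(x)$ is $C\,2^{-\ell}$-close to $x$'' into the topological statement that $r|_{PV^{(\ell)}(x_{\beta})}$ has degree one with no backtracking. Metric closeness by itself does not preclude the approximation from developing small extraneous kinks, or from doubling back within a single cross-section of $V_{\beta}$, either of which could change the knot type. Ruling this out is exactly where the boundedness of the curvature and torsion of $x$ (equivalently, the existence of the positive tube radius $\rho$) must be used quantitatively to calibrate how large $\ell_{0}$ must be chosen.
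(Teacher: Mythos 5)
The paper offers no proof of this corollary at all: both the preceding lemma and the corollary are stated as consequences of the one-line observation that a compact FPS knot has bounded curvature and torsion, so your proposal is filling a vacuum rather than paralleling an argument in the text. On its own merits, your handling of conclusions (1) and (2) is correct: taking $\ell_{0}\geq\ell_{0}'(x)$ disposes of (1); consecutive samples are preferred vertices of half-closed cubes whose closures meet, hence differ by at most $2^{-\ell}$ per coordinate; a shortest path in the $1$-skeleton between such vertices must decrease the $L^{1}$-distance at every edge and therefore stays in the bounding box; so $PV^{(\ell)}(x)$ lies within $C\,2^{-\ell}$ of $x$ for an absolute constant $C$, and $C\,2^{-\ell_{0}}<\min\{\epsilon,\rho\}$ gives (2). (The retraction $r$ needs a definition other than nearest-point projection near the corners, where nearest points are non-unique, but this is a standard regular-neighborhood matter.)

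The genuine gap is in (3), exactly where you predicted, and it is worse than a missing estimate: the claim that $r$ carries $PV^{(\ell)}(x_{\beta})$ onto $x_{\beta}$ meeting each fiber in a single point is false as literally stated, no matter how large $\ell$ is. Lattice edges are confined to the three coordinate directions, so wherever the tangent of $x_{\beta}$ is close to some $e_{i}$, the approximation necessarily contains edges in the other two directions, which are nearly parallel to the meridian disks; such an edge meets a single fiber in a whole segment when the core is locally straight, and when the core curves it can cross the same fiber twice, with backtracking of order $\kappa\,4^{-\ell}$. Thus fiberwise injectivity cannot be forced by refining the lattice, and the fallback of knowing only that $PV^{(\ell)}(x_{\beta})$ is an embedded generator of $\pi_{1}(V_{\beta})$ is insufficient, for precisely the reason you acknowledge: such a curve can be locally knotted, changing the knot type. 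To close the argument one should either weaken the claim to ``the projection is monotone up to errors of order $4^{-\ell}$ and fiber intersections that are points or single edges,'' followed by an explicit fiberwise straightening that first collapses the fiber-parallel edges, or abandon the global fibration and argue arc by arc: for each $j$, the arc of $x_{\beta}$ between consecutive sample times and the lattice path $S_{j,\beta}$ lie in a common ball of radius $O(2^{-\ell})$, bounded curvature makes each of them an unknotted spanning arc of that ball for large $\ell$, the tube radius $\rho$ keeps the remainder of $x$ and of $PV^{(\ell)}(x)$ out of the ball, and the loop formed by the two arcs bounds an embedded disk across which one isotopes; composing these finitely many disk moves yields the ambient isotopy. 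Either repair is exactly the quantitative content your proposal defers, so as written the proof of (3) is incomplete.
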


\bigskip

\begin{remark}
The key feature of the preferred vertex approximation is that there exists a
non-negative integer $\ell_{0}$ such that $PV^{(\ell)}(x)$ is of the same knot
knot type as the knot $x$. \ The reader should note, however, that the length
of $PV^{(\ell)}(x)$ is is the same for all $\ell\geq0$. \ Hence, the limit
$\lim_{\ell\rightarrow\infty}PV^{(\ell)}(x)$, if it exists, is not necessarily
the knot $x$. \ An illustration of this is the unit circle
\[
\left\{  x\in\mathbb{R}^{3}:x_{1}^{2}+x_{2}^{2}=1,x_{3}=0\right\}  \text{ .}%
\]
While the circle $x$ is of total length $2\pi$, the total length of its
preferred vertex approximation $PV^{(\ell)}(x)$ is $8$ for all $\ell\geq0$.
\ Thus, for sufficiently large $\ell$, the preferred vertex approximation
$PV^{(\ell)}(x)$ can be thought of as a "wrinkled up" version of the original
knot $x$.
\end{remark}

\bigskip

\section{Wiggle, Wag, and Tug variational derivatives, and knot invariants}

\bigskip

In this section, we will define wiggle, wag, and tug variational derivatives,
and discuss how they are connected to knot invariants. \ We begin by defining
a domain for these variational derivatives. \ We will be making extensive use
of the calculus of variation. For readers unfamiliar with this subject, we
refer them, for example, to \cite{Gelfand1}.

\bigskip

\begin{definition}
Let $\mathfrak{K}$ be the \textbf{topological space\footnote{Actualy, this
space has much more mathematical structure. \ But to use this additional
structure would take us far beyond the scope of this paper.} of all finitely
piecewise smooth (FPS)\ knots}\footnote{We remind the reader that in this
paper "knot" means either a knot or a link.} with the compact-open topology.
\ A subspace $\mathfrak{K}^{\prime}$ of $\mathfrak{K}$ is said to be
\textbf{FPS-complete subspace of }$\mathfrak{K}$ provided for every knot
$x^{\prime}$ in $\mathfrak{K}^{\prime}$ and for every knot $x$ in
$\mathfrak{K}$, $x^{\prime}$ is of the same knot type as $x$ (written
$x^{\prime}\sim x$) implies that $x$ also lies in $\mathfrak{K}^{\prime}$.
\end{definition}

\bigskip

\begin{remark}
We should mention that the \textbf{set of all lattice knots}%
\[
\mathbb{K}=%
{\displaystyle\bigcup\limits_{\ell=0}^{\infty}}
\mathbb{K}^{(\ell)}\subset\mathfrak{K}%
\]
is a proper subset of the set $\mathfrak{K}$ of FPS knots, but is by no means
an FPS-complete subspace of $\mathfrak{K}$.
\end{remark}

\bigskip

\begin{example}
An example of a FPS-complete subspace $\mathfrak{K}^{\prime}$ of
$\mathfrak{K}$ would be the topological space $\mathcal{LINK}_{2}$ of all FPS
knots which are two component links, i.e.,
\[
\mathcal{LINK}_{2}=\left\{  x\in\mathfrak{K}:x\text{ is a two component
link}\right\}
\]

\end{example}

\bigskip

\begin{definition}
A real valued map $F:\mathfrak{K}^{\prime}\longrightarrow\mathbb{R}$ on a
FPS-complete subspace $\mathfrak{K}^{\prime}$ of $\mathfrak{K}$ in will be
called a \textbf{knot functional}.
\end{definition}

\bigskip

\begin{example}
Let $x=\sqcup_{\beta=1}^{r}K_{\beta}$ be an $r$-component knot in
$\mathfrak{K}$, i.e., $x$ is the disjoint union of $r$ simple closed FPS
curves $x_{1}$, $x_{2}$, \ $\ldots$ , $x_{r}$. \ For each component $x_{\beta
}$, let $x_{\beta}=x_{\beta}\left(  s_{\beta}\right)  =\left(  x_{\beta
,1},x_{\beta,2},x_{\beta,3}\right)  $ be a parameterization by arclength
$s_{\beta}$, $0\leq s_{\beta}\leq L_{\beta}$, where $L_{\beta}$ denotes the
length of $x_{\beta}$. \ Then the following is an example of a knot functional
which is an invariant of inextensible knot type.%
\[%
\begin{array}
[c]{rcl}%
Length:\mathfrak{K} & \longrightarrow & \mathbb{R}\\
x & \longmapsto &
{\displaystyle\sum\limits_{\beta=1}^{r}}
{\displaystyle\oint}
\sqrt{\left(  \frac{dx_{\beta,1}}{ds_{\beta}}\right)  ^{2}+\left(
\frac{dx_{\beta,2}}{ds_{\beta}}\right)  ^{2}+\left(  \frac{dx_{\beta,3}%
}{ds_{\beta}}\right)  ^{2}}ds_{\beta}%
\end{array}
\]

\end{example}

\bigskip

\begin{example}
An example of a knot functional, which is an invariant of knot type, is the
magnitude of the Gauss integral. \ Let $x$ be an arbitrary knot in
$\mathcal{LINK}_{\mathbf{2}}$, and let $\xi$ and $\zeta$ denote its two
components. \ Then the magnitude
\[
Link[x]=\frac{1}{4\pi}\left\vert
{\displaystyle\oint\nolimits_{\xi}}
{\displaystyle\oint\nolimits_{\zeta}}
\frac{\left(  \zeta-\xi\right)  \times d\zeta}{\left\vert \zeta-\xi\right\vert
^{3}}\cdot d\xi\right\vert \text{ ,}%
\]
of the Gauss integral of $x$ is an invariant of inextensible knot type,\ where
$\xi$ and $\zeta$ denote its two components\footnote{We use the absolute value
of the Gauss integral since only unoriented knots and links are discussed in
this paper. \ Everything in this paper can easily be extended to oriented
knots. \ With that generalization, there is no longer a need to take the
magnitude of the Gauss integral.}. \ 
\end{example}

\bigskip

We will need the following elements of the ambient group $\Lambda_{\ell}%
$:\bigskip

\begin{itemize}
\item \fbox{$1$}$^{(\ell)}(a,p)=%
{\displaystyle\prod\limits_{q=0}^{3}}
L_{1}^{(\ell)}(a,p,q)$, called a \textbf{total tug} on the face $F_{p}%
^{(\ell)}(a)$, $p=1,2,3$\bigskip

\item \fbox{$2$}$^{(\ell)}(a,p)=%
{\displaystyle\prod\limits_{q=0}^{1}}
L_{2}^{(\ell)}(a,p,q)$, called a \textbf{total wiggle} on the face
$F_{p}^{(\ell)}(a)$, $p=1,2,3$\bigskip

\item \fbox{$3$}$^{(\ell)}(a,p)=%
{\displaystyle\prod\limits_{q=0}^{3}}
L_{3}^{(\ell)}(a,p,q)$, called a \textbf{total wag} on the face $F_{p}%
^{(\ell)}(a)$, $p=1,2,3$
\end{itemize}

\bigskip

\begin{remark}
The reader should note that all the elements in each of the above products
commute with one another. Moreover, at most one element in each product can be
different from the identity transformation when the total move is applied to a
specific lattice knot $K$. \ For example, in the product for the total tug
\fbox{$1$}$^{(\ell)}(a,p)$, the tugs $L_{1}^{(\ell)}(a,p,0)$, $L_{1}^{(\ell
)}(a,p,1)$, $L_{1}^{(\ell)}(a,p,2)$, $L_{1}^{(\ell)}(a,p,3)$ all commute with
one another, and moreover, when they are applied to a specific lattice knot,
at most one of these tugs is different from the identity $1$.
\end{remark}

\bigskip

We are now in a position to define wiggle, wag, and tug variational derivatives.

\bigskip

\begin{definition}
Let $\mathfrak{K}^{\prime}$ be a FPS-complete subspace of the space
$\mathfrak{K}$ of all FPS knots, and let $x$ be a FPS knot in $\mathfrak{K}%
^{\prime}$. \ Moreover, let
\[
F:\mathfrak{K}^{\prime}\longrightarrow\mathbb{R}%
\]
be a knot functional. \ At each point $a\in x$, we define the \textbf{tug,
wiggle, and wag variational derivatives} respectively as follows:\bigskip

\begin{itemize}
\item $\frac{\delta F\left[  x\right]  }{\delta\fbox{$1$}(a,p)}=\lim
\limits_{\ell\rightarrow\infty}\frac{F\left[  \fbox{$1$}^{(\ell)}%
(a,p)PV^{(\ell)}\left(  x\right)  \right]  -F\left[  \overset{\mathstrut
}{PV^{(\ell)}\left(  x\right)  }\right]  }{\left(  2^{-\ell}\right)  ^{2}}$,
whenever the limit exists\bigskip

\item $\frac{\delta F\left[  x\right]  }{\delta\fbox{$2$}(a,p)}=\lim
\limits_{\ell\rightarrow\infty}\frac{F\left[  \fbox{$2$}^{(\ell)}%
(a,p)PV^{(\ell)}\left(  x\right)  \right]  -F\left[  \overset{\mathstrut
}{PV^{(\ell)}\left(  x\right)  }\right]  }{\left(  2^{-\ell}\right)  ^{2}}$,
whenever the limit exists\bigskip

\item $\frac{\delta F\left[  x\right]  }{\delta\fbox{$3$}(a,p)}=\lim
\limits_{\ell\rightarrow\infty}\frac{F\left[  \fbox{$3$}^{(\ell)}%
(a,p)PV^{(\ell)}\left(  x\right)  \right]  -F\left[  \overset{\mathstrut
}{PV^{(\ell)}\left(  x\right)  }\right]  }{\left(  2^{-\ell}\right)  ^{2}}$,
whenever the limit exists
\end{itemize}

\noindent for $p=1,2,3$, where $\left(  2^{-\ell}\right)  ^{2}$ is the area of
the face $F_{p}^{(\ell)}(a)$. \ Moreover, we define the \textbf{tug, wiggle,
and wag variational gradients} as%
\[
\frac{\delta F\left[  K\right]  }{\delta\fbox{$m$}\left(  a\right)  }=\left(
\frac{\delta F\left[  K\right]  }{\delta\fbox{$m$}(a,1)},\frac{\delta F\left[
K\right]  }{\delta\fbox{$m$}(a,2)},\frac{\delta F\left[  K\right]  }%
{\delta\fbox{$m$}(a,3)}\right)  \text{ .}%
\]

\end{definition}

\bigskip

\begin{conjecture}
A functional $F:\mathfrak{K}^{\prime}\longrightarrow\mathbb{R}$ is a knot
invariant if all its tug, wiggle, and wag variational gradients exist and are
equal to $\left(  0,0,0\right)  $. \ The functional $F$ is an inextensible
knot invariant if all its wiggle, and wag variational gradients exist and are
equal to $\left(  0,0,0\right)  $.
\end{conjecture}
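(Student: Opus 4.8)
The plan is to establish the implication stated in the conjecture — that vanishing of all variational gradients forces $F$ to be a knot invariant — by a discrete fundamental-theorem-of-calculus (telescoping) argument, using the preferred vertex approximation to pass from arbitrary FPS knots to lattice knots and the generation of $\Lambda_{\ell}$ by tug, wiggle, and wag to decompose a connecting isotopy into single moves. First I would fix two knots $x,y\in\mathfrak{K}^{\prime}$ with $x\sim y$. By the Corollary on preferred vertex approximations, for all sufficiently large $\ell$ both $PV^{(\ell)}(x)$ and $PV^{(\ell)}(y)$ are lattice knots of order $\ell$ and are of the same knot type as $x$ and $y$; hence $PV^{(\ell)}(x)\sim PV^{(\ell)}(y)$. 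After passing to a common refinement (so that we may assume the equivalence is realized inside $\Lambda_{\ell}$ itself), this equivalence is witnessed by an element of the ambient group $\Lambda_{\ell}$, which by the Proposition identifying the generators is a finite product of the generating total moves \fbox{$m$}$^{(\ell)}(a,p)$.

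Next I would write this connecting group element as a path $K_0=PV^{(\ell)}(x),K_1,\ldots,K_N=PV^{(\ell)}(y)$ in $\mathbb{K}^{(\ell)}$ in which each $K_{i+1}=$\fbox{$m_i$}$^{(\ell)}(a_i,p_i)\,K_i$ differs from $K_i$ by a single total move, and telescope:
\[
F\!\left[PV^{(\ell)}(y)\right]-F\!\left[PV^{(\ell)}(x)\right]=\sum_{i=0}^{N-1}\Big(F[K_{i+1}]-F[K_i]\Big).
\]
Each summand is exactly the numerator of a tug, wiggle, or wag variational quotient based at the lattice knot $K_i$ and the point $a_i$, so the hypothesis that every such gradient vanishes forces $F[K_{i+1}]-F[K_i]=o\big((2^{-\ell})^2\big)$. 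The normalization $(2^{-\ell})^2$ is precisely the area of a face, and it is dimensionally matched to the problem: the isotopy carrying $x$ to $y$ sweeps a two-dimensional region of bounded area, so the number of faces it crosses, and hence the number of moves, can be arranged to be $N=O(2^{2\ell})$. The product $N\cdot o\big((2^{-\ell})^2\big)=O(2^{2\ell})\cdot o\big((2^{-\ell})^2\big)$ would then be $o(1)$, giving $F[PV^{(\ell)}(x)]-F[PV^{(\ell)}(y)]\to 0$ as $\ell\to\infty$. Because $\mathfrak{K}^{\prime}$ is FPS-complete, every intermediate knot $K_i$ — being of the same type as $x$ — lies in $\mathfrak{K}^{\prime}$, so the gradient hypothesis legitimately applies at each step; and for the inextensible statement I would run the identical argument with $\Lambda_{\ell}$ replaced by $\widetilde{\Lambda}_{\ell}$, using only wiggle and wag moves and the relation $x\approx y$, so that only the wiggle and wag gradients are invoked.

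The hard part will be uniformity. Vanishing of a variational gradient is a \emph{pointwise} assertion — for each fixed knot and fixed base point the quotient tends to $0$ — whereas the telescoping sum ranges over $N=O(2^{2\ell})$ intermediate knots $K_i$ and base points $a_i$ that themselves vary with $\ell$. To conclude that $O(2^{2\ell})\cdot o\big((2^{-\ell})^2\big)$ is $o(1)$ I would need the estimate $F[K_{i+1}]-F[K_i]=o\big((2^{-\ell})^2\big)$ to hold \emph{uniformly} over this $\ell$-dependent family, i.e. a form of equicontinuity of the variational quotients over the compact family of knots traced by the isotopy; supplying that uniform control is the crux of the problem. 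A second, more subtle gap is that $PV^{(\ell)}(x)$ need not converge metrically to $x$ — the unit circle has $PV^{(\ell)}$ of constant length $8$ — so even after showing $F[PV^{(\ell)}(x)]-F[PV^{(\ell)}(y)]\to 0$, bridging from the wrinkled approximants back to the equality $F(x)=F(y)$ requires routing the comparison through the isotopy invariance itself (or an additional continuity hypothesis on $F$) rather than through a naive limit of $F$ along the approximations. These two gaps together are, I expect, precisely why the statement is posed as a conjecture.
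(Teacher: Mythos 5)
The statement you were asked to prove is posed in the paper as a \emph{conjecture}: the authors supply no proof, so there is no argument of theirs to compare yours against, and your sketch does not close it either. The plan you outline (telescoping $F$ along a chain of elementary moves joining $PV^{(\ell)}(x)$ to $PV^{(\ell)}(y)$) is the natural one, and the first gap you flag is exactly right and is genuinely unresolved: the hypothesis is a \emph{pointwise} limit statement --- for each fixed knot $x\in\mathfrak{K}^{\prime}$ and each fixed $a\in x$ the quotient tends to $0$ as $\ell\rightarrow\infty$ --- whereas your sum needs quantitative smallness of the single-scale differences $F[K_{i+1}]-F[K_{i}]$, uniformly over $N(\ell)$ intermediate knots and base points that themselves vary with $\ell$. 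Two further points you pass over quickly are also nontrivial: the gradient at a point of $K_{i}$ is defined in the paper through $PV^{(\ell^{\prime})}(K_{i})$ as $\ell^{\prime}\rightarrow\infty$, not through $K_{i}$ at the ambient scale $\ell$ of your path, so even identifying your summands as variational numerators requires an argument (e.g.\ that a lattice knot of order $\ell$ is its own $\ell$-th preferred vertex approximation, which depends on the paper's shortest-path choices); and the count $N=O(2^{2\ell})$ is asserted, not proved --- discretizing an ambient isotopy into that many lattice moves at scale $\ell$ is itself a nontrivial cellular-approximation problem.

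Your second gap is worse than you suggest: it is fatal to the literal statement, not merely to your route. Every term in every variational quotient is $F$ evaluated at a \emph{lattice} knot, so the hypothesis constrains $F$ only on $\mathbb{K}=\bigcup_{\ell\geq0}\mathbb{K}^{(\ell)}$. Take $\mathfrak{K}^{\prime}=\mathfrak{K}$ and define $F$ to be $0$ on every lattice knot and equal to total arclength on every non-lattice FPS knot; then all tug, wiggle, and wag variational gradients exist and equal $\left(0,0,0\right)$, yet $F$ separates two round circles of different radii, which are of the same knot type, so $F$ is not a knot invariant. Hence no argument confined to lattice knots --- telescoping or otherwise --- can reach the stated conclusion; one must add a continuity or lattice-determinacy hypothesis tying $F(x)$ to the values $F[PV^{(\ell)}(x)]$, and, as you correctly note, this cannot come from metric convergence since $PV^{(\ell)}(x)$ stays a ``wrinkled'' curve of fixed length. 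What your program could plausibly deliver, if the uniformity and move-count issues were settled, is invariance of the restriction of $F$ to $\mathbb{K}$ under lattice knot type (and, replacing $\Lambda_{\ell}$ by $\widetilde{\Lambda}_{\ell}$, under inextensible lattice knot type); that restricted statement is the honest target of this approach, and the distance between it and the conjecture as written is precisely why the authors left it as a conjecture.
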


\bigskip

We leave the following two exercises for the reader:

\bigskip

\noindent\textbf{Exercise 1}\textit{. Show that for all knots }$K$\textit{ in
}$\mathbb{K}^{(\infty)}=%
{\displaystyle\bigcup\limits_{\ell=0}^{\infty}}
\mathbb{K}^{(\ell)}$\textit{, the wiggle and wag variational gradients of the
functional }$Length[K]$ \textit{vanish at each vertex of }$K$\textit{.}

\bigskip

\noindent\textbf{Exercise 2.}\textit{ Show that for all knots }$K$\textit{ in
}$\mathbb{K}^{(\infty)}\cap\mathcal{LINK}_{\mathbf{2}}$\textit{, the tug,
wiggle, and wag variational gradients of the functional }$Link[K]$
\textit{vanish at each vertex of }$K$\textit{.\ }

\bigskip

\section{Infinitesimal Wiggles, Wags, and Tugs, differential forms, and
integrals}

\bigskip

There are many consequences to the research developments discussed in the
previous section of this paper. \ 

\bigskip

For example, the approach found in the previous section lead to the
construction of infinitesimal wiggles, wags, and tugs, such as for example the
infinitesimal wiggle%
\[%
\raisebox{-0.0406in}{\includegraphics[
height=0.1436in,
width=0.1436in
]%
{icon21.ps}%
}%
\left(  x\right)  ^{\frac{\partial}{\partial x_{1}}\otimes\frac{\partial
}{\partial x_{2}}}=\lim_{\ell\rightarrow\infty}%
\raisebox{-0.0406in}{\includegraphics[
height=0.1436in,
width=0.1436in
]%
{icon21.ps}%
}%
^{(\ell)}\left(  \left\lfloor x\right\rfloor _{\ell},3\right)  \text{ .}%
\]
Moreover, there is also the corresponding differential form%
\[%
\raisebox{-0.0406in}{\includegraphics[
height=0.1436in,
width=0.1436in
]%
{icon21.ps}%
}%
\left(  x\right)  ^{dx_{1}dx_{2}}\text{ ,}%
\]
and its multiplicative integrals, such as for example,
\[
\underset{x_{1}=0..1}{%
\raisebox{-0.1003in}{\includegraphics[
height=0.3061in,
width=0.198in
]%
{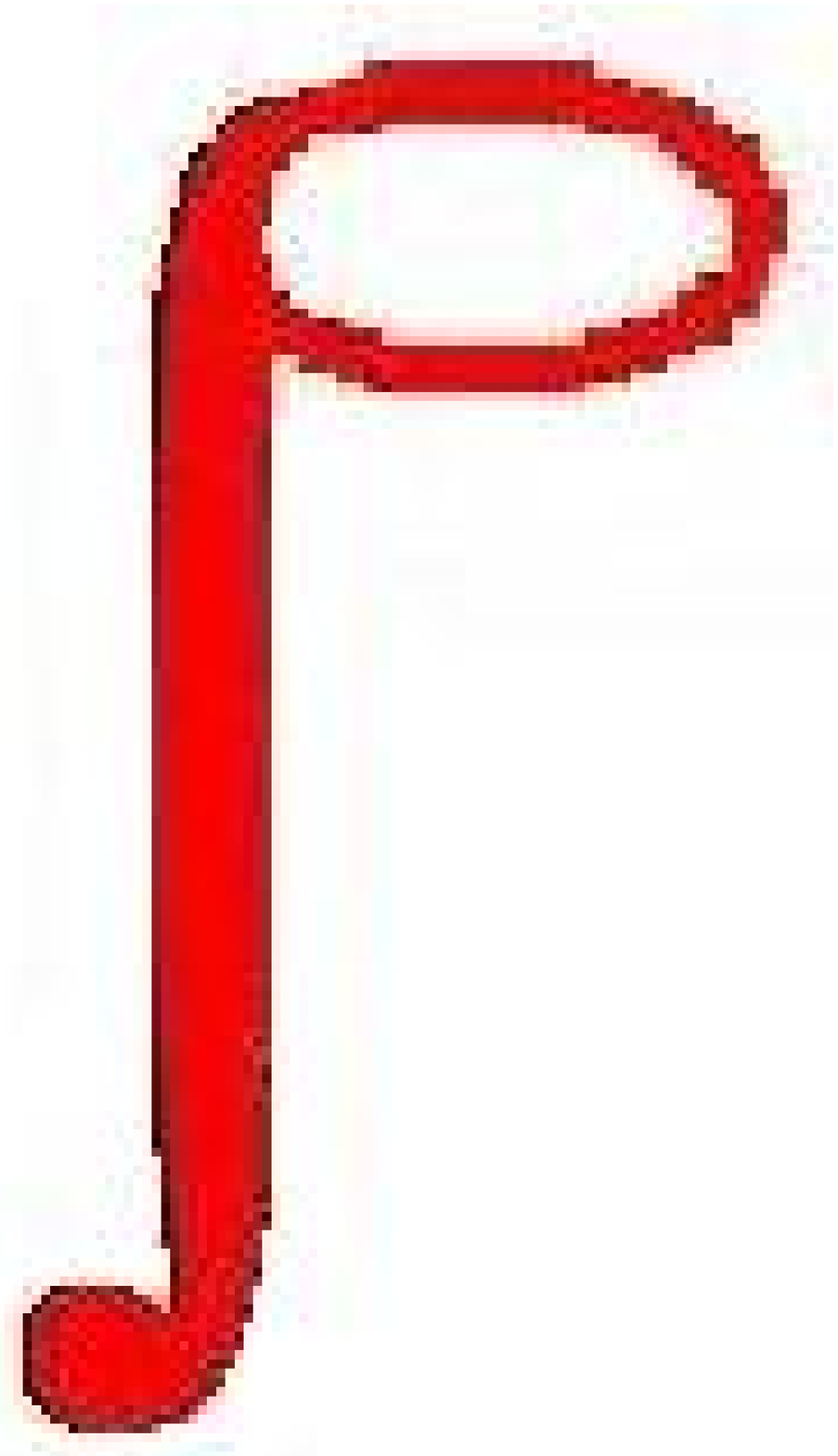}%
}%
}\ \underset{x_{2}=0..1}{%
\raisebox{-0.1003in}{\includegraphics[
height=0.3061in,
width=0.198in
]%
{Integral.ps}%
}%
}\
\raisebox{-0.0406in}{\includegraphics[
height=0.1436in,
width=0.1436in
]%
{icon21.ps}%
}%
\left(  x_{1},x_{2},0\right)  ^{dx_{1}dx_{2}}\text{ ,}%
\]
where, for example,
\[
\left(
\raisebox{-0.0406in}{\includegraphics[
height=0.1436in,
width=0.1436in
]%
{icon21.ps}%
}%
^{(0)}\left(  \overset{\mathstrut}{\left(  0,0,0\right)  },3\right)  \right)
\left(  K\right)  =\left(  \underset{x_{1}=0..1}{%
\raisebox{-0.1003in}{\includegraphics[
height=0.3061in,
width=0.198in
]%
{Integral.ps}%
}%
}\ \underset{x_{2}=0..1}{%
\raisebox{-0.1003in}{\includegraphics[
height=0.3061in,
width=0.198in
]%
{Integral.ps}%
}%
}\
\raisebox{-0.0406in}{\includegraphics[
height=0.1436in,
width=0.1436in
]%
{icon21.ps}%
}%
\left(  x_{1},x_{2},0\right)  ^{dx_{1}dx_{2}}\right)  \left(  K\right)  \text{
,}%
\]
for all $0$-th order lattice knots $K$ such that%
\[
K\cap F_{3}^{(0)}\left(  \overset{\mathstrut}{\left(  0,0,0\right)  }\right)
=%
\raisebox{-0.2508in}{\includegraphics[
height=0.5838in,
width=0.5838in
]%
{wigl0.ps}%
}%
\text{ \ \ or \ \ }%
\raisebox{-0.2508in}{\includegraphics[
height=0.5838in,
width=0.5838in
]%
{wigr0.ps}%
}%
\text{ .}%
\]

\bigskip

It is because of these developments (which were motivated by \cite{Lomonaco2})
that we have suggested that the moves wiggle, wag, and tug are more
"\textbf{physics friendly}" than the Reidemeister moves. \ Unfortunately,
because of the scope of this current paper, this section is of necessity
sketchy and abbreviated. \ Readers interested in a more in depth discussion of
this material are referred to the upcoming paper \cite{Lomonaco8}.

\bigskip

\section{$\mathbf{n}$-bounded lattice knots}

\bigskip

As a preliminary step to defining quantum knots and quantum knot systems, we
will now put bounds on the mathematical constructs given in previous sections
of this paper.

\bigskip

We define the $\mathbf{n}$\textbf{-bounded lattice }$\mathcal{L}_{\ell,n}%
$\textbf{ }as the sublattice of $\mathcal{L}_{\ell}$ given by%
\[
\mathcal{L}_{\ell,n}=\left\{  a\in\mathcal{L}_{\ell}:0\leq a_{j}\leq n\text{,
for }j=1,2,3\right\}  \text{ .}%
\]
Let $\mathcal{C}_{\ell,n}$ denote the corresponding $n$-\textbf{bounded cell
complex,} and $\mathcal{C}_{\ell,n}^{j}$ its $n$-\textbf{bounded}
$j$-\textbf{skeleton}. \ We also define a \textbf{bounded lattice knot }%
$K$\textbf{ of order }$(\ell,n)$ as a closed 2-valent subgraph of the
$n$-bounded $1$-skeleton $\mathcal{C}_{\ell,n}^{1}$, and let $\mathbb{K}%
^{(\ell,n)}$ denote the set of all \textbf{bounded lattice knots of order
}$(\ell,n)$.

\bigskip

\begin{remark}
For the sake of simplicity, we have defined $n$-bounded lattices that only lie
in the first octant. \ From the perspective of this paper there is no need to
extend the definition to all of $\mathbb{R}^{3}$. Such an extension would only
lead to unnecessary additional complexity.
\end{remark}

\bigskip

\begin{definition}
The \textbf{bounded (lattice) ambient group }$\Lambda_{\ell,n}$\textbf{\ (of
order }$(\ell,n)$ \textbf{) }is the finite subgroup of the (lattice) ambient
$\Lambda_{\ell}$ generated by all tugs $L_{1`}^{(\ell)}\left(  a,p,q\right)  $
and wiggles $L_{2`}^{(\ell)}\left(  a,p,q\right)  $ such that%
\[
0\leq a_{p}\leq n\text{, \ \ }0\leq a_{\left\lfloor p\right.  }\leq n-1\text{,
\ \ }0\leq a_{\left.  p\right\rceil }\leq n-1
\]
and by all wags $L_{3`}^{(\ell)}\left(  a,p,q\right)  $ such that $0\leq
a_{j}\leq n-1$, for $j=1,2,3$.
\end{definition}

\bigskip

We will also have need of the \textbf{lattice knot injection} defined by
\[%
\begin{array}
[c]{ccc}%
\iota:\mathbb{K}^{(\ell,n)} & \longrightarrow & \mathbb{K}^{(\ell,n+1)}\\
K^{(\ell,n)} & \longmapsto & K^{(\ell,n+1)}%
\end{array}
\text{ ,}%
\]
where $K^{(\ell,n+1)}=\iota\left(  K^{(\ell,n)}\right)  $ is the lattice knot
of order $(\ell,n+1)$ consisting of all the edges in $K^{(\ell,n)}$.
\ Moreover, we define in the obvious way the (lattice) knot group
monomorphism
\[
\iota:\Lambda_{\ell,n}\longrightarrow\Lambda_{\ell,n+1}\text{ .}%
\]

\bigskip

\begin{definition}
Let $\mathbb{K}^{[\ell]}$ denote the \textbf{directed system of bounded
lattice knots} $\left\{  \mathbb{K}^{(\ell,n)}\longrightarrow\mathbb{K}%
^{(\ell,n+1)}:n=1,2,3,\ldots\right\}  $ and $\Lambda_{\lbrack\ell]}$ the
\textbf{directed system of bounded ambient groups} $\left\{  \Lambda_{\ell
,n}\longrightarrow\Lambda_{\ell,n+1}:n=1,2,3,\ldots\right\}  $. \ Finally,let
$\left(  \mathbb{K}^{[\ell]},\Lambda_{\lbrack\ell]}\right)  $ denote the
directed graded system
\[
\left(  \mathbb{K}^{[\ell]},\Lambda_{\lbrack\ell]}\right)  =\left(
\mathbb{K}^{(\ell,1)},\Lambda_{\ell,1}\right)  \longrightarrow\left(
\mathbb{K(}^{\ell,2)},\Lambda_{\ell,2}\right)  \longrightarrow\cdots
\longrightarrow\left(  \mathbb{K}^{(\ell,n)},\Lambda_{\ell,n}\right)
\longrightarrow\cdots
\]

\end{definition}

\bigskip

\begin{definition}
Two lattice knots $K_{1}$ and $K_{2}$ of order $\left(  \ell,n\right)  $ are
said to be of the \textbf{same bounded knot }$(\ell,n)$-\textbf{type}, written%
\[
K_{1}\underset{\ell}{\overset{n}{\thicksim}}K_{2}\text{ \ ,}%
\]
provided there exists an element $g$ of the bounded ambient group
$\Lambda_{\ell,n}$ that transforms one into the other. \ \ They are said to be
of the \textbf{same knot }$\left(  \ell,\infty\right)  $-\textbf{type},
written%
\[
K_{1}\overset{\infty}{\underset{\ell}{\sim}}K_{2}\text{ \ ,}%
\]
provided there exists a non-negative integer $n^{\prime}$ such that
\[
\iota^{n^{\prime}}K_{1}\underset{\ell}{\overset{n+n^{\prime}}{\thicksim}}%
\iota^{n^{\prime}}K_{2}%
\]
There are of the \textbf{same knot }$\left(  \infty,n\right)  $-\textbf{type},
written%
\[
K_{1}\overset{n}{\underset{\infty}{\sim}}K_{2}\text{ \ ,}%
\]
provided there exists a non-negative integer $\ell^{\prime}$ such that
\[%
\raisebox{-0.0303in}{\includegraphics[
height=0.1332in,
width=0.1193in
]%
{red-refinement.ps}%
}%
^{\ell^{\prime}}K_{1}\underset{\ell+\ell^{\prime}}{\overset{n}{\thicksim}}%
\raisebox{-0.0303in}{\includegraphics[
height=0.1332in,
width=0.1193in
]%
{red-refinement.ps}%
}%
^{\ell^{\prime}}K_{2}%
\]

\end{definition}

\bigskip

It immediately follows that:

\bigskip

\begin{proposition}
$K_{1}\overset{\infty}{\underset{\ell}{\sim}}K_{2}\Longleftrightarrow
K_{1}\overset{n}{\underset{\infty}{\sim}}K_{2}\Longleftrightarrow K_{1}\sim
K_{2}$ .
\end{proposition}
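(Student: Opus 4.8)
The plan is to show that all three relations restrict to one and the same relation on the bounded lattice knots $K_1,K_2\in\mathbb{K}^{(\ell,n)}$, namely ordinary (topological) PL knot type, by proving the two biconditionals $\sim\ \Leftrightarrow\ \overset{\infty}{\underset{\ell}{\sim}}$ and $\sim\ \Leftrightarrow\ \overset{n}{\underset{\infty}{\sim}}$, using $\sim$ as the hub. Several arrows are essentially formal. Each generator $L_m^{(\ell)}(a,p,q)$ is realized by an orientation preserving ambient homeomorphism of $\mathbb{R}^3$ via the faithful representation $\Gamma$ of the preceding Proposition, while the refinement injection (which I denote $\rho$) and the bounded injection $\iota$ leave a knot unchanged as a subset of $\mathbb{R}^3$; hence each of the three relations implies sameness of topological type. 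In the other direction, taking $m=0$ shows $K_1\underset{\ell}{\sim}K_2\Rightarrow K_1\sim K_2$, and, since $\Lambda_{\ell+\ell',n}\leq\Lambda_{\ell+\ell'}$, any instance of $\overset{n}{\underset{\infty}{\sim}}$ feeds back into $\sim$ (take $m=\ell'$). So the real content is the passage from $\sim$ back to the two one-parameter-frozen relations.

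The two structural inputs I would isolate are finiteness of support and the scaling isomorphism. For the first: any $g\in\Lambda_\ell$ witnessing $K_1\underset{\ell}{\sim}K_2$ is a finite word in the generators, so its support lies in some $[0,N]^3$; choosing $n'$ with $n+n'\geq N$ places $g$ in $\Lambda_{\ell,n+n'}$, which proves $K_1\overset{\infty}{\underset{\ell}{\sim}}K_2$, and the converse inclusion is immediate. Thus $\overset{\infty}{\underset{\ell}{\sim}}$ is literally $\underset{\ell}{\sim}$ under the full ambient group. For the second: $\sigma\colon x\mapsto x/2$ induces a bijection $\sigma_\ast\colon\mathbb{K}^{(\ell)}\to\mathbb{K}^{(\ell+1)}$ together with the abstract isomorphism $\Lambda_\ell\xrightarrow{\ \sim\ }\Lambda_{\ell+1}$ of the ``all ambient groups are isomorphic'' Theorem, and these intertwine the two actions, so $K_1\underset{\ell}{\sim}K_2\Leftrightarrow\sigma_\ast K_1\underset{\ell+1}{\sim}\sigma_\ast K_2$. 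Computing $\rho$ on edges gives the key identity $\rho K=\sigma_\ast(2K)$, where $2K$ denotes the order-$\ell$ lattice knot supported on $\{2x:x\in K\}$ (its stretched segments bisected into order-$\ell$ edges); iterating, $\rho^{m}K=\sigma_\ast^{\,m}(2^{m}K)$. Inserting this into the definition of $\sim$ and applying $\sigma_\ast^{-m}$ shows that $K_1\sim K_2$ holds if and only if $2^{m}K_1\underset{\ell}{\sim}2^{m}K_2$ for some $m$.

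Combining these, the equivalence $\sim\ \Leftrightarrow\ \overset{\infty}{\underset{\ell}{\sim}}$ reduces \emph{exactly} to the statement that every order-$\ell$ lattice knot satisfies $K\underset{\ell}{\sim}2K$ (the Scaling Lemma): granting it, $2^{m}K_i\underset{\ell}{\sim}K_i$ by iteration, and $K_1\underset{\ell}{\sim}2^{m}K_1\underset{\ell}{\sim}2^{m}K_2\underset{\ell}{\sim}K_2$ closes the argument. I would prove the Scaling Lemma constructively: the radial expansion $x\mapsto(1+t)x$, $t\in[0,1]$, is an ambient isotopy carrying $K$ to $2K$, and I would realize it by an explicit finite schedule of tugs (with auxiliary wiggles and wags) that elongates $K$ edge by edge, routing the lengthening strand through the unbounded empty part of $\mathcal{L}_\ell$ so that every intermediate stage is an embedded $2$-valent lattice knot. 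Here the unlimited room available at the \emph{fixed} resolution $\ell$ is precisely what makes the coarse granularity harmless, and maintaining embeddedness throughout is the delicate point.

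For the remaining arrow $\sim\ \Rightarrow\ \overset{n}{\underset{\infty}{\sim}}$ I would use the preferred vertex machinery. Since $K_1,K_2\subset[0,n]^3$ have the same knot type and $[0,n]^3$ is a ball-like chart, there is a PL isotopy between them that can be taken to remain inside the box; I would subdivide this confined isotopy into finitely many elementary triangle moves and, invoking the PV Corollary together with a box-confined lattice analogue of Reidemeister completeness, realize each triangle move by bounded moves of $\Lambda_{\ell+\ell',n}$ at a sufficiently fine resolution $\ell+\ell'$, matching endpoints with $\rho^{\ell'}K_i$. This yields $\rho^{\ell'}K_1\underset{\ell+\ell'}{\overset{n}{\sim}}\rho^{\ell'}K_2$, i.e. $\overset{n}{\underset{\infty}{\sim}}$. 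The two genuinely hard parts are therefore the Scaling Lemma and, in this last paragraph, the twin geometric tasks of compressing the connecting isotopy into the fixed box $[0,n]^3$ and then realizing it by bounded lattice moves; both are clear in spirit but require the explicit local constructions to be carried out with embeddedness preserved at every step. It is worth noting that this route uses only the refinement \emph{injection} on knots and the \emph{proven} scaling isomorphism, so the Proposition follows unconditionally, independently of the conjectural refinement morphism on the groups.
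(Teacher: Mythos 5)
The paper offers no proof of this proposition at all --- it is introduced with ``It immediately follows that,'' i.e., treated as bookkeeping with the definitions --- so there is no argument of the authors' to compare yours against; what matters is whether your proposal stands on its own, and it does not quite. The parts that work are the formal implications ($K_1\overset{\infty}{\underset{\ell}{\sim}}K_2\Rightarrow K_1\sim K_2$ with $m=0$, and $K_1\overset{n}{\underset{\infty}{\sim}}K_2\Rightarrow K_1\sim K_2$ with $m=\ell'$, since $\Lambda_{\ell,n+n'}\leq\Lambda_\ell$ and $\Lambda_{\ell+\ell',n}\leq\Lambda_{\ell+\ell'}$), the identity $\rho K=\sigma_\ast(2K)$ with its consequence that $K_1\sim K_2$ holds iff $2^mK_1\underset{\ell}{\thicksim}2^mK_2$ for some $m$, and your closing observation that none of this touches the conjectural refinement morphism of Appendix C. These are correct and genuinely clarifying.

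The gaps are two. First, your support argument fails as written: the generators of $\Lambda_\ell$ are indexed by \emph{all} vertices $a\in\mathcal{L}_\ell$, including those with negative coordinates, so a finite word $g$ witnessing $K_1\underset{\ell}{\thicksim}K_2$ has support only in some $[-N,N]^3$, and the intermediate knots $g_k\cdots g_1K_1$ may leave the first octant even though $K_1,K_2\subset[0,n]^3$; since $\Lambda_{\ell,n+n'}$ by definition contains only generators based in the box $[0,n+n']^3$, choosing $n+n'\geq N$ does not place $g$ in $\Lambda_{\ell,n+n'}$. Repairing this requires showing that the whole sequence of moves can be pushed into the first octant --- e.g., that a lattice knot is bounded-equivalent to its lattice translates --- which is a construction of the same order of difficulty as your Scaling Lemma. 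Second, the two lemmas that carry the entire content --- the Scaling Lemma $K\underset{\ell}{\thicksim}2K$, and the realization of a box-confined PL isotopy between $\rho^{\ell'}K_1$ and $\rho^{\ell'}K_2$ by moves of the bounded group $\Lambda_{\ell+\ell',n}$ --- are only sketched and declared ``clear in spirit.'' The latter is not a routine verification: it is precisely a lattice analogue of Reidemeister's completeness theorem, and both lemmas demand explicit local constructions with embeddedness of every intermediate stage verified. Until those are supplied, what you have is a correct and illuminating plan --- one that, incidentally, exposes that the proposition is far from the formal triviality the paper's ``immediately follows'' suggests --- but not yet a proof.
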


\bigskip

In like manner, we can define \textbf{inextensible knot }$\left(
\ell,n\right)  $\textbf{-type} and \textbf{inextensible knot type}.\bigskip

\begin{definition}
Two lattice knots $K_{1}$ and $K_{2}$ of order $\left(  \ell,n\right)  $ are
said to be of the \textbf{same inextensible knot }$(\ell,n)$-\textbf{type},
written%
\[
K_{1}\underset{\ell}{\overset{n}{\approx}}K_{2}\text{ \ ,}%
\]
provided there exists an element $g$ of the bounded inextensible ambient group
$\widetilde{\Lambda}_{\ell,n}$ that transforms one into the other. \ \ They
are said to be of the \textbf{same inextensible knot }$\left(  \ell
,\infty\right)  $-\textbf{type}, written%
\[
K_{1}\overset{\infty}{\underset{\ell}{\approx}}K_{2}\text{ \ ,}%
\]
provided there exists a non-negative integer $n^{\prime}$ such that
\[
\iota^{n^{\prime}}K_{1}\underset{\ell}{\overset{n+n^{\prime}}{\approx}}%
\iota^{n^{\prime}}K_{2}%
\]
They are of the \textbf{same inextensible knot }$\left(  \infty,n\right)
$-\textbf{type}, written%
\[
K_{1}\overset{n}{\underset{\infty}{\approx}}K_{2}\text{ \ ,}%
\]
provided there exists a non-negative integer $\ell^{\prime}$ such that
\[%
\raisebox{-0.0303in}{\includegraphics[
height=0.1332in,
width=0.1193in
]%
{red-refinement.ps}%
}%
^{\ell^{\prime}}K_{1}\underset{\ell+\ell^{\prime}}{\overset{n}{\approx}}%
\raisebox{-0.0303in}{\includegraphics[
height=0.1332in,
width=0.1193in
]%
{red-refinement.ps}%
}%
^{\ell^{\prime}}K_{2}%
\]
\bigskip
\end{definition}

\begin{proposition}
$K_{1}\overset{n}{\underset{\infty}{\approx}}K_{2}\Longleftrightarrow
K_{1}\approx K_{2}$ . \ But $K_{1}\overset{\infty}{\underset{\ell}{\approx}%
}K_{2}\nRightarrow K_{1}\approx K_{2}$ .
\end{proposition}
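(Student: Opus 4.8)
The plan is to handle the biconditional and the non-implication by completely different means, since they have opposite characters: the first is a confinement argument, the second needs an explicit counterexample.

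For the equivalence $K_{1}\overset{n}{\underset{\infty}{\approx}}K_{2}\Longleftrightarrow K_{1}\approx K_{2}$, the forward implication is immediate. By definition the $(\infty,n)$-relation supplies an integer $\ell'$ and an element of $\widetilde{\Lambda}_{\ell+\ell',n}$ carrying the common refinements of $K_1$ and $K_2$ into one another; since $\widetilde{\Lambda}_{\ell+\ell',n}$ is a subgroup of the unbounded group $\widetilde{\Lambda}_{\ell+\ell'}$ and the refinement applied to both sides is the same, that very same datum witnesses $K_{1}\approx K_{2}$ with $m=\ell'$. So all the work lies in the converse. For it I would show that any unbounded inextensible isotopy between two knots that already sit inside the fixed box $[0,n]^3$ can, after enough refinements, be replaced by one every move of which stays inside $[0,n]^3$. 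The key observation is that a lattice knot is a one-dimensional (measure-zero) subset of the box, so at each stage there is an abundance of empty cells; refining to order $\ell+m$ with $n$ held fixed subdivides $[0,n]^3$ into roughly $(n2^{\ell+m})^3$ cells, which yields arbitrarily fine control while the ambient room stays bounded. Concretely I would record the given finite sequence of unbounded wiggles and wags, note that it sweeps only a compact region, and then reroute that region back into $[0,n]^3$ through the empty cells, reproducing each elementary step by a bounded wiggle/wag at sufficiently high order, exactly as in the refinement techniques behind the analogous extensible proposition. The main obstacle here is geometric bookkeeping: verifying that the reroute-into-the-box can always be realized by honest bounded generators, and in particular that one never needs the length-changing tug; this is precisely where the curvature/torsion control of the preferred-vertex corollary is invoked.

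For the non-implication $K_{1}\overset{\infty}{\underset{\ell}{\approx}}K_{2}\nRightarrow K_{1}\approx K_{2}$ the plan is to exhibit one pair of lattice knots that are related by fixed-order (coarse) bounded wiggles and wags in an enlarged box, yet are not related by any refinement together with inextensible moves. The conceptual source of the example is the Remark that the lattice wiggle is strictly more restrictive than a general wiggle: a coarse move sweeps a strand across a full face of size $2^{-\ell}$ in a single stroke, whereas after refinement the same gross, length-preserving rearrangement must be assembled out of moves confined to tiny cells, and it is exactly this assembly that may fail. I would take two knots of equal length and equal topological type differing by one such coarse sweep, and then isolate an invariant of the finer relation $\approx$ — a quantity preserved by refinement and by the refined wiggle/wag generators but altered by the coarse sweep — that separates them.

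I expect the true heart of the argument, and its principal difficulty, to be this non-existence half: proving that no finite sequence of refined inextensible generators can accomplish what the single coarse move does. Since every move used by $\overset{\infty}{\underset{\ell}{\approx}}$ is itself an inextensible isotopy, the pair is necessarily inextensibly isotopic as PL knots, so the separating invariant cannot be length or topological type; it must detect that the refined lattice wiggles and wags generate only a proper subcollection of the inextensible isotopies visible at the coarse scale. Establishing that such an obstruction genuinely exists — rather than merely being an artifact of a poorly chosen example — is the step I would budget the most effort for, and it is the one point where I would not expect a routine verification to suffice.
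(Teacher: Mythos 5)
Your first half is essentially right as a plan: the forward direction of $K_{1}\overset{n}{\underset{\infty}{\approx}}K_{2}\Longleftrightarrow K_{1}\approx K_{2}$ is exactly the subgroup observation you make, and the converse is genuinely a confinement problem whose entire content is the step you defer (re-executing an unbounded isotopy inside $[0,n]^{3}$ using bounded wiggles and wags only, after sufficient refinement). Note that the paper states this proposition with no proof at all, so there is nothing to compare your sketch against; just be aware that the rerouting step is the whole theorem, not bookkeeping.

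The genuine gap is in your second half, and it is not a missing detail but an impossibility: the counterexample you propose to hunt for cannot exist. Apply to $\overset{\infty}{\underset{\ell}{\approx}}$ the very same reasoning you used for the easy direction of the first half. If $K_{1}\overset{\infty}{\underset{\ell}{\approx}}K_{2}$, then for some $n'$ there is a $g\in\widetilde{\Lambda}_{\ell,n+n'}$ with $g(\iota^{n'}K_{1})=\iota^{n'}K_{2}$; but $\iota$ does not alter a knot as a subset of $\mathbb{R}^{3}$, and $\widetilde{\Lambda}_{\ell,n+n'}$ is, by the paper's own definition of the bounded groups, a subgroup of the unbounded group $\widetilde{\Lambda}_{\ell}$ acting in the same way on lattice knots. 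Hence $K_{1}\underset{\ell}{\thickapprox}K_{2}$, and since the definition of $\approx$ asks only for \emph{some} non-negative $m$, with $m=0$ allowed, this already gives $K_{1}\approx K_{2}$. Your plan founders on a misreading of $\approx$: it does not require the identification to be assembled from \emph{refined} generators; level-$\ell$ moves themselves qualify. In particular, your proposed pair ``differing by one coarse sweep'' is $\approx$-equivalent via that very sweep, and no invariant separating the two relations in that direction can exist. The only failure the proposition can be asserting, and the only direction that is not trivial, is the reverse one, $K_{1}\approx K_{2}\nRightarrow K_{1}\overset{\infty}{\underset{\ell}{\approx}}K_{2}$, and its proof goes through coarse \emph{rigidity} rather than through any invariant: take $K_{1}$ to be the boundary of a single face of $\mathcal{C}_{\ell}$ (the four-edge lattice square) and $K_{2}$ a lattice translate of it. Every closed face meets $K_{1}$ in four edges, one edge, or only vertices, never an L-pair, and every closed cube meets it in four edges, two edges, one edge, or less, never a three-edge staple; so every wiggle and every wag at level $\ell$, in a box of any size, fixes $K_{1}$, and its $(\ell,\infty)$-class is the singleton $\{K_{1}\}$. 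After refinement, corner wiggles and then wags do apply, and the refined square can be carried onto the refined translate, giving $K_{1}\approx K_{2}$; that mobility verification is the one piece of real work in a correct proof. The Remark you cite about lattice wiggles being confined points at exactly this rigidity phenomenon, i.e., it supports the reverse non-implication, not the one you set out to prove.
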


\bigskip

\bigskip

\section{Part 2. \ Quantum Knots}

\bigskip

We are finally ready to define what is meant by a quantum knot system and a
quantum knot.

\bigskip

\section{The Definition of a Quantum Knot}

\bigskip

We will now create a Hilbert space by associating a qubit with each edge of
the cell complex $\mathcal{C}_{\ell,n}$. \ 

\bigskip

Let `$<$' denote the \textbf{lexicographic (lex) ordering} of the lattice
points $\mathcal{L}_{\ell,n}$ induced by the standard linear ordering of the
rationals. \ Extend this ordering in the obvious way to a lex ordering of
$\mathcal{L}_{\ell,n}\times\left\{  1,2,3\right\}  $, also denoted by `$<$'.
\ Finally, define a linear ordering, again denoted by `$<$' on the \textbf{set
}$\mathcal{E}_{\ell,n}$\textbf{ of edges of the cell complex} $\mathcal{C}%
_{\ell,n}$ given by
\[
E_{p}\left(  a\right)  <E_{p^{\prime}}(a^{\prime})\text{ if and only if
}(a,p)<\left(  a^{\prime},p^{\prime}\right)  \text{ .}%
\]

\bigskip

Let $\mathcal{H}$ be the two dimensional Hilbert space (called the
\textbf{edge state space}) with orthonormal basis%
\[
\left\{  \left\vert 0\right\rangle =\left\vert
\raisebox{-0.0303in}{\includegraphics[
height=0.1781in,
width=0.6287in
]%
{dotted-edge.ps}%
}%
\right\rangle ,\ \left\vert 1\right\rangle =\left\vert
\raisebox{-0.0303in}{\includegraphics[
height=0.1781in,
width=0.6287in
]%
{red-edge.ps}%
}%
\right\rangle \right\}  \text{ .}%
\]
The \textbf{Hilbert space }$\mathcal{G}_{\ell,n}$\textbf{ of lattice graphs}
\textbf{of order }$\left(  \ell,n\right)  $ is defined as the tensor product%
\[
\mathcal{G}_{\ell,n}=%
{\displaystyle\bigotimes\limits_{E\in\mathcal{E}_{\ell,n}}}
\mathcal{H}\text{ ,}%
\]
where the tensor product is taken with respect to the above defined linear
ordering `$<$'. \ Thus, as orthonormal basis for the Hilbert space
$\mathcal{G}_{\ell,n}$, we have%
\[
\left\{  \ \underset{E\in\mathcal{E}_{\ell,n}}{\otimes}\left\vert c\left(
E\right)  \right\rangle :c\in Map\left(  \mathcal{E}_{\ell,n}\ ,\ \left\{
0,1\right\}  \right)  \ \right\}  \text{ ,}%
\]
where\textbf{ }$Map\left(  \mathcal{E}_{\ell,n}\ ,\ \left\{  0,1\right\}
\right)  $ is the set of all maps $c:\mathcal{E}_{\ell,n}\longrightarrow
\left\{  0,1\right\}  $ from the set $\mathcal{E}_{\ell,n}$ of edges to the
set $\left\{  0,1\right\}  $.

\bigskip

We identify in the obvious way each basis element
\[
\underset{E\in\mathcal{E}_{\ell,n}}{\otimes}\left\vert c\left(  E\right)
\right\rangle
\]
with a corresponding lattice graph $G$. \ Under this identification, the space
$\mathcal{G}_{\ell,n}$ becomes the Hilbert space with orthonormal basis%
\[
\left\{  \left\vert G\right\rangle :G\text{ a lattice graph in }%
\mathcal{L}_{\ell,n}\right\}  \text{ ,}%
\]
called the \textbf{standard basis}. \ Finally, the \textbf{Hilbert space
}$\mathcal{K}^{(\ell,n)}$\textbf{ of lattice knots} \textbf{of order }$\left(
\ell,n\right)  $ is defined as the sub-Hilbert space of $\mathcal{G}_{\ell,n}$
with orthonormal basis
\[
\left\{  \left\vert K\right\rangle :K\in\mathbb{K}^{(\ell,n)}\right\}
\]

\bigskip

Our next step is to identify each element $g$ of the ambient group
$\Lambda_{\ell,n}$ with the corresponding linear transformation defined by
\[%
\begin{array}
[c]{ccc}%
\mathcal{K}^{(\ell,n)} & \overset{g}{\longrightarrow} & \mathcal{K}^{(\ell
,n)}\\
\left\vert K\right\rangle  & \longmapsto & \left\vert gK\right\rangle
\end{array}
\]
This is a unitary transformation, since each element $g$ simply permutes the
basis elements of $\mathcal{K}^{(\ell,n)}$. \ In this way, the ambient group
$\Lambda_{\ell,n}$ is identified with the discrete unitary subgroup (also
denoted by $\Lambda_{\ell,n}$) of the group $U\left(  \mathcal{K}^{(\ell
,n)}\right)  $, where $U\left(  \mathcal{K}^{(\ell,n)}\right)  $ denotes the
\textit{group of all unitary transformations on the Hilbert space}
$\mathcal{K}^{(\ell,n)}$. \ We also call the unitary group $\Lambda_{\ell,n}$
the \textbf{(lattice) ambient group of order }$\left(  \ell,n\right)
$\textbf{.%
\begin{center}
\includegraphics[
height=2.3722in,
width=3.5276in
]%
{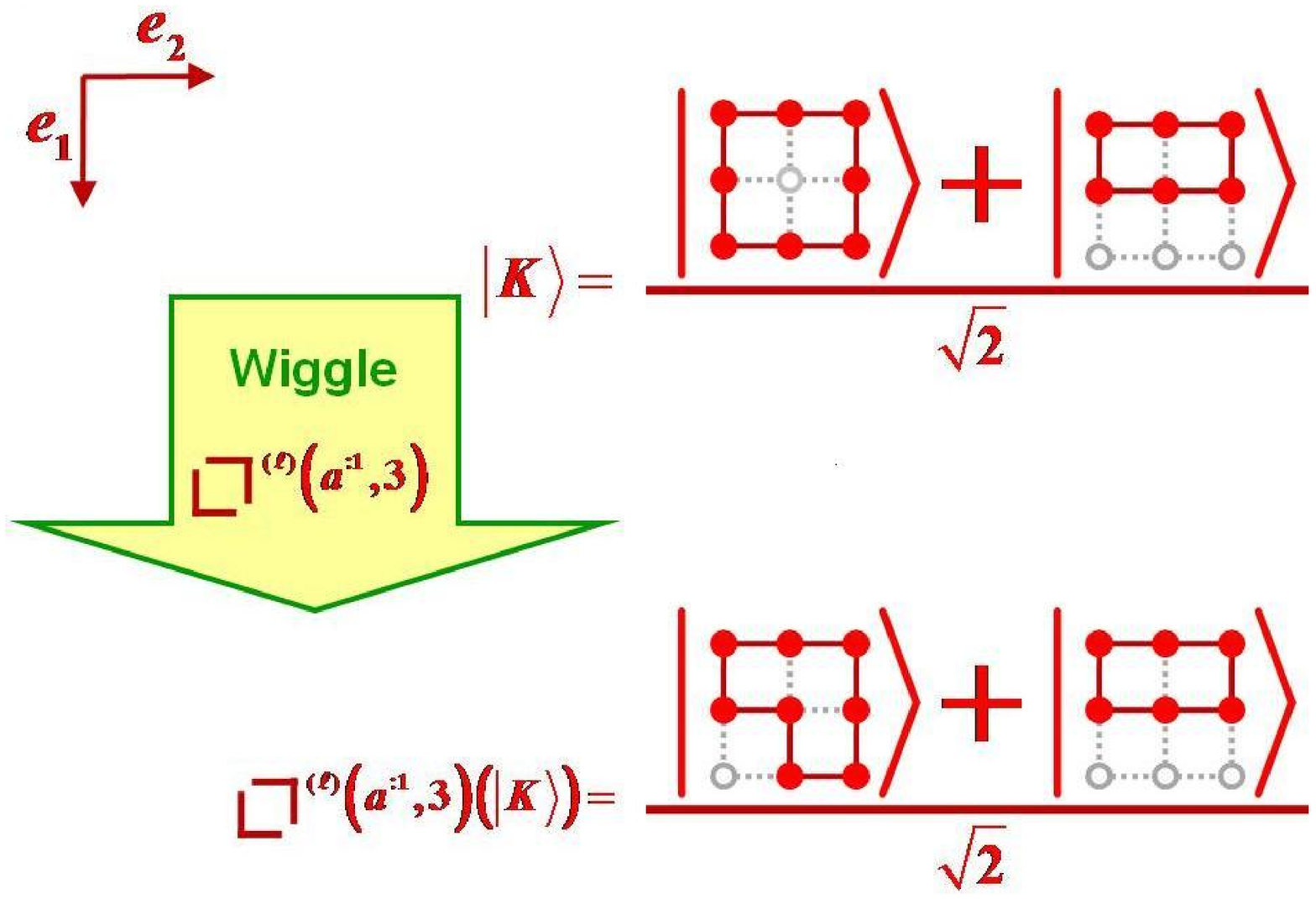}%
\\
\textbf{An example of the action of the ambient group }$\Lambda_{\ell,n}$
\textbf{on a quantum knot in }$\mathcal{K}^{(\ell,n)}$\textbf{.}%
\end{center}
}

\bigskip

We leave, as an exercise for the reader, the definition of the
\textbf{(lattice) inextensible ambient group} $\widetilde{\Lambda}_{\ell,n}%
$\textbf{ of order }$\left(  \ell,n\right)  $, which is defined in like manner.

\bigskip

\bigskip

Finally, everything comes together with the following definition.

\bigskip

\begin{definition}
Let $\ell\geq0$ and $n\geq1$ be a integers. \ A \textbf{quantum knot system}
$Q\left(  \mathcal{K}^{(\ell,n)},\Lambda_{\ell,n}\right)  $\textbf{ of order
}$\left(  \ell,n\right)  $ is a quantum system with the Hilbert space
$\mathcal{K}^{(\ell,n)}$ of $\left(  \ell,n\right)  $-th order lattice knots
as its state space, having the ambient group $\Lambda_{\ell,n}$ as an
accessible unitary control group. \ The states of the quantum system $Q\left(
\mathcal{K}^{(\ell,n)},\Lambda_{\ell,n}\right)  $ are called \textbf{quantum
knots of order} $\left(  \ell,n\right)  $, and the elements of the ambient
group $\Lambda_{\ell,n}$ are called \textbf{unitary knot moves}. \ Moreover,
the quantum knot system $Q\left(  \mathcal{K}^{(\ell,n)},\Lambda_{\ell
,n}\right)  $ is a subsystem of the quantum knot system $Q\left(
\mathcal{K}^{(\ell,n+1)},\Lambda_{\ell,n+1}\right)  $. \ Thus, the quantum
knot systems $Q\left(  \mathcal{K}^{(\ell,n)},\Lambda_{\ell,n}\right)  $
collectively become a \textbf{nested sequence of quantum knot systems}
\[
Q\left(  \mathcal{K}^{(\ell)},\Lambda_{\ell}\right)  =Q\left(  \mathcal{K}%
^{(\ell,1)},\Lambda_{\ell,1}\right)  \longrightarrow\cdots\longrightarrow
Q\left(  \mathcal{K}^{(\ell,n)},\Lambda_{\ell,n}\right)  \longrightarrow
Q\left(  \mathcal{K}^{(\ell,n+1)},\Lambda_{\ell,n}\right)  \longrightarrow
\cdots
\]
which we will denote simply by $Q\left(  \mathcal{K}^{(\ell)},\Lambda_{\ell
}\right)  $. \ We leave, as an exercise for the reader, the definition of the
\textbf{inextensible} \textbf{quantum knot system} $Q\left(  \mathcal{K}%
^{(\ell,n)},\widetilde{\Lambda}_{\ell,n}\right)  $\textbf{ of order }$\left(
\ell,n\right)  $, which is defined in like manner.
\end{definition}

\bigskip

\begin{remark}
The nested quantum knot systems $Q\left(  \mathcal{K}^{(\ell)},\Lambda_{\ell
}\right)  $ and $Q\left(  \mathcal{K}^{(\ell)},\widetilde{\Lambda}_{\ell
}\right)  $ are probably not physically realizable systems. \ However, each
quantum knot system $Q\left(  \mathcal{K}^{(\ell,n)},\Lambda_{\ell,n}\right)
$ of order $\left(  \ell,n\right)  $ (as well as each inextensible quantum
knot system $Q\left(  \mathcal{K}^{(\ell,n)},\widetilde{\Lambda}_{\ell
,n}\right)  $ of order $\left(  \ell,n\right)  $ ) is physically realizable.
\ By this we mean that such quantum knot systems are physically realizable in
the same sense as a quantum system implementing Shor's quantum factoring
algorithm is physically realizable.\footnote{It should be mentioned that,
although the quantum knot systems $Q\left(  \mathcal{K}^{(\ell,n)}%
,\Lambda_{\ell,n}\right)  $ and $Q\left(  \mathcal{K}^{(\ell,n)}%
,\widetilde{\Lambda}_{\ell,n}\right)  $ are physically realizable, they may
not be implementable with today's existing technology.}
\end{remark}

\bigskip

\section{Quantum knot type}

\bigskip

When are two quantum knots the same?

\bigskip

\begin{definition}
Let $\left\vert \psi_{1}\right\rangle $ and $\left\vert \psi_{2}\right\rangle
$ be two quantum knots of a quantum system $Q\left(  \mathcal{K}^{(\ell
,n)},\Lambda_{\ell,n}\right)  $ (of an inextensible $Q\left(  \mathcal{K}%
^{(\ell,n)},\widetilde{\Lambda}_{\ell,n}\right)  $). \ Then $\left\vert
\psi_{1}\right\rangle $ and $\left\vert \psi_{2}\right\rangle $ are said to be
of the \textbf{same quantum knot }$\left(  \ell,n\right)  $\textbf{-type} (of
the \textbf{same inextensible quantum knot }$\left(  \ell,n\right)
$\textbf{-type}), written%
\[
\left\vert \psi_{1}\right\rangle \underset{\ell}{\overset{n}{\thicksim}%
}\left\vert \psi_{2}\right\rangle \text{ ,}\left(  \left\vert \psi
_{1}\right\rangle \underset{\ell}{\overset{n}{\approx}}\left\vert \psi
_{2}\right\rangle \text{ ,}\right)
\]
provided there exists a unitary transformation $g$ in the ambient group
$\Lambda_{\ell,n}$ (in the inextensible ambient group $\widetilde{\Lambda
}_{\ell,n}$\ ) which transforms $\left\vert \psi_{1}\right\rangle $ into
$\left\vert \psi_{2}\right\rangle $, i.e., such that%
\[
g\left\vert \psi_{1}\right\rangle =\left\vert \psi_{2}\right\rangle \text{ .}%
\]
They are said to be of the \textbf{same quantum knot type (}of the
\textbf{same inextensible quantum knot type}), written
\[
\left\vert \psi_{1}\right\rangle \thicksim\left\vert \psi_{2}\right\rangle
\text{ , }\left(  \left\vert \psi_{1}\right\rangle \overset{\mathstrut
}{\approx}\left\vert \psi_{2}\right\rangle \text{ ,}\right)
\]
provided that for some non-negative integer $\ell^{\prime}$,
\[%
\raisebox{-0.0303in}{\includegraphics[
height=0.1332in,
width=0.1193in
]%
{red-refinement.ps}%
}%
^{\ell^{\prime}}\left\vert \psi_{1}\right\rangle \underset{\ell+\ell^{\prime}%
}{\overset{n}{\thicksim}}%
\raisebox{-0.0303in}{\includegraphics[
height=0.1332in,
width=0.1193in
]%
{red-refinement.ps}%
}%
^{\ell^{\prime}}\left\vert \psi_{2}\right\rangle \text{ , }\left(
\raisebox{-0.0303in}{\includegraphics[
height=0.1332in,
width=0.1193in
]%
{red-refinement.ps}%
}%
^{\ell^{\prime}}\left\vert \psi_{1}\right\rangle \underset{\ell+\ell^{\prime}%
}{\overset{n}{\approx}}%
\raisebox{-0.0303in}{\includegraphics[
height=0.1332in,
width=0.1193in
]%
{red-refinement.ps}%
}%
^{\ell^{\prime}}\left\vert \psi_{2}\right\rangle \text{ ,}\right)
\]
where $%
\raisebox{-0.0303in}{\includegraphics[
height=0.1332in,
width=0.1193in
]%
{red-refinement.ps}%
}%
:\mathcal{K}^{(\ell,m)}\longrightarrow\mathcal{K}^{(\ell+1,m)}$ is the Hilbert
space monomorphism induced by the previously defined refinement injection $%
\raisebox{-0.0303in}{\includegraphics[
height=0.1332in,
width=0.1193in
]%
{red-refinement.ps}%
}%
:\mathbb{K}^{(\ell,m)}\longrightarrow\mathbb{K}^{(\ell+1,m)}$.
\end{definition}

\bigskip

\begin{proposition}
Let $\left\vert \psi_{1}\right\rangle $ and $\left\vert \psi_{2}\right\rangle
$ be two quantum knots of a quantum system $Q\left(  \mathcal{K}^{(\ell
,n)},\Lambda_{\ell,n}\right)  $. \ Then $\left\vert \psi_{1}\right\rangle $
and $\left\vert \psi_{2}\right\rangle $ are of the \textbf{same quantum knot
type}%
\[
\left\vert \psi_{1}\right\rangle \thicksim\left\vert \psi_{2}\right\rangle
\]
if and only if there exists a non-negative integer $n^{\prime}$ such that
\[
\iota^{n^{\prime}}\left\vert \psi_{1}\right\rangle \underset{\ell}%
{\overset{n+n^{\prime}}{\thicksim}}\iota^{n^{\prime}}\left\vert \psi
_{2}\right\rangle \text{ ,}%
\]
where $\iota:\mathcal{K}^{(\ell,m)}\longrightarrow\mathcal{K}^{(\ell,m+1)}$ is
the monomorphism induced by the previously defined injection $\iota
:\mathbb{K}^{(\ell,m)}\longrightarrow\mathbb{K}^{(\ell,m+1)}$. \ The analogous
statement for inextensible quantum knot type is false.
\end{proposition}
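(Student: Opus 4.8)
The plan is to strip the statement down to a purely combinatorial condition on the two states and then feed it to the earlier Proposition identifying the box-limit relation $\overset{\infty}{\underset{\ell}{\sim}}$, the refinement-limit relation $\overset{n}{\underset{\infty}{\sim}}$, and ordinary knot type $\sim$. The starting observation is that every element of $\Lambda_{\ell,n}$ acts on $\mathcal{K}^{(\ell,n)}$ merely by permuting the standard basis $\{\,\left\vert K\right\rangle : K\in\mathbb{K}^{(\ell,n)}\,\}$, and that both the box-injection $\iota$ and the refinement injection are injective on this basis. Hence, writing $\left\vert \psi_i\right\rangle=\sum_K\alpha^{(i)}_K\left\vert K\right\rangle$ with support $\mathrm{supp}(\psi_i)=\{K:\alpha^{(i)}_K\neq 0\}$, a group element $g$ satisfies $g\left\vert \psi_1\right\rangle=\left\vert \psi_2\right\rangle$ if and only if its induced permutation restricts to a bijection $\mathrm{supp}(\psi_1)\to\mathrm{supp}(\psi_2)$ carrying $\alpha^{(1)}$ to $\alpha^{(2)}$. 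Each relation in the statement thus amounts to the existence of a coefficient-preserving bijection $\phi:\mathrm{supp}(\psi_1)\to\mathrm{supp}(\psi_2)$ realized, through the appropriate injection, by a single element of the relevant enlarged group.

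First I would settle the case of basis states $\left\vert \psi_i\right\rangle=\left\vert K_i\right\rangle$. Here the left side unwinds exactly to $K_1\overset{n}{\underset{\infty}{\sim}}K_2$ and the right side exactly to $K_1\overset{\infty}{\underset{\ell}{\sim}}K_2$, so by the earlier Proposition both are equivalent to $K_1\sim K_2$, and hence to each other. The genuinely new content is the passage to superpositions, namely the coherence demand that one and the same unitary implement $\phi$ across the whole (finite) support simultaneously. I would isolate this as a lemma: if $\phi:\mathrm{supp}(\psi_1)\to\mathrm{supp}(\psi_2)$ is coefficient-preserving with $K\sim\phi(K)$ for each $K$, then $\phi$ is realized by a single element of $\Lambda_{\ell+\ell',n}$ for all large $\ell'$ and, equivalently, by a single element of $\Lambda_{\ell,n+n'}$ for all large $n'$. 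Granting the lemma, both sides of the proposition collapse to the one symmetric condition ``there is a coefficient-preserving, type-respecting bijection $\mathrm{supp}(\psi_1)\to\mathrm{supp}(\psi_2)$,'' and the biconditional is immediate.

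The hard part is exactly this coherence lemma. To transfer a given realizing element between the refinement and box-enlargement pictures I would operate on a word representing it generator by generator, replacing each tug, wiggle, or wag performed in an enlarged box by a finite product of moves in the refined lattice (and conversely), using the same local move-conversions that underlie the earlier Proposition. The point that makes this respect the simultaneous action on all of $\mathrm{supp}(\psi_1)$ is that each conversion recipe is dictated by the ambient lattice geometry alone, not by the particular knot it is applied to; the freedom to take $\ell'$ or $n'$ large supplies the empty lattice room needed so that the region used to reroute one support knot does not disturb another. Verifying that one common word can be made to act correctly on every support knot at once --- so that pointwise equivalence is upgraded to a single coherent unitary, and survives the transfer --- is where I expect the real effort to go.

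Finally, the inextensible analog is false, and I would show this by a counterexample rather than a proof. The earlier inextensible Proposition furnishes lattice knots $K_1,K_2$ of some order $\ell$ with $K_1\overset{\infty}{\underset{\ell}{\approx}}K_2$ yet $K_1\not\approx K_2$. For the basis states $\left\vert K_1\right\rangle,\left\vert K_2\right\rangle$, the box-side relation $\iota^{n'}\left\vert K_1\right\rangle\overset{n+n'}{\underset{\ell}{\approx}}\iota^{n'}\left\vert K_2\right\rangle$ holds for a suitable $n'$ straight from the definition of $\overset{\infty}{\underset{\ell}{\approx}}$, whereas $\left\vert K_1\right\rangle\approx\left\vert K_2\right\rangle$ unwinds (as in the basis-state computation above, with $\widetilde{\Lambda}$ replacing $\Lambda$) to $K_1\approx K_2$, which fails. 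Thus the box-side condition does not force the same inextensible quantum knot type. Conceptually, inextensible moves preserve length, so the box-limit relation is strictly coarser than inextensible knot type --- only the refinement limit recovers it --- which is precisely the asymmetry already recorded in the earlier inextensible Proposition.
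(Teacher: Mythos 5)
The paper states this proposition with no proof at all, so there is no argument of record to compare against; judged on its own terms, your outline is sound at its periphery but has a genuine gap at its center, exactly where you yourself say the real effort would go. What is correct: the reduction of $g\left\vert \psi_{1}\right\rangle =\left\vert \psi_{2}\right\rangle $ to a coefficient-preserving bijection of supports (the ambient group acts by permuting the standard basis); the basis-state case, which is precisely the earlier proposition $K_{1}\overset{\infty}{\underset{\ell}{\sim}}K_{2}\Longleftrightarrow K_{1}\overset{n}{\underset{\infty}{\sim}}K_{2}\Longleftrightarrow K_{1}\sim K_{2}$; and the refutation of the inextensible analogue via basis states $\left\vert K_{1}\right\rangle ,\left\vert K_{2}\right\rangle $ with $K_{1}\overset{\infty}{\underset{\ell}{\approx}}K_{2}$ but $K_{1}\not\approx K_{2}$, since inextensible quantum knot type of basis states unwinds through $K_{1}\overset{n}{\underset{\infty}{\approx}}K_{2}$ to $K_{1}\approx K_{2}$. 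What is missing is the entire superposition case: your coherence lemma is motivated but not proved, and without it neither implication of the biconditional is established for any state whose support has size at least two.

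Routing the proof through that lemma is moreover a strategic liability, not merely an omission, because the lemma is strictly stronger than the proposition. The proposition asserts only that a coherent witness on one side (a single $g\in\Lambda_{\ell+\ell^{\prime},n}$ acting correctly on every knot of the refined support simultaneously) can be converted into a coherent witness on the other side; it does not assert that pointwise equivalences $K\sim\phi(K)$ can be upgraded to a single coherent unitary. If that upgrade fails, both sides of the proposition can fail together and the proposition survives, while your proof scheme collapses. A transfer that never decoheres is the safer route: the dilation $x\mapsto2^{\ell^{\prime}}x$ carries $\mathcal{L}_{\ell+\ell^{\prime},n}$ onto $\mathcal{L}_{\ell,2^{\ell^{\prime}}n}$, hence induces a bijection $S:\mathbb{K}^{(\ell+\ell^{\prime},n)}\rightarrow\mathbb{K}^{(\ell,2^{\ell^{\prime}}n)}$ and a group isomorphism $\sigma:\Lambda_{\ell+\ell^{\prime},n}\rightarrow\Lambda_{\ell,2^{\ell^{\prime}}n}$ with $S(gK)=\sigma(g)S(K)$ (this is the bounded form of the paper's theorem that $\Lambda_{\ell}\simeq\Lambda_{\ell+1}$ via $a\mapsto a/2$), so a witness for the refinement side is converted, coherently and for free, into a single element acting correctly on the dilated superposition $\sum_{K}\alpha_{K}\left\vert 2^{\ell^{\prime}}K\right\rangle $. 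What any complete proof must still supply is one group element carrying each $\iota^{n^{\prime}}K$ to its dilate $2^{\ell^{\prime}}K$ uniformly over the finite support; that residual statement is again a coherence claim, but for a single explicit geometric map rather than for arbitrary pointwise type equivalences, which is why it is the better target. As it stands, your proposal establishes the basis-state case and the falsity claim, and leaves the proposition itself unproven.
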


\bigskip

Thus, the two quantum knots found in the last example of the previous section
are of the same quantum knot type. \ Surprisingly, the following two quantum
knots $\left\vert \psi_{1}\right\rangle $ and $\left\vert \psi_{2}%
\right\rangle $ are neither of the same quantum knot $\left(  1,1\right)
$-type nor quantum knot type:
\[
\left\vert \psi_{1}\right\rangle =\left\vert
\raisebox{-0.1609in}{\includegraphics[
height=0.4186in,
width=0.4186in
]%
{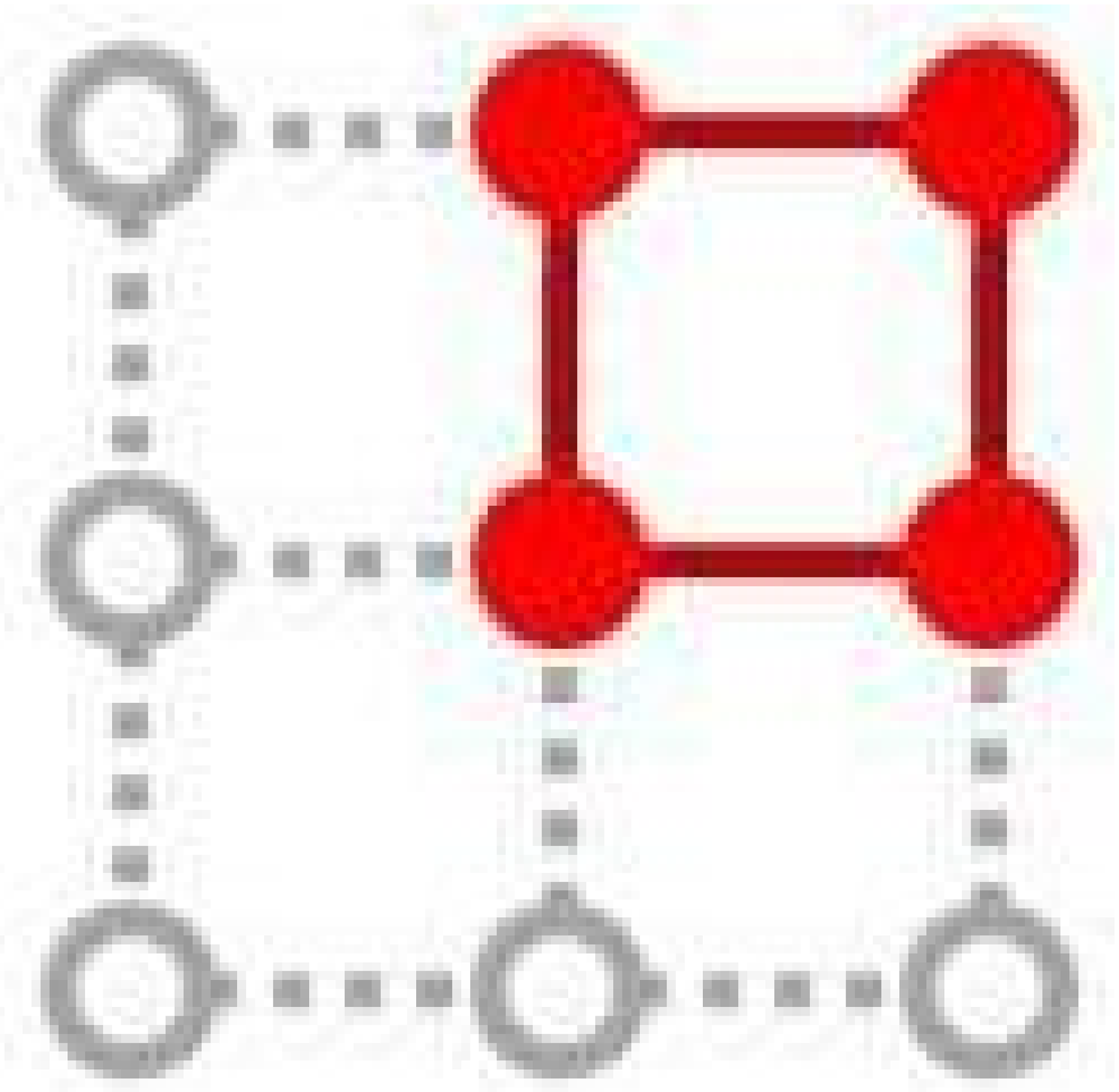}%
}%
\right\rangle \text{ \ \ and \ \ }\left\vert \psi_{2}\right\rangle =\frac
{1}{\sqrt{2}}\left(  \ \overset{\mathstrut}{\left\vert
\raisebox{-0.1609in}{\includegraphics[
height=0.4186in,
width=0.4186in
]%
{qknot-type2b.ps}%
}%
\right\rangle }+\left\vert
\raisebox{-0.1609in}{\includegraphics[
height=0.4186in,
width=0.4186in
]%
{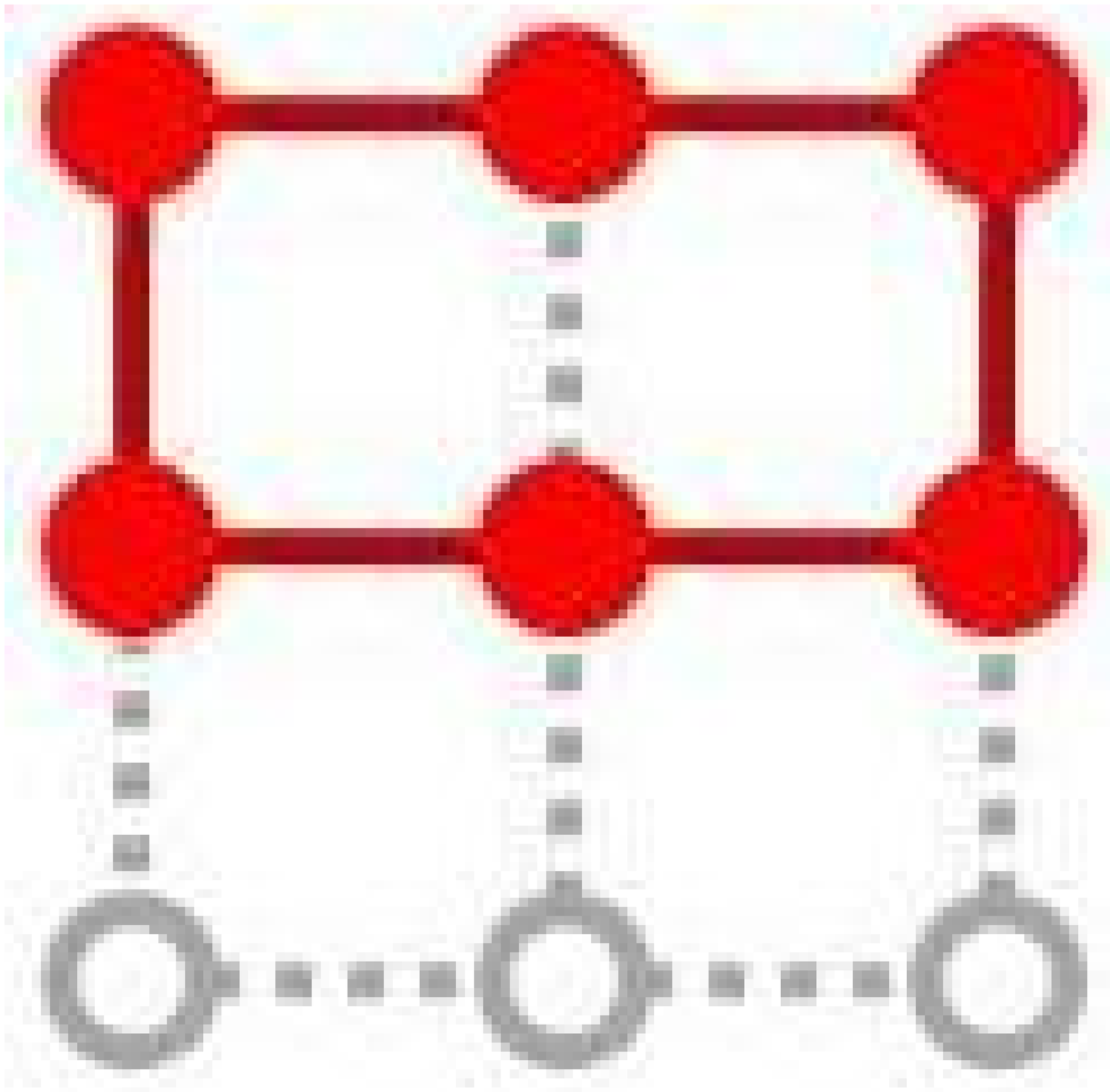}%
}%
\right\rangle \ \right)
\]

\bigskip

This follows from the fact that the ambient group $\Lambda_{\ell,n}$ is
generated by a finite set of involutions. \ 

\bigskip

\section{Hamiltonians of the generators of the ambient group $\Lambda$.}

\bigskip

\bigskip

In this section, we show how to find a Hamiltonians associated with the
generators of the ambient group $\Lambda_{\ell,n}$, i.e., Hamiltonians
associated with the wiggle, wag, and tug moves\footnote{Please keep in mind
that the Hamiltonian construction found below is far from unique. \ There are
many other ways of constructing a Hamiltonian corresponding to the generators
of the ambient group $\Lambda_{\ell,n}$.}. \ 

\bigskip

Let $g$ be an arbitrary wiggle, wag, or tug move in the ambient group
$\Lambda_{\ell,n}$. \ From proposition 1, we know that $g$, as a permutation,
is the product of disjoint transpositions of knot $(\ell,n)$-th order lattice
knots , i.e., of the form%
\[
g=\left(  K_{\alpha_{1}},K_{\beta_{1}}\right)  \left(  K_{\alpha_{2}}%
,K_{\beta_{2}}\right)  \cdots\left(  K_{\alpha_{s}},K_{\beta_{r}}\right)
\]
Without loss of generality, we can assume that $\alpha_{j}<\beta_{j}$ for
$1\leq j\leq r$, and that $\alpha_{j}<\alpha_{j+1}$ for $1\leq j<r$, where
`$<$' is the lexicographic (lex) ordering on\ the set of $\left(
\ell,n\right)  $-lattice graphs $\mathcal{G}^{(\ell,n)}$ induced by the
previously defined linear ordering `$<$' of the edges in the lattice
$\mathcal{L}_{\ell,n}$. \ For each permutation $\eta$ of $\mathcal{K}%
^{(\ell,n)}$, let `$<_{\eta}$' denote the new linear ordering created by the
application of the permutation $\eta$. \ 

\bigskip

Choose a permutation $\eta$ such that
\[
K_{\alpha_{1}}<_{\eta}K_{\beta_{1}}<_{\eta}K_{\alpha_{2}}<_{\eta}K_{\beta_{2}%
}<_{\eta}\cdots<_{\eta}K_{\alpha_{r}}<_{\eta}K_{\beta_{r}}%
\]
with $K_{\beta_{r}}$ $<_{\eta}$(i.e., $\eta$-less than) all other lattice
knots of order $(\ell,n)$, and let $\sigma_{0}$ and $\sigma_{1}$ denote
respectively the identity matrix and the first Pauli spin matrix given below%
\[
\sigma_{0}=\left(
\begin{array}
[c]{cc}%
1 & 0\\
0 & 1
\end{array}
\right)  \text{ \ \ and \ \ }\sigma_{1}=\left(
\begin{array}
[c]{cc}%
0 & 1\\
1 & 0
\end{array}
\right)  \text{ .}%
\]
Let $d\left(  \ell,n\right)  $ denote the dimension of the Hilbert space
$\mathcal{K}^{(\ell,n)}$. \ Then in the $\eta$-reordered basis of the Hilbert
space $\mathcal{K}^{(\ell,n)}$, the element $g$, as a unitary transformation,
is of the form%
\[
\eta^{-1}g\eta=\left(
\begin{array}
[c]{ccccc}%
\sigma_{1} & O & \ldots & O & O\\
O & \sigma_{1} & \ldots & O & O\\
\vdots & \vdots & \ddots & \vdots & \vdots\\
O & O & \ldots & \sigma_{1} & O\\
O & O & \ldots & O & I_{d\left(  \ell,n\right)  -2r}%
\end{array}
\right)  =\left(  I_{r}\otimes\sigma_{1}\right)  \oplus I_{d\left(
\ell,n\right)  -2r}\text{ ,}%
\]
where `$O$' denotes an all zero matrix of appropriate size, where $I_{d\left(
\ell,n\right)  -2r}$ denotes the $\left(  d\left(  \ell,n\right)  -2r\right)
\times\left(  d\left(  \ell,n\right)  -2r\right)  $ identity matrix, and where
`$\oplus$' denotes the direct sum of matrices, i.e., $A\oplus B=\left(
\begin{array}
[c]{cc}%
A & O\\
O & B
\end{array}
\right)  $.\bigskip

The natural log of $\sigma_{1}$ is%
\[
\ln\sigma_{1}=\frac{i\pi}{2}\left(  2s+1\right)  \left(  \sigma_{0}-\sigma
_{1}\right)
\]
where $s$ denotes an arbitrary integer. \ Hence, the natural log, $\ln\left(
\eta^{-1}g\eta\right)  $, of the unitary transformation $\eta^{-1}g\eta$ is

\bigskip

$\hspace{-0.85in}\frac{i\pi}{2}\left(
\begin{array}
[c]{cc}%
\left(
\begin{array}
[c]{cccc}%
\left(  2_{s_{1}}+1\right)  \left(  \sigma_{0}-\sigma_{1}\right)  & O & \ldots
& O\\
O & \left(  2s_{2}+1\right)  \left(  \sigma_{0}-\sigma_{1}\right)  & \ldots &
O\\
\vdots & \vdots & \ddots & \vdots\\
O & O & \ldots & \left(  2s_{r}+1\right)  \left(  \sigma_{0}-\sigma
_{1}\right)
\end{array}
\right)  & O\\
O & O_{\left(  d\left(  \ell,n\right)  -2r\right)  \times\left(  d\left(
\ell,n\right)  -2r\right)  }%
\end{array}
\right)  $

\bigskip

\noindent where $s_{1},s_{2},\ldots,s_{r}$ are arbitrary
integers.\footnote{Let $U$ be an arbitrary finite $r\times r$ unitary matrix,
and let $W$ be a unitary matrix that diagonalizes $U$, i.e., a unitary matrix
$W$ such that $WUW^{-1}=\Delta\left(  \ \lambda(1),\lambda(2)\ldots
,\lambda(r)\ \right)  $. \ Then the natural log of $A$ is $\ln A=W^{-1}%
\Delta\left(  \ \ln\lambda(1),\ln\lambda(2)\ldots,\ln\lambda(r)\ \right)  W$.}

\bigskip

Since we are interested only in the simplest Hamiltonian, we choose the
\textbf{principal branch} $\ln_{P}$ of the natural log, i.e., the branch for
which $s_{1}=s_{2}=\cdots s_{r}=0$, and obtain for our Hamiltonian%
\[
H_{g}=-i\eta\left[  \ln_{P}\left(  \eta^{-1}g\eta\right)  \right]  \eta
^{-1}=\frac{\pi}{2}\eta\left(
\begin{array}
[c]{cc}%
I_{r}\otimes\left(  \sigma_{0}-\sigma_{1}\right)  & O\\
O & O_{\left(  d\left(  \ell,n\right)  -2r\right)  \times\left(  d\left(
\ell,n\right)  -2r\right)  }%
\end{array}
\right)  \eta^{-1}%
\]

\bigskip

Thus, if the initial quantum knot is%
\[
\left\vert
\raisebox{-0.1609in}{\includegraphics[
height=0.4186in,
width=0.4186in
]%
{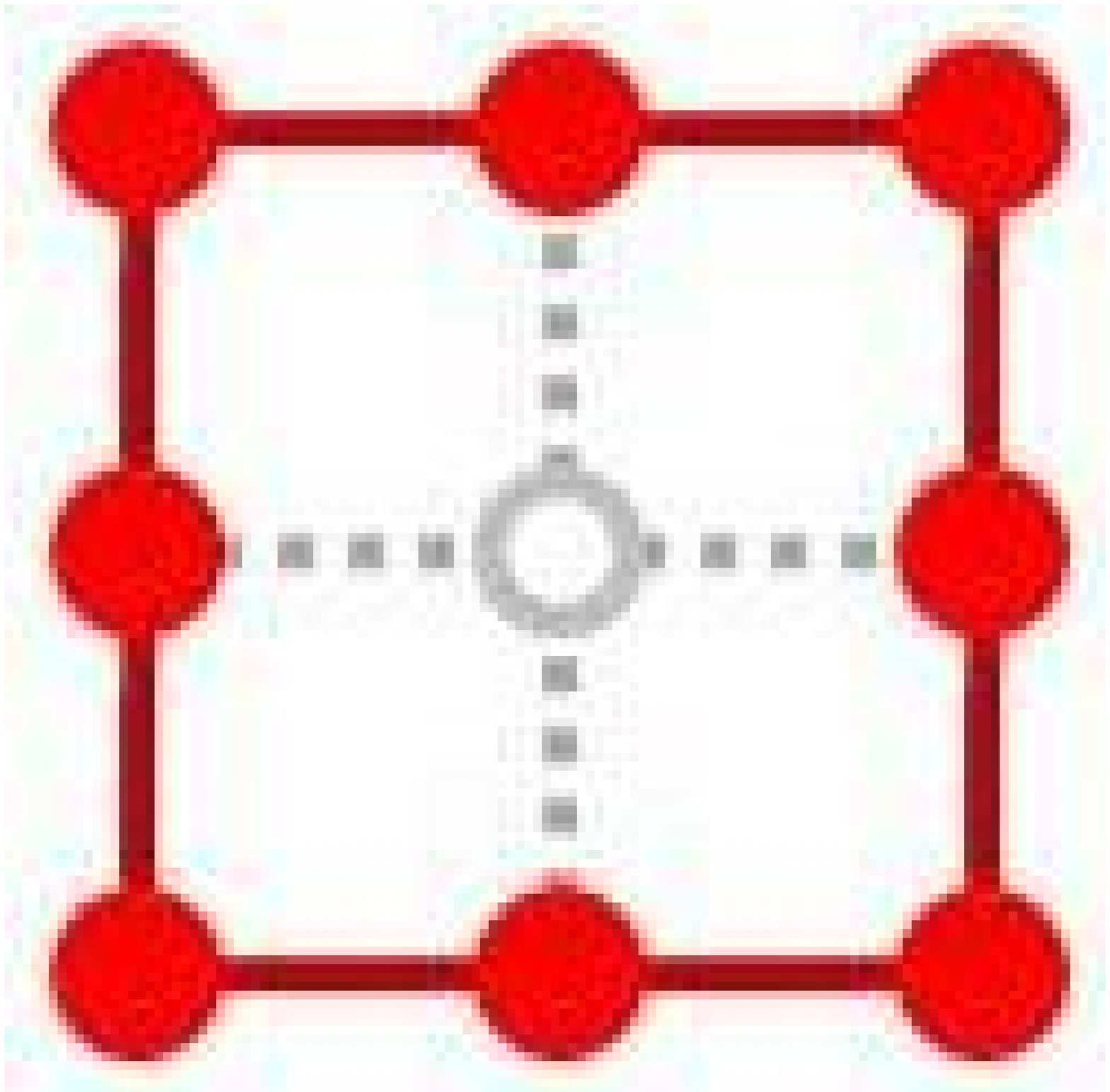}%
}%
\right\rangle \text{ ,}%
\]
and if we use the Hamiltonian $H_{g}$ for the wiggle move $%
\raisebox{-0.0406in}{\includegraphics[
height=0.1436in,
width=0.1436in
]%
{icon20.ps}%
}%
^{(0)}\left(  a^{:1},3\right)  $, then the solution to Schroedinger's equation
is
\[
e^{i\pi t/\left(  2\hslash\right)  }\left(  \cos\left(  \frac{\pi t}{2\hslash
}\right)  \left\vert
\raisebox{-0.1609in}{\includegraphics[
height=0.4186in,
width=0.4186in
]%
{qknot3a.ps}%
}%
\right\rangle -i\sin\left(  \frac{\pi t}{2\hslash}\right)  \left\vert
\raisebox{-0.1609in}{\includegraphics[
height=0.4186in,
width=0.4186in
]%
{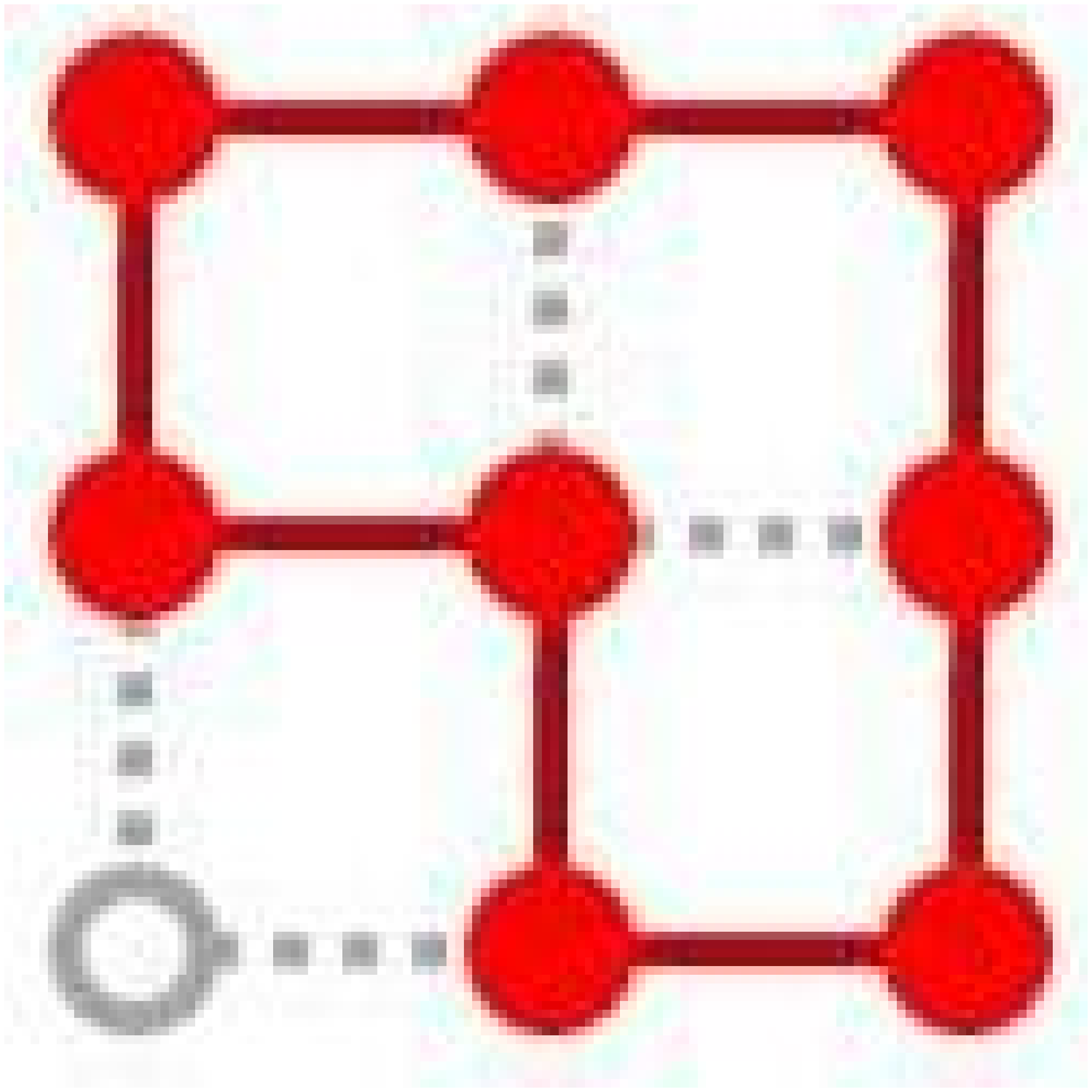}%
}%
\right\rangle \right)
\]
where $t$ denotes time, and where $\hslash$ denotes Planck's constant divided
by $2\pi$.

\bigskip

\section{Quantum observables as invariants of quantum knots}

\bigskip

\bigskip

We now consider the following question:

\bigskip

\noindent\textbf{Question.} \ \textit{What is a quantum knot invariant? \ How
do we define it?}

\bigskip

The objective of the first half of this section is to give a discursive
argument that justifies a definition which will be found to be equivalent to
the following:

\bigskip

\textit{A }\textbf{quantum knot }$\left(  \ell,n\right)  $\textbf{-invariant}%
\textit{ for a quantum system }$Q\left(  \mathcal{K}^{(\ell,n)},\Lambda
_{\ell,n}\right)  $\textit{ is an observable }$\Omega$\textit{ on the Hilbert
space }$\mathcal{K}^{(\ell,n)}$\textit{ of quantum knots which is invariant
under the action of the ambient group }$\Lambda_{\ell,n}$\textit{, i.e., such
that }$U\Omega U^{-1}=\Omega$\textit{ for all }$U$ in $\Lambda_{\ell,n}%
$\textit{. \ }

\bigskip

\noindent\textbf{Caveat:} \ \textit{We emphasize to the reader that the above
definition of quantum knot invariants is not the one currently used in quantum
topology. Quantum topology uses analogies with quantum mechanics to create
significant mathematical structures that do not necessarily correspond
directly to quantum mechanical observables. The invariants of quantum topology
have been investigated for their relevance to quantum computing and they can
be regarded, in our context, as possible secondary calculations made on the
basis of an observable. Here we are concerned with observables that are
themselves topological invariants.}

\bigskip

To justify our new use of the term `quantum knot invariant,' we will use the
following \textsc{yardstick}:

\bigskip

\noindent\textsc{Yardstick:} \textit{Quantum knot invariants are to be
physically meaningful invariants of quantum knot type. \ By "physically
meaningful," we mean that the quantum knot invariants can be directly obtained
from experimental data produced by an implementable physical
experiment.\footnote{Once again we remind the reader that, although the
quantum knot system $Q\left(  \mathcal{K}^{(\ell,n)},\Lambda_{\ell,n}\right)
$ is physically implementable, it may or may not be implentable within today's
existing technology.}}

\bigskip

Let $Q\left(  \mathcal{K}^{(\ell,n)},\Lambda_{\ell,n}\right)  $ be a quantum
knot system, where $\mathcal{K}^{(\ell,n)}$ is the Hilbert space of quantum
knots, and where $\Lambda_{\ell,n}$ is the underlying ambient group on
$\mathcal{K}^{(\ell,n)}$. \ Moreover, let $\mathcal{P}^{(\ell,n)}$ denote some
yet-to-be-chosen mathematical domain. \ By an $\left(  \ell,n\right)
$\textbf{-invariant} $I^{(\ell,n)}$ of quantum knots, we mean a map%
\[
I^{(\ell,n)}:\mathcal{K}^{(\ell,n)}\longrightarrow\mathcal{P}^{(\ell,n)}\text{
,}%
\]
such that, when two quantum knots $\left\vert \psi_{1}\right\rangle $ and
$\left\vert \psi_{2}\right\rangle $ are of the same knot $\left(
\ell,n\right)  $-type, i.e., when%
\[
\left\vert \psi_{1}\right\rangle \underset{\ell}{\overset{n}{\sim}}\left\vert
\psi_{2}\right\rangle \text{ ,}%
\]
then their respective invariants must be equal, i.e.,
\[
I^{(\ell,n)}\left(  \left\vert \psi_{1}\right\rangle \right)  =I^{(\ell
,n)}\left(  \left\vert \psi_{2}\right\rangle \right)
\]
In other words, $I^{(\ell,n)}:\mathcal{K}^{(\ell,n)}\longrightarrow
\mathcal{P}^{(\ell,n)}$ is a map which is invariant under the action of the
ambient group $\Lambda_{\ell,n}$, i.e.,%
\[
I^{(\ell,n)}\left(  \left\vert \psi\right\rangle \right)  =I^{(\ell,n)}\left(
g\left\vert \psi\right\rangle \right)
\]
for all elements $g$ in $\Lambda_{\ell,n}$.

\bigskip

\noindent\textbf{Question:} \textit{But which such invariants are physically
meaningful?}\ 

\bigskip

We begin to try to answer this question by noting that the only way to extract
information from a quantum system is through quantum measurement. \ Thus, if
we wish to extract information about quantum knot type from a quantum knot
system $Q\left(  \mathcal{K}^{(\ell,n)},\Lambda_{\ell,n}\right)  $\textit{,
}we of necessity must make a measurement with respect to some observable.
\ But what kind of observable? \ 

\bigskip

With this in mind, we will now describe quantum measurement from a different,
but nonetheless equivalent perspective, than that which is usually given in
standard texts on quantum mechanics.\footnote{In this paper, we will focus
only on von Neumann quantum measurement. \ We will discuss more general POVM
approach to quantum knot invariants in a later paper.} \ For knot theorists
who might not be familiar with standard quantum measurement, we have included
in the figure below a brief summary of quantum measurement.\footnote{For
readers unfamiliar with quantum measurement, there are many references, for
example, \cite{Helstrom1}, \cite{Lomonaco3} \cite{Nielsen1}, \cite{Peres1},
\cite{Sakuri1}, and \cite{Shankar1}.} \ 

\bigskip%

\begin{center}
\fbox{\includegraphics[
height=3.0277in,
width=4.0274in
]%
{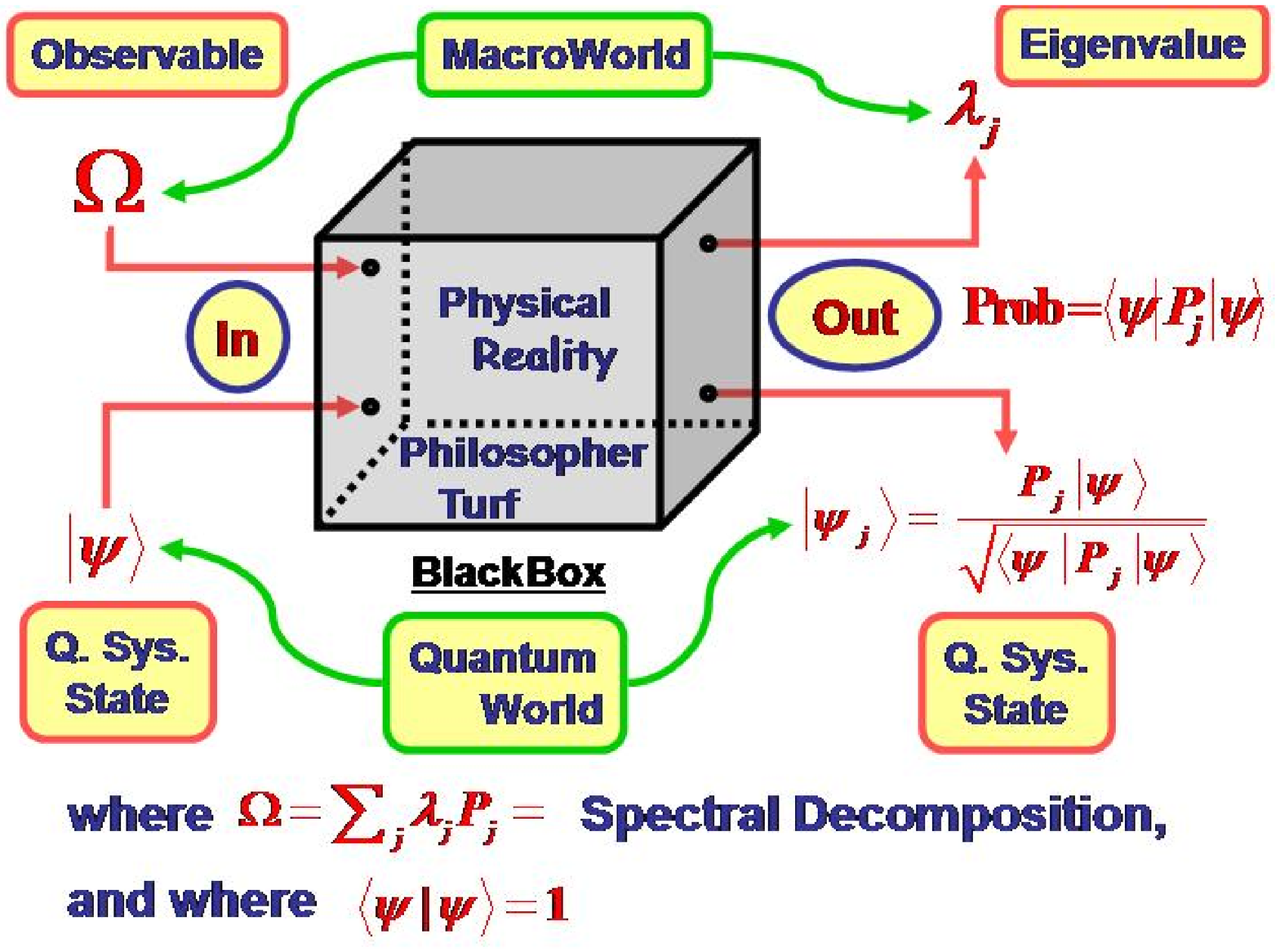}%
}\\
\textbf{Von Neumann measurement.}%
\end{center}

\bigskip

Let $\Omega$ be an observable for a quantum system $Q\left(  \mathcal{K}%
^{(\ell,n)},\Lambda_{\ell,n}\right)  $, i.e., a Hermitian (self-adjoint)
linear operator on the Hilbert space $\mathcal{K}^{(\ell,n)}$. \ Moreover,
let
\[
\Omega=\sum_{j=1}^{m}\lambda_{j}P_{j}%
\]
be the spectral decomposition of the observable $\Omega$, where $\lambda_{j}$
is the $j$-th eigenvalue of $\Omega$, and where $P_{j}$ is the corresponding
projection operator for the associated eigenspace $V_{j}$.

\bigskip

Let $\mathcal{P}_{\Omega}^{(\ell,n)}$ denote the \textbf{set of all
probability distributions on the spectrum} $\left\{  \lambda_{1},\lambda
_{2},\ldots,\lambda_{m}\right\}  $ of $\Omega$, i.e.,%
\[
\mathcal{P}_{\Omega}^{(\ell,n)}=\left\{  p:\left\{  \lambda_{1},\lambda
_{2},\ldots,\lambda_{m}\right\}  \longrightarrow\left[  0,1\right]
:\sum_{j=1}^{m}p\left(  \lambda_{j}\right)  =1\right\}
\]

We will call the probability distributions $p$ of $\mathcal{P}_{\Omega}%
^{(\ell,n)}$ \textbf{stochastic sources}.

\bigskip\bigskip

Then each observable $\Omega$ uniquely determines a map%
\[%
\begin{array}
[c]{cccc}%
\widehat{\Omega}: & \mathcal{K}^{(\ell,n)} & \longrightarrow & \mathcal{P}%
_{\Omega}^{(\ell,n)}\\
& \left\vert \psi\right\rangle  & \longmapsto & p
\end{array}
\]
from quantum knots to stochastic sources on the spectrum of $\Omega$ given by%
\[
p_{j}\left(  \left\vert \psi\right\rangle \right)  =\frac{\left\langle
\psi\left\vert P_{j}\right\vert \psi\right\rangle }{\sqrt{\left\langle
\psi|\psi\right\rangle }}\text{.}%
\]

\bigskip

Thus, what is seen, when a quantum system $Q$ in state $\left\vert
\psi\right\rangle $ is measured with respect to an observable $\Omega$, is a
random sample from the stochastic source $\widehat{\Omega}\left(  \left\vert
\psi\right\rangle \right)  $. \ But under what circumstances is such a random
sample a quantum knot invariant?

\bigskip

Our answer to this question is that quantum knots $\left\vert \psi
_{1}\right\rangle $ and $\left\vert \psi_{2}\right\rangle $ of the same knot
$\left(  \ell,n\right)  $-type must produce random samples from the same
stochastic source when measured with respect to the observable $\Omega$.
\ This answer is captured by the following definition:\bigskip

\begin{definition}
Let $Q\left(  \mathcal{K}^{(\ell,n)},\Lambda_{\ell,n}\right)  $ be a quantum
knot system, and let $\Omega$ be an observable on $\mathcal{K}^{(\ell,n)}$
with spectral decomposition
\[
\Omega=\sum_{j=1}^{m}\lambda_{j}P_{j}\text{ .}%
\]
Then the observable $\Omega$ is said to be a \textbf{quantum knot }$\left(
\ell,n\right)  $\textbf{-invariant} provided%
\[
\left\langle \psi\left\vert UP_{j}U^{-1}\right\vert \psi\right\rangle
=\left\langle \psi\left\vert P_{j}\right\vert \psi\right\rangle
\]
for all quantum knots $\left\vert \psi\right\rangle \in\mathcal{K}^{(\ell,n)}%
$, for all $U\in\Lambda_{\ell,n}$, and for all projectors $P_{j}$.
\end{definition}

\bigskip

\begin{theorem}
Let $Q\left(  \mathcal{K}^{(\ell,n)},\Lambda_{\ell,n}\right)  $ and $\Omega$
be as given in the above definition. Then the following statements are equivalent:

\begin{itemize}
\item[\textbf{1)}] The observable $\Omega$ is a quantum knot $\left(
\ell,n\right)  $-invariant

\item[\textbf{2)}] $\left[  U,P_{j}\right]  =0$ for all $U\in\Lambda_{\ell,n}$
and for all $P_{j}$.

\item[\textbf{3)}] $\left[  U,\Omega\right]  =0$ for all $U\in\Lambda_{\ell
,n}$,
\end{itemize}

\noindent where $\left[  A,B\right]  $ denotes the commutator $AB-BA$ of
operators $A$ and $B$.
\end{theorem}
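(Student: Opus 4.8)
The plan is to prove the cycle of implications by establishing $2 \Leftrightarrow 3$ together with $1 \Leftrightarrow 2$; the routine directions are purely algebraic, and the only substantive step is $1 \Rightarrow 2$, which promotes an equality of expectation values to an equality of operators.

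First I would handle $2 \Rightarrow 1$, which is immediate: if $[U,P_j]=0$ then $UP_jU^{-1}=P_jUU^{-1}=P_j$, so the two matrix elements coincide for every $|\psi\rangle$. For the converse $1 \Rightarrow 2$, fix $U\in\Lambda_{\ell,n}$ and a projector $P_j$ and set $C=UP_jU^{-1}-P_j$. Since $P_j$ is self-adjoint and $U$ is unitary (so $U^{-1}=U^{\dagger}$), the operator $UP_jU^{-1}$ is self-adjoint, and hence so is $C$. Statement 1 says exactly that $\langle\psi|C|\psi\rangle=0$ for every $|\psi\rangle\in\mathcal{K}^{(\ell,n)}$. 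Here it is essential that a quantum knot may be an arbitrary state of the Hilbert space, so that $|\psi\rangle$ ranges over all of $\mathcal{K}^{(\ell,n)}$ and not merely over the basis vectors $|K\rangle$. Because the scalars are complex, the polarization identity
\[
4\langle\phi|C|\psi\rangle=\sum_{k=0}^{3}i^{k}\langle\psi+i^{k}\phi|C|\psi+i^{k}\phi\rangle
\]
expresses every matrix element of $C$ as a linear combination of diagonal expectation values; since each of the latter vanishes, we obtain $\langle\phi|C|\psi\rangle=0$ for all $\phi,\psi$, whence $C=0$, i.e. $UP_jU^{-1}=P_j$, which is precisely $[U,P_j]=0$.

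Next I would establish $2 \Leftrightarrow 3$ from the spectral decomposition $\Omega=\sum_{j}\lambda_jP_j$. The implication $2\Rightarrow 3$ follows by linearity, since $[U,\Omega]=\sum_j\lambda_j[U,P_j]=0$. For $3\Rightarrow 2$ I would use that the eigenvalues $\lambda_1,\ldots,\lambda_m$ are distinct, so each spectral projector is a polynomial in $\Omega$ by Lagrange interpolation,
\[
P_j=\prod_{k\neq j}\frac{\Omega-\lambda_kI}{\lambda_j-\lambda_k},
\]
and therefore any operator commuting with $\Omega$ commutes with every $P_j$; in particular $[U,\Omega]=0$ forces $[U,P_j]=0$ for all $j$. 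Together these four implications yield $1\Leftrightarrow 2\Leftrightarrow 3$, which is the claimed equivalence.

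The hard part is the single step $1\Rightarrow 2$, and the key is to recognize that it is not merely formal bookkeeping: equality of the quadratic forms $\langle\psi|UP_jU^{-1}|\psi\rangle$ and $\langle\psi|P_j|\psi\rangle$ pins down the operators only because we work over $\mathbb{C}$ and because $|\psi\rangle$ is allowed to range over the entire Hilbert space. (Over $\mathbb{R}$ the polarization identity fails and one genuinely needs the self-adjointness of $C$; over $\mathbb{C}$ the conclusion is automatic.) Everything else reduces to linearity and the elementary functional calculus supplied by the spectral decomposition.
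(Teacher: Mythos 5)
Your proof is correct and complete. The paper itself states this theorem without proof (it is presented as following directly from the definition of a quantum knot $\left(\ell,n\right)$-invariant), so there is no argument of the authors' to compare against; your write-up supplies exactly the details the paper leaves implicit. The two substantive points are the ones you isolate: first, the promotion of the equality of expectation values $\langle\psi|UP_jU^{-1}|\psi\rangle=\langle\psi|P_j|\psi\rangle$ to the operator identity $UP_jU^{-1}=P_j$, which over $\mathbb{C}$ follows from polarization applied to $C=UP_jU^{-1}-P_j$, and which does require (as you correctly flag) that quantum knots are arbitrary states of $\mathcal{K}^{(\ell,n)}$ rather than just the basis states $|K\rangle$ --- this is consistent with the paper, where superpositions such as $\frac{1}{\sqrt{2}}\left(|K_1\rangle+|K_2\rangle\right)$ are explicitly quantum knots; second, the implication $[U,\Omega]=0\Rightarrow[U,P_j]=0$, which needs the distinctness of the eigenvalues $\lambda_1,\ldots,\lambda_m$ in the spectral decomposition so that each $P_j$ is a polynomial in $\Omega$ via Lagrange interpolation. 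One cosmetic point you could make explicit: the definition quantifies over states, i.e.\ unit vectors, while the polarization identity uses the unnormalized vectors $\psi+i^k\phi$; the homogeneity $\langle c\psi|C|c\psi\rangle=|c|^2\langle\psi|C|\psi\rangle$ closes this gap immediately, so it is not a genuine defect.
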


\bigskip

The remaining half of this section is devoted to finding an answer to the
following question:

\bigskip

\noindent\textbf{Question:} \ \textit{How do we find observables which are
quantum knot invariants?}

\bigskip

One answer to this question is the following theorem, which is an almost
immediate consequence of the definition of a minimum invariant subspace of
$\mathcal{K}^{(\ell,n)}$:

\bigskip

\begin{theorem}
Let $Q\left(  \mathcal{K}^{(\ell,n)},\Lambda_{\ell,n}\right)  $ be a quantum
knot system, and let
\[
\mathcal{K}^{(\ell,n)}=\bigoplus_{s}W_{s}%
\]
be a decomposition of the representation%
\[
\Lambda_{\ell,n}\times\mathcal{K}^{(\ell,n)}\longrightarrow\mathcal{K}%
^{(\ell,n)}%
\]
into irreducible representations of the ambient group $\Lambda_{\ell,n}$.
\ Then, for each $s$, the projection operator $P_{s}$ for the subspace $W_{s}$
is an observable which is a quantum knot $\left(  \ell,n\right)  $-invariant.
\end{theorem}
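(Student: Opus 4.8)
The plan is to reduce the claim to the commutation criterion established in the preceding theorem, namely that an observable $\Omega$ is a quantum knot $(\ell,n)$-invariant if and only if $[U,\Omega]=0$ for every $U\in\Lambda_{\ell,n}$. Since each $P_{s}$ is the orthogonal projection onto a subspace, it is automatically self-adjoint and idempotent, hence a bona fide observable; so the entire burden of the proof is to verify that $P_{s}$ commutes with every element of the ambient group.

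First I would recall that the action of $\Lambda_{\ell,n}$ on $\mathcal{K}^{(\ell,n)}$ is a unitary representation. This is immediate from the construction of the Hilbert space: each generating move permutes the standard basis $\left\{  \left\vert K\right\rangle :K\in\mathbb{K}^{(\ell,n)}\right\}  $, so every $U\in\Lambda_{\ell,n}$ is a permutation matrix and in particular unitary. By hypothesis the $W_{s}$ are the summands of a decomposition into irreducible subrepresentations, so each $W_{s}$ is invariant: $UW_{s}=W_{s}$ for all $U\in\Lambda_{\ell,n}$.

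The key step is the standard observation that the orthogonal projection onto an invariant subspace of a unitary representation is intertwining. Because $U$ is unitary and $W_{s}$ is $U$-invariant, the orthogonal complement $W_{s}^{\perp}$ is also $U$-invariant. Writing an arbitrary $\left\vert \psi\right\rangle =\left\vert w\right\rangle +\left\vert w^{\prime}\right\rangle $ with $\left\vert w\right\rangle \in W_{s}$ and $\left\vert w^{\prime}\right\rangle \in W_{s}^{\perp}$, one has $U\left\vert w\right\rangle \in W_{s}$ and $U\left\vert w^{\prime}\right\rangle \in W_{s}^{\perp}$, so that $P_{s}U\left\vert \psi\right\rangle =U\left\vert w\right\rangle =UP_{s}\left\vert \psi\right\rangle $. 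Hence $P_{s}U=UP_{s}$ for every $U\in\Lambda_{\ell,n}$, i.e. $[U,P_{s}]=0$.

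Applying the preceding theorem with $\Omega=P_{s}$ (whose spectral projectors are $P_{s}$ and $I-P_{s}$, both of which commute with every $U$), I would conclude that $P_{s}$ is a quantum knot $(\ell,n)$-invariant. I do not expect a genuine obstacle here; the only point requiring care is to keep the $W_{s}$ fixed as the concrete subspaces furnished by the chosen decomposition, since when isomorphic irreducible summands occur the decomposition, and therefore each individual $P_{s}$, is not canonical, even though invariance of each fixed $W_{s}$ (and hence the commutation) holds regardless. Irreducibility itself is not needed for the commutation argument; it serves only to make the resulting invariants as fine as possible.
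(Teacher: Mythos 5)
Your proposal is correct and is exactly the argument the paper has in mind: the paper offers no written proof, asserting only that the theorem is an ``almost immediate consequence'' of the definition of a minimal invariant subspace, and your chain of reasoning --- each $W_{s}$ is invariant under the unitary action, hence so is $W_{s}^{\perp}$, hence $\left[  U,P_{s}\right]  =0$ for all $U\in\Lambda_{\ell,n}$, and the preceding commutation-criterion theorem finishes the job --- is the standard filling-in of that remark. Your side observations (that irreducibility is not actually needed for the commutation, only invariance, and that the non-uniqueness of the decomposition when isomorphic summands occur is harmless) are also accurate.
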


\bigskip

Yet another way of finding quantum knot invariants is given by the following
theorem:\bigskip

\begin{theorem}
Let $Q\left(  \mathcal{K}^{(\ell,n)},\Lambda_{\ell,n}\right)  $ be a quantum
knot system, and let $\Omega$ be an observable on the Hilbert space
$\mathcal{K}^{(\ell,n)}$. \ Let $St\left(  \Omega\right)  $ be the stabilizer
subgroup for $\Omega$, i.e.,
\[
St\left(  \Omega\right)  =\left\{  U\in\Lambda_{\ell,n}:U\Omega U^{-1}%
=\Omega\right\}  \text{ .}%
\]
Then the observable%
\[
\sum_{U\in\Lambda_{\ell,n}/St\left(  \Omega\right)  }U\Omega U^{-1}%
\]
is a quantum knot $n$-invariant, where $\sum\limits_{U\in\Lambda_{\ell
,n}/St\left(  \Omega\right)  }U\Omega U^{-1}$ denotes a sum over a complete
set of coset representatives for the stabilizer subgroup $St\left(
\Omega\right)  $ of the ambient group $\Lambda_{\ell,n}$.
\end{theorem}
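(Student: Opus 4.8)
The plan is to reduce the statement to the equivalence theorem proved just above. Write $\Theta=\sum_{U\in\Lambda_{\ell,n}/St(\Omega)}U\Omega U^{-1}$ for the displayed operator. By criterion (3) of that theorem, an observable is a quantum knot $(\ell,n)$-invariant precisely when it commutes with every element of $\Lambda_{\ell,n}$; so it suffices to verify two things: that $\Theta$ is genuinely an observable (that is, self-adjoint), and that $[V,\Theta]=0$ for all $V\in\Lambda_{\ell,n}$. Since $\Lambda_{\ell,n}$ is finite, the sum is finite and both checks are elementary, but before either of them I must first confirm that $\Theta$ is even well defined, as it is written as a sum over a choice of coset representatives.

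For well-definedness and self-adjointness, recall that each $U\in\Lambda_{\ell,n}$ acts as a unitary on $\mathcal{K}^{(\ell,n)}$, and conjugation by a unitary preserves self-adjointness, so each term $U\Omega U^{-1}$ is Hermitian whenever $\Omega$ is; a finite sum of Hermitian operators is Hermitian, and hence $\Theta$ is an observable. For well-definedness, I would observe that the map $U\mapsto U\Omega U^{-1}$ is constant on each left coset of $St(\Omega)$: if $s\in St(\Omega)$ then $s\Omega s^{-1}=\Omega$, so that
\[
(Us)\,\Omega\,(Us)^{-1}=U\,(s\Omega s^{-1})\,U^{-1}=U\Omega U^{-1}.
\]
Thus each summand depends only on the coset $U\,St(\Omega)$, and replacing one complete set of coset representatives by another leaves $\Theta$ unchanged.

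For invariance under $\Lambda_{\ell,n}$, fix $V\in\Lambda_{\ell,n}$ and compute
\[
V\Theta V^{-1}=\sum_{U}V U\Omega U^{-1}V^{-1}=\sum_{U}(VU)\,\Omega\,(VU)^{-1},
\]
where $U$ ranges over the chosen coset representatives. Left multiplication by $V$ permutes the left cosets of $St(\Omega)$, since $U\,St(\Omega)\mapsto(VU)\,St(\Omega)$ is a bijection of $\Lambda_{\ell,n}/St(\Omega)$ with inverse given by left multiplication by $V^{-1}$. Because the summand $(VU)\Omega(VU)^{-1}$ depends only on the coset $(VU)\,St(\Omega)$, the reindexing $U'=VU$ simply runs over another complete set of coset representatives, so the sum is carried to itself: $V\Theta V^{-1}=\Theta$. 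Hence $[V,\Theta]=0$ for every $V\in\Lambda_{\ell,n}$, and the equivalence theorem then yields that the observable $\Theta$ is a quantum knot $(\ell,n)$-invariant, as claimed.

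I do not expect a deep difficulty here: this is the standard group-averaging (orbit-summing) construction. The only point requiring genuine care is the bookkeeping in the final step, where one must confirm both that left multiplication by $V$ actually permutes the cosets and that the summand is a well-defined function of the coset (constant on $St(\Omega)$-classes), so that the re-indexing $U'=VU$ returns exactly $\Theta$ rather than a sum over a different or overcounted set.
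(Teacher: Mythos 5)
Your proof is correct, and it organizes the argument differently from the paper's. The paper first observes that the full group average $\sum_{g\in\Lambda_{\ell,n}}g\Omega g^{-1}$ is invariant, since conjugation by any $g'\in\Lambda_{\ell,n}$ merely permutes the summands, and then identifies the coset sum as $\frac{1}{\left\vert St(\Omega)\right\vert}\sum_{g\in\Lambda_{\ell,n}}g\Omega g^{-1}$, so that invariance of the coset sum follows from invariance of the full sum. You bypass the full group sum entirely: you show the summand $U\Omega U^{-1}$ is constant on left cosets of $St(\Omega)$, and that left multiplication by $V$ permutes those cosets, so conjugation by $V$ reindexes your sum into itself; the commutator criterion (statement 3) of the preceding equivalence theorem then finishes the argument. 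Both proofs ultimately rest on the same two facts — constancy of $U\mapsto U\Omega U^{-1}$ on cosets, and reindexing under left translation — but your organization buys three things the paper's does not spell out: it makes explicit the well-definedness of a sum over coset representatives, which the paper silently relies on when it collapses the full sum into $\left\vert St(\Omega)\right\vert$ copies of the coset sum; it verifies that the averaged operator is Hermitian, hence genuinely an observable; and it requires only that $St(\Omega)$ have finite index in $\Lambda_{\ell,n}$ rather than that $\Lambda_{\ell,n}$ itself be finite (immaterial here, since the bounded ambient group is finite, but a cleaner hypothesis).
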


\begin{proof}
The observable $\sum_{g\in\Lambda_{\ell,n}}g\Omega g^{-1}$is obviously an
quantum knot $n$-invariant, since $g^{\prime}\left(  \sum_{g\in\Lambda
_{\ell,n})}g\Omega g^{-1}\right)  g^{\prime-1}=\sum_{g\in\Lambda_{\ell,n}%
}g\Omega g^{-1}$ for all $g^{\prime}\in\Lambda_{\ell,n}$. \ If we let
$\left\vert St\left(  \Omega\right)  \right\vert $ denote the order of
$\left\vert St\left(  \Omega\right)  \right\vert $, and if we let $c_{1}%
,c_{2},\ldots,c_{p}$ denote a complete set of coset representatives of the
stabilizer subgroup $St\left(  \Omega\right)  $, then $\sum_{j=1}^{p}%
c_{j}\Omega c_{j}^{-1}=\frac{1}{\left\vert St\left(  \Omega\right)
\right\vert }\sum_{g\in\Lambda_{\ell,n}}g\Omega g^{-1}$ is also a quantum knot
invariant. \ 
\end{proof}

\bigskip

Here are two other ways of creating observables which are quantum $\left(
\ell,n\right)  $-knot invariants:\bigskip

\begin{itemize}
\item For each lattice knot $K\in\mathbb{K}^{\left(  \ell,n\right)  }$, we
define the observable $P_{\Lambda_{\ell,n}K}$, called the \textbf{orbit
projector} of $K$ as%
\[
P_{\Lambda_{\ell,n}K}=\sum_{K^{\prime}\in\Lambda_{\ell,n}K}\left\vert
K^{\prime}\right\rangle \left\langle K^{\prime}\right\vert
\]

\item For each knot invariant $I:\mathbb{K}\longrightarrow\mathbb{C}$, we
define the observable%
\[
I^{\left(  \ell,n\right)  }=\sum_{K\in\mathbb{K}^{(\ell,n)}}I\left(  K\right)
\left\vert K\right\rangle \left\langle K\right\vert
\]

\end{itemize}

\bigskip

\begin{remark}
We leave, as an exercise for the reader, the task of verifying that the
subspace of $\mathcal{K}^{\left(  \ell,n\right)  }$ associated with the
projection operator $P_{\Lambda_{\ell,n}K}$ is not an irreducible
representation of the lattice ambient group $\Lambda_{\ell,n}$.
\end{remark}

\bigskip

We end this section with an example of a quantum knot invariant:

\bigskip

\begin{example}
Let $K$ be the $\left(  0,3\right)  $-lattice knot $K=%
\raisebox{-0.1609in}{\includegraphics[
height=0.4186in,
width=0.4186in
]%
{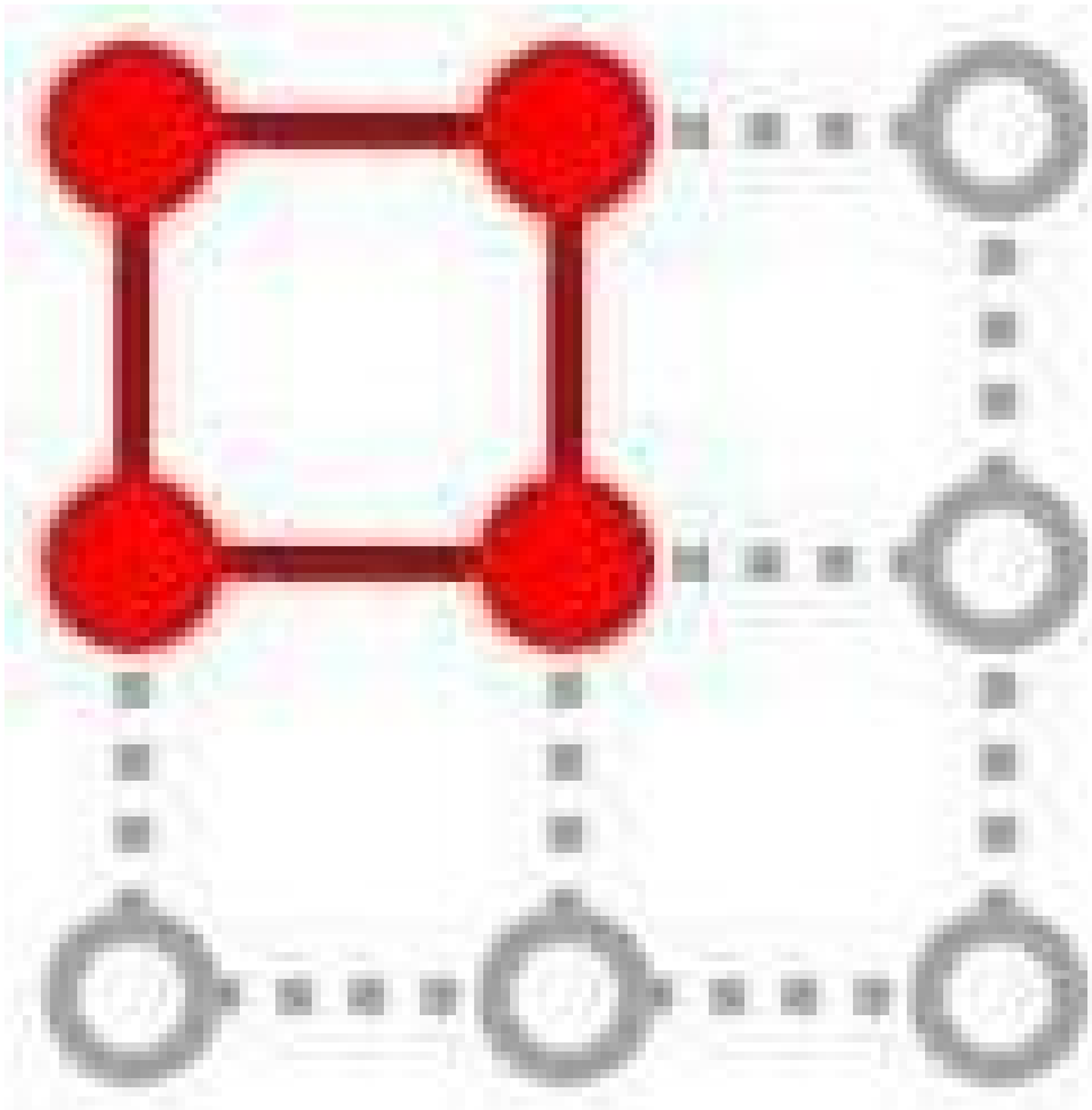}%
}%
$. \ Then the following observable $\Omega=P_{\Lambda_{0,3}K}$ is an example
of a quantum knot $\left(  0,3\right)  $-invariant for the quantum knot system
$Q\left(  K^{(0,3)},\Lambda_{0,3}\right)  $ :%
\[
\Omega=%
{\displaystyle\sum\limits_{g\in\Lambda_{0,3}}}
\left\vert g%
\raisebox{-0.1609in}{\includegraphics[
height=0.4186in,
width=0.4186in
]%
{qknot4.ps}%
}%
\right\rangle \left\langle g%
\raisebox{-0.1609in}{\includegraphics[
height=0.4186in,
width=0.4186in
]%
{qknot4.ps}%
}%
\right\vert
\]

\bigskip
\end{example}

\begin{remark}
For yet another approach to quantum knot measurement, we refer the reader to
the brief discussion on quantum knot tomography found in item 11) in the
conclusion of this paper. \ 
\end{remark}

\bigskip

\section{Conclusion: Open questions and future directions}

\bigskip

\bigskip

There are many possible open questions and future directions for research.
\ We mention only a few.

\bigskip

\begin{itemize}
\item[\textbf{1)}] What is the exact structure of each of the lattice ambient
groups $\Lambda_{\ell}$, $\widetilde{\Lambda}_{\ell}$, $\Lambda_{\ell,n}$,
$\widetilde{\Lambda}_{\ell,n}$, and their direct limits? \ Can one write down
an explicit presentation for for each of theses groups? for their direct
limits? \ The fact that each of the lattice ambient groups is generated by
involutions suggests that each may be a Coxeter group. \ Are the lattice
ambient groups Coxeter groups? \ How are the lattice ambient groups related to
the mosaic ambient groups $\mathbb{A}\left(  n\right)  $ found in
\cite{Lomonaco1}. \ Are they in some since the same?\bigskip

\item[\textbf{2)}] Unlike classical knots, quantum knots can exhibit the
non-classical behavior of quantum superposition and quantum entanglement.
\ Are topological entanglement and quantum entanglement related to one
another? \ If so, how? \ \bigskip

\item[\textbf{3)}] What other ways are there to distinguish quantum knots from
classical knots?\bigskip

\item[\textbf{4)}] Let $V_{K}\left(  t\right)  $ denote the Jones polynomial
of lattice $\left(  \ell,n\right)  $-knots $K\in\mathbb{K}^{(\ell,n)}$. \ Then
from Section 22 of this paper, we know that
\[
V^{\left(  \ell,n\right)  }\left(  t\right)  =\sum_{K\in\mathbb{K}^{(\ell,n)}%
}V_{K}\left(  t\right)  \left\vert K\right\rangle \left\langle K\right\vert
\]
is a family of quantum $\left(  \ell,n\right)  $-knot invariant observables
parameterized by the parameter $t$. \ We will refer to this parameterized
family $V^{\left(  \ell,n\right)  }\left(  t\right)  $ of observables simply
as the \textbf{Jones observable}. \ Can we use the Jones observable to create
an algorithmic improvement to the quantum algorithm given by Aharonov, Jones,
and Landau in \cite{Aharonov1}? \ (See also \cite{Lomonaco3}, \cite{Shor1}%
.)\bigskip

\item[\textbf{5)}] How does one create quantum knot observables that represent
other knot invariants such as, for example, the Vassiliev invariants?\bigskip

\item[\textbf{6)}] What is gained by extending the definition of quantum knot
observables to POVMs?\bigskip

\item[\textbf{7)}] What is gained by extending the definition of quantum knots
to mixed ensembles?\bigskip

\item[\textbf{8)}] Define the \textbf{lattice number} of a knot $k$ as the
smallest integer $n$ for which $k$ is representable as a lattice knot in the
$n$-bounded lattice $\mathcal{L}_{0,n}$. \ \ In general, how does one compute
the lattice number? \ Is the lattice number related to the crossing number of
a knot? \ How does one find an observable for the lattice number?\bigskip

\item[\textbf{9)}] Let $d\left(  \ell,n\right)  $ denote the dimension of the
Hilbert space $\mathcal{K}^{(\ell,n)}$ of $\left(  \ell,n\right)  $-th order
quantum knot. \ Find $d\left(  \ell,n\right)  $ for various values of $\ell$
an $n$. \ \ \bigskip

\item[\textbf{10)}] Consider the following alternate stronger definitions of
order $\left(  \ell,n\right)  $ quantum knot type and quantum knot type:
\bigskip\newline Let $Q\left(  \mathcal{K}^{(n)},\Lambda_{\ell,n}\right)  $ be
a quantum knot system, and let $\mathcal{U}\left(  \mathcal{K}^{(\ell
,n)}\right)  $ denote the Lie group of all unitary transformations on the
Hilbert space $\mathcal{K}^{(\ell,n)}$. \ Define the \textbf{continuous
ambient group} $\Lambda_{\ell,n}^{cont}$ as the smallest connected Lie
subgroup of $\mathcal{U}\left(  \mathcal{K}^{(\ell,n)}\right)  $ containing
the discrete ambient group $\Lambda_{\ell,n}$.\bigskip

\begin{proposition}
Let $\mathbb{S}$ denote the set of wiggle, wag, and tug generators of the
discrete ambient group $\Lambda_{\ell,n}$, and let $\lambda_{\ell,n}$ be the
Lie algebra generated by the elements of the set%
\[
\left\{  \ln_{P}\left(  g\right)  :g\in\mathbb{S}\right\}  \text{ ,}%
\]
where $\ln_{P}$ denotes the principal branch of the natural log on
$\mathcal{U}\left(  \mathcal{K}^{(\ell,n)}\right)  $. \ Then the continuous
ambient group is given by%
\[
\Lambda_{\ell,n}^{cont}=\exp\left(  \lambda_{\ell,n}\right)  \text{ .}%
\]

\end{proposition}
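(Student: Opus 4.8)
The plan is to work inside the compact Lie group $U\left(\mathcal{K}^{(\ell,n)}\right)$ with Lie algebra $\mathfrak{u}=\mathfrak{u}\left(\mathcal{K}^{(\ell,n)}\right)$ (the skew-Hermitian operators), and to identify $\Lambda_{\ell,n}^{cont}$ with the connected subgroup integrating $\lambda_{\ell,n}$. First I would record the local structure of the generators. By proposition 1, every $g\in\mathbb{S}$ is a product of disjoint transpositions, so in a suitable ordered basis $\eta^{-1}g\eta=\left(I_{r}\otimes\sigma_{1}\right)\oplus I_{d\left(\ell,n\right)-2r}$; the computation already carried out in the Hamiltonian section gives $\ln_{P}(g)=X_{g}$ with $X_{g}=\frac{i\pi}{2}\,\eta\left[\left(I_{r}\otimes(\sigma_{0}-\sigma_{1})\right)\oplus O\right]\eta^{-1}$. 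Thus each $X_{g}$ is skew-Hermitian with eigenvalues in $\{0,i\pi\}$, so $\exp(X_{g})=g$ and $\{\exp(tX_{g}):t\in\mathbb{R}\}$ is a circle (period $2$ in $t$). In particular $\lambda_{\ell,n}$ is a genuine Lie subalgebra of $\mathfrak{u}$.

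Next I would show that $\exp\left(\lambda_{\ell,n}\right)$ is a compact connected subgroup equal to the integral subgroup of $\lambda_{\ell,n}$. By the subalgebra--subgroup correspondence there is a unique connected immersed subgroup $H\le U\left(\mathcal{K}^{(\ell,n)}\right)$ with $\mathrm{Lie}(H)=\lambda_{\ell,n}$, and $H$ is generated by $\exp\left(\lambda_{\ell,n}\right)$. The key point is that $H$ is closed, hence compact: I would argue that the closure $\overline{H}$ is a compact connected subgroup whose Lie algebra still equals $\lambda_{\ell,n}$, so that $H=\overline{H}$. Once $H$ is compact and connected, surjectivity of the exponential map for compact connected Lie groups yields $\exp\left(\lambda_{\ell,n}\right)=H$, in particular a subgroup. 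Containment $\Lambda_{\ell,n}\subseteq\exp\left(\lambda_{\ell,n}\right)$ is then immediate, since each generator satisfies $g=\exp(X_{g})\in H$ and $H$ is a group, so the finite group $\Lambda_{\ell,n}=\langle\mathbb{S}\rangle$ lies in $H$; by minimality $\Lambda_{\ell,n}^{cont}\subseteq H$.

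For the reverse inclusion I would take any connected Lie subgroup $G$ with $\Lambda_{\ell,n}\subseteq G$ and show $\lambda_{\ell,n}\subseteq\mathrm{Lie}(G)$, so that $H\subseteq G$; applying this with $G=\Lambda_{\ell,n}^{cont}$ gives $H\subseteq\Lambda_{\ell,n}^{cont}$ and hence equality. The hard part will be exactly this minimality step, and I regard it as the main obstacle. It is genuinely \emph{false} in general that $g\in G$ forces $\ln_{P}(g)\in\mathrm{Lie}(G)$ for an arbitrary connected $G$: an involution can be reached along many distinct one-parameter subgroups of different winding, so the literal phrase ``smallest connected Lie subgroup containing $\Lambda_{\ell,n}$'' carries a non-uniqueness subtlety. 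I would resolve this by reading $\Lambda_{\ell,n}^{cont}$ as the subgroup generated by the canonical one-parameter groups $\exp(tX_{g})$ attached to the \emph{principal} logs of the generators, i.e. the connected subgroup whose Lie algebra is by construction forced to contain every $X_{g}$; its Lie algebra is then exactly $\lambda_{\ell,n}$.

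The two remaining technical points on which I expect to spend real effort are: (a) proving $\mathrm{Lie}(\overline{H})=\lambda_{\ell,n}$, that is, ruling out an irrational-winding enlargement of the Lie algebra, which I would obtain from the fact that every $X_{g}$ has eigenvalues in $i\pi\mathbb{Z}$, so on each common eigenblock the generated one-parameter groups already sit inside a closed subtorus and the group they generate is ``defined over $\mathbb{Q}$'' and therefore closed; and (b) invoking surjectivity of $\exp$ for the resulting compact connected group to pass from the integral subgroup $H$ to the explicit image $\exp\left(\lambda_{\ell,n}\right)$. With (a) and (b) in hand the two inclusions close up and give $\Lambda_{\ell,n}^{cont}=\exp\left(\lambda_{\ell,n}\right)$.
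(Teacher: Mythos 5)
You should know at the outset that the paper contains no proof of this proposition: it is stated bare in the conclusion (item 10), immediately after the definition of $\Lambda_{\ell,n}^{cont}$, so your write-up is being measured against the intended content rather than against an argument of the authors. Judged on those terms, your skeleton is right — compute $\ln_{P}(g)=i\pi P_{g}^{-}$ (the spectral projection onto the $(-1)$-eigenspace of the involution $g$), note $\exp(\ln_{P}(g))=g$, pass to the integral subgroup $H$ of $\lambda_{\ell,n}$, and chase the two inclusions — and your observation that "smallest connected Lie subgroup containing $\Lambda_{\ell,n}$" is ill-posed is correct and is the most substantive point you make. Already for a single involution the minimum fails to exist: in $U(2)$, with $P_{\pm}$ the eigenprojections of $\sigma_{1}$, the two circles $\{P_{+}+zP_{-}:|z|=1\}$ and $\{z^{2}P_{+}+zP_{-}:|z|=1\}$ are distinct minimal connected closed subgroups containing $\{I,\sigma_{1}\}$, with finite intersection, and the second does not contain the first (which is $\exp$ of the principal-log line). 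So the literal definition admits no unique minimum, and a connected subgroup containing $\Lambda_{\ell,n}$ need not contain $\exp(\lambda_{\ell,n})$; the proposition becomes provable only after the definition is read, as you propose, as the connected subgroup generated by the one-parameter groups $\exp(tX_{g})$ attached to the \emph{principal} logarithms — which is clearly what the authors intend, since these are exactly the one-parameter groups their Hamiltonians generate.

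The genuine gap is your step (a), and it is a gap both of correctness and of necessity. The argument you sketch — "the generated group is defined over $\mathbb{Q}$ and therefore closed" — is not valid: a skew-Hermitian matrix with integer entries can have incommensurable eigenvalues, e.g. $i\left(\begin{smallmatrix}1&1&0\\1&-1&0\\0&0&1\end{smallmatrix}\right)$ has spectrum $\{i\sqrt{2},-i\sqrt{2},i\}$, so rationality of matrix entries does not force one-parameter subgroups to close up. Moreover, while each generator $X_{g}$ has spectrum in $\{0,i\pi\}$, the elements of $\lambda_{\ell,n}$ are real combinations of iterated brackets of the $X_{g}$ and enjoy no such spectral constraint; in particular the center of $\lambda_{\ell,n}$, which is precisely where non-closedness could occur, is not controlled by your rationality remark. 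As written, $\mathrm{Lie}(\overline{H})=\lambda_{\ell,n}$ is unproven.

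Fortunately you do not need (a) at all, because surjectivity of $\exp$ does not require $H$ to be compact — only that its Lie algebra be of compact type. Since $\lambda_{\ell,n}\subseteq\mathfrak{u}\left(d(\ell,n)\right)$ inherits the $\mathrm{Ad}$-invariant inner product $-\mathrm{tr}(XY)$, it splits as $\mathfrak{z}\oplus[\lambda_{\ell,n},\lambda_{\ell,n}]$ with compact semisimple commutator; the universal cover of $H$ is then $\mathbb{R}^{a}\times\widetilde{S}$ with $\widetilde{S}$ compact by Weyl's theorem, $\exp$ is surjective on that product, and surjectivity descends along the covering homomorphism to $H$ itself. Hence $H=\exp(\lambda_{\ell,n})$ whether or not $H$ is closed in $U\left(\mathcal{K}^{(\ell,n)}\right)$, and combining this with the standard generation theorem (the group generated by the $\exp(tX_{g})$ is the integral subgroup of the Lie algebra they generate) closes both inclusions under the repaired definition. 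Replace (a) and (b) by this one observation and your proof is complete; as it stands, it stalls at (a).
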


We define two quantum $\left(  \ell,n\right)  $-order quantum knots
$\left\vert \psi_{1}\right\rangle $ and $\left\vert \psi_{2}\right\rangle $ to
be of the \textbf{same continuous knot }$\left(  \ell,n\right)  $%
-\textbf{type}, written
\[
\left\vert \psi_{2}\right\rangle \underset{\ell}{\overset{n}{\asymp}%
}\left\vert \psi_{2}\right\rangle \text{ \ ,}%
\]
provided there exists an element $g$ of the continuous ambient group
$\Lambda_{\ell,n}^{cont}$ which transforms $\left\vert \psi_{1}\right\rangle $
into $\left\vert \psi_{2}\right\rangle $, i.e., such that $g\left\vert
\psi_{1}\right\rangle =\left\vert \psi_{2}\right\rangle $. \ They are of the
\textbf{same continuous knot type}, written $\left\vert \psi_{1}\right\rangle
\asymp\left\vert \psi_{2}\right\rangle $, if there exist non-negative integers
$\ell^{\prime}$ and $n^{\prime}$ such that $\iota^{\ell^{\prime}}%
\raisebox{-0.0303in}{\includegraphics[
height=0.1332in,
width=0.1193in
]%
{red-refinement.ps}%
}%
^{n^{\prime}}\left\vert \psi_{1}\right\rangle \underset{\ell+\ell^{\prime}%
}{\overset{n+n^{\prime}}{\asymp}}\iota^{\ell^{\prime}}%
\raisebox{-0.0303in}{\includegraphics[
height=0.1332in,
width=0.1193in
]%
{red-refinement.ps}%
}%
^{n^{\prime}}\left\vert \psi_{2}\right\rangle $.\bigskip

\begin{conjecture}
Let $K_{1}$ and $K_{2}$ denote two $\left(  \ell,n\right)  $-order lattice
knots, and let $\left\vert K_{1}\right\rangle $ and $\left\vert K_{2}%
\right\rangle $ denote the corresponding $\left(  \ell,n\right)  $-order
lattice quantum knots. \ Then
\[
\left\vert K_{1}\right\rangle \underset{\ell}{\overset{n}{\asymp}}\left\vert
K_{2}\right\rangle \Longleftrightarrow K_{1}\underset{\ell}{\overset{n}{\sim}%
}K_{2}\text{ \ \ and \ \ }\left\vert K_{1}\right\rangle \asymp\left\vert
K_{2}\right\rangle \Longleftrightarrow K_{1}\sim K_{2}%
\]

\end{conjecture}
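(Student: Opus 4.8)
The plan is to reduce both biconditionals to a single statement at a fixed order $(\ell,n)$: that two basis vectors $\left\vert K_{1}\right\rangle$ and $\left\vert K_{2}\right\rangle$ lie in the same $\Lambda_{\ell,n}^{cont}$-orbit if and only if the lattice knots $K_{1}$ and $K_{2}$ lie in the same $\Lambda_{\ell,n}$-orbit. The forward implications $K_{1}\underset{\ell}{\overset{n}{\sim}}K_{2}\Rightarrow\left\vert K_{1}\right\rangle\underset{\ell}{\overset{n}{\asymp}}\left\vert K_{2}\right\rangle$ (and their unbounded analogues) are immediate, since $\Lambda_{\ell,n}$ is by definition a subgroup of $\Lambda_{\ell,n}^{cont}$, so every discrete knot move is already a continuous one. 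All the content lies in the converse: a continuous ambient motion that happens to carry one basis state exactly onto another must be realizable discretely.

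First I would establish the key commutation fact. By Proposition 1 every generator $g$ of $\Lambda_{\ell,n}$ is a product of disjoint transpositions, hence an involution with $g^{2}=\mathrm{id}$ and spectrum contained in $\{+1,-1\}$. Using the identity $\ln_{P}\sigma_{1}=\frac{i\pi}{2}(\sigma_{0}-\sigma_{1})$ from the section on Hamiltonians, the principal logarithm of such a $g$ is
\[
\ln_{P}g=\frac{i\pi}{2}\left(I-g\right),
\]
a degree-one polynomial in $g$. Now fix a lattice knot $K$ and consider its orbit projector $P_{\Lambda_{\ell,n}K}=\sum_{K'\in\Lambda_{\ell,n}K}\left\vert K'\right\rangle\left\langle K'\right\vert$. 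Since it projects onto the $\Lambda_{\ell,n}$-invariant subspace spanned by the orbit of $K$, it commutes with every $g\in\Lambda_{\ell,n}$; because $\ln_{P}g$ is a polynomial in $g$, it therefore commutes with each $\ln_{P}g$. Commutation with a fixed operator is preserved under linear combinations and Lie brackets, so $P_{\Lambda_{\ell,n}K}$ commutes with the whole Lie algebra $\lambda_{\ell,n}$ generated by the $\ln_{P}g$, and hence with $\exp(\lambda_{\ell,n})=\Lambda_{\ell,n}^{cont}$, invoking the Proposition of item 10 that identifies the continuous ambient group with this exponential.

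With this in hand the hard direction is a short expectation-value argument. Suppose $U\in\Lambda_{\ell,n}^{cont}$ satisfies $U\left\vert K_{1}\right\rangle=\left\vert K_{2}\right\rangle$. Since $U$ is unitary and $[U,P_{\Lambda_{\ell,n}K_{1}}]=0$,
\[
\left\langle K_{2}\right\vert P_{\Lambda_{\ell,n}K_{1}}\left\vert K_{2}\right\rangle=\left\langle K_{1}\right\vert U^{-1}P_{\Lambda_{\ell,n}K_{1}}U\left\vert K_{1}\right\rangle=\left\langle K_{1}\right\vert P_{\Lambda_{\ell,n}K_{1}}\left\vert K_{1}\right\rangle=1.
\]
But for orthonormal basis states $\left\langle K_{2}\right\vert P_{\Lambda_{\ell,n}K_{1}}\left\vert K_{2}\right\rangle=\sum_{K'\in\Lambda_{\ell,n}K_{1}}\left\vert\left\langle K_{2}|K'\right\rangle\right\vert^{2}$ equals $1$ exactly when $K_{2}\in\Lambda_{\ell,n}K_{1}$ and equals $0$ otherwise. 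Hence $K_{1}\underset{\ell}{\overset{n}{\sim}}K_{2}$, proving the first biconditional. The second, $\left\vert K_{1}\right\rangle\asymp\left\vert K_{2}\right\rangle\Leftrightarrow K_{1}\sim K_{2}$, then follows by applying this fixed-order equivalence at each finite stage of the direct system: both $\asymp$ and $\sim$ are defined by pushing $K_{1},K_{2}$ forward by finitely many applications of the refinement injection and the inclusion $\iota$ and then testing equivalence at order $(\ell+\ell',n+n')$; since refinement and $\iota$ carry basis states to basis states, continuous and discrete equivalence coincide at that order, and the direct limits agree.

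The main obstacle I anticipate is not the algebra above but justifying its standing ingredients. The clean identity for $\ln_{P}g$ and the commutation chain depend on the Proposition of item 10 that $\Lambda_{\ell,n}^{cont}=\exp(\lambda_{\ell,n})$ with $\lambda_{\ell,n}$ generated by the principal logarithms of the generators; one must verify that this branch choice genuinely yields the smallest connected Lie subgroup containing $\Lambda_{\ell,n}$, and that no additional continuous motions arise which the orbit-projector argument fails to constrain. A secondary point to check carefully is the direct-limit bookkeeping for $\asymp$ versus $\sim$: one must confirm that the compatibility actually required is only that refinement and $\iota$ send basis states to basis states, which holds, and not the stronger, still-conjectural refinement group morphism of Conjectures 1A--1C.
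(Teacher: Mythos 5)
The first thing to say is that the paper contains no proof of this statement: it is stated as a Conjecture in item 10 of the Conclusion, and the authors explicitly leave it open (``Is the above conjecture true?''). So your proposal is not competing with an argument in the paper; it is a proposed resolution of an open question, and it must be judged on its own terms. On those terms your central lemma is correct and is, I believe, exactly the right idea: each generator $g$ is an involution, so $\ln_{P}g=\tfrac{i\pi}{2}(I-g)$ is a polynomial in $g$; hence every orbit projector $P_{\Lambda_{\ell,n}K}$, which commutes with all of $\Lambda_{\ell,n}$, also commutes with each $\ln_{P}g$, with the Lie algebra $\lambda_{\ell,n}$ they generate (the commutant is closed under real linear combinations and brackets), with $\exp(\lambda_{\ell,n})$, and with all finite products of such exponentials. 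The expectation-value (equivalently, eigenvector) step then cleanly forces $K_{2}\in\Lambda_{\ell,n}K_{1}$ whenever some $U\in\Lambda_{\ell,n}^{cont}$ carries $|K_{1}\rangle$ to $|K_{2}\rangle$. Note, moreover, that you need strictly less than the paper's Proposition $\Lambda_{\ell,n}^{cont}=\exp(\lambda_{\ell,n})$, which the paper also states without proof: the group generated by $\exp(\lambda_{\ell,n})$ is itself a connected Lie subgroup of $\mathcal{U}(\mathcal{K}^{(\ell,n)})$ (the integral subgroup of the subalgebra $\lambda_{\ell,n}$) containing every generator $g=\exp(\ln_{P}g)$, so minimality in the definition of $\Lambda_{\ell,n}^{cont}$ already traps $\Lambda_{\ell,n}^{cont}$ inside it, and hence inside the commutant of every orbit projector. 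This disposes of your own worry that ``additional continuous motions'' might escape the orbit-projector constraint; the only residual foundational issue is whether a smallest connected Lie subgroup exists at all (distinct circle subgroups can pass through the same involution with finite intersection), but under any interpretation consistent with the paper's Proposition your argument is unaffected.

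The weak point is your last step, the passage from the fixed-order biconditional to $|K_{1}\rangle\asymp|K_{2}\rangle\Longleftrightarrow K_{1}\sim K_{2}$. Saying ``the direct limits agree'' because refinement and $\iota$ send basis states to basis states is not by itself enough: the continuous relation $\asymp$ is stabilized by mixed pushforwards (both $\iota$ and the refinement injection simultaneously), whereas the paper characterizes $\sim$ for bounded lattice knots via $\iota$-only pushforwards (knot $(\ell,\infty)$-type) or refinement-only pushforwards (knot $(\infty,n)$-type). To reconcile the two existential quantifiers you must invoke the paper's (also unproven) Proposition that these bounded notions coincide with $\sim$, together with two small but genuine observations: a bounded-group equivalence is in particular an unbounded-group equivalence, and conversely any single unbounded equivalence, being a finite product of generators, is supported in some finite box and so becomes a bounded equivalence after enough applications of $\iota$. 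Behind all of this sits the orbit-level refinement coherence of Appendix C (needed even for $\sim$ to be transitive), though not the full group-morphism Conjectures 1A--1C, so your instinct on that point is right. With these citations made explicit, your argument does settle the conjecture affirmatively relative to the paper's standing framework propositions --- which is more than the paper itself does.
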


Thus, if this conjecture is true, these two stronger definitions of quantum
knot $\left(  \ell,n\right)  $-type and quantum knot type fully capture all of
classical tame knot theory. \ Moreover, these two stronger definitions have a
number of advantages over the weaker definitions, two of which are the
following: \ \bigskip

\item Under the Hamiltonians associated with the generators $\mathbb{S}$, the
Schroedinger equation determines a connected continuous path in $\mathcal{K}%
^{(\ell,n)}$ consisting of $\left(  \ell,n\right)  $-th order quantum knots,
all of the same quantum continuous knot $\left(  \ell,n\right)  $-type.
\ \bigskip

\item Although the following two quantum knots are not of the same discrete
knot $\left(  \ell,n\right)  $-type, they are however of the same continuous
knot $\left(  \ell,n\right)  $-type%
\[
\left\vert \psi_{1}\right\rangle =\left\vert
\raisebox{-0.1609in}{\includegraphics[
height=0.4186in,
width=0.4186in
]%
{qknot-type2b.ps}%
}%
\right\rangle \text{ \ \ and \ \ }\left\vert \psi_{2}\right\rangle =\frac
{1}{\sqrt{2}}\left(  \ \overset{\mathstrut}{\left\vert
\raisebox{-0.1609in}{\includegraphics[
height=0.4186in,
width=0.4186in
]%
{qknot-type2b.ps}%
}%
\right\rangle }+\left\vert
\raisebox{-0.1609in}{\includegraphics[
height=0.4186in,
width=0.4186in
]%
{qknot-type2a.ps}%
}%
\right\rangle \ \right)
\]
\bigskip Is the above conjecture true? \ If so, what is the structure of the
continuous ambient group $\Lambda_{\ell,n}^{cont}$ ? \ What are its
irreducible representations?\bigskip

\item[\textbf{11)}] Can one create a more continuous definition of a quantum
knot by quantizing the classical electromagnetic knots found in
\cite{Lomonaco2}? \ This question was the original motivation for this paper
and for the paper \cite{Lomonaco1}.\bigskip

\item[\textbf{12)}] Quantum knot tomography: Given repeated copies of an
$\left(  \ell,n\right)  $-th order quantum knot $\left\vert \psi\right\rangle
$, how does one employ the method of quantum state tomography\cite{Leonhardt1}
to determine $\left\vert \psi\right\rangle $? \ Most importantly, how can this
be done with the greatest efficiency? \ For example, given repeated copies of
the unknown quantum knot basis state
\[
\left\vert \psi\right\rangle =\left\vert
\raisebox{-0.8026in}{\parbox[b]{2.2771in}{\begin{center}
\includegraphics[
height=1.6786in,
width=2.2771in
]%
{hopf-link.ps}%
\\
{}%
\end{center}}}%
\right\rangle \text{ \ ,}%
\]
how does find a universal set of observables that is best for determining the
quantum knot state in the sense of greatest efficiency for a given threshold
$\epsilon$? \ \bigskip

\item[\textbf{13)}] Quantum Braids: One can also use lattices to define
quantum braids. \ How is this related to the work found in \cite{Jacak1},
\cite{Kitaev1}, \cite{Sarma1}?\bigskip

\item[\textbf{14)}] Can quantum knot systems be used to model and to predict
the behavior of

\begin{itemize}
\item[\textbf{i)}] Quantum vortices in liquid Helium II? (See \cite{Rasetti1}.)

\item[\textbf{ii)}] Quantum vortices in the Bose-Einstein condensate?

\item[\textbf{iii)}] Fractional charge quantification that is manifest in the
fractional quantum Hall effect? (See \cite{Jacak1} and \cite{Wen1}.)
\end{itemize}
\end{itemize}

\bigskip

In closing this section, we should finally also say that, in the open
literature, the phrase "quantum knot" has many different meanings, and is
sometimes a phrase that is used loosely. \ We mention only two examples. \ In
\cite{Kauffman1}, a quantum knot is essentially defined as an element of the
Hilbert space with orthonormal basis in one-one-correspondence with knot
types, rather than knot representatives. \ Within the context of the mosaic
construction, a quantum knot in \cite{Kauffman1} corresponds to an element of
the orbit Hilbert space $\mathcal{K}^{(n)}/\mathbb{A}(n)$. \ \ In
\cite{Collins1} and in \cite{Sarma1} the phrase "quantum knot" refers not to
knots, but to the use of representations of the braid group to model the
dynamic behavior of certain quantum systems. \ In this context, braids are
used as a tool to model topological obstructions to quantum decoherence that
are conjectured to exist within certain quantum systems. \ 

\bigskip

\section{Appendix A: \ A quick review of knot theory}

\bigskip

In its most general form, knot theory is the study of the fundamental problem
of placement:

\bigskip

\noindent\textbf{The Placement Problem.} \ \textit{When are two placements of
a space }$X$\textit{ in a space }$Y$\textit{ the same or different?}

%

\begin{center}
\includegraphics[
height=2.2779in,
width=3.0277in
]%
{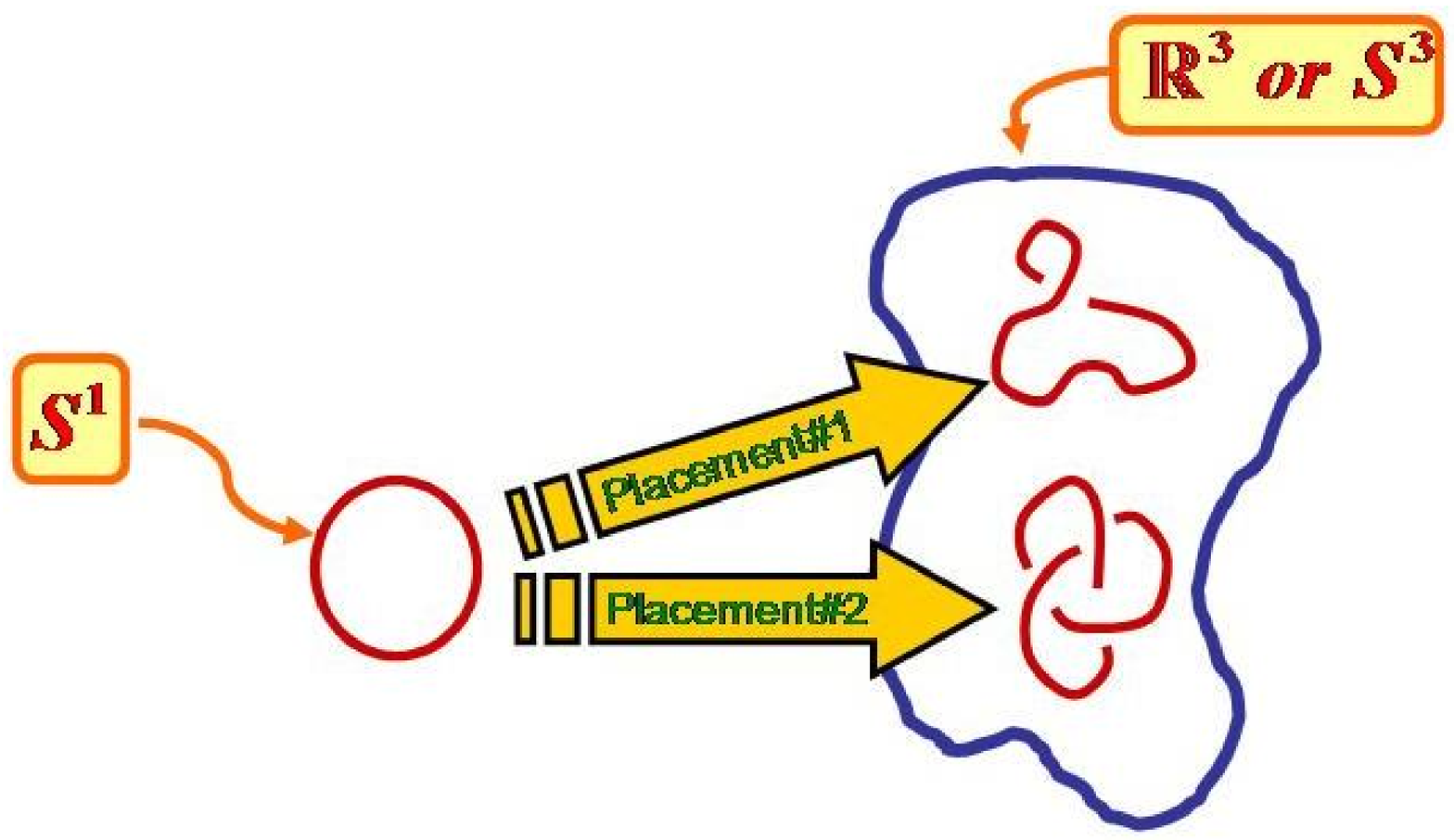}%
\\
\textbf{The placement problem.}%
\end{center}

In its most renowned form, knot theory is the study of the placement of a
1-sphere\footnote{By 1-sphere we mean a circle.} $S^{1}$(or a disjoint union
of 1-spheres) in 3-space $\mathbb{R}^{3}$ (or the 3-sphere $S^{3}$), called
the \textbf{ambient space}. \ In this case, "placement" usually means a smooth
(or piecewise linear) embedding, i.e., a smooth homeomorphism into the ambient
space. \ Such a placement is called a \textbf{knot} if a single 1-sphere is
embedded ( or a \textbf{link}, if a disjoint union of many 1-spheres is embedded.)

\bigskip

Two knots (or links) are said to be the same, i.e., of the \textbf{same knot
type}, if there exists an orientation preserving autohomeomorphism\footnote{A
provenly equivalent definition is that two knots are of the same knot type if
and only if there exists an isotopy of the ambient space that carries one knot
onto the other.} of the ambient space carrying one knot into the other.
Otherwise, they are said to be different, i.e., of \textbf{different knot
type}. \ Such knots are frequently represented by a knot diagram, i.e., a
planar 4-valent graph with vertices appropriately labelled as
undercrossings/overcrossings, as shown in figure 2.

\bigskip%

\begin{center}
\includegraphics[
height=2.2779in,
width=3.0277in
]%
{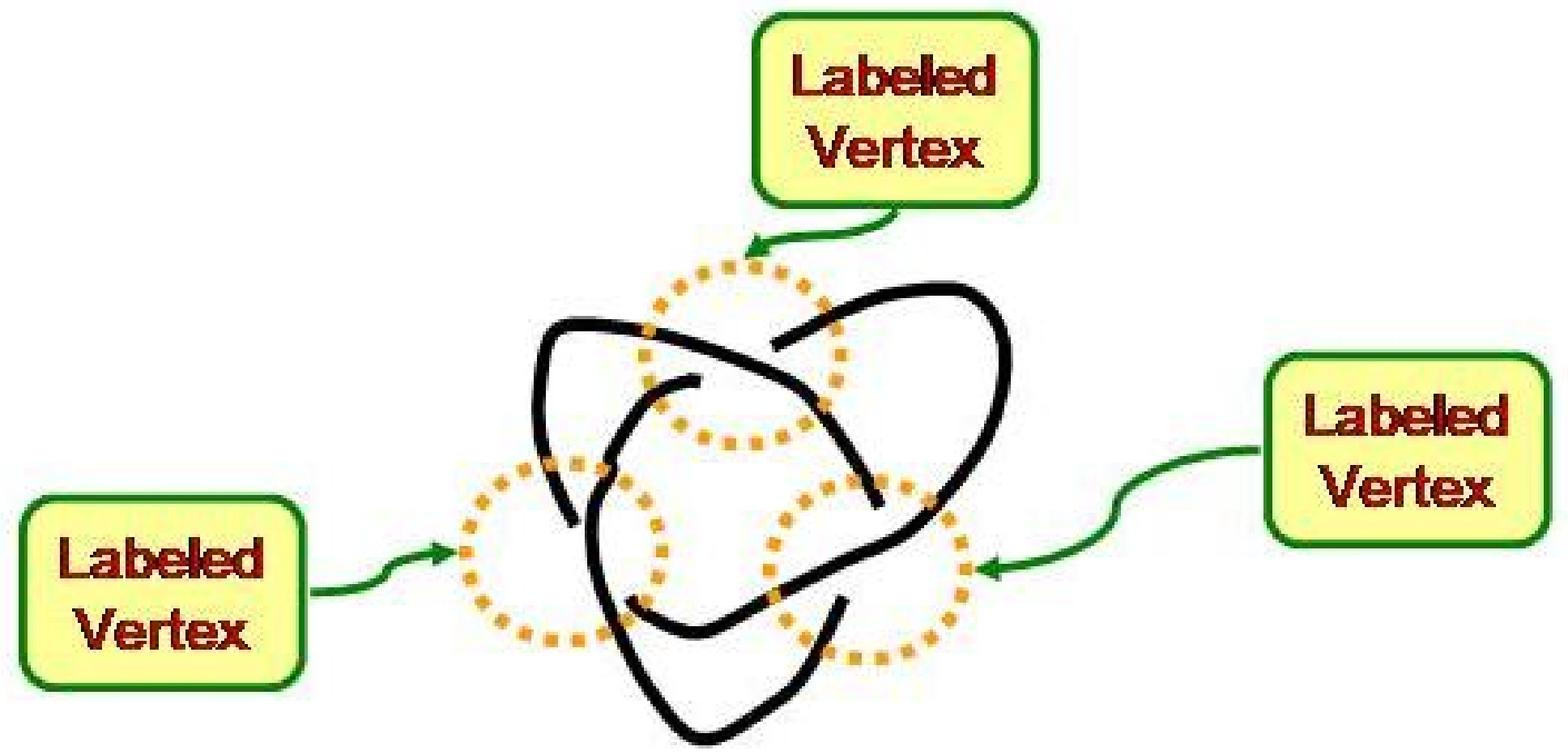}%
\\
\textbf{A knot diagram.}%
\end{center}

\bigskip

The fundamental problem of knot theory can now be stated as:

\bigskip

\noindent\textbf{The Fundamental Problem of Knot Theory.} \ \textit{When are
two knots of the same or of different knot type?}

\bigskip

A useful knot theoretic research tool is Reidemeister's theorem, which makes
use of the Reidemeister moves as defined in figure 3.:

\bigskip

\begin{theorem}
[\textbf{Reidemeister}]Two knot diagrams represent the same knot type if and
only if it is possible to transform one into the other by applying a finite
sequence of Reidemeister moves.
\end{theorem}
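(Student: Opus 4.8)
The plan is to prove the two implications separately, working in the piecewise linear (PL) category, where a knot is a polygonal simple closed curve in $\mathbb{R}^{3}$ and two knots are of the same type exactly when they are related by a finite sequence of \emph{triangle moves} (the elementary $\Delta$-moves): replacing one edge of the polygon by the other two edges of a triangle $T$ whose interior and remaining two edges meet the knot only along that edge, together with the inverse operation. As the paper's own remark notes, this triangle move is essentially a tug, so the argument below is the classical template on which the completeness proofs for wiggle, wag, and tug are modeled.

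First I would dispatch the easy ``if'' direction. Each of the three Reidemeister moves is a purely local alteration of a planar diagram that can be realized by a small ambient isotopy of the knot in $\mathbb{R}^{3}$, supported in a ball meeting the knot only in the relevant arcs. Consequently, if one diagram is carried to another by a finite sequence of Reidemeister moves (together with planar isotopies of the diagram), the underlying knots are ambient isotopic, hence of the same knot type. This direction requires only the observation that each move lifts to an isotopy and that a composition of isotopies is again an isotopy.

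The substance is the ``only if'' direction. Here I would begin from the fact that two diagrams represent the same knot type if and only if their knots are connected by a finite sequence of triangle moves, and then show that the effect on a diagram of a single triangle move is a finite sequence of Reidemeister moves. The key step is a subdivision argument: after first perturbing the projection direction so that the whole diagram is regular (finitely many transverse double points, no triple points and no tangencies), I would subdivide $T$ into finitely many sub-triangles fine enough that the projection of each sub-triangle meets the projection of the remainder of the knot in a controlled, generic way. One then checks, case by case, that each elementary sub-triangle move projects to exactly one of $R1$, $R2$, $R3$, or a planar isotopy of the diagram.

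The main obstacle will be the careful bookkeeping of this final case analysis: I must verify that, after sufficient subdivision and a generic choice of projection, every way in which the image of a small triangle can interact with the projected diagram is accounted for by $R1$, $R2$, $R3$, or an ambient isotopy of the plane, and that the degenerate configurations arising during the move (a strand passing through a vertex, a tangency, or a transient triple point) can always be eliminated by a further small perturbation. The independence of the conclusion from the chosen regular projection --- that two regular projections of the same knot differ by Reidemeister moves --- is subsumed in this analysis, since a continuous change of projection direction decomposes into the same elementary moves.
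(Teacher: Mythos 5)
Your proposal cannot be compared against a proof in the paper, because the paper contains none: this theorem is stated in Appendix A purely as background, quoted as a classical result with pointers to Reidemeister's own works (\cite{Reidemeister1, Reidemeister2}), and the only echo of its argument in the body of the paper is the remark that Reidemeister's triangle move is ``essentially a tug'' and that his methods yield the theorem that wiggles and wags factor into tugs. What you have written is a faithful outline of the classical argument that the paper implicitly defers to: the easy direction by lifting each of $R1$, $R2$, $R3$ to an ambient isotopy supported in a ball, and the hard direction by reducing PL equivalence to a finite sequence of triangle ($\Delta$-) moves, choosing a projection in general position, subdividing the triangle finely enough, and checking case by case that each sub-triangle move projects to a single Reidemeister move or a planar isotopy. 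That is exactly the structure of Reidemeister's proof (and of its modern expositions, e.g.\ Burde--Zieschang), so your route is the standard one rather than a novelty; what it buys, relative to the paper, is simply making explicit the machinery the authors invoke by citation. Be aware, though, that your plan defers the entire mathematical content to the final case analysis (generic position, elimination of tangencies, triple points, and vertex crossings during the move, and the independence of the chosen regular projection); as a completed proof this would still need to be carried out in full, but as a reconstruction of the intended argument it is correct and appropriately identifies where the real work lies.
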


%

\begin{center}
\includegraphics[
height=2.2779in,
width=3.0277in
]%
{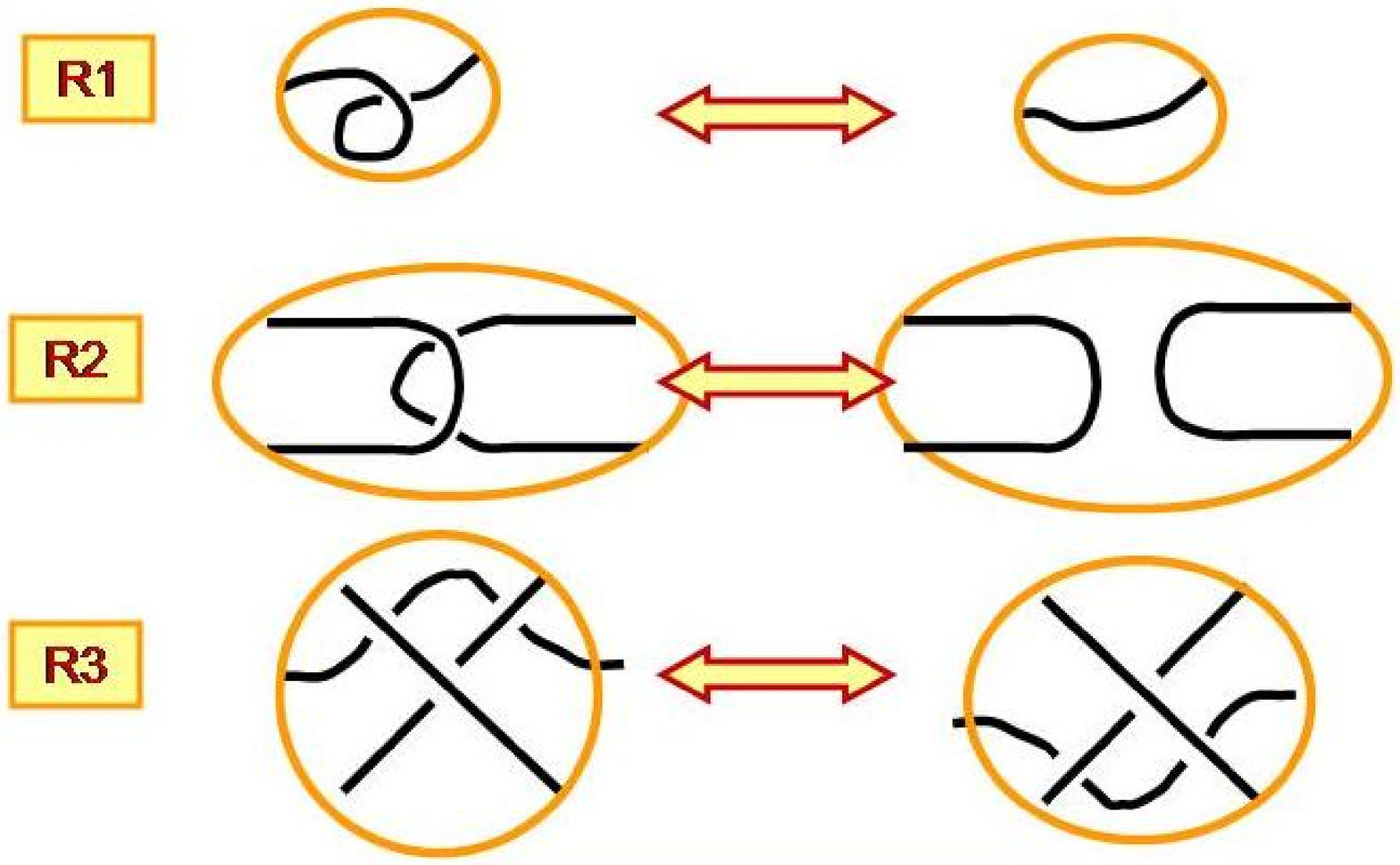}%
\\
\textbf{The Reidemeister moves. \ Th reader should note that these are local
moves, as indicated by the local enclosure.}%
\end{center}

\bigskip

The standard approach to attacking the fundamental problem of knot theory is
to create knot invariants for distinquishing knots. \ By a \textbf{knot
invariant} $I$ we mean a map from knots to a specified mathematical domain
which maps knots (or links) of the same type to the same mathematical object.
\ \textit{Thus, if an invariant is found to be different on two knots (or
links), then the two knots (or links) cannot be of the same knot type}!

\bigskip

For further information on knot theory, we refer the reader to, for example,
\cite{Adams1, Burde1, Crowell1, Fox1, Kauffman4, Lickorish1, Murasugi1,
Przytycki1, Reidemeister1, Reidemeister2, Rolfsen1}.

\bigskip

\section{Appendix B: A Rosetta Stone for notation}

\bigskip

This appendix gives a telegraphic summary of most of the mathematical symbols
used in this paper.

\bigskip%

\begin{center}
\fbox{\includegraphics[
height=2.4267in,
width=3.7905in
]%
{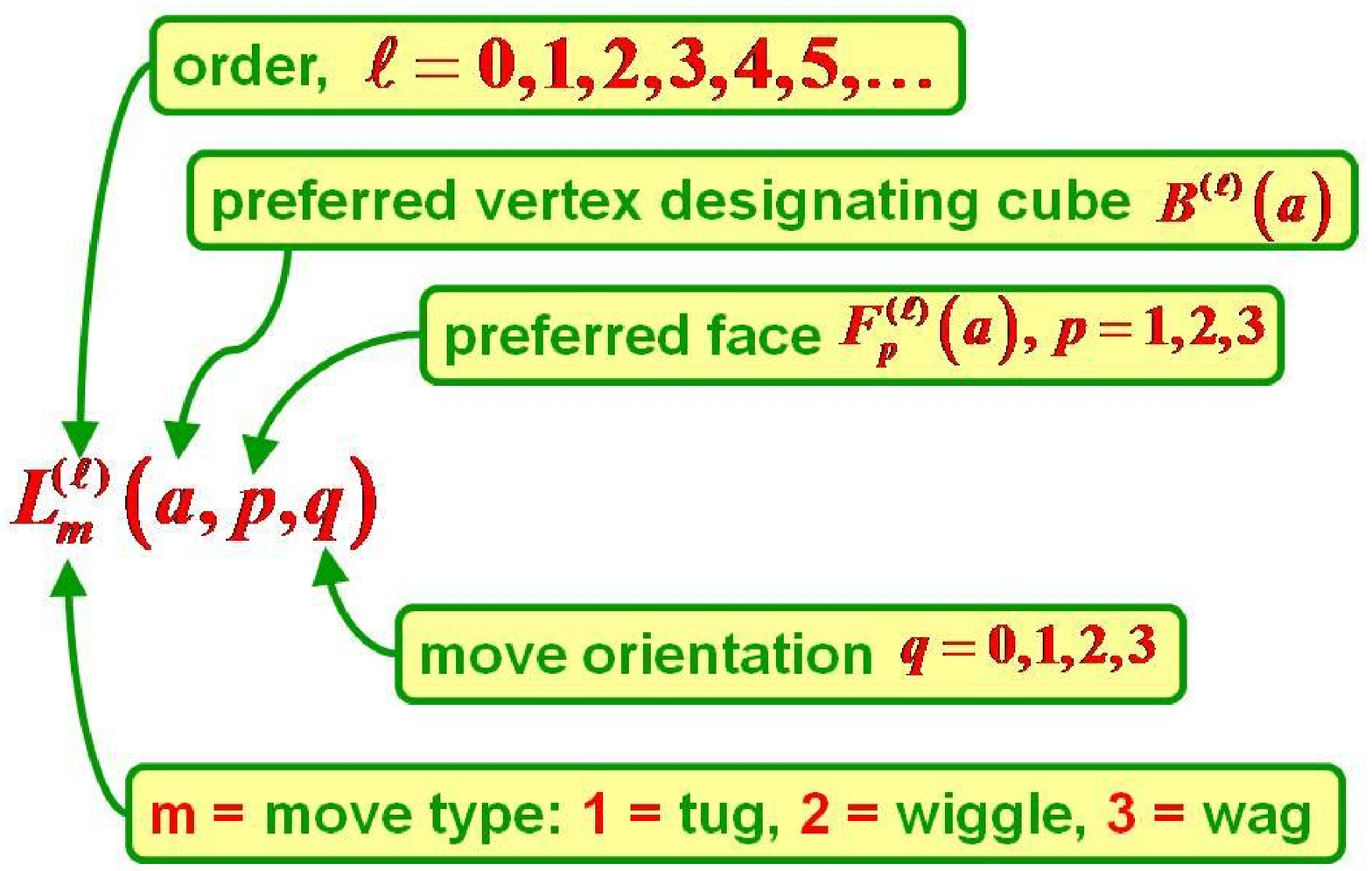}%
}\\
\textbf{Summary of notation used to designate moves.}%
\end{center}
\bigskip%

\[%
\begin{tabular}
[c]{c}%
\begin{tabular}
[c]{|c|c||c||c|c|c|c|}\hline
\quad & $\overset{\mathstrut}{\underset{\mathstrut}{L_{m}\left(  -,-,q\right)
}}$ & \quad & $L_{m}\left(  -,-,0\right)  $ & $L_{m}\left(  -,-,1\right)  $ &
$L_{m}\left(  -,-,2\right)  $ & $L_{m}\left(  -,-,3\right)  $\\\hline\hline
\quad &  & $m\backslash q$ & $0$ & $1$ & $2$ & $3$\\\hline\hline
\textbf{Tug} & $\underset{\mathstrut}{\overset{\mathstrut}{L_{1}\left(
-,-,q\right)  }}$ & $1$ & $%
\raisebox{-0.0406in}{\includegraphics[
height=0.1436in,
width=0.1436in
]%
{icon10.ps}%
}%
$ & $%
\raisebox{-0.0406in}{\includegraphics[
height=0.1436in,
width=0.1436in
]%
{icon11.ps}%
}%
$ & $%
\raisebox{-0.0406in}{\includegraphics[
height=0.1436in,
width=0.1436in
]%
{icon12.ps}%
}%
$ & $%
\raisebox{-0.0406in}{\includegraphics[
height=0.1436in,
width=0.1436in
]%
{icon13.ps}%
}%
$\\\hline
\textbf{Wiggle} & $\underset{\mathstrut}{\overset{\mathstrut}{L_{2}\left(
-,-,q\right)  }}$ & $2$ & $%
\raisebox{-0.0406in}{\includegraphics[
height=0.1436in,
width=0.1436in
]%
{icon21.ps}%
}%
$ & $%
\raisebox{-0.0406in}{\includegraphics[
height=0.1436in,
width=0.1436in
]%
{icon20.ps}%
}%
$ & $%
\raisebox{-0.0406in}{\includegraphics[
height=0.1436in,
width=0.1436in
]%
{icon21.ps}%
}%
$ & $%
\raisebox{-0.0406in}{\includegraphics[
height=0.1436in,
width=0.1436in
]%
{icon20.ps}%
}%
$\\\hline
\textbf{Wag} & $\underset{\mathstrut}{\overset{\mathstrut}{L_{3}\left(
-,-,q\right)  }}$ & $3$ & $%
\raisebox{-0.0406in}{\includegraphics[
height=0.1531in,
width=0.1531in
]%
{icon30.ps}%
}%
$ & $%
\raisebox{-0.0406in}{\includegraphics[
height=0.1531in,
width=0.1531in
]%
{icon31.ps}%
}%
$ & $%
\raisebox{-0.0406in}{\includegraphics[
height=0.1531in,
width=0.1531in
]%
{icon32.ps}%
}%
$ & $%
\raisebox{-0.0406in}{\includegraphics[
height=0.1531in,
width=0.1531in
]%
{icon33.ps}%
}%
$\\\hline
\end{tabular}
\\
\textbf{Summary of notation used to designate moves.}\\
\textbf{Please note that }$L_{2}\left(  -,-,0\right)  =L_{2}\left(
-,-,2\right)  $ and \textbf{ }$L_{2}\left(  -,-,1\right)  =$\textbf{ }%
$L_{2}\left(  -,-,3\right)  $.
\end{tabular}
\ \
\]

\bigskip%
\[
L_{m}\left(  x,q\right)  ^{dx_{\left\lfloor p\right.  }dx_{\left.
p\right\rceil }}=\lim_{\ell\rightarrow\infty}L_{m}^{(\ell)}\left(
\left\lfloor x\right\rfloor _{\ell},p,q\right)  \text{ ,}%
\]
where $dx_{\left\lfloor p\right.  }dx_{\left.  p\right\rceil }$ is the area
2-form in 3-space $\mathbb{R}^{3}$.

\bigskip

$a^{:1^{2}\overline{2}^{3}3}$ denotes the translate of the vertex $a$ given
by
\[
a^{:1^{2}\overline{2}^{3}3}=a+2\cdot2^{\ell}e_{1}-3\cdot2^{-\ell}%
e_{2}+2^{-\ell}e^{3}\text{ ,}%
\]
where $e_{1}$, $e_{2}$, $e_{3}$ is the preferred orthonormal frame.

\bigskip

\section{Appendix C: The refinement morphism conjecture}

\bigskip

In this appendix, we give a constructive rationale for conjectures 1A, !B, and
1C of section 12.

\bigskip

We begin by attempting to define $%
\raisebox{-0.0303in}{\includegraphics[
height=0.1332in,
width=0.1193in
]%
{red-refinement.ps}%
}%
(g)$ for each generator of the ambient group $\Lambda_{\ell}$, i.e., for each
tug, wiggle, and wag in $\Lambda_{\ell}$. \ A possible definition is suggested
by the figures given below:%

\[%
\begin{array}
[c]{l}%
\raisebox{-0.4021in}{\includegraphics[
height=0.9037in,
width=0.9037in
]%
{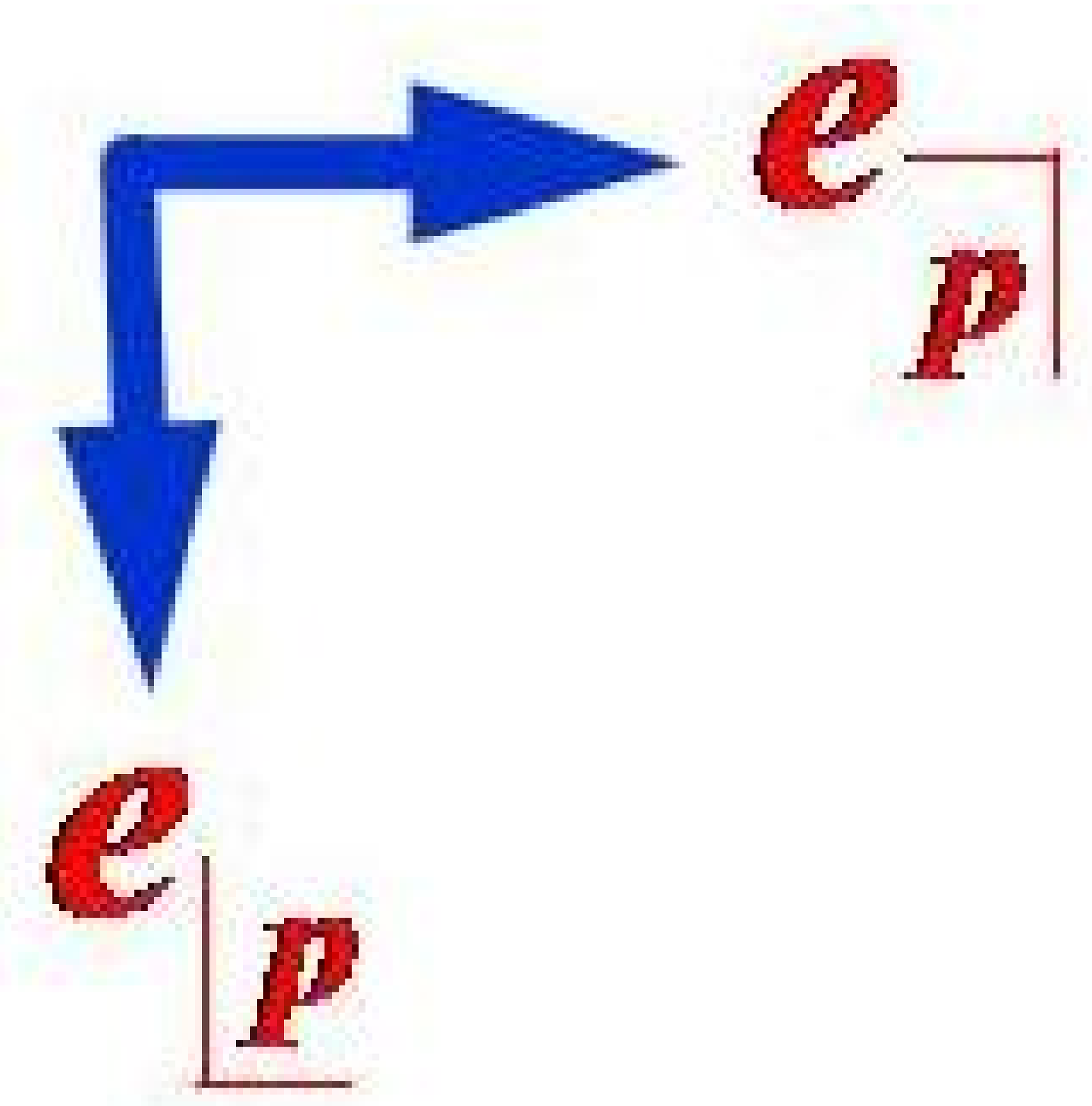}%
}%
\qquad%
\begin{array}
[c]{ccc}%
\raisebox{-0.3009in}{\includegraphics[
height=0.6789in,
width=0.6789in
]%
{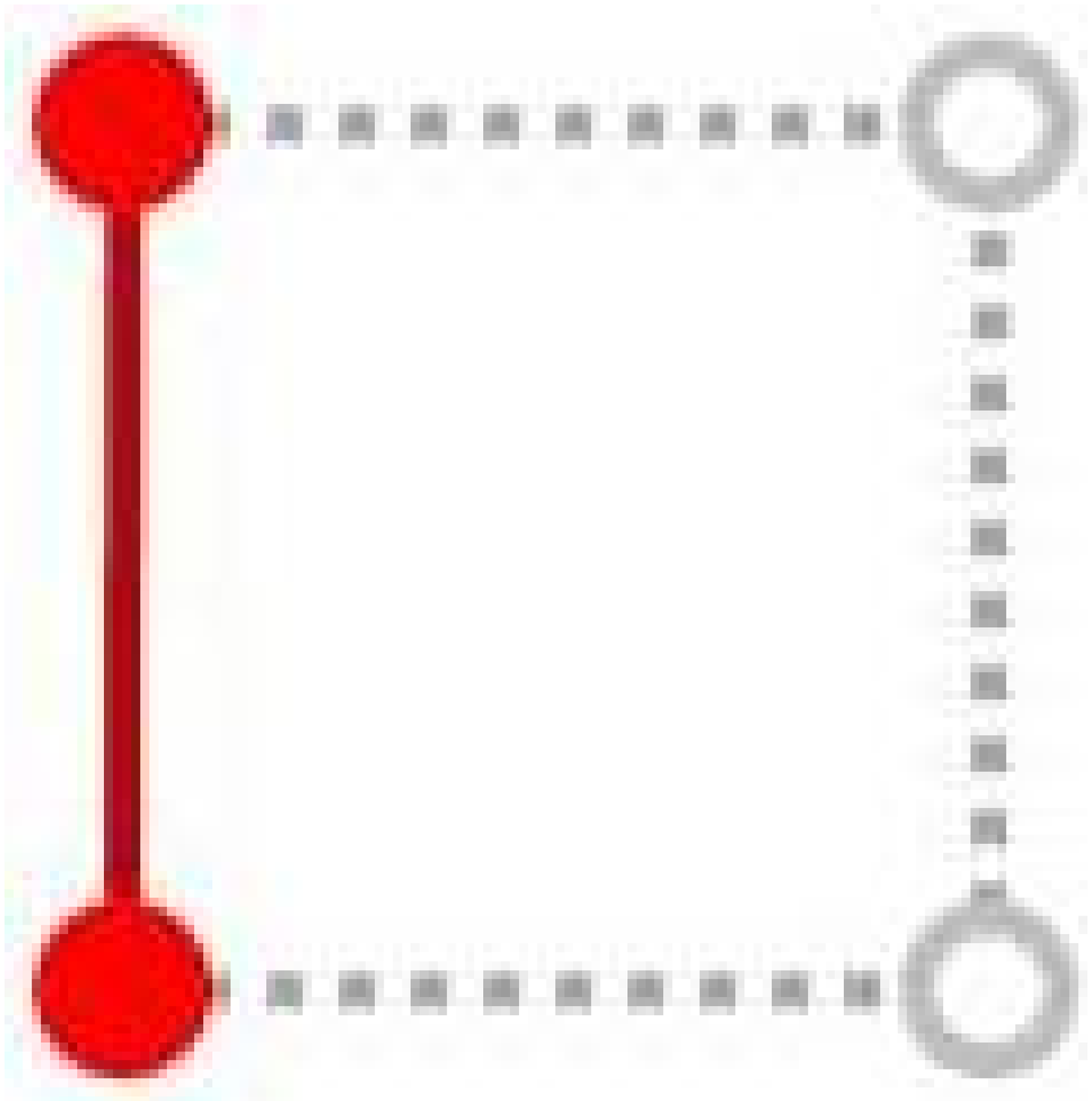}%
}%
&
\begin{array}
[c]{c}%
\raisebox{0.0069in}{\includegraphics[
height=0.2361in,
width=0.5967in
]%
{arrow-ya.ps}%
}%
\\
F_{p}^{(\ell)}(a)\\%
\raisebox{-0.0406in}{\includegraphics[
height=0.1436in,
width=0.1436in
]%
{icon10.ps}%
}%
^{(\ell)}\left(  a,p\right)
\end{array}
&
\raisebox{-0.3009in}{\includegraphics[
height=0.6789in,
width=0.6789in
]%
{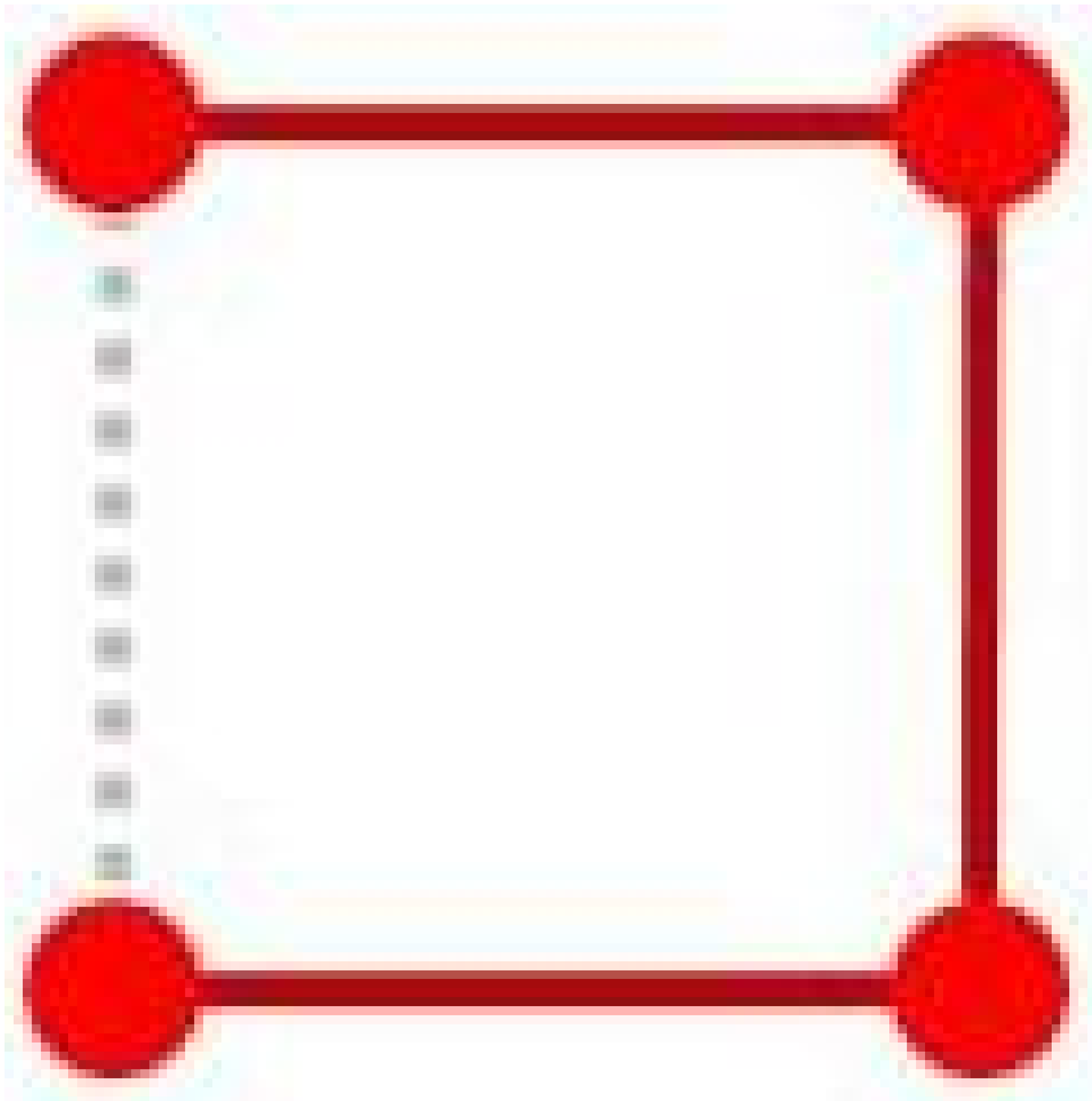}%
}%
\end{array}
\\
\multicolumn{1}{c}{}\\
\multicolumn{1}{c}{%
\begin{array}
[c]{ccccc}%
\raisebox{-0.3009in}{\includegraphics[
height=0.6789in,
width=0.6789in
]%
{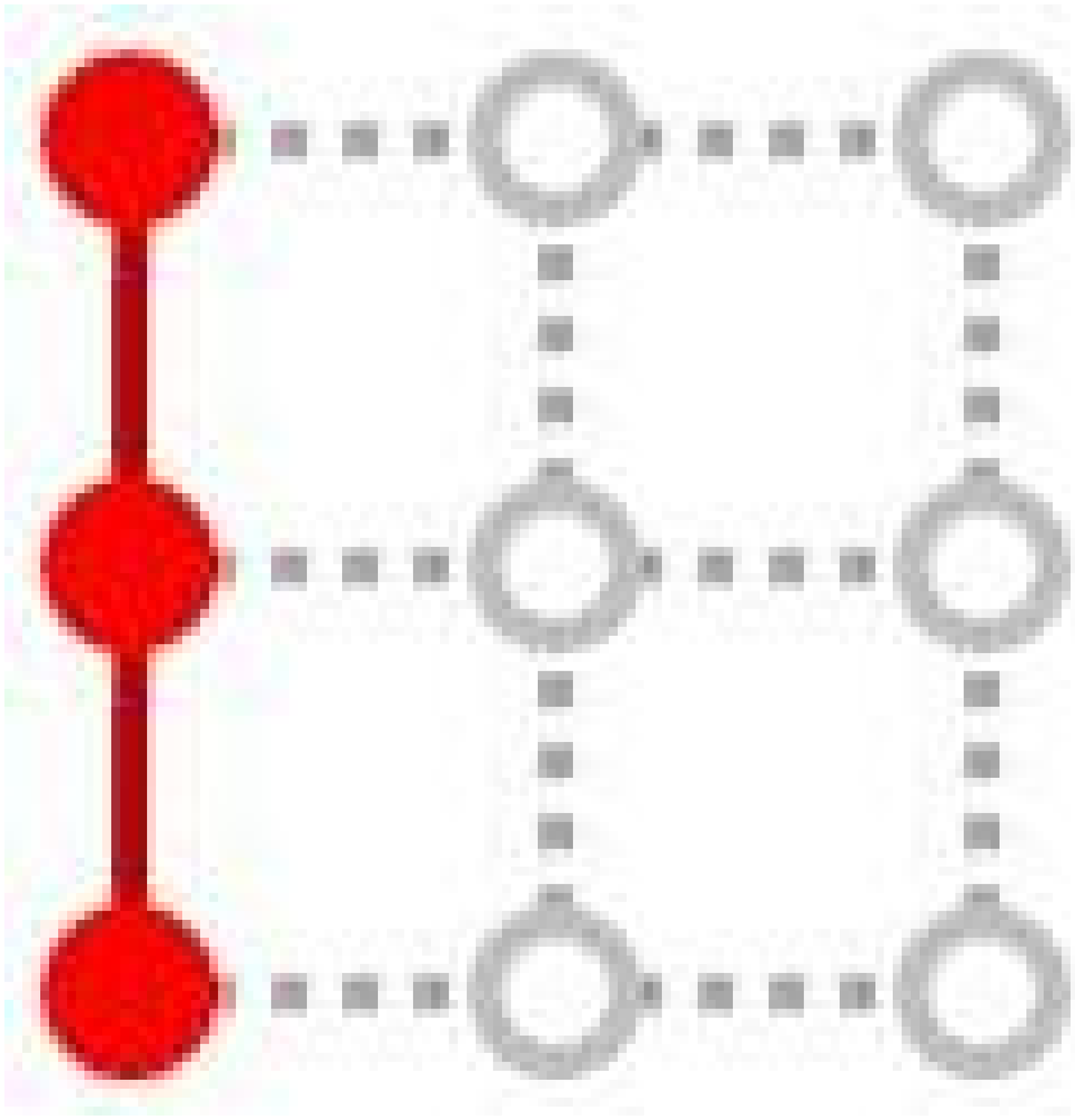}%
}%
&
\begin{array}
[c]{c}%
\raisebox{0.0069in}{\includegraphics[
height=0.2361in,
width=0.5967in
]%
{arrow-ya.ps}%
}%
\\
F_{p}^{(\ell+1)}(a)\\%
\raisebox{-0.0406in}{\includegraphics[
height=0.1436in,
width=0.1436in
]%
{icon10.ps}%
}%
^{(\ell+1)}\left(  a,p\right)
\end{array}
&
\raisebox{-0.3009in}{\includegraphics[
height=0.6789in,
width=0.6789in
]%
{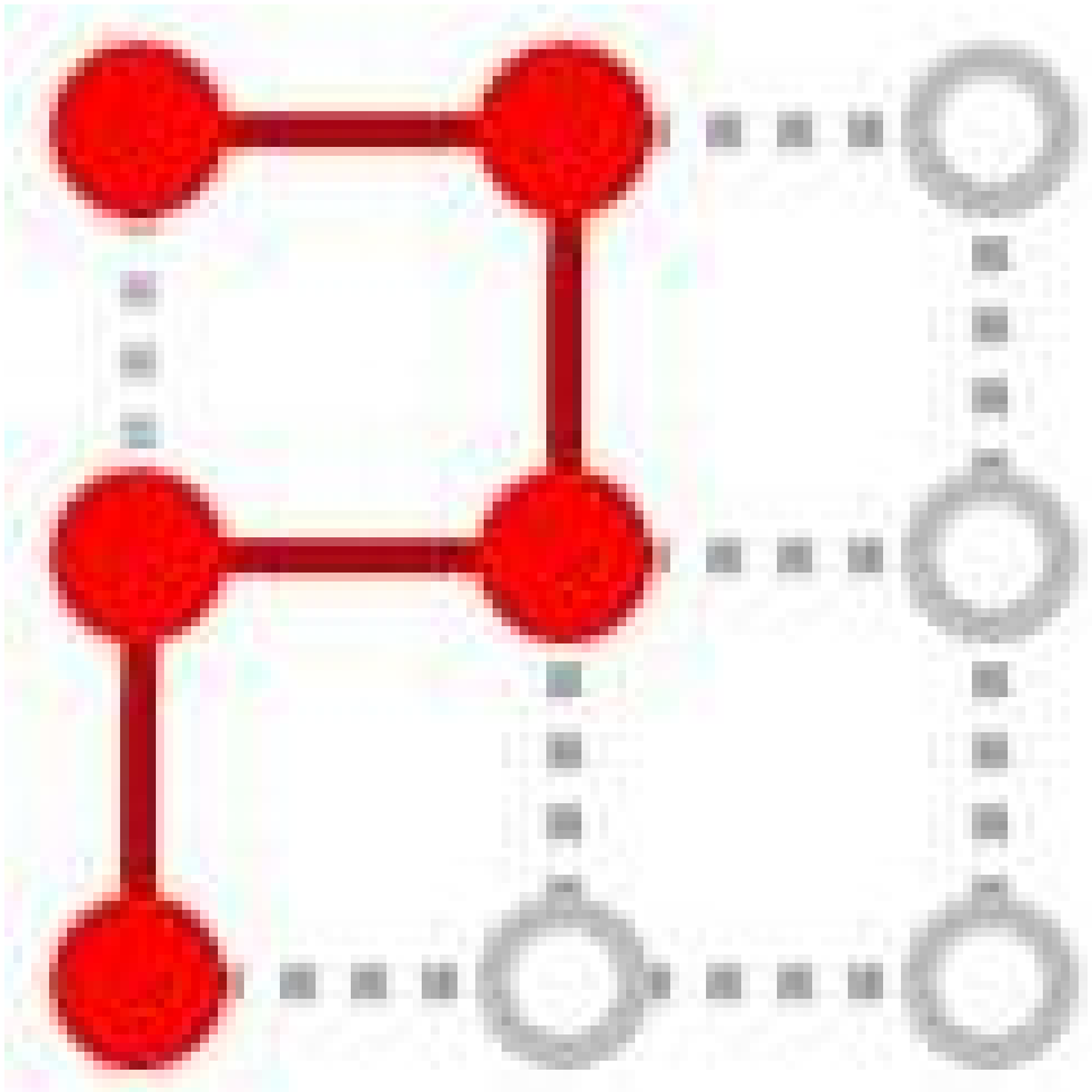}%
}%
&
\begin{array}
[c]{c}%
\raisebox{0.0069in}{\includegraphics[
height=0.2361in,
width=0.5967in
]%
{arrow-ya.ps}%
}%
\\
F_{p}^{(\ell+1)}(a^{:\left.  p\right\rceil })\\%
\raisebox{-0.0406in}{\includegraphics[
height=0.1436in,
width=0.1436in
]%
{icon10.ps}%
}%
^{(\ell+1)}\left(  a^{:\left.  p\right\rceil },p\right)
\end{array}
&
\raisebox{-0.3009in}{\includegraphics[
height=0.6789in,
width=0.6789in
]%
{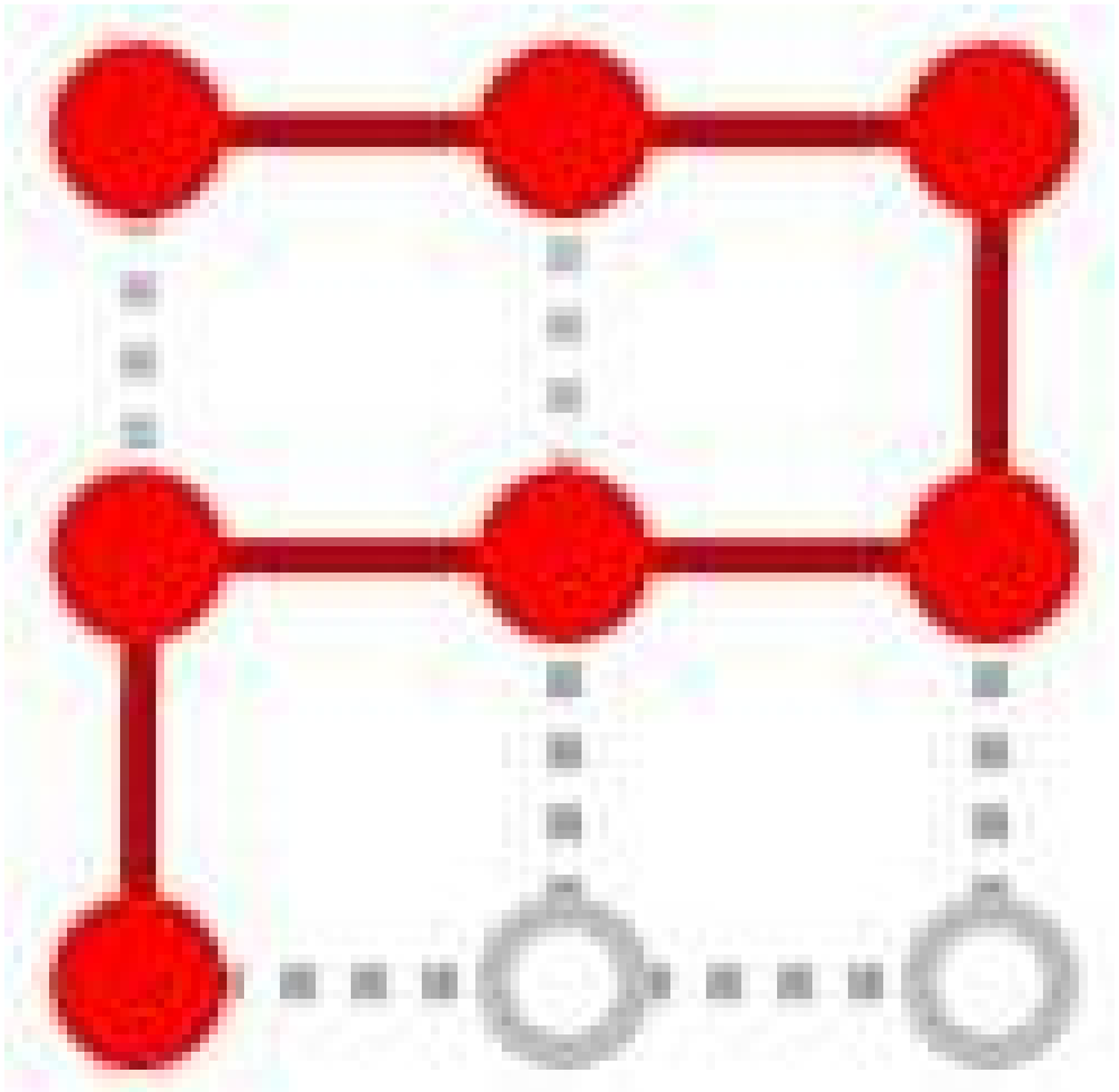}%
}%
\end{array}
}\\
\multicolumn{1}{c}{}\\
\multicolumn{1}{c}{%
\begin{array}
[c]{ccccc}%
\qquad\qquad\quad\ \  &
\begin{array}
[c]{c}%
\raisebox{0.0069in}{\includegraphics[
height=0.2361in,
width=0.5967in
]%
{arrow-ya.ps}%
}%
\\
F_{p}^{(\ell+1)}(a^{:_{\left\lfloor p\right.  }})\\%
\raisebox{-0.0406in}{\includegraphics[
height=0.1436in,
width=0.1436in
]%
{icon13.ps}%
}%
^{(\ell+1)}\left(  a^{:_{\left\lfloor p\right.  \left.  p\right\rceil }%
},p\right)
\end{array}
&
\raisebox{-0.3009in}{\includegraphics[
height=0.6789in,
width=0.6789in
]%
{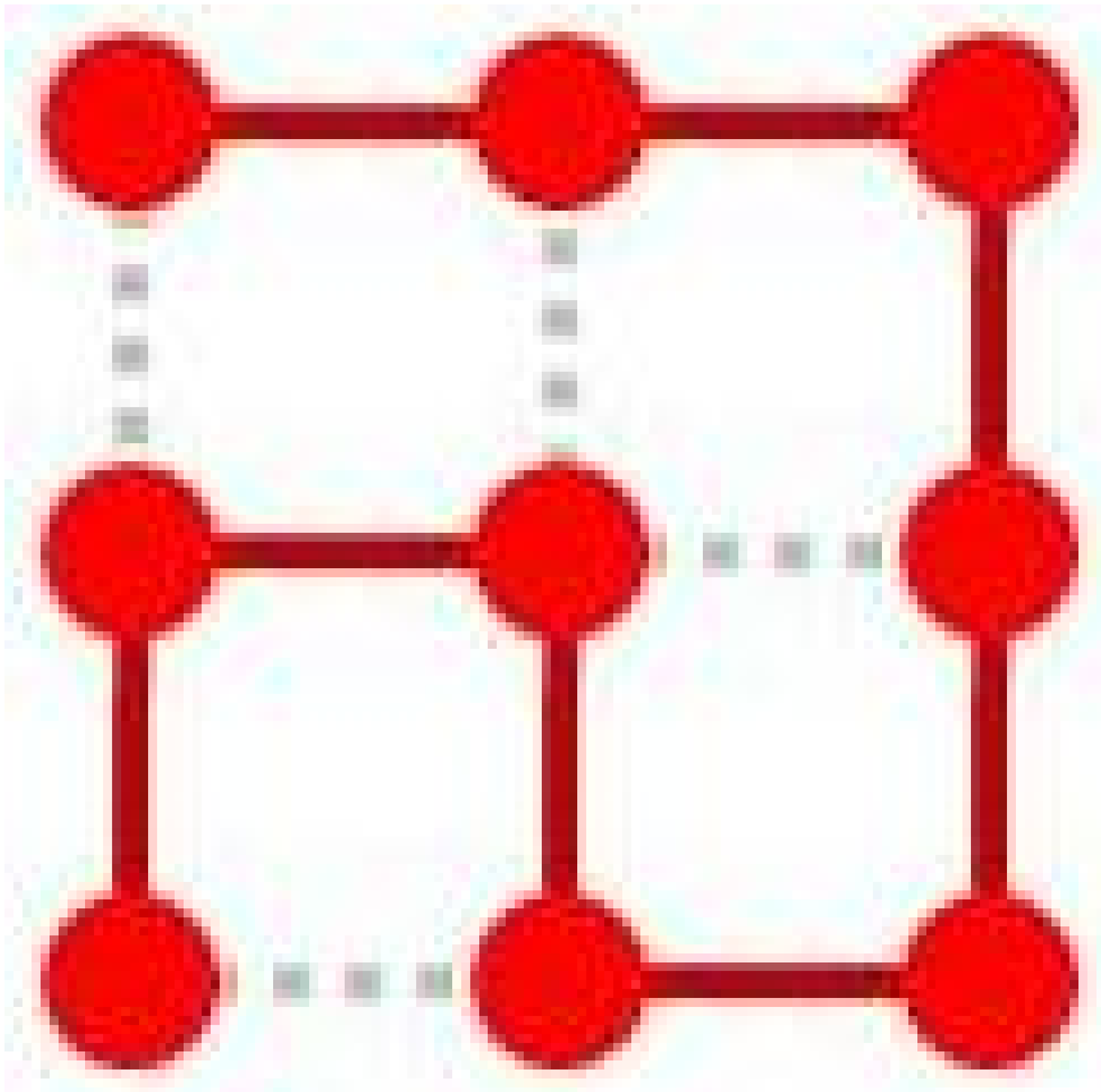}%
}%
&
\begin{array}
[c]{c}%
\raisebox{0.0069in}{\includegraphics[
height=0.2361in,
width=0.5967in
]%
{arrow-ya.ps}%
}%
\\
F_{p}^{(\ell+1)}(a^{:_{\left\lfloor p\right.  \left.  p\right\rceil }})\\%
\raisebox{-0.0406in}{\includegraphics[
height=0.1436in,
width=0.1436in
]%
{icon11.ps}%
}%
^{(\ell+1)}\left(  a^{:_{\left\lfloor p\right.  }},p\right)
\end{array}
&
\raisebox{-0.3009in}{\includegraphics[
height=0.6789in,
width=0.6789in
]%
{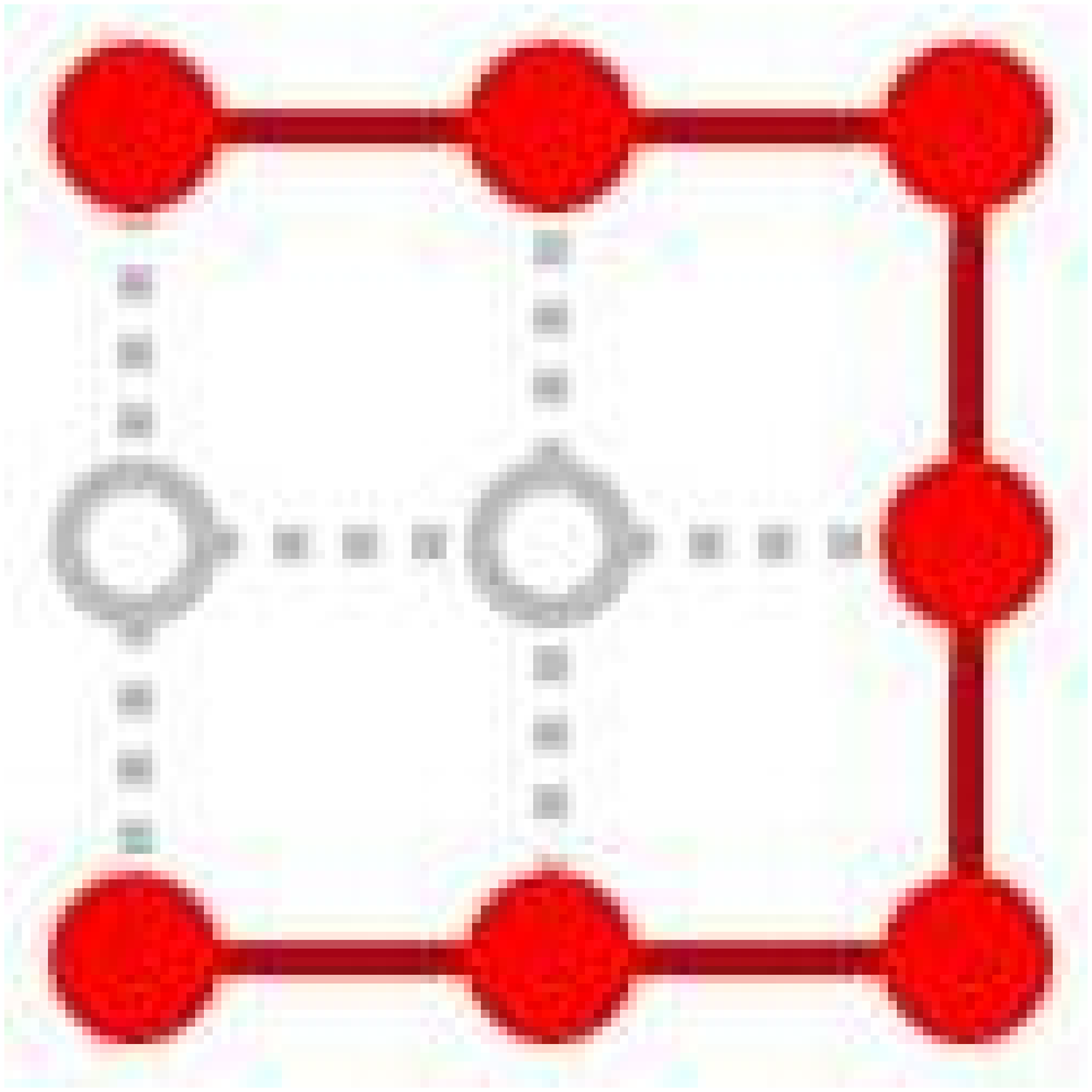}%
}%
\end{array}
}%
\end{array}
\]

\bigskip%

\[%
\begin{array}
[c]{l}%
\raisebox{-0.4021in}{\includegraphics[
height=0.9037in,
width=0.9037in
]%
{frame2.ps}%
}%
\qquad%
\begin{array}
[c]{ccc}%
\raisebox{-0.3009in}{\includegraphics[
height=0.6789in,
width=0.6789in
]%
{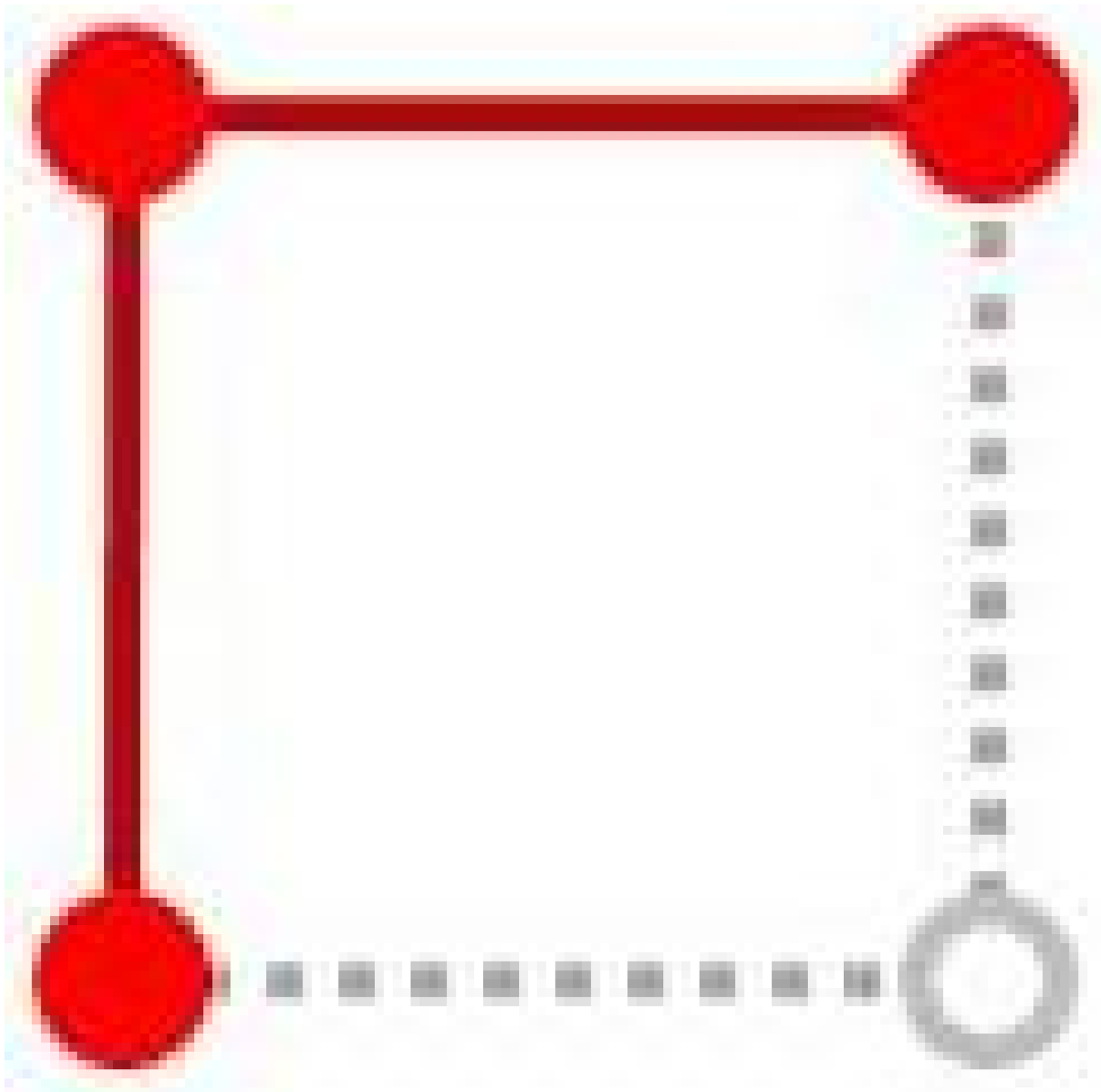}%
}%
&
\begin{array}
[c]{c}%
\raisebox{0.0069in}{\includegraphics[
height=0.2361in,
width=0.5967in
]%
{arrow-ya.ps}%
}%
\\
F_{p}^{(\ell)}(a)\\%
\raisebox{-0.0406in}{\includegraphics[
height=0.1436in,
width=0.1436in
]%
{icon21.ps}%
}%
^{(\ell)}\left(  a,p\right)
\end{array}
&
\raisebox{-0.3009in}{\includegraphics[
height=0.6789in,
width=0.6789in
]%
{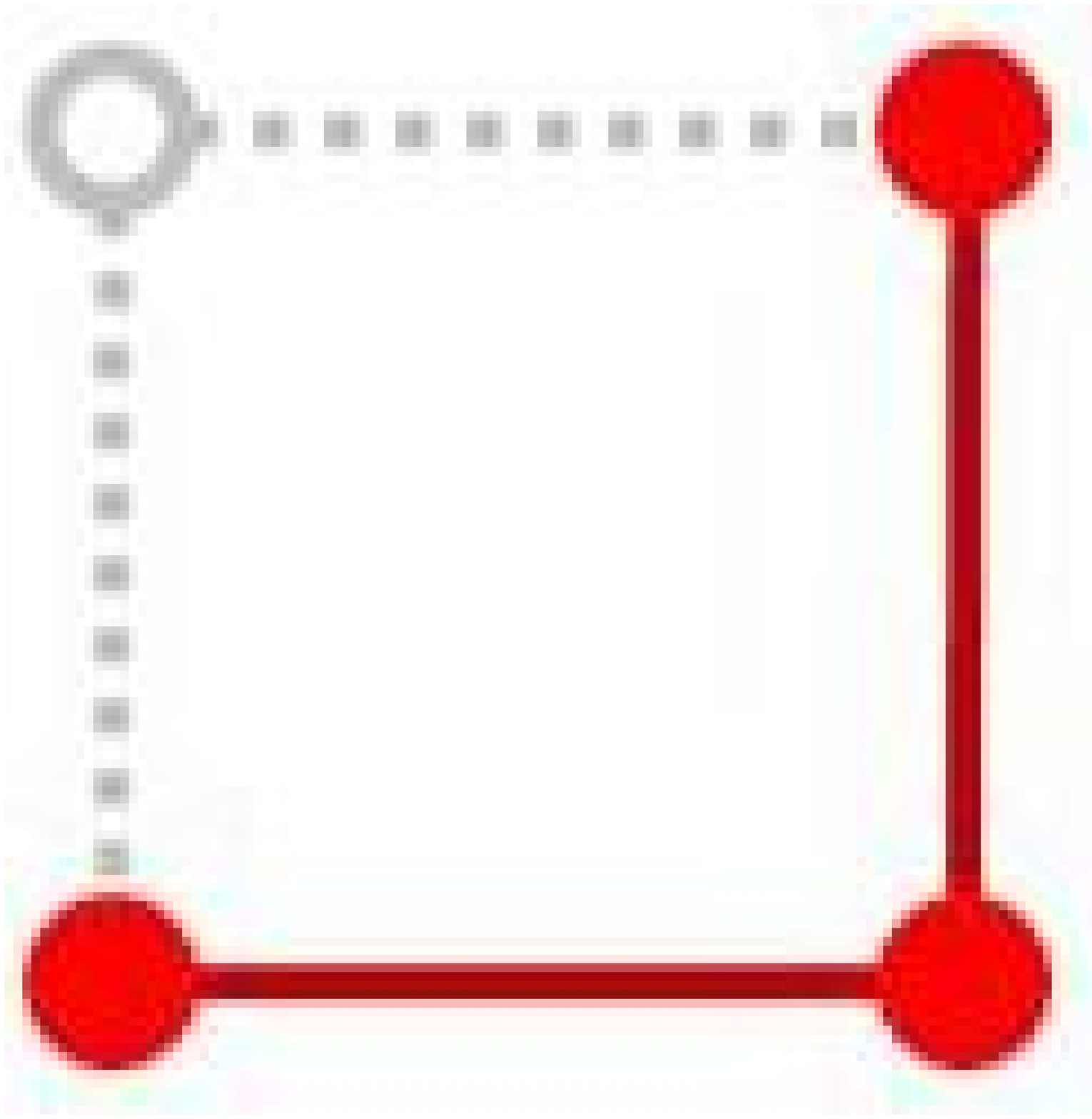}%
}%
\end{array}
\\
\multicolumn{1}{c}{}\\
\multicolumn{1}{c}{%
\begin{array}
[c]{ccccc}%
\raisebox{-0.3009in}{\includegraphics[
height=0.6789in,
width=0.6789in
]%
{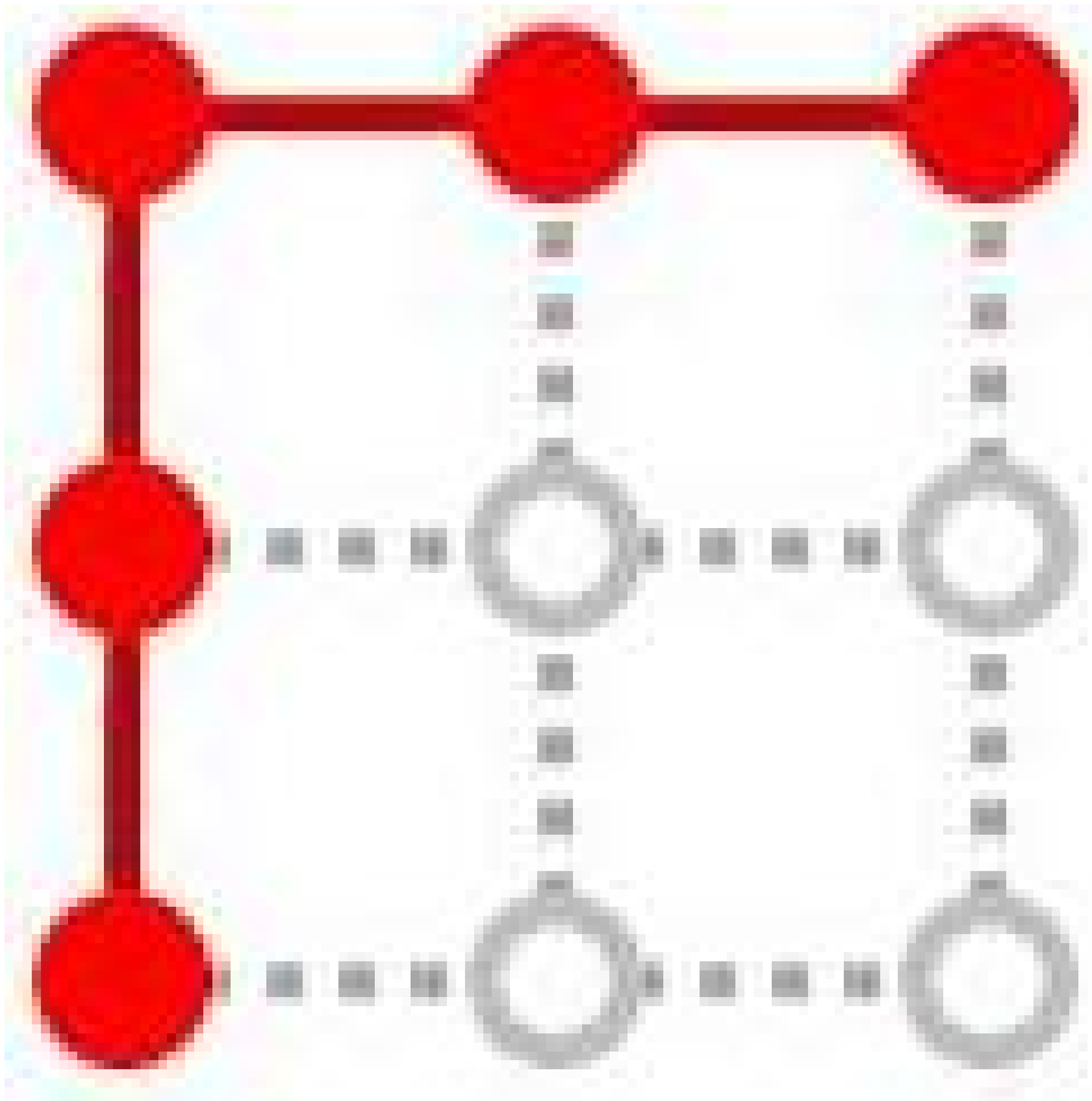}%
}%
&
\begin{array}
[c]{c}%
\raisebox{0.0069in}{\includegraphics[
height=0.2361in,
width=0.5967in
]%
{arrow-ya.ps}%
}%
\\
F_{p}^{(\ell+1)}(a)\\%
\raisebox{-0.0406in}{\includegraphics[
height=0.1436in,
width=0.1436in
]%
{icon21.ps}%
}%
^{(\ell+1)}\left(  a,p\right)
\end{array}
&
\raisebox{-0.3009in}{\includegraphics[
height=0.6789in,
width=0.6789in
]%
{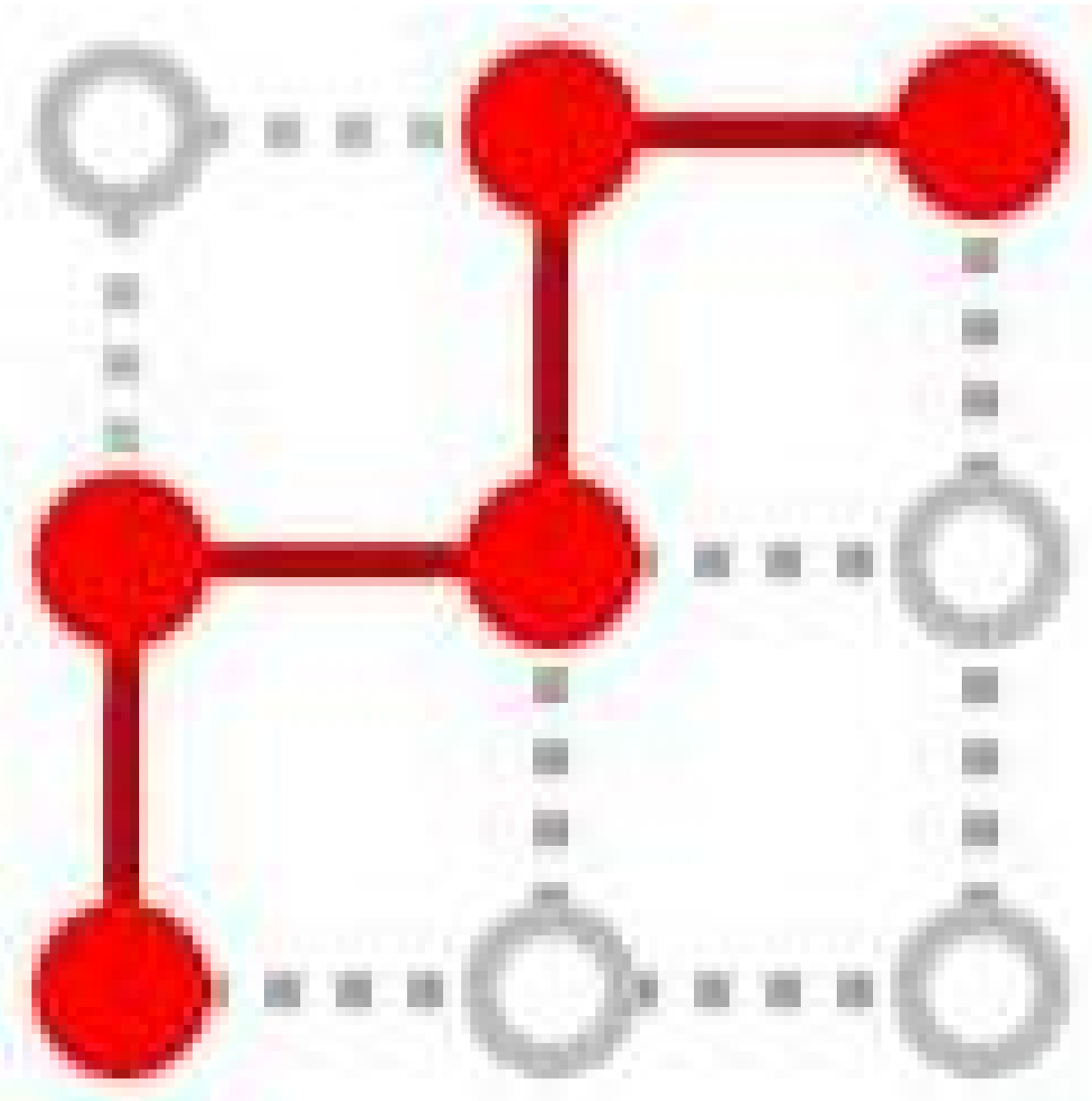}%
}%
&
\begin{array}
[c]{c}%
\raisebox{0.0069in}{\includegraphics[
height=0.2361in,
width=0.5967in
]%
{arrow-ya.ps}%
}%
\\
F_{p}^{(\ell+1)}(a^{:\left.  p\right\rceil })\\%
\raisebox{-0.0406in}{\includegraphics[
height=0.1436in,
width=0.1436in
]%
{icon21.ps}%
}%
^{(\ell+1)}\left(  a^{:\left.  p\right\rceil },p\right)
\end{array}
&
\raisebox{-0.3009in}{\includegraphics[
height=0.6789in,
width=0.6789in
]%
{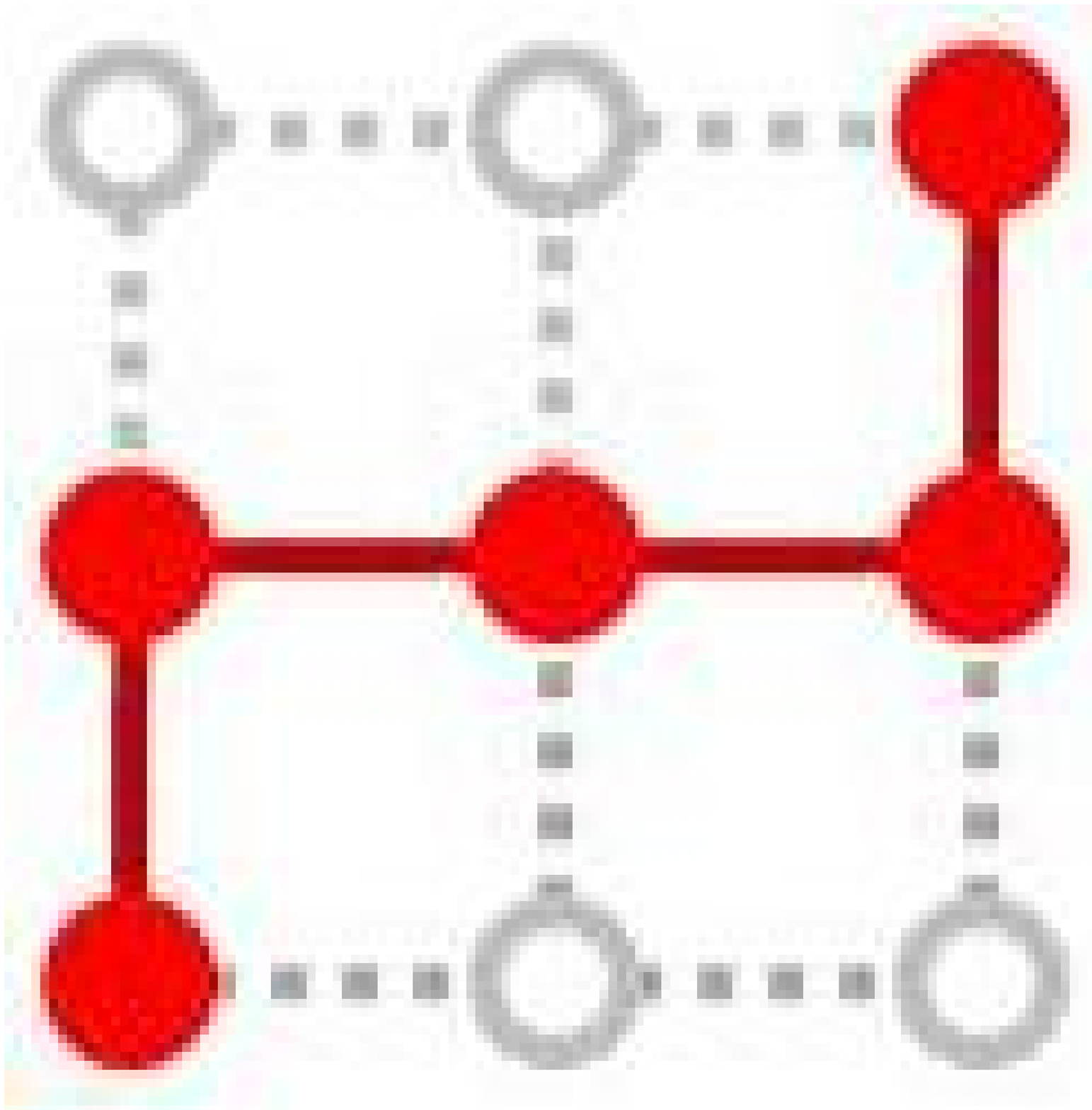}%
}%
\end{array}
}\\
\multicolumn{1}{c}{}\\
\multicolumn{1}{c}{%
\begin{array}
[c]{ccccc}%
\qquad\qquad\quad\ \  &
\begin{array}
[c]{c}%
\raisebox{0.0069in}{\includegraphics[
height=0.2361in,
width=0.5967in
]%
{arrow-ya.ps}%
}%
\\
F_{p}^{(\ell+1)}(a^{:\left\lfloor p\right.  })\\%
\raisebox{-0.0406in}{\includegraphics[
height=0.1436in,
width=0.1436in
]%
{icon21.ps}%
}%
^{(\ell+1)}\left(  a^{:\left\lfloor p\right.  },p\right)
\end{array}
&
\raisebox{-0.3009in}{\includegraphics[
height=0.6789in,
width=0.6789in
]%
{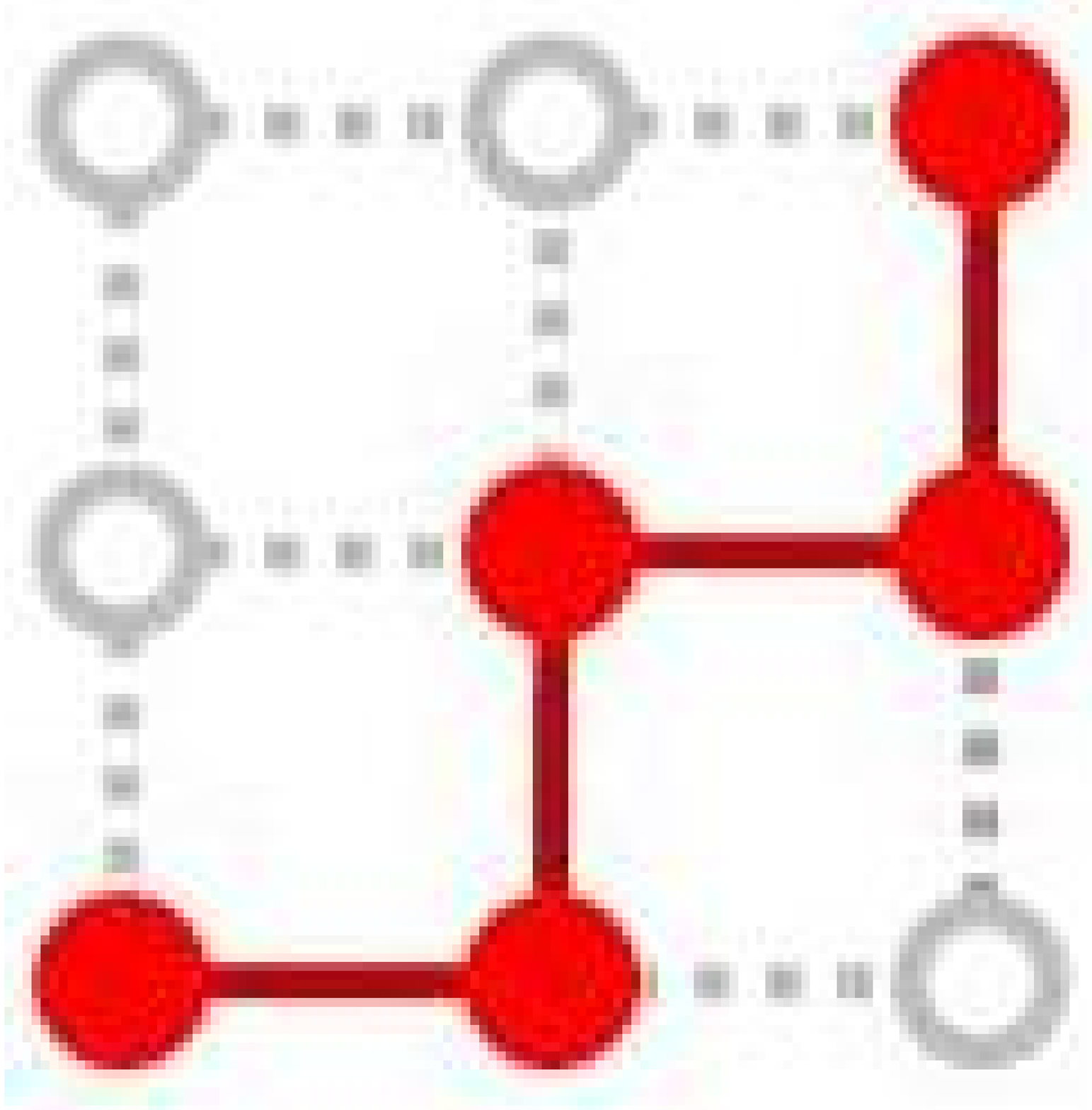}%
}%
&
\begin{array}
[c]{c}%
\raisebox{0.0069in}{\includegraphics[
height=0.2361in,
width=0.5967in
]%
{arrow-ya.ps}%
}%
\\
F_{p}^{(\ell+1)}(a^{:\left\lfloor p\right.  \left.  p\right\rceil })\\%
\raisebox{-0.0406in}{\includegraphics[
height=0.1436in,
width=0.1436in
]%
{icon21.ps}%
}%
^{(\ell+1)}\left(  a^{:\left\lfloor p\right.  \left.  p\right\rceil
},p\right)
\end{array}
&
\raisebox{-0.3009in}{\includegraphics[
height=0.6789in,
width=0.6789in
]%
{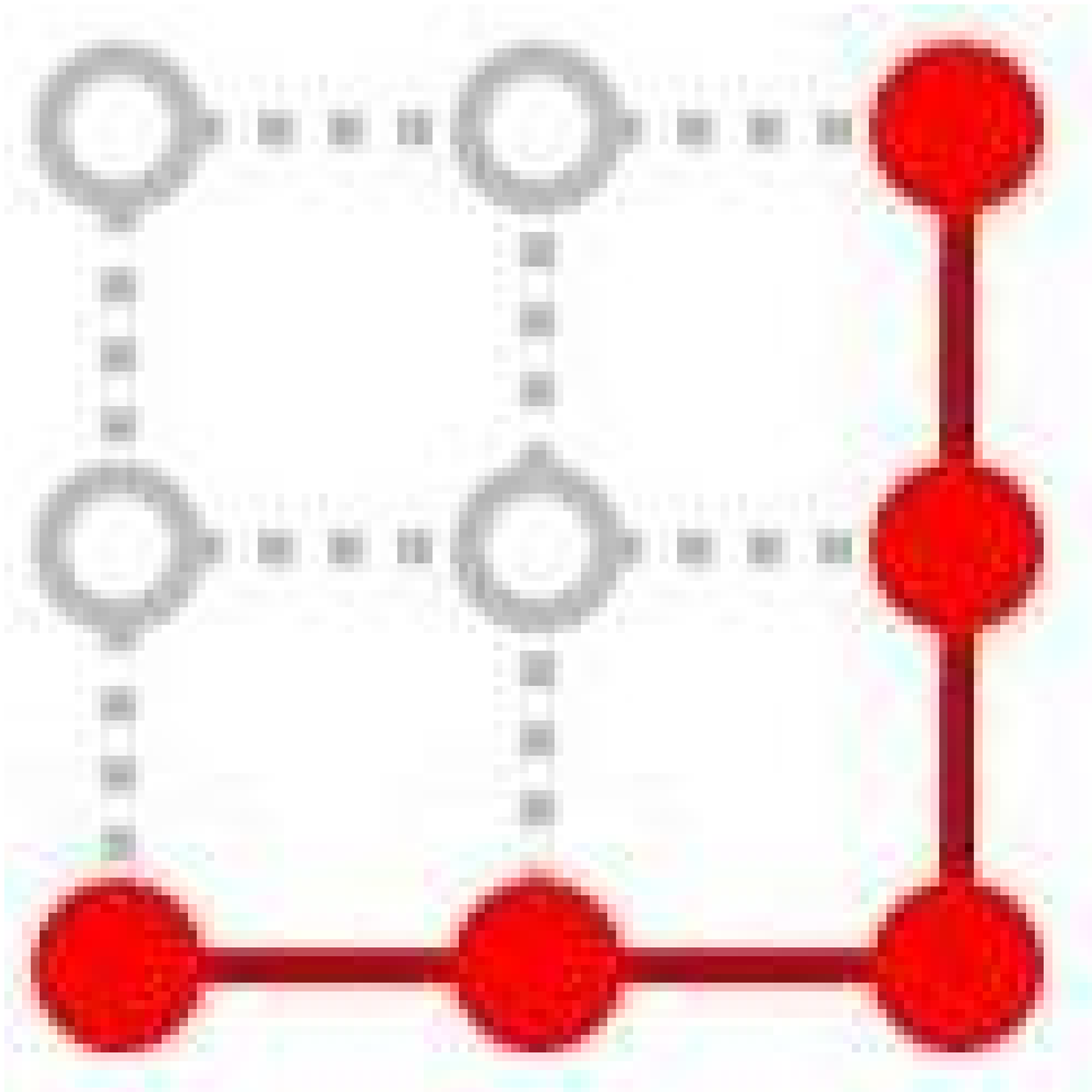}%
}%
\end{array}
}%
\end{array}
\]

\bigskip%

\[
\hspace{-0.5in}%
\begin{array}
[c]{c}%
\raisebox{-0.5518in}{\includegraphics[
height=1.1796in,
width=1.2254in
]%
{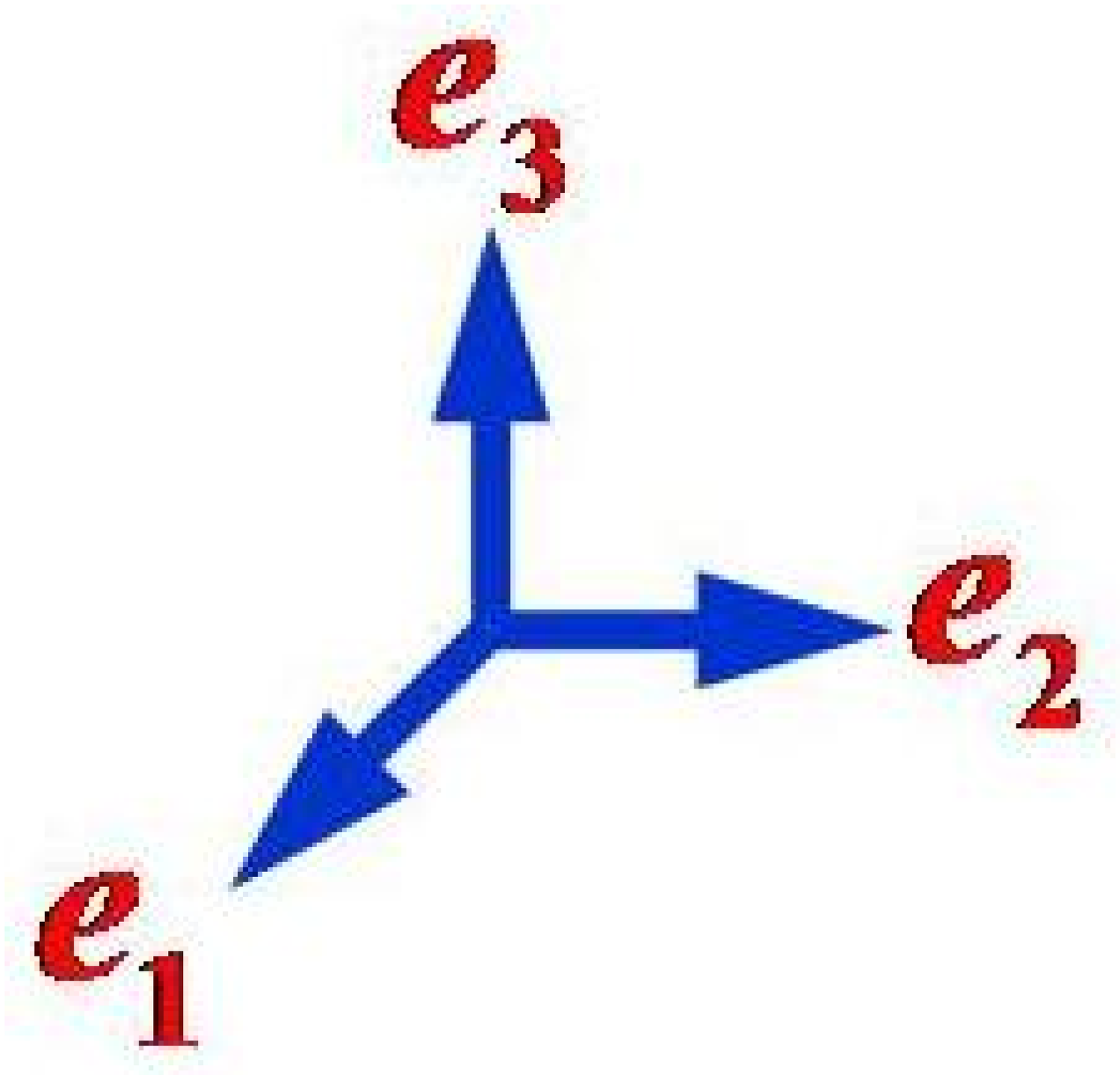}%
}%
\qquad%
\begin{array}
[c]{ccc}%
\raisebox{-0.5016in}{\includegraphics[
height=1.1372in,
width=1.1372in
]%
{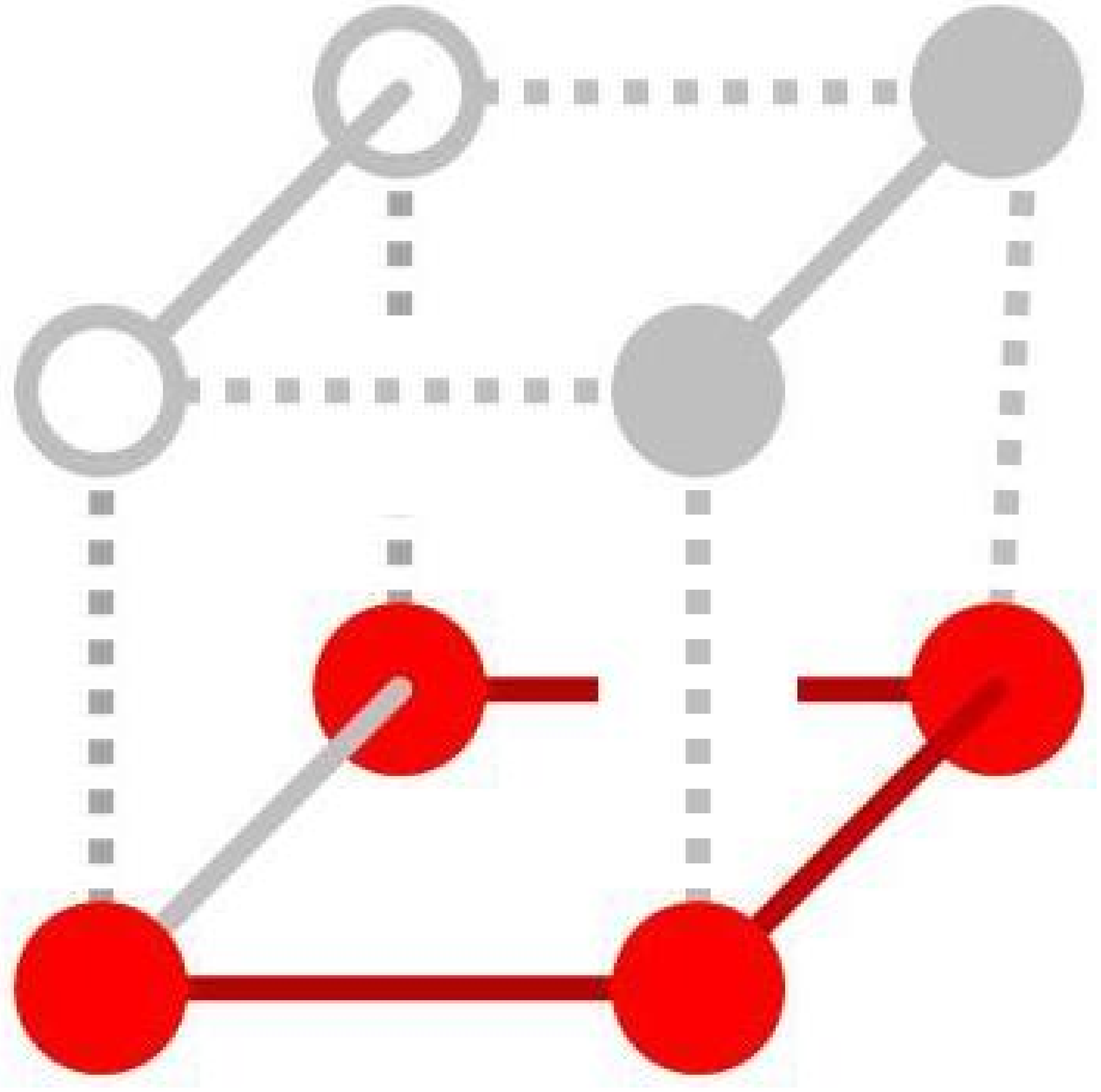}%
}%
&
\begin{array}
[c]{c}%
{\includegraphics[
height=0.237in,
width=0.5967in
]%
{arrow-ya.ps}%
}%
\\
F_{1}^{(\ell)}\left(  a\right) \\
\\%
\raisebox{-0.0406in}{\includegraphics[
height=0.1531in,
width=0.1531in
]%
{icon30.ps}%
}%
^{(\ell)}\left(  a,1\right)
\end{array}
&
\raisebox{-0.5016in}{\includegraphics[
height=1.1372in,
width=1.1372in
]%
{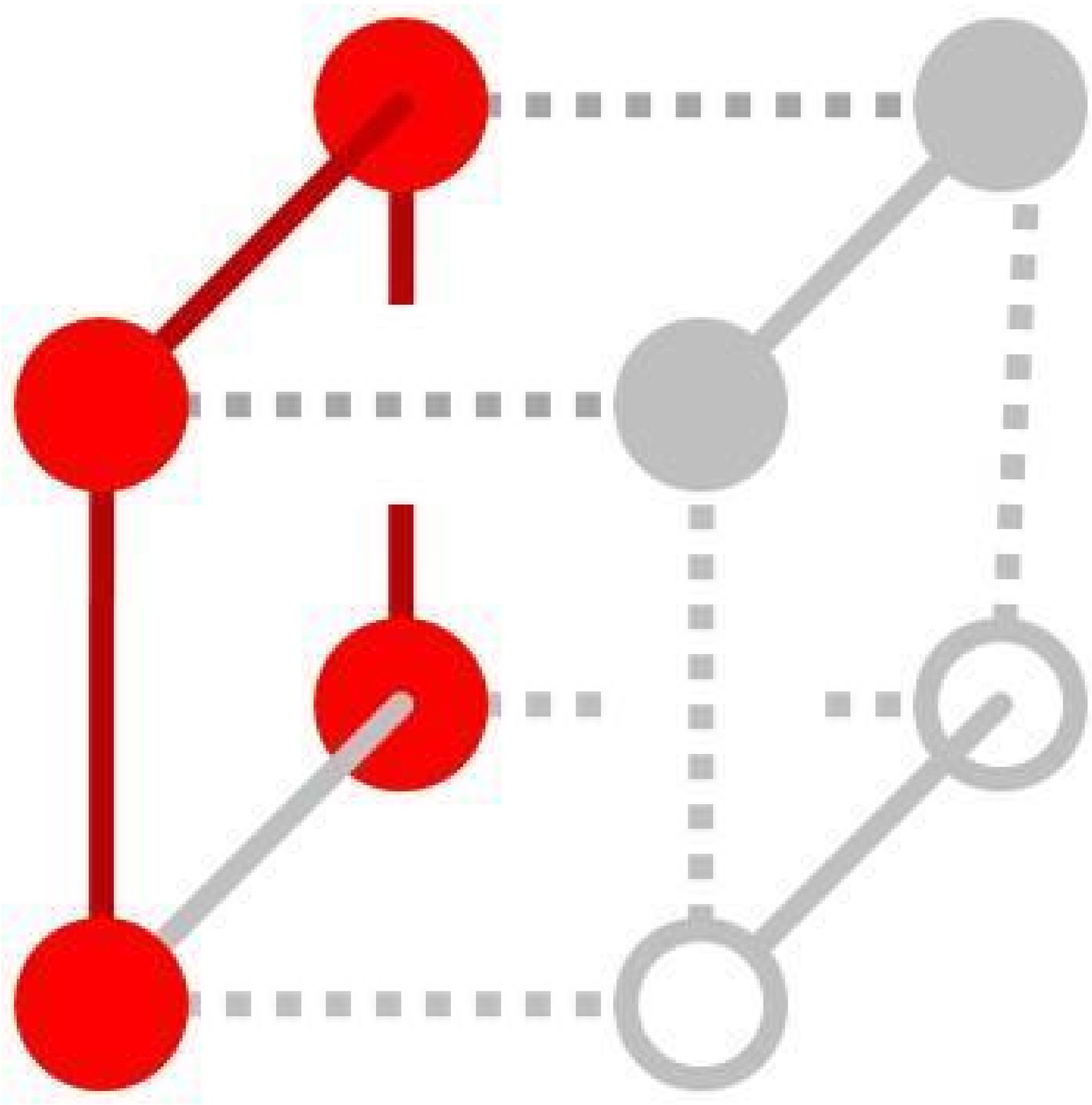}%
}%
\end{array}
\hspace{1.5in}\\
\\
\multicolumn{1}{l}{%
\begin{array}
[c]{cccc}%
\raisebox{-0.5016in}{\includegraphics[
height=1.2168in,
width=1.1147in
]%
{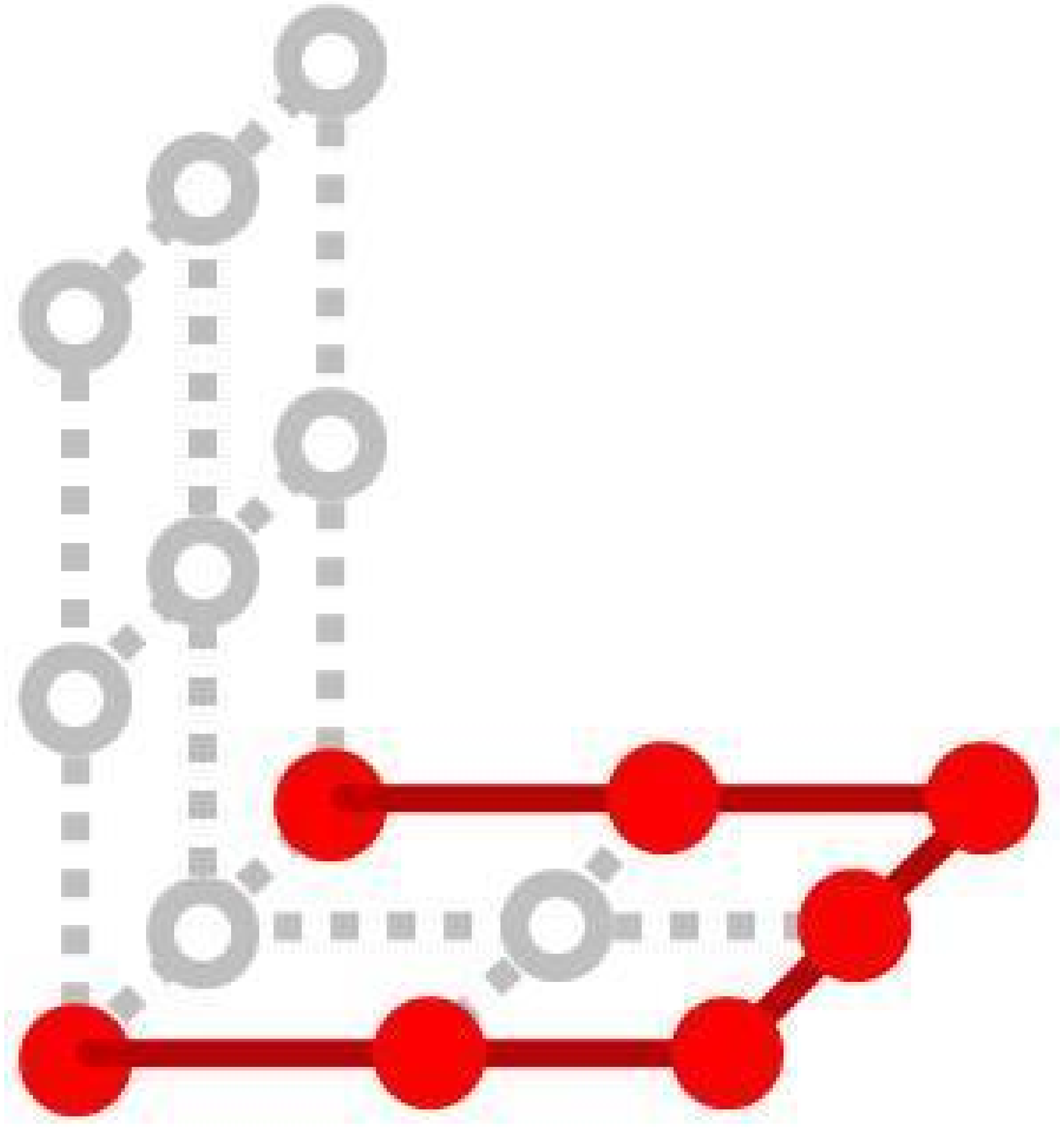}%
}%
&
\begin{array}
[c]{c}%
{\includegraphics[
height=0.237in,
width=0.5967in
]%
{arrow-ya.ps}%
}%
\\
F_{3}^{(\ell+1)}\left(  a^{:2}\right) \\
\\%
\raisebox{-0.0406in}{\includegraphics[
height=0.1436in,
width=0.1436in
]%
{icon20.ps}%
}%
^{(\ell+1)}\left(  a^{:2},3\right)
\end{array}
&
\raisebox{-0.5016in}{\includegraphics[
height=1.2159in,
width=1.0084in
]%
{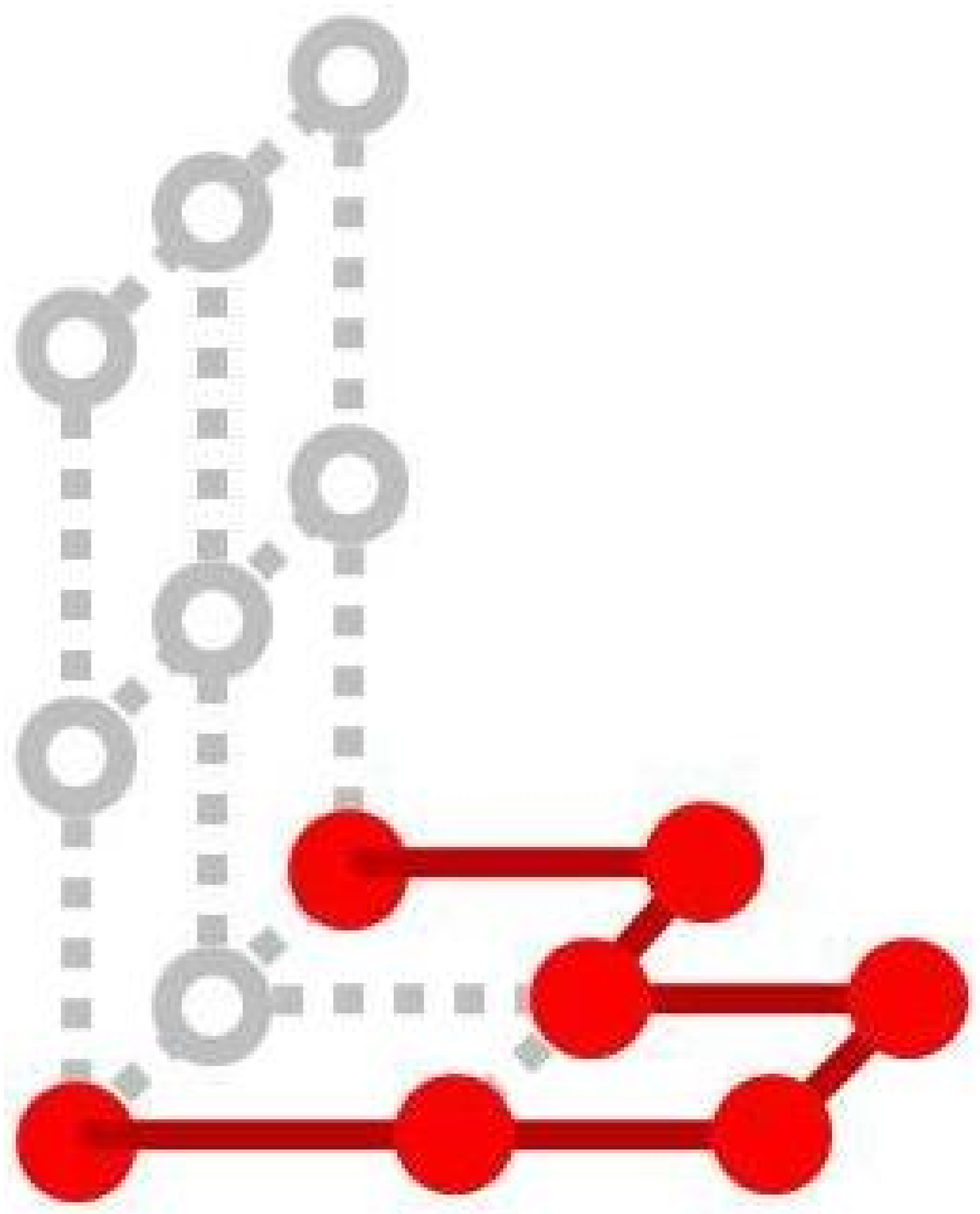}%
}%
&
\begin{array}
[c]{c}%
{\includegraphics[
height=0.237in,
width=0.5967in
]%
{arrow-ya.ps}%
}%
\\
F_{1}^{(\ell+1)}\left(  a^{:12}\right) \\
\\%
\raisebox{-0.0406in}{\includegraphics[
height=0.1531in,
width=0.1531in
]%
{icon30.ps}%
}%
^{(\ell+1)}\left(  a^{:12},1\right)
\end{array}
\\
&  &  & \\%
\raisebox{-0.5016in}{\includegraphics[
height=1.2168in,
width=0.7965in
]%
{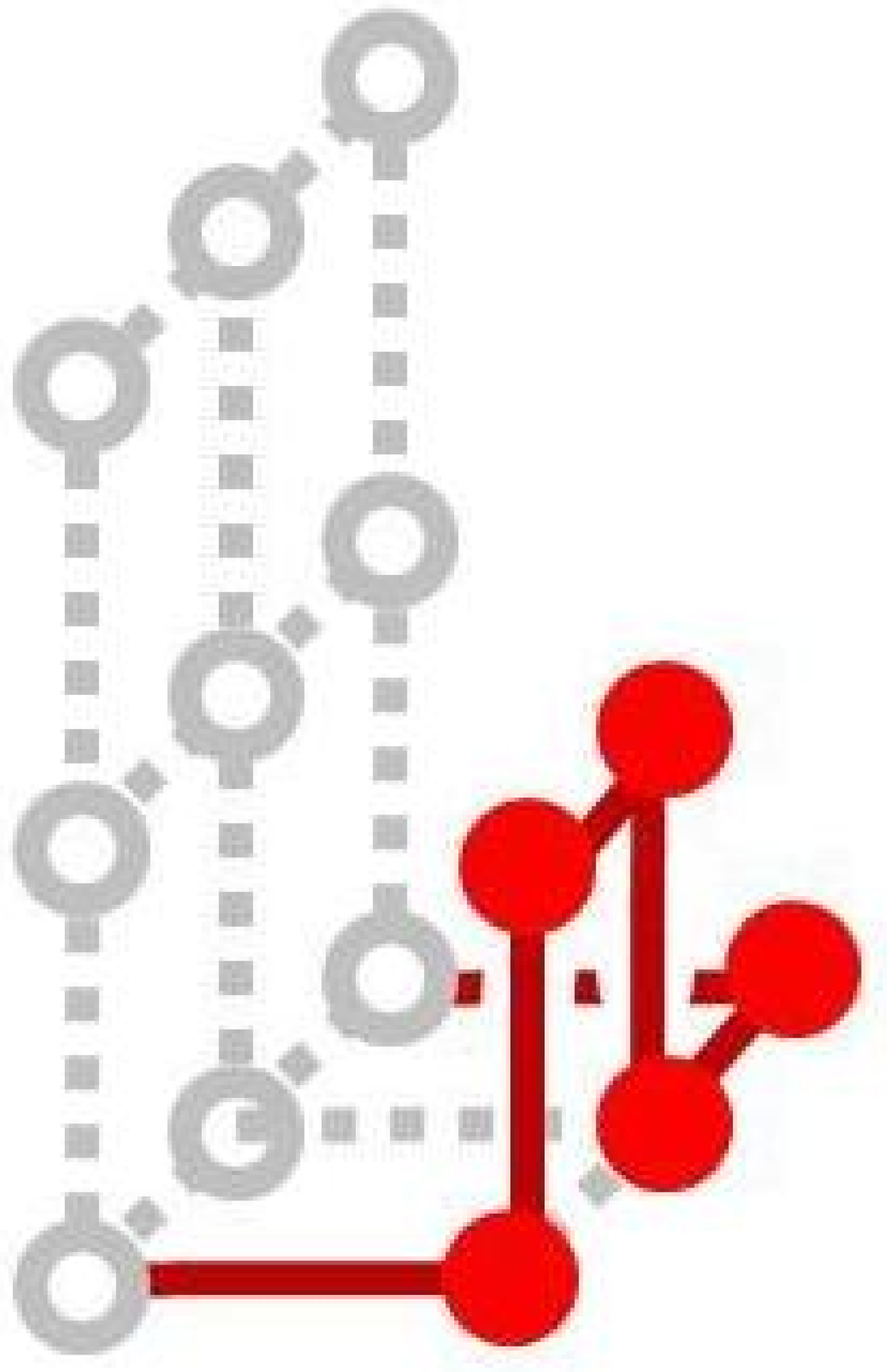}%
}%
&
\begin{array}
[c]{c}%
{\includegraphics[
height=0.237in,
width=0.5967in
]%
{arrow-ya.ps}%
}%
\\
F_{2}^{(\ell+1)}\left(  a^{:2}\right) \\
\\%
\raisebox{-0.0406in}{\includegraphics[
height=0.1436in,
width=0.1436in
]%
{icon20.ps}%
}%
^{(\ell+1)}\left(  a^{:2},2\right)
\end{array}
&
\raisebox{-0.5016in}{\includegraphics[
height=1.2159in,
width=0.902in
]%
{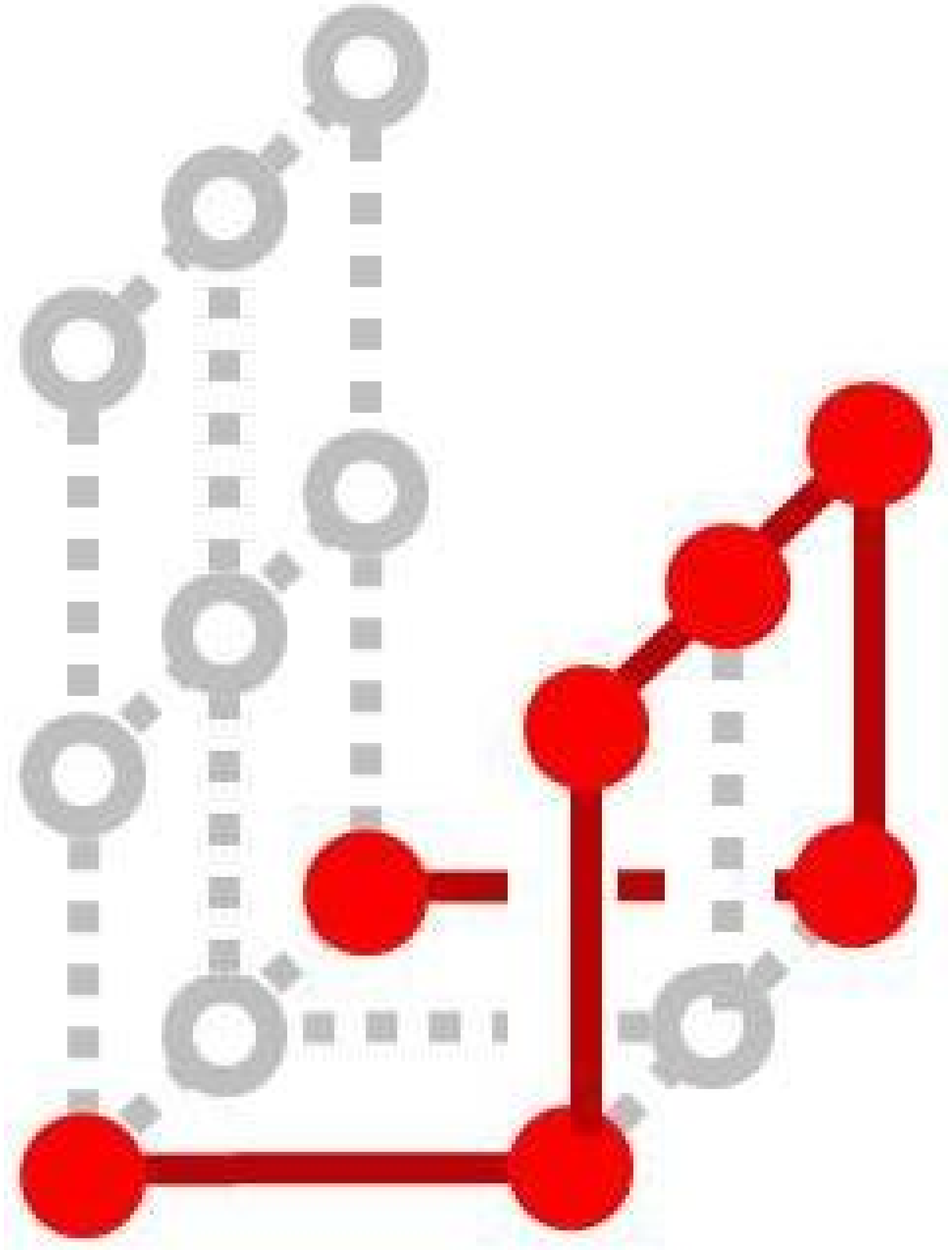}%
}%
&
\begin{array}
[c]{c}%
{\includegraphics[
height=0.237in,
width=0.5967in
]%
{arrow-ya.ps}%
}%
\\
F_{2}^{(\ell+1)}\left(  a^{:1^{2}}\right) \\
\\%
\raisebox{-0.0406in}{\includegraphics[
height=0.1436in,
width=0.1436in
]%
{icon20.ps}%
}%
^{(\ell+1)}\left(  a^{:1^{2}},1\right)
\end{array}
\\
&  &  & \\%
\raisebox{-0.5016in}{\includegraphics[
height=1.2159in,
width=0.902in
]%
{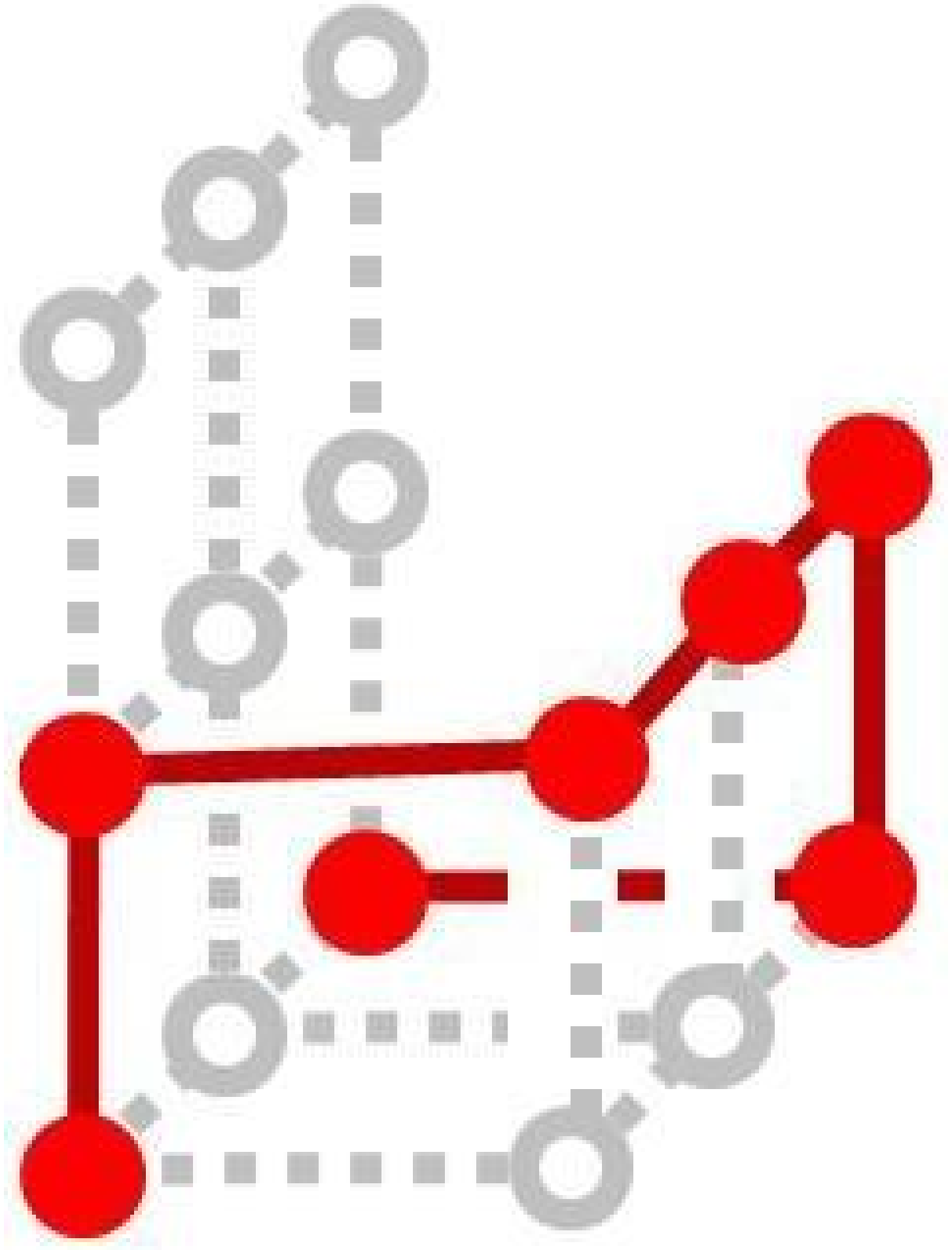}%
}%
&
\begin{array}
[c]{c}%
{\includegraphics[
height=0.237in,
width=0.5967in
]%
{arrow-ya.ps}%
}%
\\
F_{1}^{(\ell+1)}\left(  a\right) \\
\\%
\raisebox{-0.0406in}{\includegraphics[
height=0.1436in,
width=0.1436in
]%
{icon21.ps}%
}%
^{(\ell+1)}\left(  a,1\right)
\end{array}
&
\raisebox{-0.5016in}{\includegraphics[
height=1.2159in,
width=0.902in
]%
{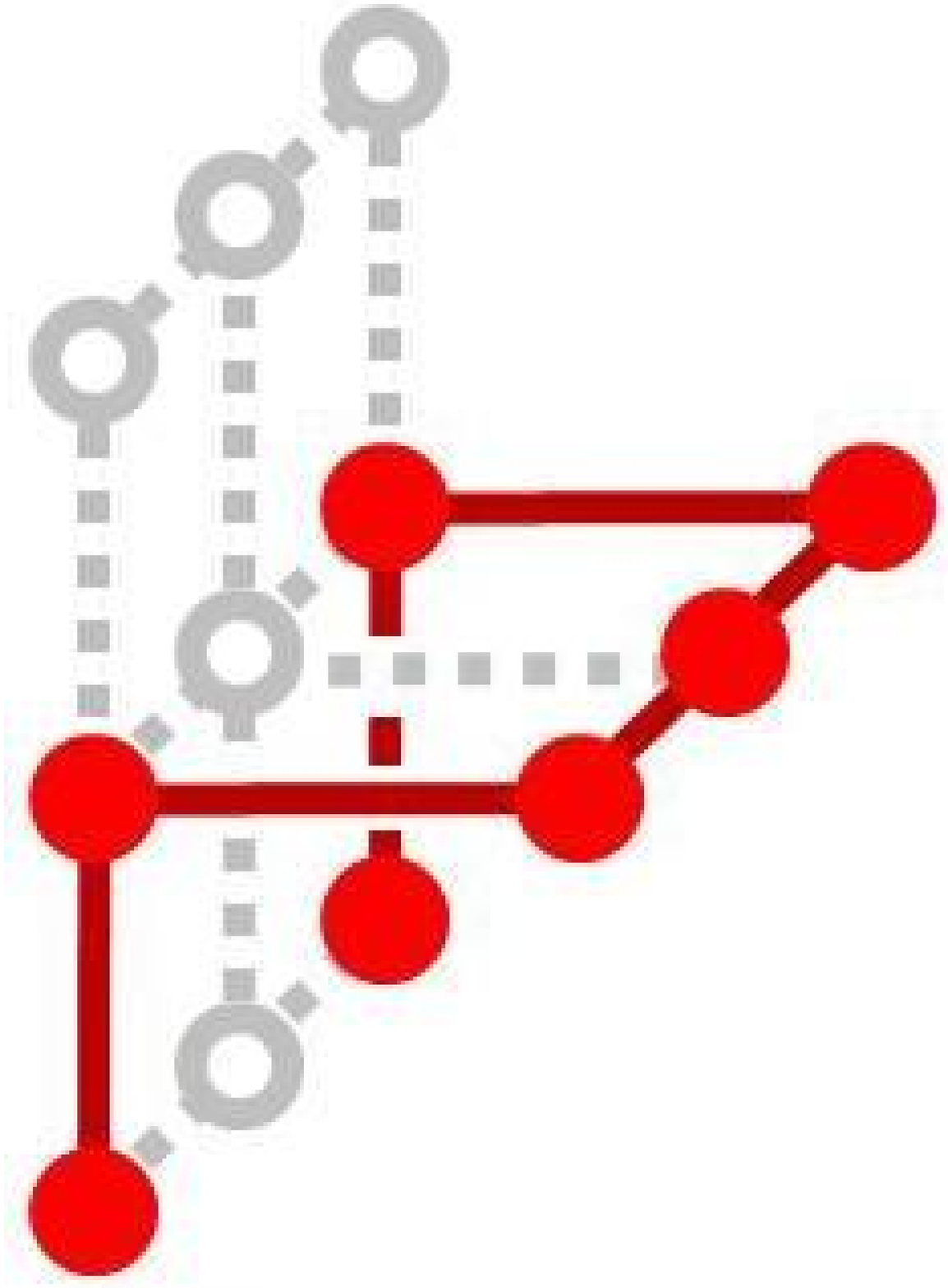}%
}%
&
\begin{array}
[c]{c}%
{\includegraphics[
height=0.237in,
width=0.5967in
]%
{arrow-ya.ps}%
}%
\\
F_{3}^{(\ell+1)}\left(  a^{:3}\right) \\
\\%
\raisebox{-0.0406in}{\includegraphics[
height=0.1436in,
width=0.1436in
]%
{icon21.ps}%
}%
^{(\ell+1)}\left(  a^{:3},3\right)
\end{array}
\\
&  &  & \\%
\raisebox{-0.5016in}{\includegraphics[
height=1.2159in,
width=0.902in
]%
{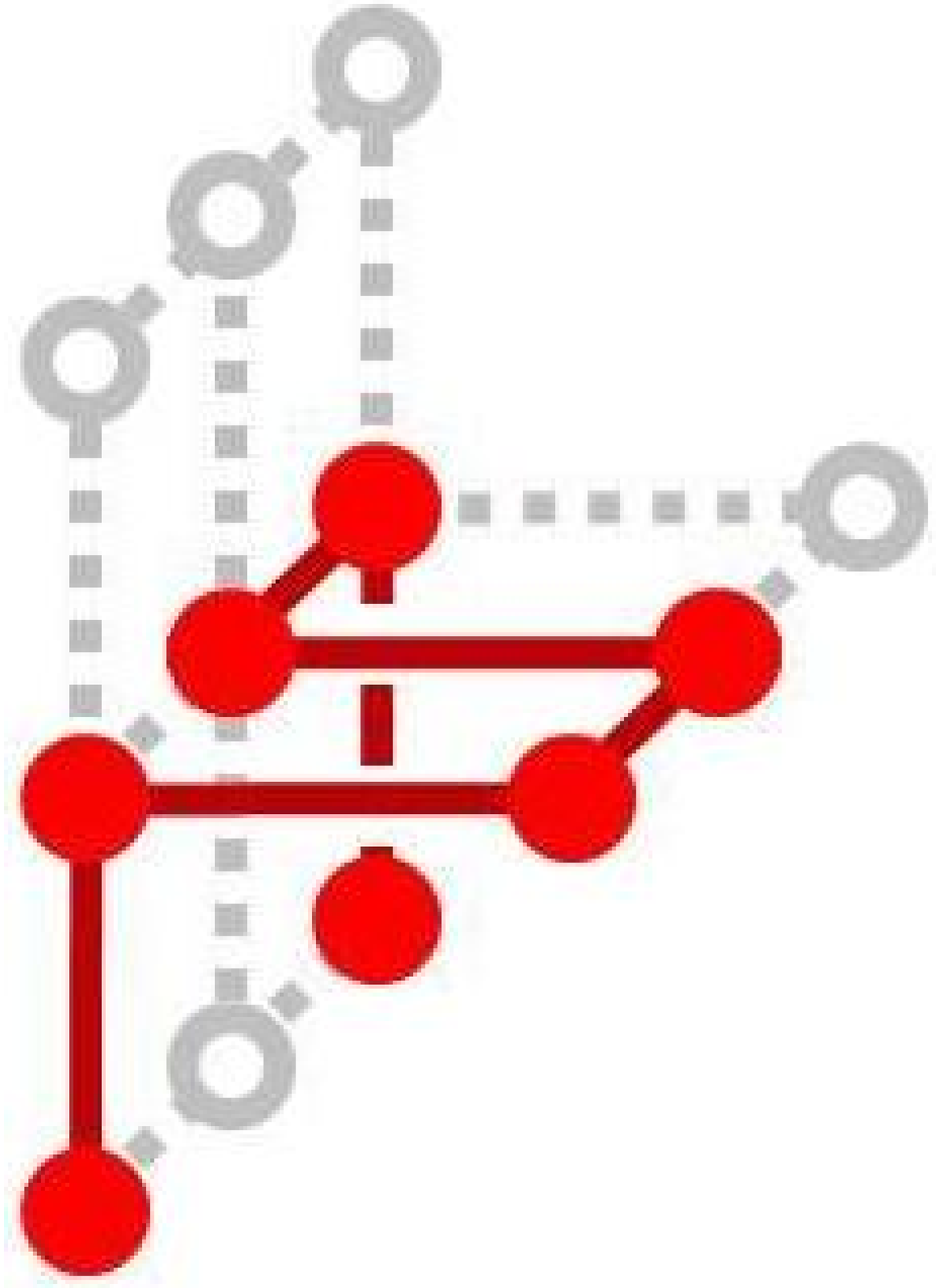}%
}%
&
\begin{array}
[c]{c}%
{\includegraphics[
height=0.237in,
width=0.5967in
]%
{arrow-ya.ps}%
}%
\\
F^{(\ell+1)}\left(  a^{:13}\right) \\
\\%
\raisebox{-0.0406in}{\includegraphics[
height=0.1531in,
width=0.1531in
]%
{icon30.ps}%
}%
^{(\ell+1)}\left(  a^{:13},1\right)
\end{array}
&
\raisebox{-0.5016in}{\includegraphics[
height=1.2168in,
width=0.448in
]%
{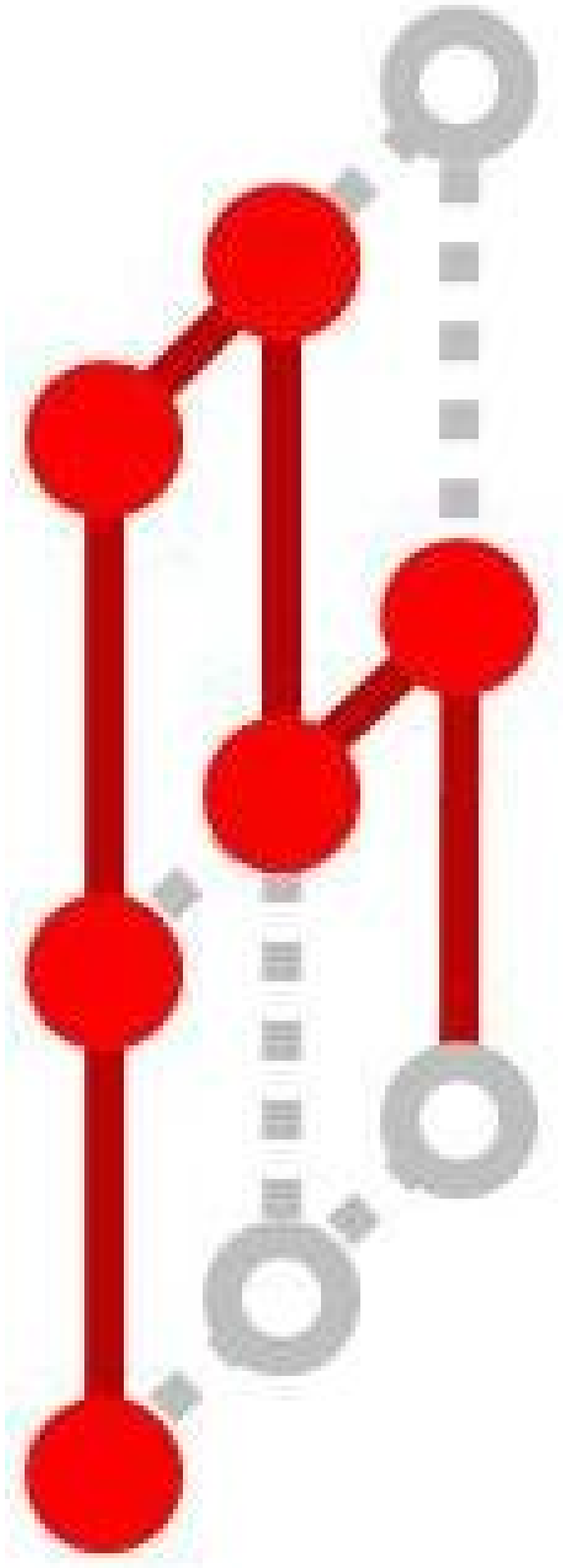}%
}%
\qquad &
\begin{array}
[c]{c}%
{\includegraphics[
height=0.237in,
width=0.5967in
]%
{arrow-ya.ps}%
}%
\\
F_{2}^{(\ell+1)}\left(  a^{:3}\right) \\
\\%
\raisebox{-0.0406in}{\includegraphics[
height=0.1436in,
width=0.1436in
]%
{icon20.ps}%
}%
^{(\ell+1)}\left(  a^{:3},2\right)
\end{array}
\\
&  &  & \\%
\raisebox{-0.5016in}{\includegraphics[
height=1.2168in,
width=0.448in
]%
{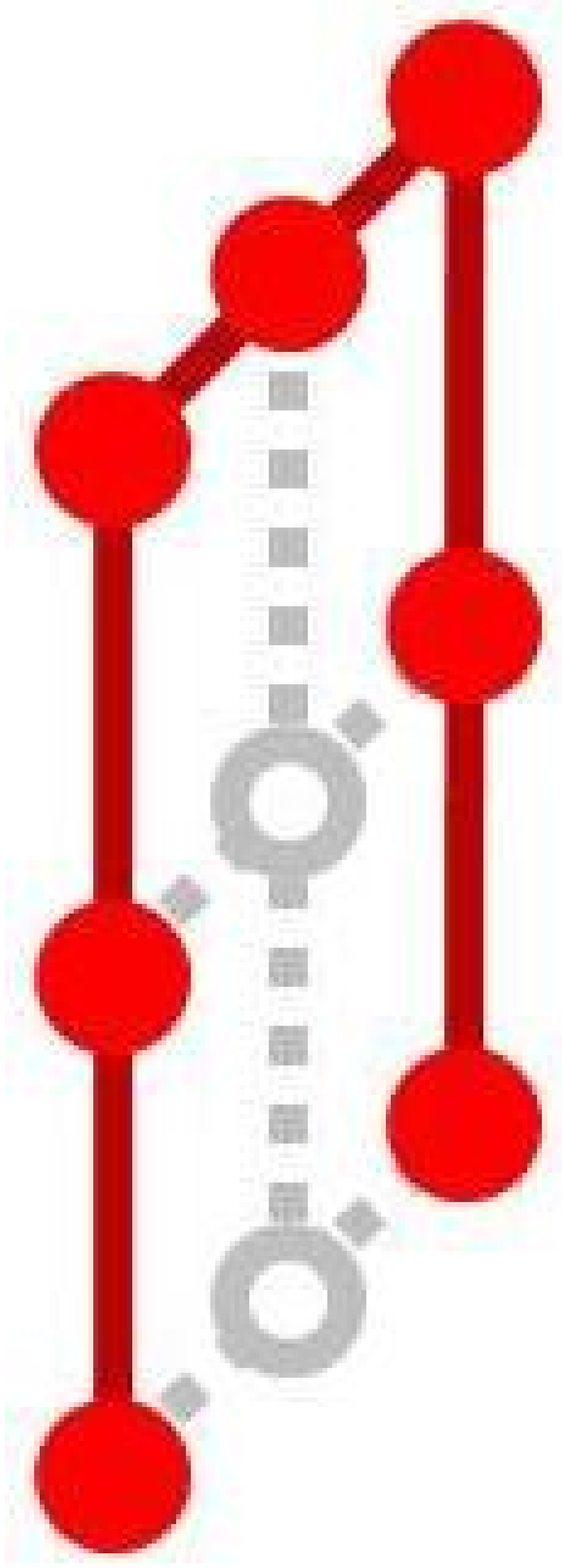}%
}%
\qquad &  &  &
\end{array}
}%
\end{array}
\]
\bigskip

The definition suggested by the above figures is:

\bigskip\bigskip

$\hspace{-0.75in}\left.
\raisebox{-0.0303in}{\includegraphics[
height=0.1332in,
width=0.1193in
]%
{red-refinement.ps}%
}%
\left(
\raisebox{-0.0406in}{\includegraphics[
height=0.1436in,
width=0.1436in
]%
{icon10.ps}%
}%
^{(\ell)}\left(  a,p\right)  \right)  \right\vert _{%
\raisebox{-0.0303in}{\includegraphics[
height=0.1332in,
width=0.1193in
]%
{red-refinement.ps}%
}%
\left(  \mathbb{K}^{(\ell)}\right)  }=\left.
\raisebox{-0.0406in}{\includegraphics[
height=0.1436in,
width=0.1436in
]%
{icon11.ps}%
}%
^{(\ell+1)}\left(  a^{:\left\lfloor p\right.  },p\right)  \cdot%
\raisebox{-0.0406in}{\includegraphics[
height=0.1436in,
width=0.1436in
]%
{icon13.ps}%
}%
^{(\ell+1)}\left(  a^{:\left.  p\right\rceil \left\lfloor p\right.
},p\right)  \cdot%
\raisebox{-0.0406in}{\includegraphics[
height=0.1436in,
width=0.1436in
]%
{icon10.ps}%
}%
^{(\ell+1)}\left(  a^{:\left.  p\right\rceil },p\right)  \cdot%
\raisebox{-0.0406in}{\includegraphics[
height=0.1436in,
width=0.1436in
]%
{icon10.ps}%
}%
^{(\ell+1)}\left(  a,p\right)  \right\vert _{%
\raisebox{-0.0303in}{\includegraphics[
height=0.1332in,
width=0.1193in
]%
{red-refinement.ps}%
}%
\left(  \mathbb{K}^{(\ell)}\right)  }$\bigskip\bigskip

$\hspace{-0.75in}\left.
\raisebox{-0.0303in}{\includegraphics[
height=0.1332in,
width=0.1193in
]%
{red-refinement.ps}%
}%
\left(
\raisebox{-0.0406in}{\includegraphics[
height=0.1436in,
width=0.1436in
]%
{icon21.ps}%
}%
^{(\ell)}\left(  a,p\right)  \right)  \right\vert _{%
\raisebox{-0.0303in}{\includegraphics[
height=0.1332in,
width=0.1193in
]%
{red-refinement.ps}%
}%
\left(  \mathbb{K}^{(\ell)}\right)  }=\left.
\raisebox{-0.0406in}{\includegraphics[
height=0.1436in,
width=0.1436in
]%
{icon21.ps}%
}%
^{(\ell+1)}\left(  a^{:\left.  p\right\rceil \left\lfloor p\right.
},p\right)  \cdot%
\raisebox{-0.0406in}{\includegraphics[
height=0.1436in,
width=0.1436in
]%
{icon21.ps}%
}%
^{(\ell+1)}\left(  a^{\left\lfloor p\right.  },p\right)  \cdot%
\raisebox{-0.0406in}{\includegraphics[
height=0.1436in,
width=0.1436in
]%
{icon21.ps}%
}%
^{(\ell+1)}\left(  a^{:\left.  p\right\rceil },p\right)  \cdot%
\raisebox{-0.0406in}{\includegraphics[
height=0.1436in,
width=0.1436in
]%
{icon21.ps}%
}%
^{(\ell+1)}\left(  a,p\right)  \right\vert _{%
\raisebox{-0.0303in}{\includegraphics[
height=0.1332in,
width=0.1193in
]%
{red-refinement.ps}%
}%
\left(  \mathbb{K}^{(\ell)}\right)  }$\bigskip\bigskip

$\hspace{-0.75in}\left.
\raisebox{-0.0303in}{\includegraphics[
height=0.1332in,
width=0.1193in
]%
{red-refinement.ps}%
}%
\left(
\raisebox{-0.0406in}{\includegraphics[
height=0.1531in,
width=0.1531in
]%
{icon30.ps}%
}%
^{(\ell)}\left(  a,1\right)  \right)  \right\vert _{%
\raisebox{-0.0303in}{\includegraphics[
height=0.1332in,
width=0.1193in
]%
{red-refinement.ps}%
}%
\left(  \mathbb{K}^{(\ell)}\right)  }=%
\raisebox{-0.0406in}{\includegraphics[
height=0.1436in,
width=0.1436in
]%
{icon20.ps}%
}%
^{(\ell+1)}\left(  a^{:3},2\right)  \cdot%
\raisebox{-0.0406in}{\includegraphics[
height=0.1531in,
width=0.1531in
]%
{icon30.ps}%
}%
^{(\ell+1)}\left(  a^{:13},1\right)  \cdot%
\raisebox{-0.0406in}{\includegraphics[
height=0.1436in,
width=0.1436in
]%
{icon21.ps}%
}%
^{(\ell+1)}\left(  a^{:3},3\right)  \cdot%
\raisebox{-0.0406in}{\includegraphics[
height=0.1436in,
width=0.1436in
]%
{icon21.ps}%
}%
^{(\ell+1)}\left(  a,1\right)  \bigskip$\bigskip

$\hspace{0.25in}\left.  \cdot%
\raisebox{-0.0406in}{\includegraphics[
height=0.1436in,
width=0.1436in
]%
{icon20.ps}%
}%
^{(\ell+1)}\left(  a^{:1^{2}},1\right)  \cdot%
\raisebox{-0.0406in}{\includegraphics[
height=0.1436in,
width=0.1436in
]%
{icon20.ps}%
}%
^{(\ell+1)}\left(  a^{:2},2\right)  \cdot%
\raisebox{-0.0406in}{\includegraphics[
height=0.1531in,
width=0.1531in
]%
{icon30.ps}%
}%
^{(\ell+1)}\left(  a^{:12},1\right)  \cdot%
\raisebox{-0.0406in}{\includegraphics[
height=0.1436in,
width=0.1436in
]%
{icon20.ps}%
}%
^{(\ell+1)}\left(  a^{:2},3\right)  \right\vert _{%
\raisebox{-0.0303in}{\includegraphics[
height=0.1332in,
width=0.1193in
]%
{red-refinement.ps}%
}%
\left(  \mathbb{K}^{(\ell)}\right)  }$

\bigskip\bigskip

\noindent where $\left.
\raisebox{-0.0303in}{\includegraphics[
height=0.1332in,
width=0.1193in
]%
{red-refinement.ps}%
}%
\left(  L_{\ast}^{(\ell)}\left(  \ast,\ast,\ast\right)  \right)  \right\vert
_{%
\raisebox{-0.0303in}{\includegraphics[
height=0.1332in,
width=0.1193in
]%
{red-refinement.ps}%
}%
\left(  \mathbb{K}^{(\ell)}\right)  }$ denotes a map $%
\raisebox{-0.0303in}{\includegraphics[
height=0.1332in,
width=0.1193in
]%
{red-refinement.ps}%
}%
\left(  \mathbb{K}^{(\ell)}\right)  \longrightarrow%
\raisebox{-0.0303in}{\includegraphics[
height=0.1332in,
width=0.1193in
]%
{red-refinement.ps}%
}%
\left(  \mathbb{K}^{(\ell)}\right)  $ from $%
\raisebox{-0.0303in}{\includegraphics[
height=0.1332in,
width=0.1193in
]%
{red-refinement.ps}%
}%
\left(  \mathbb{K}^{(\ell)}\right)  $ into itself.

\bigskip

We seek to construct a morphism $%
\raisebox{-0.0303in}{\includegraphics[
height=0.1332in,
width=0.1193in
]%
{red-refinement.ps}%
}%
:\mathbb{K}^{(\ell+1)}\longrightarrow\mathbb{K}^{(\ell+1)}$. \ Unfortunately,
if for example we make the most straight forward definition by extending
\[
\left.
\raisebox{-0.0303in}{\includegraphics[
height=0.1332in,
width=0.1193in
]%
{red-refinement.ps}%
}%
\left(
\raisebox{-0.0406in}{\includegraphics[
height=0.1436in,
width=0.1436in
]%
{icon10.ps}%
}%
^{(\ell)}\left(  a,p\right)  \right)  \right\vert _{%
\raisebox{-0.0303in}{\includegraphics[
height=0.1332in,
width=0.1193in
]%
{red-refinement.ps}%
}%
\left(  \mathbb{K}^{(\ell)}\right)  }:%
\raisebox{-0.0303in}{\includegraphics[
height=0.1332in,
width=0.1193in
]%
{red-refinement.ps}%
}%
\left(  \mathbb{K}^{(\ell)}\right)  \longrightarrow%
\raisebox{-0.0303in}{\includegraphics[
height=0.1332in,
width=0.1193in
]%
{red-refinement.ps}%
}%
\left(  \mathbb{K}^{(\ell)}\right)
\]
to%
\[%
\raisebox{-0.0303in}{\includegraphics[
height=0.1332in,
width=0.1193in
]%
{red-refinement.ps}%
}%
\left(  \left(
\raisebox{-0.0406in}{\includegraphics[
height=0.1436in,
width=0.1436in
]%
{icon10.ps}%
}%
^{(\ell)}\left(  a,p\right)  \right)  \right)  :\mathbb{K}^{(\ell
+1)}\longrightarrow\mathbb{K}^{(\ell+1)}%
\]
by defining%

\[%
\raisebox{-0.0303in}{\includegraphics[
height=0.1332in,
width=0.1193in
]%
{red-refinement.ps}%
}%
\left(
\raisebox{-0.0406in}{\includegraphics[
height=0.1436in,
width=0.1436in
]%
{icon10.ps}%
}%
^{(\ell)}\left(  a,p\right)  \right)  =%
\raisebox{-0.0406in}{\includegraphics[
height=0.1436in,
width=0.1436in
]%
{icon11.ps}%
}%
^{(\ell+1)}\left(  a^{:\left\lfloor p\right.  },p\right)  \cdot%
\raisebox{-0.0406in}{\includegraphics[
height=0.1436in,
width=0.1436in
]%
{icon13.ps}%
}%
^{(\ell+1)}\left(  a^{:\left.  p\right\rceil \left\lfloor p\right.
},p\right)  \cdot%
\raisebox{-0.0406in}{\includegraphics[
height=0.1436in,
width=0.1436in
]%
{icon10.ps}%
}%
^{(\ell+1)}\left(  a^{:\left.  p\right\rceil },p\right)  \cdot%
\raisebox{-0.0406in}{\includegraphics[
height=0.1436in,
width=0.1436in
]%
{icon10.ps}%
}%
^{(\ell+1)}\left(  a,p\right)  \text{ ,}%
\]
we produce an element of the ambient group $\Lambda_{\ell+1}$ which is not of
order $2$. \ \ But $%
\raisebox{-0.0406in}{\includegraphics[
height=0.1436in,
width=0.1436in
]%
{icon10.ps}%
}%
^{(\ell)}\left(  a,p\right)  $ is an element of $\Lambda_{\ell}$ of order $2$.
\ Hence, the morphism $%
\raisebox{-0.0303in}{\includegraphics[
height=0.1332in,
width=0.1193in
]%
{red-refinement.ps}%
}%
:\Lambda_{\ell}\longrightarrow\Lambda_{\ell+1}$ we seek cannot be defined in
this way!

\bigskip

Fortunately, there are other possible elements of $\Lambda_{\ell+1}$ which are
also extensions of
\[
\left.
\raisebox{-0.0303in}{\includegraphics[
height=0.1332in,
width=0.1193in
]%
{red-refinement.ps}%
}%
\left(
\raisebox{-0.0406in}{\includegraphics[
height=0.1436in,
width=0.1436in
]%
{icon10.ps}%
}%
^{(\ell)}\left(  a,p\right)  \right)  \right\vert _{%
\raisebox{-0.0303in}{\includegraphics[
height=0.1332in,
width=0.1193in
]%
{red-refinement.ps}%
}%
\left(  \mathbb{K}^{(\ell)}\right)  }:%
\raisebox{-0.0303in}{\includegraphics[
height=0.1332in,
width=0.1193in
]%
{red-refinement.ps}%
}%
\left(  \mathbb{K}^{(\ell)}\right)  \longrightarrow%
\raisebox{-0.0303in}{\includegraphics[
height=0.1332in,
width=0.1193in
]%
{red-refinement.ps}%
}%
\left(  \mathbb{K}^{(\ell)}\right)  .
\]
\ We know from a previous theorem that each generator of $\Lambda_{\ell}$ is a
product of disjoint transpositions. \ So a possible clue as to which extension
to select comes from the following well known formula for transpositions:%
\[
\left(  1n\right)  =\left(  \left(  (n-1)n\right)  \wedge\ldots\left(
\wedge(34)\wedge\left(  \overset{}{\underset{}{(23)\wedge(12)}}\right)
\right)  \right)  \text{ ,}%
\]
where $g\wedge h$ denotes $g\wedge h=g^{-1}hg$. \ This suggests the following
approach to defining a morphism $%
\raisebox{-0.0303in}{\includegraphics[
height=0.1332in,
width=0.1193in
]%
{red-refinement.ps}%
}%
:\Lambda_{\ell}\longrightarrow\Lambda_{\ell+1}$:

\bigskip

\begin{definition}
We define the \textbf{quandle product} $\wedge$ as%
\[%
\begin{array}
[c]{ccc}%
\Lambda_{\ell}\times\Lambda_{\ell} & \overset{\wedge}{\longrightarrow} &
\Lambda_{\ell}\\
\left(  g_{1},g_{2}\right)  & \longmapsto & g_{1}\symbol{94}g_{2}=g_{2}%
^{-1}g_{1}g_{2}%
\end{array}
\]

\end{definition}

\bigskip

Unfortunately, the quandle product `$\symbol{94}$' is not an associative
binary operation. \ To reduce the number and clutter of parentheses, we adopt
a right to left precedence rule for parentheses. \ For example,%
\[
g_{1}\wedge g_{2}\wedge g_{3}\text{ means }g_{1}\wedge\left(  g_{2}\wedge
g_{3}\right)  \text{ \ \ and \ \ }g_{1}\wedge g_{2}\wedge g_{3}\wedge
g_{4}\text{ \ \ means \ \ \ }g_{1}\wedge\left(  g_{2}\wedge\left(  g_{3}\wedge
g_{4}\right)  \right)  \text{ .}%
\]

\bigskip

Equipped with the quandle product, we are now prepared to define the
refinement map $%
\raisebox{-0.0303in}{\includegraphics[
height=0.1332in,
width=0.1193in
]%
{red-refinement.ps}%
}%
:\Lambda_{\ell}\longrightarrow\Lambda_{\ell+1}$ which we hope will be a
morphism. \ We define the image of the generators as:

\bigskip%

\begin{align*}%
\raisebox{-0.0303in}{\includegraphics[
height=0.1332in,
width=0.1193in
]%
{red-refinement.ps}%
}%
\left(
\raisebox{-0.0406in}{\includegraphics[
height=0.1436in,
width=0.1436in
]%
{icon10.ps}%
}%
^{(\ell)}\left(  a,p\right)  \right)   &  =%
\raisebox{-0.0406in}{\includegraphics[
height=0.1436in,
width=0.1436in
]%
{icon11.ps}%
}%
^{(\ell+1)}\left(  a^{:\left\lfloor p\right.  },p\right)  \wedge%
\raisebox{-0.0406in}{\includegraphics[
height=0.1436in,
width=0.1436in
]%
{icon13.ps}%
}%
^{(\ell+1)}\left(  a^{:\left.  p\right\rceil \left\lfloor p\right.
},p\right)  \wedge%
\raisebox{-0.0406in}{\includegraphics[
height=0.1436in,
width=0.1436in
]%
{icon10.ps}%
}%
^{(\ell+1)}\left(  a^{:\left.  p\right\rceil },p\right)  \wedge%
\raisebox{-0.0406in}{\includegraphics[
height=0.1436in,
width=0.1436in
]%
{icon10.ps}%
}%
^{(\ell+1)}\left(  a,p\right) \\
& \\%
\raisebox{-0.0303in}{\includegraphics[
height=0.1332in,
width=0.1193in
]%
{red-refinement.ps}%
}%
\left(
\raisebox{-0.0406in}{\includegraphics[
height=0.1436in,
width=0.1436in
]%
{icon21.ps}%
}%
^{(\ell)}\left(  a,p\right)  \right)   &  =%
\raisebox{-0.0406in}{\includegraphics[
height=0.1436in,
width=0.1436in
]%
{icon21.ps}%
}%
^{(\ell+1)}\left(  a^{:\left.  p\right\rceil \left\lfloor p\right.
},p\right)  \wedge%
\raisebox{-0.0406in}{\includegraphics[
height=0.1436in,
width=0.1436in
]%
{icon21.ps}%
}%
^{(\ell+1)}\left(  a^{\left\lfloor p\right.  },p\right)  \wedge%
\raisebox{-0.0406in}{\includegraphics[
height=0.1436in,
width=0.1436in
]%
{icon21.ps}%
}%
^{(\ell+1)}\left(  a^{:\left.  p\right\rceil },p\right)  \wedge%
\raisebox{-0.0406in}{\includegraphics[
height=0.1436in,
width=0.1436in
]%
{icon21.ps}%
}%
^{(\ell+1)}\left(  a,p\right) \\
& \\%
\raisebox{-0.0303in}{\includegraphics[
height=0.1332in,
width=0.1193in
]%
{red-refinement.ps}%
}%
\left(
\raisebox{-0.0406in}{\includegraphics[
height=0.1531in,
width=0.1531in
]%
{icon30.ps}%
}%
^{(\ell)}\left(  a,1\right)  \right)   &  =%
\raisebox{-0.0406in}{\includegraphics[
height=0.1436in,
width=0.1436in
]%
{icon20.ps}%
}%
^{(\ell+1)}\left(  a^{:3},2\right)  \wedge%
\raisebox{-0.0406in}{\includegraphics[
height=0.1531in,
width=0.1531in
]%
{icon30.ps}%
}%
^{(\ell+1)}\left(  a^{:13},1\right)  \wedge%
\raisebox{-0.0406in}{\includegraphics[
height=0.1436in,
width=0.1436in
]%
{icon21.ps}%
}%
^{(\ell+1)}\left(  a^{:3},3\right)  \wedge%
\raisebox{-0.0406in}{\includegraphics[
height=0.1436in,
width=0.1436in
]%
{icon21.ps}%
}%
^{(\ell+1)}\left(  a,1\right) \\
&  \wedge%
\raisebox{-0.0406in}{\includegraphics[
height=0.1436in,
width=0.1436in
]%
{icon20.ps}%
}%
^{(\ell+1)}\left(  a^{:1^{2}},1\right)  \wedge%
\raisebox{-0.0406in}{\includegraphics[
height=0.1436in,
width=0.1436in
]%
{icon20.ps}%
}%
^{(\ell+1)}\left(  a^{:2},2\right)  \wedge%
\raisebox{-0.0406in}{\includegraphics[
height=0.1531in,
width=0.1531in
]%
{icon30.ps}%
}%
^{(\ell+1)}\left(  a^{:12},1\right)  \wedge%
\raisebox{-0.0406in}{\includegraphics[
height=0.1436in,
width=0.1436in
]%
{icon20.ps}%
}%
^{(\ell+1)}\left(  a^{:2},3\right)
\end{align*}

\bigskip\bigskip

The key question is whether or not this extends to a group morphism $%
\raisebox{-0.0303in}{\includegraphics[
height=0.1332in,
width=0.1193in
]%
{red-refinement.ps}%
}%
:\Lambda_{\ell}\longrightarrow\Lambda_{\ell+1}$. \ For this to be true, each
relation among the generators of $\Lambda_{\ell}$ must map to a relation among
the generators of $\Lambda_{\ell+1}$. \ 

\bigskip

Conjectures 1A, 1B, and 1C of section 12 are essentially based on the above construction.

\bigskip

If the above conjecture is true, then we have created the following directed
(\textbf{lattice knot) ambient group system}%

\[
\left(  \mathbb{K},\Lambda\right)  =\left(  \mathbb{K}^{(0)},\Lambda
_{0}\right)  \overset{%
\raisebox{-0.0303in}{\includegraphics[
height=0.1332in,
width=0.1193in
]%
{red-refinement.ps}%
}%
}{\longrightarrow}\left(  \mathbb{K}^{(1)},\Lambda_{1}\right)  \overset{%
\raisebox{-0.0303in}{\includegraphics[
height=0.1332in,
width=0.1193in
]%
{red-refinement.ps}%
}%
}{\longrightarrow}\left(  \mathbb{K}^{(2)},\Lambda_{2}\right)  \overset{%
\raisebox{-0.0303in}{\includegraphics[
height=0.1332in,
width=0.1193in
]%
{red-refinement.ps}%
}%
}{\longrightarrow}\cdots\overset{%
\raisebox{-0.0303in}{\includegraphics[
height=0.1332in,
width=0.1193in
]%
{red-refinement.ps}%
}%
}{\longrightarrow}\left(  \mathbb{K}^{(\ell)},\Lambda_{\ell}\right)  \overset{%
\raisebox{-0.0303in}{\includegraphics[
height=0.1332in,
width=0.1193in
]%
{red-refinement.ps}%
}%
}{\longrightarrow}\left(  \mathbb{K}^{(\ell+1)},\Lambda_{\ell+1}\right)
\overset{%
\raisebox{-0.0303in}{\includegraphics[
height=0.1332in,
width=0.1193in
]%
{red-refinement.ps}%
}%
}{\longrightarrow}\cdots
\]
Modulo this conjecture, the direct limit
\[
\left(  \lim_{\longrightarrow}\mathbb{K},\lim_{\longrightarrow}\Lambda\right)
\]
of this system exists. \ Thus, we can think of each element of $\lim
\limits_{\longrightarrow}\Lambda$, when restricted to a lattice knot
$K\in\mathbb{K}$, as an \textbf{ambient isotopy} of $\mathbb{R}^{3}$ which
respects the lattice knot $K$.

\bigskip

If we omit the tug moves in the above definition, we can, in like manner,
define the \textbf{inextensible (lattice knot ambient) group system}%

\[
\left(  \mathbb{K},\widetilde{\Lambda}\right)  =\left(  \mathbb{K}%
^{(0)},\widetilde{\Lambda}_{0}\right)  \overset{%
\raisebox{-0.0303in}{\includegraphics[
height=0.1332in,
width=0.1193in
]%
{red-refinement.ps}%
}%
}{\longrightarrow}\left(  \mathbb{K}^{(1)},\widetilde{\Lambda}_{1}\right)
\overset{%
\raisebox{-0.0303in}{\includegraphics[
height=0.1332in,
width=0.1193in
]%
{red-refinement.ps}%
}%
}{\longrightarrow}\left(  \mathbb{K}^{(2)},\widetilde{\Lambda}_{2}\right)
\overset{%
\raisebox{-0.0303in}{\includegraphics[
height=0.1332in,
width=0.1193in
]%
{red-refinement.ps}%
}%
}{\longrightarrow}\cdots\overset{%
\raisebox{-0.0303in}{\includegraphics[
height=0.1332in,
width=0.1193in
]%
{red-refinement.ps}%
}%
}{\longrightarrow}\left(  \mathbb{K}^{(\ell)},\widetilde{\Lambda}_{\ell
}\right)  \overset{%
\raisebox{-0.0303in}{\includegraphics[
height=0.1332in,
width=0.1193in
]%
{red-refinement.ps}%
}%
}{\longrightarrow}\left(  \mathbb{K}^{(\ell+1)},\widetilde{\Lambda}_{\ell
+1}\right)  \overset{%
\raisebox{-0.0303in}{\includegraphics[
height=0.1332in,
width=0.1193in
]%
{red-refinement.ps}%
}%
}{\longrightarrow}\cdots
\]
In this case, each element of $\lim\limits_{\longrightarrow}\widetilde
{\Lambda}$, when restricted to a lattice knot $K\in\mathbb{K}$, becomes an
\textbf{inextensible ambient isotopy}.

\bigskip

\section{Appendix D: \ Oriented quantum knots}

\bigskip

We will now briefly outline how to define oriented quantum knots and oriented
quantum knot systems. \ This will be accomplished by associating a qutrit
(instead of a qubit) with each edge of the bounded cell complex $\mathcal{C}%
_{\ell,n}$. \ 

\bigskip

Let `$<$' denote the lex ordering of the set of edges $\mathcal{E}_{\ell,n}$
of the cell complex $\mathcal{C}_{\ell,n}$, as previously defined in Section
19 of this paper. \ Moreover, let $\mathcal{H}^{o}$ be the three dimensional
Hilbert space with orthonormal basis%
\[
\left\{  \left\vert 0\right\rangle =\left\vert
\raisebox{-0.0303in}{\includegraphics[
height=0.1781in,
width=0.6287in
]%
{dotted-edge.ps}%
}%
\right\rangle ,\ \left\vert 1\right\rangle =\left\vert
\raisebox{-0.0303in}{\includegraphics[
height=0.1781in,
width=0.6287in
]%
{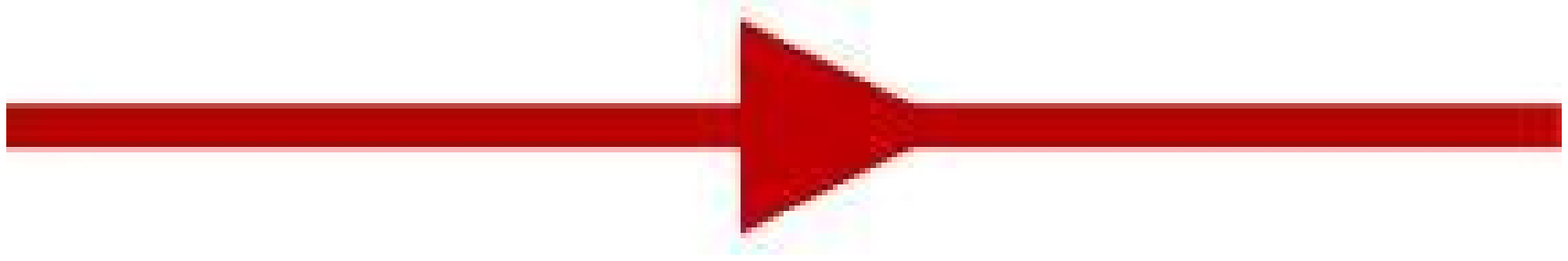}%
}%
\right\rangle ,\ \left\vert 2\right\rangle =\left\vert
\raisebox{-0.0303in}{\includegraphics[
height=0.1781in,
width=0.6287in
]%
{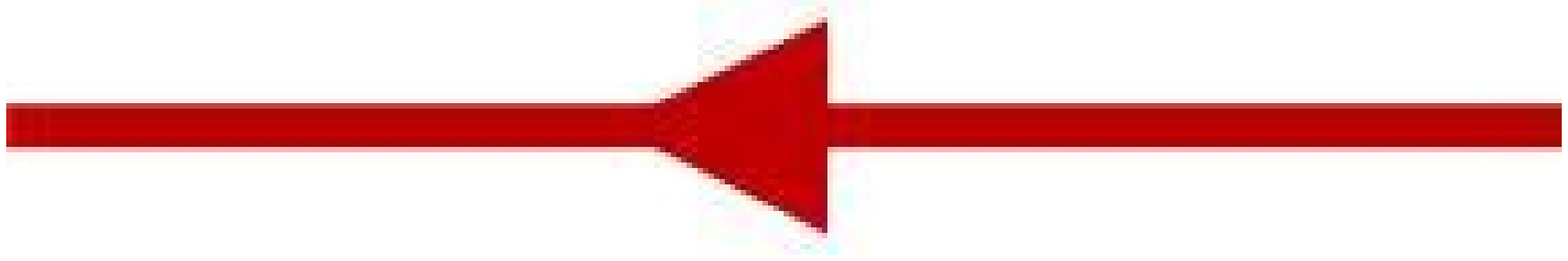}%
}%
\right\rangle \right\}  \text{ .}%
\]
Then the \textbf{Hilbert space }$\mathcal{G}_{\ell,n}^{o}$\textbf{ of oriented
lattice graphs} \textbf{of order }$\left(  \ell,n\right)  $ is defined as the
tensor product%
\[
\mathcal{G}_{\ell,n}^{o}=%
{\displaystyle\bigotimes\limits_{E\in\mathcal{E}_{\ell,n}}}
\mathcal{H}^{o}\text{ ,}%
\]
where the tensor product is taken with respect to the above defined linear
ordering `$<$'. \ Thus, as orthonormal basis for the Hilbert space
$\mathcal{G}_{\ell,n}^{o}$, we have%
\[
\left\{  \ \underset{E\in\mathcal{E}_{\ell,n}}{\otimes}\left\vert c\left(
E\right)  \right\rangle :c\in Map\left(  \mathcal{E}_{\ell,n}\ ,\ \left\{
0,1,2\right\}  \right)  \ \right\}  \text{ ,}%
\]
where\textbf{ }$Map\left(  \mathcal{E}_{\ell,n}\ ,\ \left\{  0,1,2\right\}
\right)  $ is the set of all maps $c:\mathcal{E}_{\ell,n}\longrightarrow
\left\{  0,1,2\right\}  $ from the set $\mathcal{E}_{\ell,n}$ of edges to the
set $\left\{  0,1,2\right\}  $.

\bigskip

We identify in the obvious way each basis element
\[
\underset{E\in\mathcal{E}_{\ell,n}}{\otimes}\left\vert c\left(  E\right)
\right\rangle
\]
with a corresponding oriented lattice graph $G$. \ Under this identification,
the space $\mathcal{G}_{\ell,n}^{o}$ becomes the Hilbert space with
the\textbf{ }orthonormal basis%
\[
\left\{  \left\vert G\right\rangle :G\text{ an oriented lattice graph in
}\mathcal{L}_{\ell,n}\right\}  \text{ .}%
\]

\bigskip

We define the \textbf{Hilbert space }$\mathcal{K}_{o}^{(\ell,n)}$\textbf{ of
oriented lattice knots} \textbf{of order }$\left(  \ell,n\right)  $ as the
sub-Hilbert space of $\mathcal{G}_{\ell,n}^{o}$ with orthonormal basis
\[
\left\{  \left\vert K\right\rangle :K\in\mathbb{K}_{o}^{(\ell,n)}\right\}
\text{ ,}%
\]
where $\mathbb{K}_{o}^{(\ell,n)}$ denotes the finite set of oriented lattice
knots of order $\left(  \ell,n\right)  $. \ We call each element of the
Hilbert space $\mathcal{K}_{o}^{(\ell,n)}$ an \textbf{oriented quantum knot}.

\bigskip

For oriented lattice knots, oriented wiggles, wags, and tugs, and the
corresponding oriented lattice graph ambient groups $\Lambda_{\ell,n}^{o}$ and
$\widetilde{\Lambda}_{\ell,n}^{o}$can be defined in the obvious manner. \ The
remaining definitions and constructions are straight forward, and left to the reader.

\bigskip

\begin{remark}
We should mention that, as abstract groups, the oriented lattice knot ambient
groups $\Lambda_{\ell,n}^{o}$ and $\widetilde{\Lambda}_{\ell,n}^{o}$ are
respectively isomorphic to the unoriented lattice knot ambient groups
$\Lambda_{\ell,n}$ and $\widetilde{\Lambda}_{\ell,n}$.
\end{remark}

\bigskip

\section{Appendix E: \ Quantum graphs}

\bigskip

We very briefly outline how to define quantum graphs and quantum graph
systems. \ 

\bigskip

Let $\mathcal{G}_{\ell,n}$ be the Hilbert space of lattice graphs with
orthonormal basis%
\[
\left\{  \ \left\vert G\right\rangle :G\text{ a lattice graph in }%
\mathcal{L}_{\ell,n}\ \right\}  \text{ ,}%
\]
as defined in Section 19 of this paper.

\bigskip

We call each element of the Hilbert space $\mathcal{G}_{\ell,n}$ a
\textbf{quantum graph}.

\bigskip

For lattice graphs, wiggles, wags, and tugs, and the corresponding\textbf{
lattice graph ambient groups} $\Lambda_{\ell,n}^{graph}$ and $\widetilde
{\Lambda}_{\ell,n}^{graph}$can be defined in the obvious manner. \ The
remaining definitions and constructions are straight forward, and left to the reader.

\bigskip

\noindent\textbf{Question.} \ \textit{Are the lattice graph lattice
groups}$\Lambda_{\ell,n}^{graph}$\textit{ and }$\widetilde{\Lambda}_{\ell
,n}^{graph}$\textit{ as abstract groups respectively isomorphic to the lattice
knot ambient groups }$\Lambda_{\ell,n}$\textit{ and }$\widetilde{\Lambda
}_{\ell,n}$\textit{?}

\bigskip

\end{document}